\documentclass[12pt]{article}
\usepackage[utf8]{inputenc}

\usepackage[suppress]{color-edits}
\addauthor{nw}{purple}
\addauthor{tg}{red}
\usepackage{xcolor}
\usepackage{graphicx}
\usepackage[margin=1in]{geometry}
\usepackage[onehalfspacing]{setspace}
\usepackage{hyperref}

\usepackage{tikz}
\usepackage{caption}
\usepackage{subcaption}
\usetikzlibrary{decorations.markings}

\usepackage{amsfonts}
\usepackage{amsmath}
\usepackage{amssymb}
\usepackage{amsthm}
\usepackage{bbm}
\usepackage{natbib}
\usepackage{xcolor}

\newtheorem{theorem}{Theorem}

\newtheorem{definition}{Definition}
\newtheorem{corollary}{Corollary}
\newtheorem{lemma}{Lemma}

\newtheorem{proposition}{Proposition}

\newcommand{\conv}{\textnormal{conv}}

\newcommand{\dd}[1]{\mathrm{d}#1}
\newcommand{\ICIR}{\textnormal{IC-IR}}

\usepackage{accents}
\newcommand{\ubar}[1]{\underaccent{\bar}{#1}}
\DeclareMathOperator*{\argmax}{arg\,max}

\title{From Doubt to Devotion:\\ Trials and Learning-Based Pricing\thanks{ We would like to thank Dirk Bergemann, Simon Board, Alessandro Bonatti, Thomas Brzustowski, Daniel Clark, 
Piotr Dworczak, Brett Green, Marina Halac, Kevin He, Johannes H\"orner, Ryota Iijima, Jonathan Libgober, Bart Lipman, Elliot Lipnowski, Erik Madsen, Alessandro Pavan, Daniel Rappoport,  Philip J. Reny, Larry Samuelson, Anna Sanktjohanser, Ludvig Sinander, Philipp Strack, Harry Pei, Kai Hao Yang, Jidong Zhou, and participants at the EC 2024, 20th IO Theory Conference, SITE '23, and all other seminars and conferences for helpful comments and discussions.} 
}
\author{Tan Gan\thanks{Department of Management, London School of Economics, t.gan2@lse.ac.uk.} \and Nicholas Wu \thanks{Department of Economics, Yale University, nick.wu@yale.edu. }}
\date{\today}



\begin{document}

    \maketitle
\begin{abstract}
An informed seller designs a dynamic mechanism to sell an experience good. The seller has private information about the product match, which affects the buyer's private consumption experience. The belief gap between both parties, coupled with the buyer's learning, yields mechanisms providing the skeptical buyer with limited access to the product and an option to upgrade if the buyer is swayed by a good experience. Depending on the seller's screening technology, this takes the form of free/discounted trials or dynamic tiered pricing, which are prevalent in digital markets. Unlike static environments, having consumer data can reduce sellers' revenue in equilibrium.
\vspace{0.5cm}
\\
\textbf{Keywords:} dynamic mechanism design, informed principal, signaling, trial mechanisms
\\
\textbf{JEL Codes:} D82, D83

\end{abstract}
\newpage

\section{Introduction}

In the digital marketplace, a prominent feature distinguishing software products and digital services from traditional retail goods is the implementation of dynamic pricing mechanisms that are fine-tuned by data. For example, consider an enterprise software (SaaS) company, offering a small business (a potential client) access to marketing software. The small business sees boosts in their sales if the marketing software fits their needs. On one hand, the SaaS company is better informed about whether its service can benefit the client based on proprietary knowledge about their software and data from similar clients. On the other hand, while the client is initially uninformed about the service, they can become convinced of the service's value by privately observing the service's impact on their sales.  How does a seller leverage its own proprietary data together with the buyer's ability to learn, and what are the welfare implications in equilibrium?

We address this question by studying a dynamic mechanism design problem of an informed principal. We believe the informed principal approach is particularly relevant for digital markets, where sellers can commit to pricing mechanisms but cannot credibly communicate information obtained from unpublished data.
In our baseline model, a seller is privately informed about a binary match value between the service and the buyer. The informed seller commits to a dynamic mechanism that determines the service access provided and prices charged over time. If the match value is high, the buyer privately receives positive utility shocks from the service via a Poisson process with an arrival rate depending on the service access; if the match value is low, the buyer experiences no utility shocks. The size of the utility shocks is initially unknown to both parties but is learned perfectly by the buyer once they receives a shock. 
In the software example, we interpret the payoffs as the small business experiencing jumps in their sales as a result of successful marketing.

We believe that the key features of our model (dynamic pricing, the informed seller, and the buyer learning) are important determinants for how software-as-a-service (SaaS) products are sold. As one such SaaS company, Stripe, write in their description of ``What makes SaaS pricing models different from other pricing models?'' in their guide to pricing, \footnote{See \href{https://stripe.com/resources/more/saas-pricing-models-101}{https://stripe.com/resources/more/saas-pricing-models-101}.}
\begin{quote}
    With a SaaS model, companies have access to a wealth of customer data that can inform pricing decisions... This data allows for a nuanced pricing strategy that aligns with actual usage patterns. This is rarely possible in traditional sales models, where the relationship with the customer often ends after the initial sale.
\end{quote}
In particular, the Stripe guide highlights how both data-driven pricing strategy and dynamic pricing flexibility have dramatically changed how software is sold.

Central to our model is the interplay between dynamic price discrimination against buyer learning and the belief gap regarding the buyer's experience between the informed seller and the uninformed buyer. In contrast to standard dynamic mechanism design, the buyer's belief process is no longer a martingale from the informed designer's point of view. This introduces new tradeoffs into the mechanism design problem. 
To see this more clearly, note that in our framework, an ex-ante revenue-maximizing mechanism is to always provide full access to the service and charge a price at the beginning. This deters the buyer from leveraging their private consumption experience to gain information rent. Such a mechanism can be supported in equilibrium if either of our two key model components - dynamic pricing and the belief gap regarding learning - is removed. However, in our model, because the seller is privately informed, the seller with high match quality anticipates a greater likelihood of the buyer receiving a utility shock relative to the buyer's own expectation. When the belief gap is large enough to offset the information rent lost when the buyer learns their own value (the size of the utility shock), it becomes profitable for the high-type seller to deviate by offering a \textit{free trial} to ``prove'' themselves and then selling the service post-trial at a higher price. Faced with this deviation, the buyer uses the free trial regardless of their belief about match quality, and purchases the post-trial service if swayed by a good experience during the trial. 

Since the ex-ante revenue-maximizing mechanism might not be supported in equilibrium, we establish a more precise characterization of equilibrium payoffs. To this end, we first characterize the payoffs attainable by incentive-compatible dynamic mechanisms, a necessary condition for payoffs to be achieved in equilibrium. Theorem \ref{thm:trial} shows that payoffs on the boundary of the feasible payoff set can be uniquely achieved by trial mechanisms. In such mechanisms, sellers of both types sell full service access up to an intermediate time, after which they set a premium price for the remaining service.  The buyer always purchases the initial trial despite being skeptical about the match quality, and purchases the post-trial service only if they experience a sufficiently large utility shock during the trial. 

Technically, the proof of Theorem \ref{thm:trial} solves for dynamic mechanisms where the designer maximizes a weighted average of the two seller types' payoffs, or equivalently operates with a belief about the seller's type that diverges from the buyer's prior. The canonical technique for solving dynamic design relaxes the problem and only considers impulse responses to the initial agent reports; however, this does not work in our environment because the designer's belief differs from the buyer's, so the designer perceives the buyer as inaccurately predicting the evolution of their own belief. From the perspective of the designer or informed principal, the agent's belief is progressively ``corrected'' through learning. This poses a significant challenge to tractability in general settings. Nonetheless, by considering a relaxed problem that exploits the Poisson conclusive learning structure, we fully characterize optimal mechanisms. The two key constraints to the relaxed problem are: (i) the local incentive constraints on reporting buyer value, assuming the timing of the report is truthful, and (ii) a global intertemporal constraint, which requires that upon receiving a payoff shock, the buyer finds it suboptimal to stay silent about its arrival forever.

Returning to equilibrium payoffs, we consider a weak refinement of perfect Bayesian equilibrium, where we require that the buyer must learn after seeing evidence from their experience and that high-type seller receives at least as much as the low-type seller. We show that all equilibrium payoffs under this refinement are achievable in equilibria where sellers of both types propose the same trial mechanisms. In these pooling equilibria, the sellers' strategic choice of the mechanism and its dynamic allocation rule do not disclose any information, and the buyer only updates their belief based on private signals from the consumption experience. Further, the signaling incentives of sellers have sharp implications when stronger equilibrium refinements are considered.
We show that all equilibrium trial mechanisms surviving the D1 refinement have the same trial length and price threshold as the free trial.\footnote{The only difference between these trial mechanisms is the ex-ante price of the trial. The free trial here refers to the deviation mechanism that we discussed before.} 
Compared to all equilibrium trial mechanisms, the D1-surviving trial mechanisms have the shortest trial length and the highest price threshold, resulting in minimum ex-ante revenue/efficiency. Intuitively, the high-type seller has the incentive to signal by proposing dynamic mechanisms that generate higher profit from selling post-trial services, which the low type can not profit from. Consequently, the surviving trial mechanisms maximize the revenue from post-trial services and minimize the trial length.

The distortion of signaling incentives can be so large that sellers of both types benefit from not possessing private information regarding match quality. This, in conjunction with the previous impossibility result of supporting the ex-ante revenue-maximizing mechanism in equilibrium,\footnote{The previous result has weaker predictions but it does not require the stronger D1 equilibrium refinement.} highlights that having consumer preference data may actually decrease sellers' revenue. These findings contrast with the literature on the informed principal problem in the static environment \citep{KOESSLER2016456}, where the ex-ante profit maximizing mechanism is supported in equilibrium and survives equilibrium refinement.


Trial mechanisms, which are prevalent in digital markets, arise endogenously in our baseline model because we restrict sellers to controlling only service access. This setting is relevant when the seller's ability to customize the service is constrained: for example, when the seller can only offer either full service or no service at all.
In a more general setting, we allow the seller to also alter service quality, so the buyer's learning rate and expected utility flows are not necessarily co-linear. In this case, Theorem \ref{thm:tieredpricing} shows that the relevant equilibrium mechanisms take the form of another commonly seen selling practice: dynamic tiered pricing. 
Streaming platforms offering ad-supported free access exemplify this. Frequent ad interjections decrease the consumption value of the content, reducing the information rent to premium service users while providing skeptical users an avenue to evaluate the platform's content.  Once the buyer becomes convinced of the value of the service, they upgrade to the ad-free premium tier. In contrast to intuition in standard mechanism design, when the seller is privately informed, enriching their screening technology may reduce both revenue and social efficiency in equilibrium.

The paper is structured as follows. The next section surveys the relevant literature. We present the formal model in Section \ref{sec:model}. We characterize the first-best in Section \ref{sec:benchmark}, and show that the first-best is attainable in the absence of the informed seller property or the dynamic nature of the problem. Section \ref{sec:mechanisms} characterizes dynamic mechanisms, and shows that trial mechanisms are revenue-maximizing mechanisms. Section \ref{sec:equilibria} characterizes equilibrium payoffs and discusses refinements. Section \ref{sec:general_screening} generalizes the screening technology and shows that dynamic tiered pricing emerges in this case, and Section \ref{sec:conclusion} concludes.

\subsection{Literature}
This paper contributes to the literature on the informed principal stemming from \citet{RothSchildStiglitz76} and \citet{myerson83}. Due to its growing relevance in the era of big data, the informed principal framework has seen a recent resurgence in interest
\citep{KOESSLER2016456,KoesslerSkreta19,KoessSkreta23,Dosis22,Nishimura22,Clark2023,Clark2023TheIP,BLR23}. Our paper contributes to this literature by considering an environment in which the informed principal can commit to a dynamic mechanism and the buyer learns about their value. The closest paper to ours is \citet{KOESSLER2016456}, which considers a general framework of static informed principal problem, where a seller with a single indivisible good faces a buyer whose willingness to pay depends on taste, privately-known by the buyer, and on product characteristics, privately known by the seller. They prove that the ex-ante revenue-maximizing mechanism can always be implemented in equilibrium; thus, information about buyer preferences always helps the seller in a static environment. However, in our dynamic market for experience goods where buyer learning is possible, more data about buyer preferences can hurt the seller. 

Our paper also contributes to the literature on dynamic mechanism design.
The canonical method for solving the revenue-maximizing dynamic mechanism with transferable utility, as in \citet{EsoSzentes07} and \citet{PST15}, focuses on distinguishing between private information at the time of contracting and subsequent independent private information. The canonical method
considers a relaxed problem where agents can only misreport the first period and are truthful about subsequent signals.\footnote{One exception is \citet{MP20}, who point out that focusing on the impulse responses of future types to only the initial ones is generically insufficient when
agents are risk-averse so that utility is nontransferable. In our paper, we focus on a risk-neutral agent.} However, the canonical approach does not work in our informed principal dynamic mechanism design problem; due to the seller's private information, the buyer and seller disagree about the evolution of the buyer's belief. As a result, the timing of trade matters, and so the optimal design must keep track of incentive constraints at each point in time, which can be intractable in many settings. Our insight is that the two key forces in this environment are the buyer's incentives to misreport what they learn and to hide their learning; by considering the relaxed problem retaining only the constraints relevant for these deviations, we show that the resulting solution solves the full dynamic informed principal design problem.

This paper also relates to work on experience goods in a dynamic environment. There are two lines of literature that are most related, and we will use two benchmarks in Section 3 to further illustrate the difference and our distinct contribution. First, earlier work \citep{MilgromRoberts86,Bagwell87,BagwellRiordan91,JuddRiordan94} explored how informed sellers use prices as a signal of quality. Those papers consider two-period models where the informed seller has no commitment power and sequentially sets prices in both periods. By model restriction, all feasible pricing behaviors in these models take the form of trials. They focus on how the signaling incentive shapes the first-period price in the separating equilibrium, and whether the price path is increasing or decreasing.\footnote{The trial mechanism in our framework does not specifically refer to an increasing  (average) rate path. In some equilibria, it might feature a decreasing rate if the size of the shock is sufficiently dispersed.} In contrast, our informed seller has the full flexibility to design a dynamic mechanism. In particular, the seller can sell everything ex-ante to avoid buyers gaining information rent from their private experience. We contribute to this literature by proving that trial mechanisms endogenously appear in equilibrium when the mechanism can only determine whether to provide the buyer with access to the service, and provide new insights on the economic forces determining trial length. Further, we show that when the screening technology is richer, a more sophisticated mechanism, dynamic tiered pricing, appears instead. 

Another related line of work is the literature on experimentation in markets such as \citet{Cremer84,BergemannValimaki00,BergemannValimaki06,YDK16,HagiuWright20,CST23}, where the seller is uninformed and needs to post prices period by period or set the flow rate in continuous time model; recent papers in this literature focus on competition between multiple firms. Our model is distinct in three ways. First, a key component of our paper, the informed seller, is not discussed in this literature. Second, we allow the seller to flexibly commit to any dynamic mechanism. 
Third, our discussion allows within-period price discrimination, which is actively featured in dynamic tiered pricing mechanisms.

Lastly, our paper also relates to dynamic adverse-selection models. For example, \cite{GreenTaylor16}, \cite{Madsen22}, and  \cite{CurelloSinander24} all have a principal who elicits disclosure from an agent who experiences a private information arrival process. The main difference in our model is that the principal is privately informed and the mechanism also elicits a private value from the buyer in addition to the time of arrival.

\section{Baseline Model}\label{sec:model}

\paragraph{General Setting}
A seller dynamically sells a service to a buyer. The service is only available for a fixed time interval, so time is given by $t \in \mathcal{T} = [0, T]$. The value of the service depends on the match quality between the service and the buyer: $\theta \in \Theta = \{H, L\}$.

The seller can control the buyer's access to the service $I \in [0,1]$ at any time, where we interpret $I = 0$ as no access to the service, and $I = 1$ as full access.\footnote{In Section 6 we extend our analysis to the case where the seller has access to a richer screening technology and can control both the service access and service quality.} Access determines the rate at which a buyer privately receives instantaneous rewards from using a good service; namely,  if $\theta = H$, a buyer receives instantaneous rewards according to a Poisson process with flow rate $\lambda I$, where $\lambda$ is an exogenous parameter. If $\theta = L$, no instantaneous rewards ever arrive. In the software example, Poisson rewards might correspond to instantaneous boosts in sales when the seller's marketing strategy is suited to their product and industry. 

\paragraph{Outcomes and Payoffs}
We assume both parties are risk-neutral and there is no discounting.\footnote{No discounting merely simplifies the algebraic expressions. Our results extend to infinite time with discounting; see the online appendix.} Instantaneous rewards give utility $v$ to the buyer; the size of $v$ is persistent over time. Thus, if by time $T$, the buyer has cumulatively received $N$ instantaneous rewards, and paid a total price of $p$, their utility is $Nv-p$, while the seller's payoff is simply $p$.

\paragraph{Beliefs and Information}
The common prior on $\theta=H$ is $\mu_0 \in (0, 1)$, and the common prior on $v$ has a  Myerson-regular distribution $F$\footnote{By Myerson-regular, we mean $v - (1 - F(v))/f(v)$ is increasing in $v$.} which admits a density $f$ with nonnegative support $V = [\ubar{v}, \bar{v}]$. We normalize $\mathbb{E}[v]=1$.
The seller does not observe $v$ but privately observes $\theta$: from consumer data the seller gets partial information about whether the buyer would like the service. If the match quality is low, $\theta = L$, we say the seller is of low type; otherwise, the seller is of high type. We suppose that the potential length of the service is sufficiently long: $\lambda T \geq 2(\mu_0-\ubar{v})/((1-\mu_0)\ubar{v})$. 

The buyer does not observe $\theta$ and $v$ at the beginning but privately observes the outcome of the Poisson process from using the service provided by the seller. That is, the buyer privately observes the arrival of instantaneous rewards and their value. Due to the nature of the Poisson process, the buyer perfectly learns both $\theta$ and $v$ upon arrival of the first instantaneous reward.

\paragraph{Timing}
The timing of the game is as follows:
\begin{enumerate}
    \item Nature draws $\theta$ and $v$. The seller observes $\theta$, and the buyer observes neither.
    \item The seller publicly proposes a dynamic mechanism (which we specify below).
    \item The buyer accepts or rejects. Rejecting ends the game and yields outside option $0$.
    \item The mechanism starts, and both parties start reporting messages. The buyer privately observes the arrival and size of rewards.
    \item The game ends when the buyer exits or time expires at $T$.
\end{enumerate}

\paragraph{Mechanism}
The seller proposes a deterministic direct mechanism $m \in \mathcal{M}$.\footnote{This is without loss of generality when focusing on pure-strategy equilibria where sellers can only propose deterministic mechanisms; this is because focusing on pure strategy equilibria restores the inscrutability principle of deterministic direct mechanisms. Furthermore, in the formal proof of Proposition \ref{prop:payoff_set} we show that sellers cannot profitably deviate by offering an indirect mechanism.}
A direct mechanism asks the seller to report $\theta \in \{H,L\}$ only at the beginning. It also asks the buyer to report the payoff-relevant information from the Poisson learning process; at time $t$, the buyer reports $m_t\in M_t$, where the message space $M_t(\{ m_s\}_{s<t})$ is history dependent.
Because of the conclusive Poisson learning structure, the mechanism asks whether the buyer has received an instantaneous reward and if so, its size. Thus, we let $M_0 = \{ U \}$, which means that the buyer initially cannot report an instantaneous reward at the beginning. Next, $M_t(\{ m_s\}_{s<t}) = \{U\} \cup  V$ if $m_s = U$ for all $s<t$; if the buyer has not reported the arrival of the first instantaneous reward, they can continue reporting $U$ or report the arrival and its value by sending $v\in V$. Lastly, $M_t(\{ m_s\}_{s<t}) = \{v\}$ if $m_s = v$ for some $s<t$; that is, having elicited the payoff-relevant information, the mechanism stops eliciting information from the buyer. The buyer can always quit at any time $t$ and get continuation payoff 0.\footnote{All results remain the same if the buyer cannot quit after accepting the mechanism.  In the direct mechanisms we consider, the only information the buyer can transmit to the seller is the time and size of the first reward; however, even if the seller could elicit more information from the buyer, these messages are sufficient to design an optimal mechanism. 
    The restriction on the message space ensures that the sample path of the buyer's reports is always measurable with respect to time, regardless of what message the buyer chooses to report at any point in time. }

A time-$t$ history $h_t = \{ \hat{\theta}, \{ m_s \}_{s < t} \}$ consists of an initial type report from the seller $\hat{\theta}$ and a path of messages $\{m_s\}_{s < t}$. Let $H_t$ be the set of all time-$t$ histories. A mechanism is a tuple $(I,p)$. The first term $I$ consists of a collection of measurable functions $I_t: H_t \to [0,1]$ that map history to an access. The second term $p$ is a collection of measurable functions $p_t: H_t \to \mathbb{R}$ that map history to a cumulative price.  We further require that $(I,p) \in \mathcal{M}$ has sample paths of $(I_t,p_t)_{t=0}^T$ that are measurable with respect to $t$ under any $h_T \in H_T$.


\paragraph{Strategies and Equilibrium} 
The seller's strategy is $\sigma_S = (\sigma_S^P,\sigma_S^R)$, where the proposing strategy $\sigma_S^P: \Theta \to \mathcal{M} $ maps their type into a mechanism proposal; and the reporting strategy $\sigma_S^R: \Theta  \times \mathcal{M} \to \Theta $ maps their type and the mechanism into an initial report to the mechanism.

The buyer's strategy is a pair $\sigma_B = (\sigma_B^A,\sigma_B^R)$, consisting of an acceptance strategy $\sigma_B^A$ and a reporting strategy $\sigma_B^R$. The acceptance strategy  $\sigma_B^A: \mathcal{M} \to \{\text{accept}, \text{reject}\}$ maps the proposed mechanism to acceptance/rejection. To define the reporting strategy, let $N_t$ denote the counting process corresponding to the number of reward arrivals for the buyer, and $n_t$ denote a time-$t$ realization (and note that $N_t$ depends on the access chosen by the mechanism). The buyer's reporting strategy then maps each triple of the realized reward arrival process $n_t$, accepted mechanism $M \in \mathcal{M}$, and realized mechanism history $\{ m_s, I_s, p_s \}_{s < t}$, into a valid message $m_t \in M_t(\{m_s\}_{s<t})$ or an exit decision. Formally, $\sigma_B: (M, \{m_s, I_s, p_s\}_{s<t}, n_t, v) \mapsto m_t \in M_t(\{m_s\}_{s<t}) \cup \{\textnormal{exit}\}$. 
Additionally, we restrict the reporting strategy to only depend on $v$ when $n_t > 0$.

A buyer belief system $\mu(H | \cdot)$ specifies the buyer's belief that $\theta=H$. We consider pure perfect Bayesian equilibria with one additional off-path restriction. A strategy profile $(\sigma_S, \sigma_B)$ together with a buyer belief system\footnote{We do not specify the seller's belief because, at the time when the seller proposes a mechanism or reports a message, the buyer does not have private information. Also, we directly assume the buyer's belief about $v$ is the prior $F$ when $N=0$, and the buyer perfectly learns $v$ if $N>0$ (in particular, this is the no-signaling-what-you-don't-know assumption, which rules out buyer belief updating about $v$ in response to seller actions).} $\mu$ is an equilibrium if
\begin{enumerate}
    \item The seller's strategy $\sigma_S$ is sequentially optimal given $\sigma_B$.
    \item The buyer's strategy $\sigma_B$ is sequentially optimal given the belief system $\mu$ and $\sigma_S$.
    \item The buyer's belief system $\mu$ is consistent with the seller's strategy $\sigma_S$ and Bayes' rule whenever possible.
    \item The buyer belief $\mu(H|(M, \{m_s, I_s, p_s\}_{s < t}, n_t)) = 1$ if $n_t > 0$.
\end{enumerate}

The first three conditions are standard for a weak perfect Bayesian equilibrium. The fourth condition imposes an additional restriction, which states that if a reward has ever arrived (i.e. the counting process realized a positive value) then the buyer must believe that $\theta = H$ with probability 1. This is not immediate from Bayes' rule because the buyer may have assigned probability 0 to $\theta = H$. In that case, reward arrival is subjectively a zero-probability event, and Bayes' rule does not specify what the buyer should believe after seeing a reward. We view condition 4 as very mild; such a restriction is naturally satisfied in sequential equilibria of any discrete approximation of the model.


\paragraph{Applications}
Our framework applies to a range of interactions where sellers possess private information about product or service quality, buyers learn through experience, and dynamic pricing is available. 

Such features naturally arise in digital markets for software products. Enterprise software-as-a-service (SaaS) companies typically have initial private information about the degree to which their products can help another company, usually based on proprietary knowledge about system capabilities. However, although buyers may initially be uncertain about fit, they eventually assess the value of the service based on their experience. Further, many SaaS companies provide their services by issuing licenses or API keys; through these instruments, they have the capability to dynamically control access and prices. 

For example, HubSpot and Salesforce offer marketing software and sales automation as a service to other businesses. Initially, HubSpot or Salesforce possess better information than a prospective client about how much the client might benefit from its lead generation or sales tools, possibly based on private information about similar clients. However, a client eventually learns their own idiosyncratic value for the marketing software service through experience, by observing their own sales.

Streaming platforms and online education services also align with the primitives of our model, as sellers possess private information about content quality while buyers must learn through consumption. Streaming services know how well their recommendation algorithms and content library match user preferences, but individual consumers can only discover this through experience; for example, Prime Video may forecast how a newly released show or movie would land with different segments of their users, and subsequently utilize such predictions to make customized offers.  Similarly, online education platforms such as Coursera, Udemy, and MasterClass have better information about course quality, instructor effectiveness, and educational outcomes than prospective students, who update their beliefs by engaging with the material. In both cases, firms design pricing structures and access policies to account for buyer uncertainty and learning dynamics.

Throughout the paper, we will interpret $\theta$ as a horizontal match quality parameter. This parameter $\theta$ could also be interpreted as a vertical component of product quality in traditional lemon markets. This suggests that trial-like mechanisms should also arise in other adverse selection settings. In fact, trial mechanisms predate the digital economy. In more conventional lemon markets such as used cars, there are rent-to-own contracts and other trial-like mechanisms; for example, online car seller Carvana allows users to return cars within seven days, which could be interpreted as a trial mechanism. Our analysis provides a theoretical foundation for understanding why these mechanisms arise and what forms they take in more complex environments.


\section{First-Best Benchmark}\label{sec:benchmark}
The two key components of our model are (1) dynamic pricing with buyer learning and (2) a seller who knows $\theta$. In this section, we demonstrate that the interaction of both components is vital; in the absence of either, the seller can achieve the first-best outcome by proposing a simple mechanism in equilibrium.

Note that the total surplus generated by the service is $\lambda v T$ when $\theta = H$, and 0 otherwise. Hence, the seller's maximum expected revenue (expectation over both types) is bounded above by the ex-ante total surplus created, which is $\mu_0 \lambda T \mathbb{E}[v] = \mu_0 \lambda T$, since we normalized $\mathbb{E}[v] = 1$. Thus, the first-best revenue is exactly the total surplus $\lambda \mu_0 T$, which can be achieved by selling the entire service at the beginning at the price of $\lambda \mu_0 T$. The following result shows that by removing either of the two key components of our model, selling the entire service at the beginning at the price of $\lambda \mu_0 T$ can be implemented in equilibrium.

\begin{proposition}
    \label{prop:benchmark}
    Suppose at least one of the following is true:
    \begin{enumerate} 
        \item The seller can only sell the service at time $0$: formally, the seller is restricted to proposing mechanisms such that the cumulative price only has a $\theta$-dependent jump at time $0$, $p(h_t) = p(\theta)$ for all histories $h_t$.
        \item The seller does not observe $\theta$. 
    \end{enumerate}
    Then the ex-ante revenue-maximizing mechanism can be implemented in equilibrium.
\end{proposition}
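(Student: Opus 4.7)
The plan is to exhibit, in each case, an equilibrium whose outcome is the time-$0$ sale of full access $I_t \equiv 1$ for a lump-sum price $\lambda \mu_0 T$. Under the prior, the buyer's expected value of full service is $\mu_0 \lambda T \mathbb{E}[v] = \lambda \mu_0 T$, which exactly matches the ex-ante total surplus; hence any equilibrium realizing this outcome delivers the first-best revenue, and it suffices to demonstrate equilibrium existence in each case.

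For Case 1, I would construct a pooling equilibrium in which both seller types propose the mechanism with $I_t \equiv 1$ and $p(H) = p(L) = \lambda \mu_0 T$, and specify off-path buyer beliefs that place probability one on $\theta = L$ after any alternative proposal. Three checks are needed: (i) on path, the buyer's belief is the prior, so expected payoff from accepting is zero and acceptance is weakly optimal under the usual tie-break; (ii) the seller's within-mechanism type report is payoff-irrelevant because $p(H) = p(L)$; (iii) after any off-path proposal, the buyer's posterior on $\theta = H$ is zero, so every access allocation has zero expected value, and the buyer rejects any mechanism demanding strictly positive expected payment. The main conceptual step is handling the rich deviation space $\mathcal{M}$; this is precisely where the $L$-type off-path belief does the work, since under $\theta = L$ no access allocation generates positive buyer utility, so every acceptable off-path deviation collapses to zero revenue.

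For Case 2, the seller has no private information, so the problem reduces to a standard mechanism-design problem against a buyer who learns via experience. I would invoke a textbook revenue upper-bound argument: any accepted mechanism extracts at most the buyer's ex-ante willingness to pay, which is bounded by the ex-ante expected total surplus; since $I_t \le 1$, this surplus is at most $\mu_0 \lambda T$. The candidate mechanism attains this bound, the buyer accepts at indifference, and all post-acceptance reporting is payoff-irrelevant because both the allocation and price are pinned down at time $0$.

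The main obstacle is really localized to Case 1, where the flexibility of $\mathcal{M}$ could in principle make the deviation analysis subtle; once the $L$-type off-path belief is fixed, however, every deviation collapses to a comparison against zero revenue and verifying equilibrium is routine. Case 2 requires essentially nothing beyond the standard upper-bound argument, since there is no informed-principal signaling to manage.
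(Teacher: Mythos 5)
Your proposal is correct and follows essentially the same approach as the paper's proof: construct the pooling equilibrium at the time-zero lump-sum sale with price $\lambda \mu_0 T$, use the pessimistic off-path belief $\theta = L$ to kill any profitable deviation in Case 1, and bound revenue by the buyer's ex-ante willingness to pay (which equals ex-ante total surplus) in Case 2. The paper's write-up is only slightly more explicit about verifying that the belief system is consistent and that the buyer's strategy is a best response, but the logic is identical.
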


The first point of Proposition \ref{prop:benchmark} implies having more data does not hurt the seller if trade is static. This result is proved by \cite{KOESSLER2016456} in a general environment. One can also easily check that the first best equilibrium survives standard equilibrium refinements of the signaling games. An important assumption supporting the efficiency result is that the seller has type-independent cost, which contrasts with the canonical lemon-market models. We think this assumption is appropriate in the digital market as the type is about match quality and most digital products have zero marginal cost.

The second point of Proposition \ref{prop:benchmark} says that the seller would optimally choose to sell the entire service ex-ante if the seller is uninformed. This highlights how the seller's ability to commit to a dynamic mechanism distinguishes our model from the literature on market experimentation stemming from \cite{Cremer84} and \cite{BergemannValimaki06}, where the uninformed seller sells the service period by period, leaving information rent to the buyer.

The first-best benchmark highlights that in the baseline model, consumer data on $\theta$ is not required for maximizing social welfare, and it is a weakly dominant strategy for the seller to never collect consumer data. Although in reality sellers may benefit from using consumer data in other ways,\footnote{For example, consumer data could help with better product targeting or improving recommendation systems. We abstract away most of these well-studied forces in our main analysis; for more detailed remarks, see the discussion section.} we deliberately study a model with a stark benchmark to isolate a novel channel for how the joint interaction of dynamic pricing and an informed seller can strictly decrease seller revenue. 
While we do not argue that collecting consumer data is always bad for firms, we suggest that the effects of data collection (especially data that the buyer may not know) are nuanced.




\section{Dynamic Mechanisms}\label{sec:mechanisms}

In contrast to the previous section, we will show that in the dynamic informed principal game, an ex-ante revenue-maximizing mechanism may not be supported in any equilibrium. We demonstrate this by constructing a deviating mechanism that a high-type seller can propose and show that this deviation is profitable under some parameter regimes. Motivated by this observation, we devote the rest of the section to characterizing the payoff boundary of IC-IR dynamic mechanisms, which is the main result that will aid our equilibrium payoff characterization in the next section.

Throughout this section, we use some additional notation. Due to the conclusive good news feature of the buyer's learning process, the mechanism only needs to elicit the first reward arrival from the buyer. Thus, we split the histories into histories where no signals have arrived, and histories where a reward of value $v$ arrived at time $t$. We express the mechanism's access $I$ as $\left(I_U^\theta, \{ I_{v,t}^\theta \}_t\right)$, where $I_U^\theta: [0,T] \to [0,1]$, denotes access after history $h_t = \{\theta, \{U\}_{s \le t} \}$, and $I_{v,t}^\theta : [t, T] \to [0,1]$ denotes access after the history where the first reward is communicated at time $t$ with value $v$. We notate $\left(p_U^\theta, \{ p_{v,t}^\theta \}_t\right)$ similarly for prices.

\subsection{High-Type Seller Deviation}\label{sec:free_trial}

Suppose that on-path, both types of seller propose to sell the entire service ex-ante at a price $\lambda \mu_0 T$. We first intuitively describe a deviating mechanism the high-type seller may consider offering, and then discuss when this implies that the first-best cannot be attained.
Consider a high-type seller making the following informal offer:
\begin{quote}
    I will pay you a flow $\epsilon$ to use the service up to time $t_M$. After $t_M$, I will give you the option to buy the remaining $T-t_M$ at price $\lambda v_M (T - t_M)$. 
\end{quote}
We will refer to the limit mechanism when $\epsilon \to 0$ as the \textit{Myersonian free trial} mechanism.
Facing this off-path deviation, the buyer is willing to participate in the trial regardless of their belief about the match quality\footnote{When $\epsilon = 0$, the buyer is indifferent between ``purchasing'' the free trial or not if they believe $\theta=L$ for sure, which could lead to rejection. The $\epsilon$ induces strict incentives to participate.}. Furthermore, once the buyer receives a reward, they learn $\theta$ and $v$ perfectly and are willing to purchase the post-trial service at the price $\lambda v_M(T-t_M)$ if and only if the value $v>v_M$. 
This implies that the high-type seller can ensure a payoff arbitrarily close to $\pi_F$, where
\begin{gather*}
    \pi_F := (1-e^{-\lambda t_M})(1 - F(v_M))\lambda v_M(T-t_M).
\end{gather*}
To maximize $\pi_F$, let $v_M$ denote the monopoly price $v_M := \arg\max_v \left \{ v (1 - F(v)) \right \},$ where $F$ is the distribution of buyer values. Define $t_M$ as 
\begin{gather*}
            t_M := \arg \max_t \left \{  (1-e^{-\lambda t})(T-t) \right\}.
    \end{gather*}
It is straightforward to check that $t_M$ is the maximizer of a strictly concave function and hence is unique. Formally, the deviating dynamic mechanism is
\begin{align*}
 I^L_U(t) = I^H_U(t) = & \begin{cases} 1 & t \le t_M, \\ 0 & t > t_M, \end{cases} &
     p^L_U(t) = p^H_U(t) =&  -\epsilon \min(t, t_M), \\
  I_{v,t}^L(t) =I_{v,t}^H(t) = & \begin{cases}
        1 & v \ge v_M, \\ 
        I_U(t) & v < v_M,
    \end{cases}  & p_{v,t}^L(t) =p_{v,t}^H(t) = & \begin{cases}
        \lambda v_M (T-t_M)  -\epsilon \min(t, t_M) & v \ge v_M, \\ 
        -\epsilon \min(t, t_M) & v < v_M.
    \end{cases} 
\end{align*}
To understand the notation, note that the first statement indicates that the offered mechanism provides full access to an uninformed buyer before time $t_M$, and no access to the uninformed buyer after time $t_M$. A buyer who receives a reward of size $v$ at time $t$ is given full access if $v$ is greater than $v_M$, but is treated as an uninformed buyer otherwise. The seller pays the uninformed buyer a flow $\epsilon$ up to $t_M$, but charges a payment of $\lambda v_M (T - t_M)$ from an informed buyer with value above $v_M$.

It turns out that whether the first-best can be attained in an equilibrium exactly depends on whether the Myersonian free trial is more profitable than selling the entire service ex-ante.

\tgcomment{I think apart from the intuitive criterion, we shall also use point F to argue for the instability. (This requires us to introduce the refinement on belief in the base model, which I think we should). The reason for this is that both Johannes and Dirk seem to not be a huge fan of D1 criterion or in general any refinement. Of course, point F only }

\begin{proposition}
        \label{prop:free_trial_deviation}
        The first-best cannot be implemented in any equilibrium if and only if
        \begin{equation}\label{eqn:fb_condition}
            (1-e^{-\lambda t_M})(1 - F(v_M))\lambda v_M(T-t_M)  >  \lambda \mu_0 T.
        \end{equation}
    \end{proposition}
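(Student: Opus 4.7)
My plan is to prove the two directions of the biconditional separately, exhibiting an explicit profitable deviation for ``if'' and constructing an equilibrium for ``only if.'' For the ``if'' direction, assume $\pi_F > \lambda \mu_0 T$ and proceed by contradiction. A standard pooling argument pins both seller types' on-path payoffs at $\lambda \mu_0 T$ in any first-best equilibrium (separating first-best fails because under the pooled belief needed to sustain high-type extraction, the low-type would profitably mimic the high-type's proposal). I then show the high-type strictly benefits from deviating to the Myersonian free trial $M_\epsilon$ with a small flow subsidy $\epsilon > 0$. Regardless of the off-path buyer posterior $\mu(H \mid M_\epsilon) \in [0,1]$, acceptance is strictly optimal for the buyer: even under the adversarial belief $\mu = 0$ she secures $\epsilon t_M > 0$ from the subsidy alone, while any more optimistic belief only increases her acceptance surplus. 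Given acceptance, under actual $\theta = H$ rewards arrive at rate $\lambda$; by condition 4 of equilibrium the buyer updates to $\mu = 1$ at the first arrival, learns $v$, and optimally purchases the post-trial block iff $v \geq v_M$. The expected deviation payoff is $\pi_F - \epsilon t_M$, which strictly exceeds $\lambda \mu_0 T$ for small $\epsilon$, contradicting equilibrium.

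For the ``only if'' direction, assume $\pi_F \leq \lambda \mu_0 T$ and construct an equilibrium explicitly. Both seller types propose $M_{FB}$, the mechanism charging $\lambda \mu_0 T$ upfront for full access; off-path I set $\mu(H \mid M) = 0$ for every $M \neq M_{FB}$. On path both types earn $\lambda \mu_0 T$ and the buyer's ex-ante IR binds. Any deviation $M$ accepted by the buyer under $\mu = 0$ must deliver non-negative expected payoff in the no-reward world, which forces the cumulative no-report payment to be non-positive. Since actual $\theta = L$ realizes exactly the no-reward world, the low-type's deviation revenue is non-positive and so unprofitable. For the high-type, the deviation profit is bounded by the value of
\[
\sup_{(I,p) \in \mathcal{M}} \ E_H\bigl[p^H(T)\bigr] \ \text{subject to} \ p_U^H(T) \leq 0 \text{ and the buyer's post-learning IC},
\]
and I will show this supremum equals $\pi_F$ by a Myerson-style argument: conditional on reward arrival at $(t,v)$, the buyer reports iff the extra access granted has value exceeding the post-trial charge, so optimal pricing extracts at most $v_M(1-F(v_M))\lambda \int_t^T (1-I_U(s))\,ds$ in expectation over $v$. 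Integrating against the Poisson arrival density $\lambda I_U(t)\, e^{-\lambda \int_0^t I_U(s)\,ds}$ yields a functional of $I_U$ whose unique maximizer over access paths $I_U : [0,T] \to [0,1]$ is $\mathbf{1}_{[0,t_M]}$ with value $\pi_F$.

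The main obstacle is this supremum calculation. The first step---applying Myerson at each arrival time---is standard, but the second step---optimizing the Poisson-weighted functional over access paths---is a calculus-of-variations problem whose bang-bang structure (full access during an initial trial, then zero access) must be established. A perturbation argument on interior $I_U$ values should show partial access is dominated, and the first-order condition on the switch time will exactly reproduce the equation $\partial_t [(1-e^{-\lambda t})(T-t)] = 0$ defining $t_M$. A secondary subtlety is ruling out that the high-type does strictly better by offering an indirect mechanism, or by bundling the trial with an option menu that extracts across multiple arrival times; the former reduces to the direct-mechanism formulation above by a revelation-principle argument, and the latter collapses because the buyer's IC at her first arrival time dominates all later reporting incentives under conclusive Poisson learning.
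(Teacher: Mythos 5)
Your overall structure matches the paper's: for ``if'' you exhibit the Myersonian free trial as a profitable deviation, and for ``only if'' you construct the ex-ante-selling equilibrium with adversarial off-path beliefs and bound the high-type's deviation profit by $\pi_F$. The paper, however, does essentially no fresh work in this proof---it simply invokes Corollary~\ref{corr:exante} (the outcome $(\lambda\mu_0 T,\lambda\mu_0 T)$ is the \emph{unique} first-best IC-IR payoff pair) for one direction and Proposition~\ref{prop:payoff_set} (equilibrium payoffs are exactly IC-IR payoffs with $\pi_H\ge\pi_F$) for the other, both of which are consequences of the already-proved Theorem~\ref{thm:trial}. You instead try to re-derive the relevant pieces directly, which is where the gaps appear.

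The gap in your ``if'' direction: the ``standard pooling argument'' you invoke to pin $\pi_H=\lambda\mu_0 T$ does not work as stated. Pooling on a common mechanism only means the allocation/price rules are type-independent, not that the two types earn the same revenue---a trial mechanism is pooling yet gives $\pi_H\ne\pi_L$. What you actually need is $\pi_H\le\lambda\mu_0 T$ (so that $\pi_F>\lambda\mu_0 T\ge\pi_H$), and your sketch does not deliver it. There is a clean direct route you could take: first-best under $\theta=H$ forces $I_U^H\equiv I_{v,t}^H\equiv 1$, so (IC-U) with $I(v,t)=I(U,t)=T-t$ gives $\Delta^H_{v,t}\le 0$ for every $(v,t)$; hence $\pi_H\le p_U^H(T)\le p_U^L(T)=\pi_L$, and combined with the binding ex-ante IR $(1-\mu_0)\pi_L+\mu_0\pi_H=\lambda\mu_0 T$ this yields $\pi_H\le\lambda\mu_0 T$. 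Alternatively, just cite Corollary~\ref{corr:exante}.

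Your ``only if'' direction is correct in outline, but you should recognize that the supremum calculation you flag as ``the main obstacle'' is exactly the $w_L=-1$ case of Theorem~\ref{thm:trial}: with $p_U^H(T)\le p_U^L(T)\le 0$ forced by seller IC and buyer IR under $\mu=0$, the remaining maximization of $\mathbb{E}[\Delta^H_{v,\tau}]$ reduces via the Myerson step to $\pi_0=v_M(1-F(v_M))$ and then to the control problem of Lemma~\ref{lem:ctrl} with $K=0$, whose bang-bang solution switches at $t_M$. Your Poisson-weighted functional and first-order condition $\partial_t[(1-e^{-\lambda t})(T-t)]=0$ are precisely what comes out of that lemma, so the argument is sound---but it is a re-proof of a substantial chunk of Theorem~\ref{thm:trial} rather than an elementary shortcut, and a complete write-up would need to fill in the Pontryagin/bang-bang details and the verification that the dropped IC constraints (in particular that an uninformed buyer does not mimic an informed one post-trial) hold, which the paper handles in the proof of Theorem~\ref{thm:trial}.
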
 
Intuitively, the proposition implies that the first-best can be an equilibrium if and only if the Myersonian free trial is not a sufficiently attractive deviation for the high-type seller. 
The full proof of the proposition requires a characterization of the payoffs attainable by all incentive compatible, individually rational (IC-IR) mechanisms, which we will provide as our main result in the next section. However, we first discuss some of the economic implications here before proceeding to the mechanism design problem.

Embedded in Proposition \ref{prop:free_trial_deviation} is the implication that the largest profit a high-type seller can make in any first-best IC-IR mechanism is $\lambda \mu_0 T$. Intuitively, evidence disclosure takes time, so in order for the high-type seller to separate themselves from the low-type seller, they have to spend time and sometimes exclude buyers, inevitably losing surplus. 
The following corollary to Proposition \ref{prop:free_trial_deviation} shows that when a lot of time is available, the high-type seller really would like to use the trial to separate themselves:
\begin{corollary}
\label{coroll_tradeoff}
    For sufficiently large $T$, the first-best cannot be implemented.
\end{corollary}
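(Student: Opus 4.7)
The plan is to invoke Proposition~\ref{prop:free_trial_deviation}, which reduces the question to verifying that inequality \eqref{eqn:fb_condition} holds for large $T$. Dividing both sides by $\lambda T$, the condition becomes
\[
(1-e^{-\lambda t_M})(1-F(v_M)) v_M \left(1 - \frac{t_M}{T}\right) > \mu_0.
\]
Since $v_M$ and $F(v_M)$ do not depend on $T$, the entire task reduces to tracking the asymptotic behavior of the trial length $t_M$ as $T \to \infty$.

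Next I would extract the asymptotics of $t_M$ from its first-order condition $\lambda e^{-\lambda t_M}(T-t_M) = 1 - e^{-\lambda t_M}$, equivalently $\lambda(T-t_M) = e^{\lambda t_M} - 1$. Two observations suffice: first, $t_M \to \infty$, because otherwise $T - t_M$ diverges while the right-hand side stays bounded; second, $t_M = o(T)$, because substituting $x := e^{-\lambda t_M}$ into the FOC yields $\lambda x(T-t_M) = 1 - x$, which forces $x \to 0$ and then $\lambda x T \to 1$, giving $t_M \sim \log(\lambda T)/\lambda$. Together, $(1-e^{-\lambda t_M})(1-t_M/T) \to 1$, so the left-hand side of the displayed inequality converges to $(1-F(v_M))v_M$ as $T \to \infty$.

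Under the natural non-triviality $(1-F(v_M))v_M > \mu_0$---which fits the environment of interest, since $(1-F(v_M))v_M = \max_v v(1-F(v))$ is the classical monopoly revenue under $F$ and is bounded below by $\ubar v$---the displayed inequality is eventually strict, and Proposition~\ref{prop:free_trial_deviation} delivers the corollary. The main delicate point is precisely this implicit non-triviality condition: if instead $(1-F(v_M))v_M \le \mu_0$, the uniform bound $\pi_F \le \lambda(1-F(v_M))v_M T \le \lambda\mu_0 T$ shows that \eqref{eqn:fb_condition} fails for every $T$, so some such condition on primitives is genuinely needed. Apart from that, the asymptotic calculation itself is routine via the $x = e^{-\lambda t_M}$ substitution.
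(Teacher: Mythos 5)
Your approach is exactly the right one: Proposition~\ref{prop:free_trial_deviation} reduces the corollary to verifying \eqref{eqn:fb_condition} for large $T$, and your asymptotic analysis of $t_M$ via the substitution $x = e^{-\lambda t_M}$ in the FOC $\lambda(T-t_M) = e^{\lambda t_M} - 1$ correctly shows $t_M \to \infty$ with $t_M = o(T)$, hence $(1-e^{-\lambda t_M})(1 - t_M/T) \to 1$ and the normalized LHS converges to the monopoly revenue $v_M(1-F(v_M))$. The paper does not give a proof of this corollary, so there is nothing to compare against; yours is the natural argument.

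The substantive issue you raise is not a defect in your proof but a genuine gap in the corollary as stated: it can only hold if $v_M(1-F(v_M)) > \mu_0$, and you correctly show both directions --- the limit argument gives sufficiency, and the uniform bound $\pi_F \le \lambda v_M(1-F(v_M))T$ shows necessity, so when monopoly revenue is at most $\mu_0$ the free-trial deviation never overtakes selling ex ante for \emph{any} $T$. Nothing in the paper's maintained assumptions ($\mathbb{E}[v]=1$, regularity, $\ubar{v}>0$) forces $v_M(1-F(v_M)) > \mu_0$: for instance, with $v$ close to uniform on $[\epsilon, 2-\epsilon]$ the monopoly revenue is near $1/2$, while $\mu_0$ is a free parameter in $(0,1)$, so taking $\mu_0 \in (1/2,1)$ produces a counterexample (the $\lambda T$ lower bound on the horizon is irrelevant since we are sending $T\to\infty$). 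Your sufficient condition $\ubar{v} \ge \mu_0$ (via $v_M(1-F(v_M)) \ge \ubar{v}$) is also exactly the regime in which the paper's standing assumption on $\lambda T$ is vacuous, which suggests the authors may have had that case in mind. In any event, the corollary should carry the hypothesis $v_M(1-F(v_M)) > \mu_0$ explicitly; your proof both establishes the result under that hypothesis and demonstrates why no such result can be unconditional.
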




\subsection{IC-IR Mechanisms}
Having established that the first-best payoff may not be supported in equilibrium, we proceed to provide a sharper characterization of the equilibrium payoffs. We start by characterizing the payoffs that could result from an IC-IR mechanism, which is a necessary condition for equilibrium payoffs. An IC-IR mechanism ensures that (1) it is incentive-compatible for the seller to report $\theta$ and for the buyer to report the private learning process truthfully, and (2) that sellers of both types and a buyer with an initial belief $\mu_0$ receive a non-negative ex-ante payoff. That is, define
\[ \Pi_{\ICIR} =  \left\{(\pi_L, \pi_H) \left| \, \parbox{17em}{ $\exists$ an IC-IR mechanism s.t. the ex-ante payoff for a seller of type $\theta$ is $\pi_\theta$. } \right\}.\right. \]

In order to understand the incentives of the high-type seller, we will focus on characterizing the upper boundary of $\Pi_{\ICIR}$. To do so, we consider the points of $\Pi_\ICIR$ that maximize a weighted sum of seller payoffs, normalizing the weight for the high-type seller to $1$ and allowing the weight on the low-type seller to vary. Specifically, define the boundary as
\[ \partial\Pi_+ = \left\{(\pi_L, \pi_H) \left| \exists w_L \in \left(-\infty, \frac{1-\mu_0}{\mu_0}\right], \ (\pi_L, \pi_H)  \in \argmax_{(\pi_L, \pi_H)  \in \Pi_\ICIR} (w_L \pi_L + \pi_H) \right\}\right. . \]
To clarify the restriction on the range of $w_L$, note that when $w_L = (1-\mu_0)/\mu_0$, the maximization objective is precisely seller profit weighted according to the prior; as shown in the next section, the payoff outcome from this maximization features $\pi_L = \pi_H$. As such, for $w_L > (1-\mu_0)/\mu_0$, the optimal mechanism induces outcomes where $\pi_L > \pi_H$, which we will later rule out in equilibrium.\footnote{See discussion in Section \ref{sec:reasonable_eq}.}. Thus, we focus on $w_L \le (1-\mu_0)/\mu_0$.


Let $\tau$ denote the stochastic arrival time of the first reward, with $\tau=\infty$ if no reward ever arrives; note that $\tau$ endogenously depends on the mechanism access $I$. Since the low type never generates a positive arrival rate, the payoff of the low-type seller is $p^L_U(T)$. Hence, we can rewrite the boundary point condition in $\partial \Pi_+$ as the following maximization:
\begin{gather}
    \max_{M \in \mathcal{M}} ~ \left \{ w_L p^L_U(T) + \mathbb{E}_M \left[p^H_{v,\tau}(T)\mathbbm{1}_{\tau \le T} + p^H_U(T)\mathbbm{1}_{\tau > T} \mid \theta = H \right] \right \} \label{prblm:icir_general} \\
    \text{subject to IC and IR}. \notag
\end{gather}
To characterize the solutions to this maximization problem, we first define the notion of trial mechanisms. Informally, a trial mechanism consists of two phases; an initial \textit{trial} where the seller provides access to everyone, and a second phase where the seller sells the remaining service to buyers who have received a high utility shock. 
\begin{definition}
\label{definition_trail}
    A dynamic mechanism $(I,p)$ is a \textit{trial mechanism} if and only if there exists $(v_0, t_0,p_0)$ such that 
\begin{align*}
I^L_U(t) = I^H_U(t) = & \begin{cases} 1 & t \le t_0, \\ 0 & t > t_0, \end{cases} &
 I^L_{v,t}(t) = I_{v,t}^H(t) = & \begin{cases}
        1 & v \ge v_0, \\ 
        I_U(t) & v < v_0.
    \end{cases}  \\
p^L_U(t) = p^H_U(t) = & ~ p_0 &
 p^L_{v,s}(t) = p_{v,s}^H(t) = & \begin{cases} p_0 +  \lambda v_0 (T-t_0) 1_{t>t_0}  & v \ge v_0 \\ p_0 & v < v_0 \end{cases}
\end{align*}
\end{definition}
By inspection, the Myersonian free trial satisfies the definition of a trial mechanism. In addition, in the special case where $t_0 = T$, the trial mechanism reduces to selling the entire service ex-ante: $I^\theta_U = I^\theta_{v,t} = 1$. Since a low-type seller cannot generate a reward, $I^L_{v,t}$ is off-path; further, intermediate prices can be charged either as lump-sums or flow payments.\footnote{Note that in our definition of trial mechanisms, the cumulative price is constant except at time zero and possibly the report time; so we have taken the stance that payments are made as lump-sums at those times. An outcome-equivalent mechanism could charge some appropriate flow payments instead of lump-sums.} As such, there will be multiple optimal solutions to the design problem. We will characterize the solutions to \eqref{prblm:icir_general} up to \textit{outcome-uniqueness}, by which we mean that all interim payoffs and on-path allocations are unique. 

We are now ready to present our first main result, which states that payoffs on the boundary $\partial \Pi_+$ are outcome-uniquely attained by trial mechanisms.


\begin{theorem}\label{thm:trial}
Suppose payoff pair $(\pi_L, \pi_H) \in \partial\Pi_+$ maximizes the objective \eqref{prblm:icir_general} with weight $w_L$. Then the trial mechanism with parameters $(v_0,t_0, p_0=\pi_L)$ outcome-uniquely attains payoffs $(\pi_L, \pi_H)$, where the value threshold $v_0$ satisfies the implicit equation
    \begin{equation}\label{eqn:thm_v0}
        v_0 = \max \left\{ \ubar{v}, (1 - \mu_0(w_L + 1)_+)\frac{1 - F(v_0)}{f(v_0)} \right \}, \footnote{The notation $(\cdot)_+$ denotes the nonnegative component, i.e. $\max(\cdot, 0)$.}
    \end{equation}
    and the trial length $t_0$ satisfies
    \begin{equation}\label{eqn:thm_t0}
        \begin{cases}
            0 = \lambda e^{-\lambda t_0}(T - t_0) - (1 - e^{-\lambda t_0}) + \frac{\mu_0(1 + w_L)_+}{\pi_0} & \text{if } 1 - e^{-\lambda T} > \frac{\mu_0(1 + w_L)_+}{\pi_0} ,  \\ 
            t_0 = T & \text{if } 1 - e^{-\lambda T} \leq  \frac{\mu_0(1 + w_L)_+}{\pi_0} ,
        \end{cases}
    \end{equation}
      where $\pi_0 = \int_{v_0}^{\bar{v}} \left(v - (1 - \mu_0(w_L+1)_+) \frac{1-F(v)}{f(v)} \right) f(v) \dd v$ is the expected virtual surplus.
\end{theorem}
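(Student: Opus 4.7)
The plan is to attack the weighted maximization \eqref{prblm:icir_general} via a relaxation that keeps only two buyer constraints---local IC in the reported value $v$ conditional on truthful timing, and a global intertemporal ``no-concealment'' constraint ruling out the deviation ``receive a reward, then stay silent forever''---and then to verify that the trial mechanism with parameters from \eqref{eqn:thm_v0}--\eqref{eqn:thm_t0} solves both the relaxed and the full program. A preparatory reduction shows that along the boundary it is without loss to pool seller reports so that $(I^\theta_U, p^\theta_U, I^\theta_{v,t}, p^\theta_{v,t})$ does not depend on $\theta$ on path: the low type never generates an arrival, so only pre-arrival objects enter $\pi_L$, and seller IC between $H$ and $L$ forces $p^L_U(T) = p^H_U(T)$ at any candidate optimum. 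Given this, envelope reasoning applied pointwise in the arrival time $\tau$ rewrites the high-type seller's expected post-arrival revenue as an expected virtual surplus proportional to $\lambda(T-\tau)\int_V(v - (1-F(v))/f(v))\,I^H_{v,\tau}(v)\,f(v)\,dv$.

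The next step is to fold the $w_L\,p^L_U(T)$ term and the shadow price on the no-concealment constraint into this objective. The no-concealment constraint bounds the buyer's post-trial information rent by a quantity proportional to the pre-arrival access schedule, so its multiplier reweights the rent term in the virtual value; combined with the explicit weight $w_L$, this yields the twisted virtual value with coefficient $1-\mu_0(w_L+1)_+$ and hence the pointwise cutoff \eqref{eqn:thm_v0}. For the pre-arrival schedule $I^H_U(t)$, the relaxed problem is linear: raising access at time $t$ by $\Delta$ contributes $\lambda e^{-\lambda t}(T-t)\pi_0\Delta$ in expected virtual surplus from a later arrival but costs $\mu_0(w_L+1)_+\Delta$ in tightening the no-concealment constraint. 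Optimality is therefore bang-bang with a single switching time $t_0$ at which these margins balance, producing \eqref{eqn:thm_t0}; the corner $t_0 = T$ arises when the marginal gain never falls below the marginal cost, i.e.\ $1-e^{-\lambda T} \le \mu_0(w_L+1)_+/\pi_0$. Outcome-uniqueness then follows because Myerson regularity makes $v_0$ unique, the bang-bang structure makes $I^H_U$ unique up to null sets, and the envelope formula together with $p_0 = \pi_L$ pins down all on-path transfers.

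The final step is to check that this candidate satisfies all constraints dropped in the relaxation: seller IC holds because pre-arrival allocations and prices coincide across types and the low type cannot fabricate an arrival; buyer timing IC holds because any delayed report is bracketed between the truthful report and permanent silence---both of which are already weakly worse---while a ``premature'' report is infeasible without a reward realization; IR is tight by construction. The main obstacle is justifying the relaxation itself: because the principal's belief about the buyer's belief process diverges from the buyer's own, the usual impulse-response shortcut for dynamic mechanism design fails, and one must directly argue that permanent silence is the only binding buyer deviation. The Poisson conclusive-learning structure is what makes this tractable---arrivals cannot be fabricated and partial delays are dominated by the two extremes---so the continuum of timing constraints collapses into a single integral inequality whose Lagrangian delivers the clean closed-form expressions in \eqref{eqn:thm_v0} and \eqref{eqn:thm_t0}.
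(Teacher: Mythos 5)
Your overall architecture matches the paper's: a relaxation keeping only value-reporting IC, the ``no permanent silence'' constraint, ex-ante IR, and seller IC; the observation that seller IC forces $p^L_U(T)=p^H_U(T)$; a pointwise-in-$v$ Myersonian step yielding the cutoff $v_0$; a bang-bang control step yielding $t_0$; and finally verification of the dropped constraints. But two parts of the argument as written do not go through and would not reproduce \eqref{eqn:thm_v0}--\eqref{eqn:thm_t0}.

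First, the attribution of the coefficient $1-\mu_0(w_L+1)$ to a Lagrange multiplier on the no-concealment constraint is wrong. That constraint is pointwise in $(v,t)$ and, in the paper's actual derivation, is reduced (via the monotonicity result $I(v,t)\ge I(U,t)$, which you never establish and which is not automatic) to the worst-type inequality $u(\ubar v,t)\ge\lambda\ubar v I(U,t)$; it carries no $\mu_0$ in it. The factor $\mu_0$ enters because the \emph{binding ex-ante IR} constraint is substituted for $p^H_U(T)$: the buyer's willingness to pay up front prices in only a $\mu_0$ share of the post-arrival rent, so the objective's net weight on $\Delta^H_{v,\tau}$ becomes $1-(w_L+1)\mu_0$. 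Your Lagrangian story would require the multiplier on no-concealment to equal $(w_L+1)\mu_0$, which you do not derive and which is not how the coefficient arises.

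Second, the marginal analysis for $t_0$ is incorrect. You describe a single trade-off: raising $I_U(t)$ gains $\lambda e^{-\lambda t}(T-t)\pi_0$ in expected virtual surplus and ``costs $\mu_0(w_L+1)_+$ in tightening the no-concealment constraint.'' Setting these equal gives $\lambda e^{-\lambda t_0}(T-t_0)=\mu_0(w_L+1)_+/\pi_0$, which is \emph{not} equation \eqref{eqn:thm_t0}. The actual first-order condition balances three effects, and the $\mu_0(w_L+1)/\pi_0$ term is a \emph{benefit} (the per-unit trial revenue collectible from the still-skeptical buyer), not a cost; you have omitted entirely the genuine marginal cost $-(1-e^{-\lambda t_0})$, which measures that extending the trial wastes free service on the already-convinced mass of buyers whose residual horizon you could instead be monetizing. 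Because of this, your margins would give $t_0$ solving $\lambda e^{-\lambda t_0}(T-t_0)=\mu_0(w_L+1)_+/\pi_0$, while the theorem requires $\lambda e^{-\lambda t_0}(T-t_0)=(1-e^{-\lambda t_0})-\mu_0(w_L+1)_+/\pi_0$; your stated corner case $1-e^{-\lambda T}\le\mu_0(w_L+1)_+/\pi_0$ happens to be right, but it does not follow from the margins you wrote down. To fix this you need to carry out the change of variables to the Pontryagin control problem in $X(t)=\int_0^t I^H_U$ and read off the switching-index equation there, as the paper does.
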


Qualitatively, Theorem \ref{thm:trial} implies that any IC-IR mechanism which maximizes weighted revenue must feature a trial. Indeed, in the applications discussed earlier, the sellers offer such trials. For marketing software, Salesforce and Hubspot\footnote{As of March 2025, both companies had such offers available at the following links: \href{https://offers.hubspot.com/free-trial}{https://offers.hubspot.com/free-trial} and \href{https://www.salesforce.com/form/signup/freetrial-salesforce-starter/}{https://www.salesforce.com/form/signup/freetrial-salesforce-starter/}.} both offer trials for their products; we also see such trials available for streaming platforms like Apple TV and Paramount Plus, and for educational services such as MasterClass, Coursera, and Udemy.

We will defer the proof of Theorem \ref{thm:trial} to the next section, and first discuss intuition and implications of the result. To build intuition for the fundamental tradeoffs in the mechanism design problem, consider the mechanism designer who aims to maximize the high-type seller's revenue only; equivalently, the designer solves the problem when $w_L=0$. Firstly, consider the structure of the trial mechanism itself. The trial serves to provide evidence about $\theta$ to the buyer, which in turn allows the high-type seller to price higher. To maximally benefit the high-type, the designer wants to offer full access during the trial; doing so accelerates learning and allows the seller to command a higher initial price. After sufficiently many buyers have been convinced, the high-type seller would like to exploit this learning; thus, the seller then discriminates by setting a higher price for the remaining service, excluding low-value buyers and skeptical buyers.

What economic forces influence the discrimination threshold $v_0$? When $w_L = 0$, according to \eqref{eqn:thm_v0}, the threshold $v_H$ maximizes the expected virtual surplus:
\begin{gather*}
    v_H = \argmax_{x}   \int_{x}^{\bar{v}} \left(v - (1 - \mu_0) \frac{1-F(v)}{f(v)} \right) f(v) \dd v.
\end{gather*}
In a standard static design problem, the threshold is chosen to maximize virtual surplus, but the virtual surplus does not feature the $(1-\mu_0)$ term. To understand why the coefficient in front of the information rent term is $(1-\mu_0)$, note that the buyer knows that they might receive information rent, which happens if they see a signal and their value is high. However, the designer can use the price of the trial to extract this expected surplus because the buyer anticipates receiving information rent after the trial has ended, and thus is willing to pay a higher price for the trial.
Given that the buyer believes the seller is the high type with probability $\mu_0$, the buyer is only willing to pay a $\mu_0$ fraction of the information rent they would receive if they learned their value was high during the trial. Hence, the designer cannot extract a $1 - \mu_0$ fraction of the information rent. 

To understand the tradeoffs in determining the length of the trial, first denote the maximized virtual surplus  obtained by $v_H$ as $\pi_H$. The optimal trial length $t_H$ solves
\begin{gather*}
    \max_{t\in[0,T]} ~  \lambda \mu_0 \cdot t +  \lambda \pi_H \cdot (T-t) \cdot (1-e^{-\lambda t}) .
\end{gather*}
The first term in the maximization corresponds to revenue from providing service during the trial phase, which allows the seller to charge a price of $\lambda \mu_0$ per unit of trial time; the $\mu_0$ appears due to the buyer's ex-ante skepticism. After a length-$t$ trial, the high-type seller extracts the virtual surplus from the rest of time, $\lambda \pi_H (T - t)$; however, the high-type seller can only extract this surplus from the $(1 - e^{-\lambda t})$ fraction of buyers who received a signal during the trial. Note that the total expected revenue from learning-based price discrimination is not monotone in $t$; extending the trial length increases the fraction of convinced buyers, but reduces the remaining service available for price discrimination. 

When $\mu_0 > \lambda (1-e^{-\lambda T}) \pi_H$, even following a trial of length $T$, the expected number of convinced buyers fails to make learning-based price discrimination profitable, so $t_H=T$. When the inequality is reversed, $t_H$ is interior. In particular, as long as $\lambda T$ is sufficiently large, the optimal $t_H$ is always interior.

\begin{figure}
     \centering
     \begin{subfigure}[b]{0.45\textwidth}
         \centering
         \begin{tikzpicture}[scale=1]
        \draw[->, thick] (-1,0) -- (5,0) node[anchor=west]{$\pi_L$};
        \draw[->, thick] (0,-1,0) -- (0,5) node[anchor=south]{$\pi_H$};
        \draw[dashed] (-1,-1) -- (5,5);
        \draw[dashed] (-1,3.5) -- (5,3.5);

        \draw[thick] (3.2,3.2) .. controls (2.6,5) and (1.5,5) .. (0.3,3.8) -- (0,3.5);
        
        \filldraw (3.2,3.2) circle (2pt) node[anchor=north west]{$B$};
        \filldraw (1.9,4.65) circle (2pt) node[anchor=south]{$H$};
        \filldraw (0,3.5) circle (2pt) node[anchor=north east]{$F$};

    \end{tikzpicture}
         \caption{$\pi_F  >  \lambda \mu_0 T$}
         \label{fig:mech_payoffs_no_fb}
     \end{subfigure}
     \hfill
     \begin{subfigure}[b]{0.45\textwidth}
         \centering
         \begin{tikzpicture}
        \draw[->, thick] (-1,0) -- (5,0) node[anchor=west]{$\pi_L$};
        \draw[->, thick] (0,-1,0) -- (0,5) node[anchor=south]{$\pi_H$};
        \draw[dashed] (-1,-1) -- (5,5);
        \draw[dashed] (-1,2.3) -- (5,2.3);

        \draw[thick] (4,4) .. controls (3,5) and (1.5,3.8) .. (1,3.3) -- (0,2.3);
        
        \filldraw (4,4) circle (2pt) node[anchor=north west]{$B$};
        \filldraw (3.15,4.37) circle (2pt) node[anchor=south]{$H$};
        \filldraw (0,2.3) circle (2pt) node[anchor=north east]{$F$};
        
    \end{tikzpicture}
         \caption{$\pi_F  \le  \lambda \mu_0 T$}
         \label{fig:mech_payoffs_fb}
     \end{subfigure}
    \caption{The boundary of the IC-IR mechanism payoff set $\Pi_\ICIR$. Point $F$ denotes the payoffs from the free-trial mechanism, point $H$ denotes the best mechanism for the $\theta=H$ type seller, and point $B$ denotes the payoff outcome from selling ex-ante. Figure (a) shows an example where the first-best cannot be attained, and Figure (b) shows an example where the first-best can be attained. }
    \label{fig:mech_payoffs}
\end{figure}

Having discussed the tradeoffs in designing the best mechanism for the high-type seller, we now discuss other boundary payoffs, which optimally maximize different objectives for the designer. 
One such objective might be for the designer to maximize net industry seller profits; equivalently, the designer maximizes the objective when $w_L = (1-\mu_0)/\mu_0$.
\begin{corollary}\label{corr:exante}
    The payoff outcome from selling ex-ante $(\lambda \mu_0 T, \lambda \mu_0 T)$ is the unique maximal element of $\Pi_{\ICIR}$ in the direction $( 1-\mu_0, \mu_0)$: \[ \max_{(\pi_L,\pi_H) \in \Pi_\ICIR} [(1-\mu_0) \pi_L + \mu_0 \pi_H ] = \lambda \mu_0 T.\]
\end{corollary}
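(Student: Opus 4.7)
The plan is to apply Theorem~\ref{thm:trial} at the specific weight $w_L = (1-\mu_0)/\mu_0$, which is the maximal weight allowed in the definition of $\partial \Pi_+$. Note that the corollary's objective $(1-\mu_0)\pi_L + \mu_0 \pi_H$ equals $\mu_0$ times the theorem's objective $w_L \pi_L + \pi_H$ at this weight, so maximizers coincide and the outcome-uniqueness clause of Theorem~\ref{thm:trial} transfers directly. It then suffices to compute the trial mechanism prescribed by equations \eqref{eqn:thm_v0} and \eqref{eqn:thm_t0} at this weight and check that it reduces to the ex-ante sale at price $\lambda \mu_0 T$.

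The key simplification is that at this weight, $\mu_0(w_L + 1)_+ = (1-\mu_0) + \mu_0 = 1$, so the factor $1 - \mu_0(w_L+1)_+$ in the information-rent term of \eqref{eqn:thm_v0} vanishes. First, I would observe that \eqref{eqn:thm_v0} then collapses to $v_0 = \max\{\ubar{v}, 0\} = \ubar{v}$, using $\ubar{v} > 0$ (implicit in the model's lower bound on $\lambda T$, which would otherwise be undefined). Substituting $v_0 = \ubar{v}$ into the expression for $\pi_0$ gives $\pi_0 = \int_{\ubar{v}}^{\bar{v}} v f(v)\,\dd v = \mathbb{E}[v] = 1$ by the normalization. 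Next, I would argue $t_0 = T$: the first case of \eqref{eqn:thm_t0} requires $1 - e^{-\lambda T} > \mu_0(1+w_L)_+/\pi_0 = 1$, which is impossible, so the second case applies and $t_0 = T$. (Equivalently, differentiating the Lagrangian and setting the derivative to zero at an interior $t_0$ yields $e^{-\lambda t_0}(\lambda(T-t_0) + 1) = 0$, which has no solution in $[0,T]$.)

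Finally, I would read off the payoffs from Definition~\ref{definition_trail} with $t_0 = T$ and $v_0 = \ubar{v}$: both seller types provide full access $I \equiv 1$ throughout $[0,T]$, there is no post-trial phase, and both types collect exactly the lump sum $p_0$. The buyer's participation constraint at the prior then pins down $p_0 = \lambda \mu_0 T$, yielding $\pi_L = \pi_H = \lambda \mu_0 T$ and an objective value of $(1-\mu_0)\lambda \mu_0 T + \mu_0 \cdot \lambda \mu_0 T = \lambda \mu_0 T$. Outcome-uniqueness follows from the corresponding clause of Theorem~\ref{thm:trial}. There is essentially no obstacle here beyond verifying the algebraic degeneration of the two implicit equations at the boundary weight; the conceptual content is already packaged in Theorem~\ref{thm:trial}.
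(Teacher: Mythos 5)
Your proof is correct and is exactly the argument the paper has in mind: the corollary is obtained by specializing Theorem~\ref{thm:trial} to the boundary weight $w_L = (1-\mu_0)/\mu_0$, at which point $1 - \mu_0(w_L+1)_+ = 0$ forces $v_0 = \ubar{v}$, $\pi_0 = \mathbb{E}[v] = 1$, $t_0 = T$, and $p_0 = \lambda\mu_0 T$, so the trial degenerates to the ex-ante sale. The algebraic degenerations you check are precisely what the paper leaves implicit, and outcome-uniqueness is inherited from the theorem as you say.
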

As we already established, selling the entire service ex ante attains the first-best; however, our result further implies that any IC-IR mechanism attaining first-best revenue must be outcome-equivalent to selling ex ante.  
Another important mechanism arises if the designer wishes to maximize the difference between the payoffs of the high-type and low-type seller; equivalently, the designer optimizes with weight $w_L = -1$.
\begin{corollary}\label{corr:free}
    The Myersonian free trial payoff $(0, \pi_F)$ is a maximal element of $\Pi_{\ICIR}$ in the direction $( -1, 1)$: \[ \max_{(\pi_L,\pi_H) \in \Pi_\ICIR} [ \pi_H - \pi_L ] = \pi_F. \]
\end{corollary}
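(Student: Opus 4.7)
The plan is to obtain the corollary as an immediate specialization of Theorem \ref{thm:trial} at $w_L = -1$. At this weight, the objective in \eqref{prblm:icir_general} becomes exactly $\pi_H - \pi_L$, so the maximum in the corollary statement coincides with the boundary-point maximum provided by the theorem. Since $-1 \leq (1 - \mu_0)/\mu_0$, the weight lies in the admissible range, and Theorem \ref{thm:trial} supplies a trial mechanism $(v_0, t_0, p_0 = \pi_L)$ that outcome-uniquely attains a maximizer.

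Next I will verify that the parameters $(v_0, t_0)$ reduce to $(v_M, t_M)$. Because $(1 + w_L)_+ = 0$, equation \eqref{eqn:thm_v0} collapses to $v_0 = \max\{\ubar v, (1 - F(v_0))/f(v_0)\}$; by Myerson regularity this pins down $v_0 = v_M = \argmax_v v(1 - F(v))$. Similarly, \eqref{eqn:thm_t0} collapses to $\lambda e^{-\lambda t_0}(T - t_0) = 1 - e^{-\lambda t_0}$, which is precisely the first-order condition for $\argmax_t (1 - e^{-\lambda t})(T - t)$, so $t_0 = t_M$; the standing assumption $\lambda T \geq 2(\mu_0 - \ubar v)/((1 - \mu_0)\ubar v)$ places us on the interior branch of \eqref{eqn:thm_t0}. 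Since the low-type seller earns only $p_0$ (no Poisson reward ever arrives when $\theta = L$), the negative weight on $\pi_L$ drives $p_0$ down to the boundary of seller IR, yielding $p_0 = \pi_L = 0$. The high-type payoff then evaluates to
\begin{gather*}
    \pi_H = p_0 + (1 - e^{-\lambda t_M})(1 - F(v_M))\lambda v_M (T - t_M) = \pi_F,
\end{gather*}
so $\pi_H - \pi_L = \pi_F$, matching the parameters of the Myersonian free trial in the $\epsilon \to 0$ limit.

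Because essentially all the heavy lifting is packaged into Theorem \ref{thm:trial}, I do not expect any substantive obstacle. The one point requiring a moment's care is confirming that $p_0 = 0$ is consistent with buyer IR at the prior belief $\mu_0$; this follows because the buyer's expected trial utility alone is already $\mu_0 \lambda t_M \mathbb{E}[v] = \mu_0 \lambda t_M > 0$, and the post-trial information rent only adds to this. The remainder of the argument is pure bookkeeping: read off $(v_0, t_0, p_0) = (v_M, t_M, 0)$ from Theorem \ref{thm:trial}, and compute the induced payoffs.
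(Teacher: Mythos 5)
Your proof takes exactly the approach the paper intends: Corollary~\ref{corr:free} is a direct specialization of Theorem~\ref{thm:trial} at $w_L=-1$, with $(v_0,t_0)$ collapsing to $(v_M,t_M)$. Two small points of precision, neither of which affects the conclusion.

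First, the claim that ``the negative weight on $\pi_L$ drives $p_0$ down to the boundary of seller IR'' is not quite right. At $w_L=-1$ the weight on $p_0$ in the objective is exactly zero (increasing $p_0$ raises $\pi_L$ and $\pi_H$ by the same amount), so the objective is flat in $p_0$; Lemma~\ref{lem:necessity_high} accordingly leaves $p_0$ free in an interval $\left[0,\,\lambda\mu_0 t_0 + \lambda(\pi_0 - v_0(1-F(v_0)))(T-t_0)(1-e^{-\lambda t_0})\right]$ rather than pinning it to zero. The corollary only asserts $(0,\pi_F)$ is \emph{a} maximal element, which is what matters; one simply picks $p_0 = 0$, rather than deriving it.

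Second, invoking the standing assumption on $\lambda T$ to land on the interior branch of \eqref{eqn:thm_t0} is unnecessary at $w_L = -1$: there $\mu_0(1+w_L)_+/\pi_0 = 0$, so the interiority condition $1 - e^{-\lambda T} > 0$ holds trivially. The size restriction on $\lambda T$ is used elsewhere, namely inside the proof of Theorem~\ref{thm:trial} to verify that an uninformed buyer does not deviate to buying the post-trial service.

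Aside from these, the reduction and bookkeeping are correct, and the argument is essentially identical to what the paper leaves implicit.
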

Figure \ref{fig:mech_payoffs} illustrates points $F$ and $B$. In the figure, $H$ denotes the best mechanism for the high type. As we analyzed earlier, $H$ does not always have an interior trial length; it may coincide with $B$, but $H$ always Pareto-dominates $F$. This follows intuitively because $F$ corresponds to the Myersonian free trial, and both types can be better off by charging an ex-ante payment for the duration of the trial, $\lambda \mu_0 t_M$.

The following comparative static establishes how the trial length and post-trial price change with the market belief and with the designer's objective weights.
\begin{proposition}\label{prop:comparative_private}
    For $w_L \in [-1, (1-\mu_0)/\mu_0]$, the trial length $t_0$ is weakly increasing in $w_L$ and $\mu_0$, and the post-trial price $v_0$ is weakly decreasing in $w_L$ and $\mu_0$.
\end{proposition}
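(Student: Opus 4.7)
The plan is to reduce both monotonicity statements to monotonicity in a single scalar and then tackle the two pieces separately. Define $\alpha := \mu_0(1+w_L)_+$. On the range $w_L \in [-1,(1-\mu_0)/\mu_0]$ we have $\alpha = \mu_0(1+w_L)\in[0,1]$, and $\alpha$ is weakly increasing in each of $w_L$ and $\mu_0$. Because the defining equations \eqref{eqn:thm_v0} and \eqref{eqn:thm_t0} depend on $(\mu_0,w_L)$ only through $\alpha$, it suffices to show that $v_0$ is weakly decreasing in $\alpha$ and $t_0$ is weakly increasing in $\alpha$.

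For $v_0$, consider the interior case $\Phi(v_0,\alpha):= v_0 - (1-\alpha)(1-F(v_0))/f(v_0)=0$. Writing $\Phi(v,\alpha) = \alpha v + (1-\alpha)\bigl[v - (1-F(v))/f(v)\bigr]$, Myerson regularity makes $\Phi(\cdot,\alpha)$ the sum of two nondecreasing functions, hence strictly increasing in $v$; also $\partial\Phi/\partial\alpha = (1-F(v_0))/f(v_0)\ge 0$. The implicit function theorem then delivers $dv_0/d\alpha \le 0$. In the boundary regime $v_0=\ubar v$, $v_0$ is locally constant in $\alpha$, and as $\alpha$ decreases into the interior regime it continues to decrease, so monotonicity is weak throughout.

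For $t_0$, let $h(t):= \lambda e^{-\lambda t}(T-t) - (1-e^{-\lambda t})$. A direct computation gives $h'(t) = -\lambda e^{-\lambda t}[\lambda(T-t)+2] <0$, so $h$ (and hence $h^{-1}$) is strictly decreasing. The interior condition \eqref{eqn:thm_t0} reads $h(t_0) = -\alpha/\pi_0(\alpha)$, so showing that $t_0$ is weakly increasing in $\alpha$ reduces to showing that $\alpha/\pi_0(\alpha)$ is weakly increasing. Here lies the main obstacle: $\pi_0$ itself depends on $\alpha$ both directly and via $v_0(\alpha)$. The key observation is an integration-by-parts identity: since $\int_{v_0}^{\bar v} v f(v)\,\dd v = v_0(1-F(v_0)) + \int_{v_0}^{\bar v}(1-F(v))\,\dd v$, we obtain
\[ \pi_0(\alpha) = v_0(1-F(v_0)) + \alpha \int_{v_0}^{\bar v}(1-F(v))\,\dd v. \]
By the envelope theorem (the integrand in the original definition of $\pi_0$ vanishes at $v=v_0$ in the interior regime, and $v_0$ is constant in the boundary regime), $\pi_0'(\alpha) = \int_{v_0}^{\bar v}(1-F(v))\,\dd v$. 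Hence
\[ \frac{d}{d\alpha}\!\left(\frac{\alpha}{\pi_0}\right) = \frac{\pi_0 - \alpha\,\pi_0'(\alpha)}{\pi_0^2} = \frac{v_0(1-F(v_0))}{\pi_0^2} \geq 0, \]
so $-\alpha/\pi_0$ is nonincreasing in $\alpha$ and $t_0 = h^{-1}(-\alpha/\pi_0)$ is nondecreasing. Finally, since $\alpha/\pi_0$ is nondecreasing, once the boundary condition $1-e^{-\lambda T}\le \alpha/\pi_0$ triggers $t_0 = T$, it remains triggered for all larger $\alpha$, preserving weak monotonicity across regimes.
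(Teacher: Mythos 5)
Your proof is correct and follows essentially the same route as the paper's: you apply the implicit function theorem with Myerson regularity for $v_0$, and for $t_0$ you use the envelope theorem together with an integration-by-parts identity to compute $\frac{d}{d\alpha}(\alpha/\pi_0) = v_0(1-F(v_0))/\pi_0^2$, which is the same quantity the paper obtains by differentiating $\pi/\mu_0$ and $\pi/(1+w_L)$ separately. Your change of variable to the single scalar $\alpha = \mu_0(1+w_L)_+$ is a modest streamlining that replaces the paper's two parallel computations with one, but the underlying algebra is identical.
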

Intuitively, as the market belief becomes more optimistic ($\mu_0$ increases), both types of seller can extract higher revenue from the trial, and so the mechanism designer increases the trial length. Additionally, as $\mu_0$ increases, the initial price can also extract a higher fraction of the expected post-trial information rents; hence, the designer excludes fewer buyer types after the trial, so $v_0$ decreases. Similarly, as the designer assigns higher weight to the low-type seller (larger $w_L$), the designer becomes willing to trade high-type post-trial revenue for a longer trial, and hence $t_0$ increases. Further, as the designer values the high-type seller less relative to the low-type seller, the designer is more willing to provide information rents to the buyer post-trial, since a fraction of these rents can be extracted with the initial price; hence, when $w_L$ increases, $v_0$ decreases.

Revisiting the example of an enterprise software company, the prior belief $\mu_0$ corresponds to the market belief that the marketing software matches any given client's needs. Proposition \ref{prop:comparative_private} then implies that if the market becomes more pessimistic about the value of the software, then a market designer with a fixed objective should choose mechanisms with \textit{shorter} trials, but at a lower price. Also, observe that the Myersonian free trial parameters $t_M, v_M$ are independent of $\mu_0$; thus, a designer that wishes to maximize the difference between the high- and low-type seller's payoffs is agnostic to the buyer belief over types.

How often does the high-type seller withhold service from the buyer, and how does this inefficiency depend on the designer's objective? Proposition \ref{prop:comparative_private} shows that as we reduce $w_L$ and move from point $B$ to $F$ in Figure \ref{fig:mech_payoffs}, the expected service provided by a high-type seller decreases in both an extensive and an intensive margin; in other words, the more the designer weights the high-type seller's payoff relative to the low-type seller's, the less the expected total amount of service sold by the high-type seller. In the extensive margin, the trial length decreases, leading to fewer buyers learning $\theta=H$; in the intensive margin, the threshold $v_0$ increases, which excludes more buyers with low values. As such, the Myersonian free trial corresponding to point $F$ has the shortest trial and excludes the most buyers after the trial, so the Myersonian free trial provides the least amount of service in expectation across all trial mechanisms. We will return to this observation in our discussion of equilibrium refinements.

Another interesting comparative analysis is how the dispersion of consumers' idiosyncratic preferences affects the length of the trial. Consider an illustrative parametrized example where the buyer's private value $v \sim U[1 - \delta , 1 + \delta ]$ is uniformly distributed.
\begin{proposition}\label{prop:comparative_v}
    Suppose $v \sim U[1 - \delta , 1 + \delta]$. The trial length $t_0$ is weakly increasing in $\delta$.
\end{proposition}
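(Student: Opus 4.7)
The plan is to specialize the implicit characterization of $(v_0, t_0)$ from Theorem~\ref{thm:trial} to the uniform distribution and check that the right-hand side of the equation pinning down $t_0$ is monotone in $\delta$. Throughout, set $\alpha := 1 - \mu_0(w_L + 1)_+ \in [0,1]$. For $v \sim U[1-\delta, 1+\delta]$, one has $(1 - F(v))/f(v) = 1 + \delta - v$, so \eqref{eqn:thm_v0} reads $v_0 = \max\{1-\delta, \alpha(1+\delta-v_0)\}$. Solving the interior equation gives $v_0^{\text{int}} = \alpha(1+\delta)/(1+\alpha)$, and the boundary value $v_0 = 1 - \delta$ binds iff $2\alpha\delta \le 1-\delta$, i.e.\ iff $\delta \le 1/(1+2\alpha)$.

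Next I would compute $\pi_0$ in each regime by direct integration of the virtual surplus $(1+\alpha)v - \alpha(1+\delta)$ over $[v_0, 1+\delta]$ against the uniform density. A short calculation yields $\pi_0 = (1+\delta)^2/(4\delta(1+\alpha))$ in the interior regime and $\pi_0 = 1 - \alpha\delta$ in the boundary regime; a plug-in check at $\delta = 1/(1+2\alpha)$ confirms both expressions glue to the common value $(1+\alpha)/(1+2\alpha)$, so $\pi_0$ is continuous in $\delta$. The equation \eqref{eqn:thm_t0} then becomes $g(t_0) = -(1-\alpha)/\pi_0$, where $g(t) := \lambda e^{-\lambda t}(T-t) - (1 - e^{-\lambda t})$ satisfies $g'(t) = -\lambda e^{-\lambda t}[\lambda(T-t) + 2] < 0$, so $g$ is strictly decreasing and the equation has a unique solution whenever $(1-\alpha)/\pi_0 \le 1 - e^{-\lambda T}$; otherwise $t_0 = T$.

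To finish I would show that $(1-\alpha)/\pi_0$ is weakly increasing in $\delta$, which together with strict monotonicity of $g$ delivers the conclusion. In the boundary regime $(1-\alpha)/\pi_0 = (1-\alpha)/(1-\alpha\delta)$, whose $\delta$-derivative is $\alpha(1-\alpha)/(1-\alpha\delta)^2 \ge 0$. In the interior regime $(1-\alpha)/\pi_0 = 4\delta(1-\alpha^2)/(1+\delta)^2$, whose $\delta$-derivative is $4(1-\alpha^2)(1-\delta)/(1+\delta)^3$, nonnegative since the support constraint $\underline{v} = 1 - \delta \ge 0$ forces $\delta \le 1$. The degenerate cases $\alpha \in \{0,1\}$ make the ratio constant in $\delta$, so they are trivial. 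Monotonicity through the transition $\delta = 1/(1+2\alpha)$ follows from continuity established above, and the corner case $t_0 = T$ is stable under increasing $\delta$ because the condition $(1-\alpha)/\pi_0 \ge 1 - e^{-\lambda T}$ is preserved.

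The only real obstacle is the two-regime case analysis for $v_0$; once the boundary and interior expressions for $\pi_0$ are in hand and shown to agree at the transition, both sub-cases reduce to elementary single-variable monotonicity checks, and the support constraint $\delta \le 1$ is what saves the interior calculation at large $\delta$.
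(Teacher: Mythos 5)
Your proof is correct and takes essentially the same approach as the paper: specialize the implicit characterization \eqref{eqn:thm_v0}--\eqref{eqn:thm_t0} to the uniform distribution, compute $\pi_0$ in the two regimes, show it is decreasing in $\delta$ (equivalently $\mu_0(1+w_L)_+/\pi_0$ increasing), and conclude from monotonicity of the implicit equation in $t_0$. The paper compresses the key monotonicity step to ``one can verify $\pi_0$ is always decreasing in $\delta$,'' while you carry out that verification explicitly, check continuity at the regime boundary $\delta = 1/(1+2\alpha)$, and note that the support constraint $\delta\le 1$ is what makes the interior-regime derivative sign work out -- all of which fills in gaps the paper leaves implicit.
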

Intuitively, as $\delta$ decreases, the private value of consumers becomes more concentrated and it becomes more profitable to implement learning-based price discrimination. Consequently, it is optimal for the informed seller to shorten the trial and start price discrimination earlier.

\subsection{Proof of Theorem \ref{thm:trial}}
This subsection sketches the proof of Theorem \ref{thm:trial}. Readers less concerned with the technical details may jump to the next section.
The proof proceeds in three overall steps. First, we consider a problem relaxation, in which we drop most of the IC and IR constraints. Next, we show that for any $w_L$, the only solutions to the relaxed problems are trial mechanisms with $t_0,v_0$ determined by equations \eqref{eqn:thm_v0} and \eqref{eqn:thm_t0}. Finally, we show that the construction is feasible under the original IC-IR constraints. 

\subsubsection{Relaxed Problem}

We begin by providing an intuitive interpretation of the relaxed problem. We consider the problem of designing a mechanism that satisfies a subset of IC-IR constraints, namely only the following four constraints:
\begin{enumerate}
    \item A buyer who receives a reward $v$ at time $t$ does not want to misreport being uninformed forever. We will refer to this as (IC-U) for the incentive-constraint on not misreporting as uninformed. 
    \item A buyer who receives a reward $v$ at time $t$ does not want to misreport $v'$ at time $t$; we will refer to this as (IC-V) for the incentive constraint on not misreporting value.
    \item Participation is individually rational for the buyer at time zero. We will refer to this constraint as (IR-0).
    \item The low seller type is willing to participate and report truthfully. We will refer to this constraint as (IC-S).
\end{enumerate}
That is, we drop constraints that prevent an uninformed buyer from reporting as informed, constraints where a buyer who sees a reward $v$ at time $t$ \textit{jointly} deviates to reporting $v'$ at time $t'$, and the constraint on the high-type seller not reporting as low-type.


To formulate the relaxed problem mathematically, we first introduce some additional notation.
Let 
\[ I(v,t) := \int_t^T  I_{v,t}^H(s) \ \dd s, \quad I(U,t) := \int_t^T  I^H_U(s) \ \dd s, \] be the cumulative service access provided after the buyer reports $(v,t)$ and for the uninformed buyer after time $t$, respectively. We start by considering the problem relaxation:
\begin{align}
    \max_{M \in \mathcal{M}} ~ & \left \{ w_L p^L_U(T) + \mathbb{E} \left[p^H_{v,\tau}(T)\mathbbm{1}[\tau \le T] + p^H_U(T)\mathbbm{1}[\tau > T] \mid \theta = H \right] \right \} \label{prblm:relaxed} \\
    \text{subject to} \quad & 
    \lambda v I(v,t) - p^H_{v,t}(T) \geq \lambda v I(U,t)  - p^H_U(T) \qquad \forall (v, t) \tag{IC-U}\\
    & \lambda v I(v,t) - p^H_{v,t}(T)  \geq \lambda v I(v',t) - p^H_{v',t}(T) \qquad \forall (v, v', t) \tag{IC-V}\\
    &  \mu_0\mathbb{E}\left[N_T v - p^H_{v,\tau}(T) \mid \theta = H \right] - (1-\mu_0) p^L_U(T) \ge 0 \tag{IR-0} \\
    & p^L_U(T) \ge p^H_U(T)  \tag{IC-S}
\notag
\end{align}
where we slightly abuse notation in writing the ex-ante IR for the buyer in the third constraint; that is, if $\tau = \infty$ (no reward arrives) we denote $p^\theta_{\cdot,\infty}(T) = p^\theta_U(T)$. Also recall that $N_t$ denotes the counting process for the number of arrivals during the service.

Solving the relaxed problem \eqref{prblm:relaxed} takes several steps. We first eliminate the IR constraints (Lemma \ref{lem:uninformed_payments_equal}). Second, we show that the access provided after any report $I(v,t)$ is always at least $I(U,t)$ (Lemma \ref{Lemma_low_bind}). This allows us to explicitly solve what the mechanism should do after a report, by characterizing $I(v,t)$ in terms of $I(U,t)$. With some careful rearrangement (Lemma \ref{lemma:pluginPointWise}), the remaining problem reduces to a control optimization over a single function, $I(U,t)$. Finally, we solve the control problem and argue that the optimal solutions are trial mechanisms satisfying \eqref{eqn:thm_v0} and \eqref{eqn:thm_t0} with the appropriate price (Lemma \ref{lem:necessity_high}).

We begin solving the relaxed problem by eliminating $p^L_U$ and (IR-0):
\begin{lemma}\label{lem:uninformed_payments_equal}
    The cumulative payment by an uninformed buyer is equal for both types: \[p^L_U(T) = p^H_U(T).\]
    If $w_L > -1$, (IR-0) must bind at optimum. If $w_L \le -1$ it is without loss to assume it binds.
\end{lemma}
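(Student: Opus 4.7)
The plan is to prove the two claims sequentially via perturbation arguments, exploiting the fact that (IC-U), (IC-V), and (IC-S) depend only on price differences. Shifting the whole price schedule therefore only interacts with the objective and with (IR-0).

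To establish that (IR-0) binds, I would take any feasible mechanism and add a common $\delta$ to $p^L_U(T)$, $p^H_U(T)$, and every $p^H_{v,t}(T)$. This preserves (IC-U), (IC-V), and (IC-S). The LHS of (IR-0), namely $\mu_0 \mathbb{E}[N_T v - p^H_{v,\tau}(T) \mid H] - (1-\mu_0) p^L_U(T)$, decreases by exactly $\delta$, since the high-type expected payment rises by $\delta$ whether or not $\tau \le T$. The objective changes by $w_L \delta$ through $p^L_U(T)$ and by $\delta$ through the high-type expected payment, for a net change of $(w_L+1)\delta$. Hence when $w_L > -1$ this strictly improves the objective, forcing (IR-0) to bind at any optimum; when $w_L = -1$ the objective is unchanged, so we may push the perturbation until (IR-0) binds WLOG.

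Given that (IR-0) binds, I would deduce $p^L_U(T) = p^H_U(T)$. Let $U_H := \mathbb{E}[N_T v \mid H]$ and $P_H := \mathbb{E}[p^H_{v,\tau}(T) \mathbbm{1}_{\tau \le T} + p^H_U(T) \mathbbm{1}_{\tau > T} \mid H]$ denote the buyer's gross expected utility and the expected payment on the high-type branch. Binding (IR-0) gives $p^L_U(T) = \mu_0(U_H - P_H)/(1-\mu_0)$. Substituting into the objective $w_L p^L_U(T) + P_H$ yields an affine function of $P_H$ with slope $[1 - \mu_0(1+w_L)]/(1-\mu_0)$, which is non-negative under the maintained hypothesis $w_L \le (1-\mu_0)/\mu_0$ built into the definition of $\partial\Pi_+$. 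Maximizing the objective therefore reduces to maximizing $P_H$; with the access schedule and $\mathbb{E}[p^H_{v,\tau}(T) \mid H, \tau \le T]$ held fixed, $P_H$ is linear in $p^H_U(T)$ with strictly positive coefficient $e^{-\lambda I(U,0)}$. Thus $p^H_U(T)$ is pushed up until the (IC-S) ceiling $p^H_U(T) \le p^L_U(T)$ binds, giving $p^H_U(T) = p^L_U(T)$.

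The main obstacle is the $w_L < -1$ regime of the first step, where the uniform upward perturbation decreases the objective and so does not force (IR-0) to bind. I would handle this by constructing a payoff-preserving modification that tightens (IR-0). Because the objective depends on $(p^L_U(T), P_H)$ only through $w_L p^L_U(T) + P_H$, any perturbation with $\Delta P_H = -w_L \Delta p^L_U(T)$ preserves the objective; starting from a slack-(IR-0) optimum and choosing $\Delta p^L_U(T) > 0$, the (IR-0) slack shifts by $[\mu_0(1+w_L)-1]\Delta p^L_U(T) < 0$, so (IR-0) can be driven to bind. The delicate part is realizing $\Delta P_H = -w_L \Delta p^L_U(T) > 0$ by appropriately splitting the increase among $p^H_U(T)$ and the $p^H_{v,t}(T)$'s so that (IC-U), (IC-V), and (IC-S) continue to hold; this can be done by raising the high-type payments in lockstep (preserving the difference-based constraints) and balancing the split against any already-binding (IC-S). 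Once (IR-0) is binding, the second step's argument for $p^L_U(T) = p^H_U(T)$ applies uniformly across the full range $w_L \le (1-\mu_0)/\mu_0$.
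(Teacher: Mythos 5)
Your proof takes a genuinely different route from the paper's. The paper first establishes $p^L_U(T) = p^H_U(T)$ via an (IR-0)-preserving transfer — decreasing $p^L_U(T)$ by $\mu_0\epsilon$ while increasing the high-type payments by $(1-\mu_0)\epsilon$ — exploiting that the objective weights $(w_L, 1)$ favor the high-type payment relative to the (IR-0) weights $(1-\mu_0, \mu_0)$ whenever $w_L \le (1-\mu_0)/\mu_0$; it then pushes both payments up to bind (IR-0). You reverse the order: first bind (IR-0) by a uniform shift of \emph{all} prices by $\delta$ (which leaves every difference-based constraint untouched while shrinking the (IR-0) slack by exactly $\delta$ and changing the objective by $(w_L+1)\delta$), and then substitute the binding (IR-0) to read off a nonnegative slope on $P_H$, after which pushing $p^H_U(T)$ to the (IC-S) ceiling delivers the equality. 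Your uniform-shift step is arguably cleaner than the paper's case split, and the substitution step matches the paper's subsequent computation leading to \eqref{eqn:ir0-bind}. The only caveat: both arguments become only ``without loss'' at the endpoint $w_L = (1-\mu_0)/\mu_0$, where the relevant slope vanishes — the paper's phrasing has the same looseness.

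The genuine gap is your handling of $w_L < -1$. The lockstep construction needs $\Delta P_H = -w_L\,\Delta p^L_U(T) > \Delta p^L_U(T)$; if the high-type payments are raised in lockstep by $\delta_H$ then $\Delta P_H = \delta_H$, and (IC-S) requires $p^L_U(T) - p^H_U(T) \ge \delta_H - \Delta p^L_U(T) = (-w_L - 1)\Delta p^L_U(T) > 0$. So the perturbation needs strict (IC-S) slack, but for $w_L < -1$ the objective penalizes $p^L_U(T)$ heavily enough that (IC-S) should bind at any optimum, precisely eliminating the slack the construction needs. ``Balancing the split'' cannot rescue this: any split has $\Delta P_H \le \delta^U_H$ (where $\delta^U_H$ is the $p^H_U(T)$ increment), and (IC-S) still requires $\delta^U_H \le \Delta p^L_U(T) + \text{slack}$, so the same bound applies. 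To be fair, the paper's printed proof also covers only $w_L = -1$ and $w_L > -1$, relegating $w_L < -1$ to the main-text footnote (which invokes the low-type seller's IR, a constraint not present in the relaxed problem), so your gap is of the same character as the gap the paper leaves. For the range $w_L \ge -1$ that the paper actually uses downstream, your argument is correct and self-contained.
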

The proof approach supposes that for some feasible solution, the inequality $p^L_U(T) > p^H_U(T)$ is strict; by casework on $w_L$, either a decrease in $p^L_U(T)$ or an increase in $p^H_U(T)$ (or both) improve the objective while maintaining feasibility. Since the casework is mechanical, it is left to the appendix. To simplify the rest of the exposition, suppose $w_L \ge -1$. \footnote{This is purely for cleanliness and clarity. It will turn out that the optimal mechanism when $w_L = -1$ ensures that the low-type gets payoff zero, and hence will also be optimal for $w_L < -1$ (since the payoff for the low-type cannot be negative by IR).} 

Lemma \ref{lem:uninformed_payments_equal} allows us to set (IR-0) as binding, which eliminates $p^L_U(T)$ from the maximization problem. Since the lemma implies that $p^L_U(T) = p^H_U(T)$, it will be useful to characterize the payments $p^H_{v,t}(T)$ relative to the uninformed buyer's payment $p^H_U(T) = p^L_U(T)$. Denote $\Delta_{v,t}^H(T) = p_{v,t}^H(T) - p_U^H(T)$, the additional net payment incurred by a report $(v,t)$. By setting (IR-0) binding, we get 
\begin{align} 0 &= \mu_0\mathbb{E}\left[N_T v - p^H_{v,\tau}(T) \mid \theta = H \right] - (1-\mu_0) p^L_U(T), \notag \\
 \implies p^H_U(T) &= \mu_0 \mathbb{E}[N_T v - \Delta^H_{v,\tau}(T) \mid \theta = H].\label{eqn:ir0-bind}
\end{align}
Further, recall that the expected number of arrivals of an inhomogenous Poisson process is given by the cumulative arrival rate. Let the random variable $S$ correspond to the report time of the buyer, and note that the service changes from $I_U^H$ to $I_{v,t}^H$ at the $S$. Then 
\begin{align}  \mathbb{E}[\mathbb{E}[N_T \mid S] \mid \theta = H] 
&= \mathbb{E}\left[\left. \int_0^S \lambda I^H_U(t) \dd{t} + \int_{S}^T \lambda I_{v,t}^H(t) \dd{t} \right|  \theta = H\right], \notag \\ 
&= \mathbb{E}\left[\left.\int_0^T \lambda I^H_U(t) \dd{t} + \int_{S}^T \lambda(I_{v,t}^H(t) - I^H_U(t)) \dd{t} \right| \theta = H\right], \notag \\ 
&= \mathbb{E}[\lambda I(U,0) + \lambda (I(v,S) - I(U,S)) \mid \theta = H]. \notag 
\end{align}
Since the maximization objective is evaluated for truthful buyer reporting, the law of iterated expectations then implies that
\begin{equation}
    \mathbb{E}[N_T \mid \theta = H]= \mathbb{E}[\lambda I(U,0) + \lambda (I(v,\tau) - I(U,\tau)) \mid \theta = H]. 
 \label{eqn:nt_derivation}
\end{equation}

Then \eqref{eqn:ir0-bind} and \eqref{eqn:nt_derivation} combined with Lemma 1 gives us an expression for $p^L_U(T) = p^H_U(T)$, which we can plug into the objective \eqref{prblm:relaxed}:
\begin{align}
    \eqref{prblm:relaxed}
    =& (w_L + 1)p^H_U(T) + \mathbb{E}\left[\Delta^H_{v,\tau}(T)\mathbbm{1}[\tau \le T] \mid \theta = H \right] \notag \\
    =& (w_L+1)\mu_0\left( \mathbb{E}\left[ \lambda v (I(U,0) + I(v,\tau) - I(U,\tau)) - \Delta^H_{v,\tau}(T) \mid \theta = H \right]  \right) \notag \\
    & \quad + \mathbb{E} \left[\Delta^H_{v,\tau}(T)\mathbbm{1}[\tau \le T]\mid \theta = H \right] \notag
\end{align}
All together, the optimization problem then becomes
\begin{align}
    \max_{M \in \mathcal{M}} ~ & \begin{Bmatrix} (w_L+1)\mu_0\left( \mathbb{E}\left[ \lambda v (I(U,0) + I(v,\tau) - I(U,\tau)) - \Delta^H_{v,\tau}(T) \mid \theta = H \right]  \right)  \\ + \mathbb{E} \left[\Delta^H_{v,\tau}(T)\mathbbm{1}[\tau \le T]\mid \theta = H \right] \end{Bmatrix} \label{prblm:relaxed_no_ir} \\
   \text{subject to} \quad & 
    \lambda v I(v,t) - \Delta^H_{v,t}(T) \geq \lambda v I(U,t)  \qquad \forall (v, t) \tag{IC-U}\\
    & \lambda v I(v,t) - \Delta^H_{v,t}(T)  \geq \lambda v I(v',t) - \Delta^H_{v',t}(T)  \qquad \forall (v, v', t) \tag{IC-V}
\end{align}

An important feature of this problem is that all incentive constraints are time-separable: the terms $I(v,t)$ and $I(U,t)$ do not interact with $I(v,t')$ and $I(U,t')$. This is the main tractability advantage we get from our conclusive-learning framework. In the non-relaxed problem with all IC constraints, we have to keep track of incentive constraints at each time $t$; without time-separability, the term $I(v,t)$ will appear in all time $t'$ IC constraints for $t'<t$.

Note that there are still intertemporal tradeoffs in the design problem, since $I(U,t)$ affects the distribution of the first arrival $\tau$, but the intertemporal considerations only factor into the problem through the optimal choice of $I(U,t)$. As such, we will first solve for $I(v,t)$ (i.e., what the mechanism should do post-report) and use that solution to solve for $I(U,t)$ (what the mechanism does pre-report). 

\subsubsection{Solving Post-Report}

To take advantage of the time-separability, we first fix $I(U,t)$ and focus on optimizing $I(v,t)$ and $\Delta_{v,t}^H$. When $I(U,t)$ is fixed, the distribution of $\tau$ in the objective function is also fixed, and so is the $I(U,0)$ term. Consequently, to solve for $I(v,t)$, we only need to consider pointwise optimization at each time $t \le T$:
\begin{align}
    \max ~ &  (w_L+1)\mu_0\left( \mathbb{E}\left[ \lambda v I(v,t)  - \Delta^H_{v,t}(T) \mid \theta = H \right]  \right)   + \mathbb{E} \left[\Delta^H_{v,t}(T) \mid \theta = H \right] \label{prblm:pointwise} \\
   \text{subject to} \quad & 
    \lambda v I(v,t) - \Delta^H_{v,t}(T) \geq \lambda v I(U,t)  \qquad \forall  v \tag{IC-U}\\
    & \lambda v I(v,t) - \Delta^H_{v,t}(T)  \geq \lambda v I(v',t) - \Delta^H_{v',t}(T) \qquad \forall  (v, v') \tag{IC-V}
\end{align}
Note that the maximization objective \eqref{prblm:pointwise} drops terms that are fixed by $I(U,t)$ (namely, the $I(U,0)$ and $I(U,t)$ terms) and fixes time $t$, so the expectations are taken over $v$. To start solving for $I(v,t)$, we first show that $I(U,t)$ must lower bound $I(v,t)$:
\begin{lemma}
\label{Lemma_low_bind}
    In the optimal solution of the above relaxed problem, $I(v,t) \geq I(U,t)$ for any $v$.
\end{lemma}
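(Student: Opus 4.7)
I will prove the lemma by contradiction. Suppose the optimal mechanism has $I(v,t) < I(U,t)$ on a positive-measure set of $v$; since (IC-V) forces $I(\cdot, t)$ to be non-decreasing, this set is a lower interval $A = [\ubar{v}, \bar{v}_A]$. The plan is to construct a feasible modification that raises $I$ to $I(U,t)$ on $A$, and then show that the modification strictly improves the objective of \eqref{prblm:pointwise}, contradicting optimality.

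For the construction, I set $\tilde{I}(v,t) = \max\{I(v,t), I(U,t)\}$, which remains monotone in $v$ since $I(U,t)$ is constant in $v$ and $I(\cdot, t)$ is non-decreasing. Payments are then pinned down by the Myerson envelope anchored at $\tilde{U}(\ubar{v}, t) = \lambda \ubar{v}\, I(U,t)$, the tightest value permitted by (IC-U) at $v = \ubar{v}$. (IC-V) follows from monotonicity of $\tilde{I}$ plus the envelope. For (IC-U), a direct envelope computation gives $\tilde{U}(v,t) = \lambda v\, I(U,t)$ on $A$ (since $\tilde{I}(s,t) = I(U,t)$ for $s \in A$), while for $v > \bar{v}_A$, $\tilde{U}(v,t) \geq \lambda v\, I(U,t)$ because $\tilde{I}(s,t) \geq I(U,t)$ on $(\bar{v}_A, v]$.

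For the objective comparison, I rewrite the time-$t$ integrand in \eqref{prblm:pointwise} via the envelope as $\int_V \lambda I(v,t) \phi(v) f(v)\, \dd v - \alpha U(\ubar{v}, t)$, where $\alpha = 1 - (w_L+1)\mu_0 \in [0,1]$ and $\phi(v) = v - \alpha(1-F(v))/f(v)$. Because $A$ has positive measure, the function $R(v) := \lambda v I(U,t) - \int_{\ubar{v}}^v \lambda I(s,t)\, \dd s$ is strictly increasing on $A$, so the binding (IC-U) inequality moves from $v = \ubar{v}$ to $v = \bar{v}_A$, forcing $U(\ubar{v}, t) \geq \lambda \ubar{v}\, I(U,t) + \int_A \lambda(I(U,t) - I(s,t))\, \dd s$. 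Substituting, the change in the objective is at least
\[ \int_A \lambda(I(U,t) - I(v,t))\bigl[\phi(v) f(v) + \alpha\bigr]\, \dd v = \int_A \lambda(I(U,t) - I(v,t))\bigl[v f(v) + \alpha F(v)\bigr]\, \dd v, \]
which is strictly positive since $\ubar{v} \geq 0$, $F$ is strictly increasing on its support, and $A$ has positive length. This contradicts optimality.

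The main obstacle will be the algebra of the final cancellation: the extra virtual surplus $\phi(v) f(v) = v f(v) - \alpha(1-F(v))$ gained by raising $I$ on $A$ carries a $-\alpha(1-F(v))$ information-rent penalty, which must cleanly offset against the $\alpha$ saved per unit by lowering $U(\ubar{v}, t)$. Verifying that the cancellation collapses to the manifestly nonnegative $v f(v) + \alpha F(v)$---the economic content being that allocation below $I(U,t)$ always wastes slack on low types without any compensating gain on higher ones---is what makes the conclusion unambiguous.
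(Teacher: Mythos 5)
Your proof is correct and rests on the same underlying idea as the paper's: argue by contradiction, modify the mechanism by raising $I(\cdot, t)$ up to $I(U,t)$ on the offending set, check feasibility, and exhibit a strict improvement. The technical bookkeeping is genuinely different, though. The paper first identifies the binding point $v_1$ of (IC-U), uses convexity of $u(\cdot,t)$ to argue $I(v,t)\le I(U,t)$ below $v_1$, then shows pointwise that $\Delta^H_{v,t}(T)\le 0$ there while the modification sets it to $0$, and concludes an improvement. You instead substitute the envelope into the objective to get the virtual-surplus form $\int_V \lambda I(v,t)\phi(v)f(v)\,\dd v - \alpha\, u(\ubar{v},t)$, observe that the (IC-U) constraint at $\bar{v}_A$ forces $u(\ubar{v},t)$ above its floor $\lambda\ubar{v}I(U,t)$ by $\int_A \lambda(I(U,t)-I(s,t))\,\dd s$, and then collapse the net gain to $\int_A \lambda(I(U,t)-I(v,t))\bigl[vf(v)+\alpha F(v)\bigr]\,\dd v$. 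The nice feature of your accounting is that it shows the gain is strictly positive for every $w_L\in[-1,(1-\mu_0)/\mu_0]$ at once; the paper's justification for strictness cites only the increase in $\Delta$, which has zero weight at the boundary $w_L=(1-\mu_0)/\mu_0$ (there the strict gain actually flows through the $I$ term), so your version is slightly more explicit about uniformity in the weight. Both are valid proofs of the lemma.
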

The intuition of Lemma \ref{Lemma_low_bind} is that, because the buyer can always ensure a rent of $\lambda v I(U,t)$ by never reporting, whenever $I(v,t) < I(U,t)$, the access provided by $I(v,t)$ has no impact on the information rent that must be paid to higher value buyer types $v'\geq v$. The formal details are left to the appendix.

Given Lemma \ref{Lemma_low_bind}, a standard envelope theorem argument simplifies (IC-V). Define the interim utility of the buyer:
\begin{equation}\label{eqn:interim_u}
    u(v,t) := \lambda v I(v,t) - \Delta^H_{v,t}(T). 
\end{equation} 
By the envelope theorem, since the (IC-V) constraint is equivalent to $u(v,t) = \max_{v'} \{ \lambda v I(v',t) - \Delta^H_{v',t}(T) \}$, we can integrate to get 
\[ u(v,t) = u(\ubar{v},t) + \int_{\ubar{v}}^v \lambda I(w,t) \dd{w}. \]
Also, since Lemma \ref{Lemma_low_bind} implies that $I(w,t) \ge I(U,t)$, it follows that 
\[ u(v,t) \ge u(\ubar{v},t) + \lambda(v - \ubar{v}) I(U,t) = u(\ubar{v},t) - \lambda \ubar{v} I(U,t) + \lambda v I(U,t) . \]
Hence, if (IC-U) is satisfied for $\ubar{v}$, it is satisfied for all $v$. By standard arguments, we rewrite the problem retaining the envelope representation of (IC-V), (IC-U) for the worst type, and isolate the $\Delta^H_{v,t}(T)$ term, so the problem becomes 
\begin{align*}
    \max ~ &  (w_L+1)\mu_0\left( \mathbb{E}\left[ \lambda v I(v,t)  \mid \theta = H \right]  \right)   + (1 -  \mu_0(w_L + 1))\mathbb{E} \left[\Delta^H_{v,t}(T) \mid \theta = H \right] \notag \\
   \text{subject to} \quad & 
    u(v,t) = u(\ubar{v},t)  + \int_{\ubar{v}}^v \lambda I(w,t) \dd w , \\
    & I(v,t) \geq I(U,t) \quad \forall  v, \quad u(\ubar{v},t) \ge \lambda \ubar{v} I(U,t). 
\end{align*}
The remaining optimization on $I(v,t)$ is standard; \eqref{eqn:interim_u} pins down $\Delta_{v,t}^H(T)$ for any given $u$, and so we can replace the last $\Delta^H_{v,t}(T)$ term in the objective and integrate by parts, yielding
\begin{gather*}
     \max_{I(v,t) \geq I(U,t)} ~\mathbb{E}\left[    \lambda v I(v,t) - (1 - \mu_0(w_L+1))\frac{1-F(v)}{f(v)}  \lambda I(v,t)    | \theta = H \right].
\end{gather*}
By the regularity assumption, $v-(1-F(v))/f(v)$ is increasing in $v$, so $v-(1-\mu_0(w_L+1))(1-F(v))/f(v)$ is also increasing in $v$. Consequently, the above expression is maximized by increasing $I(v,t)$ as much as possible for $v$ such that
\begin{equation}\label{eqn:v_ineq}
    v \ge (1-\mu_0(w_L + 1))\frac{1 - F(v)}{f(v)}
\end{equation}
and decreasing $I(v,t)$ to $I(U,t)$ otherwise.
The smallest $v$ such that \eqref{eqn:v_ineq} holds is precisely $v_0$ in \eqref{eqn:thm_v0}.
Then, it follows that the unique point-wise maximization solution is
\begin{equation} I(v,t) = \begin{cases}
        T-t & v \ge v_0, \\ 
        I(U,t) & v < v_0.
    \end{cases} \label{eqn:first_ivt} \end{equation}
The remaining step is to solve for the optimal $I(U,t)$.

\subsubsection{Solving Pre-Report}
Having solved for the optimal $I(v,t)$ given $I(U,t)$, and noting that such an $I(v,t)$ ensures (IC-U) and (IC-V) are satisfied, we can revisit the optimization objective \eqref{prblm:relaxed_no_ir}:
\begin{equation}
     \max_{M \in \mathcal{M}}  \begin{Bmatrix} (w_L+1)\mu_0\left( \mathbb{E}\left[ \lambda v (I(U,0) + I(v,\tau) - I(U,\tau)) - \Delta^H_{v,\tau}(T) \mid \theta = H \right]  \right)  \\ + \mathbb{E} \left[\Delta^H_{v,\tau}(T)\mathbbm{1}[\tau \le T]\mid \theta = H \right] \end{Bmatrix} \notag
\end{equation}
Since the $I(v,t)$ schedule determined in the previous section satisfies (IC-U) and (IC-V), by plugging in the form $I(v,t)$ and rearranging the objective function, we have the following simplification:
\begin{lemma}
\label{lemma:pluginPointWise}
The maximization objective \eqref{prblm:relaxed_no_ir}, given the characterization of $I(v,t)$ in \eqref{eqn:first_ivt}, is equivalent to the following:
    \begin{align}
    \max_{M \in \mathcal{M}} ~ & \left \{ (w_L+1)\mu_0  \lambda \mathbb{E}[v]I(U,0) + \lambda \pi_0 \mathbb{E}[(T - \tau - I(U,\tau)) \mathbbm{1}[\tau \le T] | \theta = H ] \right \}\notag 
\notag
\end{align}
where \[ \pi_0 := \mu_0(w_L+1) \int_{v_0}^{\bar{v}}v  f(v)\ \dd v  + v_0(1 - F(v_0))(1 - \mu_0(w_L+1)). \]
\end{lemma}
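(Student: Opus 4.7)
The plan is to exploit the characterization of $I(v,t)$ from \eqref{eqn:first_ivt} together with the envelope representation of the buyer's interim utility to obtain an explicit form for $\Delta^H_{v,t}(T)$, and then substitute into \eqref{prblm:relaxed_no_ir}. First, I would note that at the optimum (IC-U) binds at $v = \ubar v$: the coefficient in front of $\mathbb{E}[\Delta^H_{v,t}(T) \mid \theta = H]$ in the pointwise objective is $(1 - \mu_0(w_L+1)) \ge 0$ for $w_L \le (1-\mu_0)/\mu_0$, so a uniform upward shift of $\Delta^H_{v,t}(T)$ (equivalently, lowering $u(\ubar v, t)$) weakly improves the objective without affecting the envelope-encoded (IC-V). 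Thus one can take $u(\ubar v, t) = \lambda \ubar v I(U,t)$; combining this with $u(v,t) = u(\ubar v, t) + \int_{\ubar v}^v \lambda I(w,t) \dd{w}$ and the explicit form of $I(v,t)$ yields $u(v,t) = \lambda v I(U,t)$ for $v < v_0$ and $u(v,t) = \lambda v_0 I(U,t) + \lambda (v - v_0)(T - t)$ for $v \ge v_0$. Applying \eqref{eqn:interim_u}, this gives $\Delta^H_{v,t}(T) = 0$ for $v < v_0$ and $\Delta^H_{v,t}(T) = \lambda v_0 (T - t - I(U,t))$ for $v \ge v_0$.

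Next, I would substitute these expressions back into \eqref{prblm:relaxed_no_ir}. The $(w_L+1)\mu_0 \lambda \mathbb{E}[v] I(U,0)$ piece falls out directly from the $\lambda v I(U,0)$ term, since $I(U,0)$ is deterministic. For the remaining pieces I use that, conditional on $\theta = H$, $\tau$ is measurable with respect to $I^H_U$ and hence independent of $v$, and adopt the convention that $I(v,\tau) - I(U,\tau)$ and $\Delta^H_{v,\tau}(T)$ both vanish on the event $\{\tau > T\}$ (so in particular $\mathbb{E}[\Delta^H_{v,\tau}(T) \mid \theta = H] = \mathbb{E}[\Delta^H_{v,\tau}(T)\mathbbm{1}[\tau \le T]\mid \theta = H]$). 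Plugging in then yields $\mathbb{E}[\lambda v (I(v,\tau) - I(U,\tau))\mathbbm{1}[\tau \le T] \mid \theta = H] = \lambda \left(\int_{v_0}^{\bar v} v f(v) \dd{v}\right) \mathbb{E}[(T - \tau - I(U,\tau))\mathbbm{1}[\tau \le T]\mid \theta = H]$ and $\mathbb{E}[\Delta^H_{v,\tau}(T)\mathbbm{1}[\tau \le T] \mid \theta = H] = \lambda v_0 (1 - F(v_0)) \mathbb{E}[(T - \tau - I(U,\tau))\mathbbm{1}[\tau \le T] \mid \theta = H]$. Collecting the coefficient of $\mathbb{E}[(T - \tau - I(U,\tau))\mathbbm{1}[\tau \le T] \mid \theta = H]$ from the $(w_L+1)\mu_0$-weighted cross term and from the $(1 - (w_L+1)\mu_0)$-weighted $\Delta^H$ term produces $\lambda\bigl[(w_L+1)\mu_0 \int_{v_0}^{\bar v} v f(v) \dd{v} + (1 - (w_L+1)\mu_0)\, v_0(1 - F(v_0))\bigr] = \lambda \pi_0$, matching the stated formula.

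The principal obstacle here is purely bookkeeping: one has to carefully split the expectations over $v$ into the $v<v_0$ and $v \ge v_0$ branches when evaluating $\Delta^H_{v,t}(T)$, and handle the no-arrival event $\{\tau > T\}$ consistently across all three terms of \eqref{prblm:relaxed_no_ir}. The only substantive step is the binding of (IC-U) at the lowest type, which is the standard envelope-style argument in the same spirit as the reductions already used to derive the pointwise problem.
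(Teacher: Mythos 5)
Your proof is correct and follows essentially the same route as the paper's: compute $\Delta^H_{v,t}(T)$ explicitly from the envelope representation and the binding lowest-type (IC-U), split the $v$-expectations into the $v<v_0$ and $v\ge v_0$ regions, and collect coefficients. The only minor difference in organization is that you re-derive the binding of (IC-U) at $\ubar v$ inside the lemma, whereas the paper establishes it in the reduction just before the lemma statement and then takes it as given; the algebra is otherwise identical.
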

The proof is purely algebraic and we relegate it to the appendix. In this final optimization problem, the design of $I(U,t)$ incorporates all the intertemporal tradeoffs. To solve it, we change variables to convert the maximization problem in Lemma \ref{lemma:pluginPointWise} into a control problem, using the fact that the density of $\tau$ is given by $\lambda (-\dot{I}(U,t)) \exp \left(\lambda (I(U,t) - I(U,0)) \right)$. Define the \textit{state} variable as $X(t) = \int_0^t I^H_U(t)$, and the measurable \textit{control} variable as $U(t) = I^H_U(t)$. Then the problem is equivalent to:
\begin{align}
    \max_{X, U} ~ &\left \{ \pi_0 \int_0^T \left( \lambda \left(T - t - (X(T) - X(t))\right) \exp \left(- \lambda X(t)  \right) + \frac{\mu_0(1 + w_L)}{\pi_0}\right) \lambda U(t)\dd t  \right \} \label{prblm:ctrl}\\
    \textnormal{subject to } ~ & \dot{X}(t) = U(t) \in [0,1], ~ X(0) = 0, ~ X(T) \textnormal{ free}. \notag
\end{align}
 We solve this control problem using Lemma \ref{lem:ctrl} in the appendix. The unique optimal policy is bang-bang:
\[I^H_U(t) = \begin{cases} 1 & t \le t_0, \\ 0 & t > t_0, \end{cases}\]
where $t_0$ satisfies the condition in \eqref{eqn:thm_t0}.
Finally, having identified a candidate optimal mechanism, we now argue that this is essentially unique (up to degeneracy in specifying the mechanism off-path when a buyer reports an arrival with a low-type seller):
\begin{lemma}\label{lem:necessity_high}
    Any optimal mechanism of the relaxed problem must be outcome-equivalent to a trial mechanism, with $v_0$ given by \eqref{eqn:thm_v0}, trial length $t_0$ given by \eqref{eqn:thm_t0}, and price $p_0$ given by 
    \begin{equation}\label{eqn:thm_p0}
        \begin{cases}
        p_0 = \lambda \mu_0 t_0 + \lambda (\pi_0 - v_0(1 - F(v_0)))(T - t_0)(1 - e^{-\lambda t_0})  & w_L > -1, \\
        p_0 \in \left[0, \lambda \mu_0 t_0 + \lambda (\pi_0 - v_0(1 - F(v_0)))(T - t_0)(1 - e^{-\lambda t_0})\right] & w_L = -1.
    \end{cases}
    \end{equation}  
\end{lemma}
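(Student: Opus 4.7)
The plan is to assemble the three preceding steps into a unique characterization of the optimal mechanism up to outcome-equivalence. Specifically, I would first argue that the on-path access rule is pinned down. The bang-bang control result from solving \eqref{prblm:ctrl} forces $I_U^H(t) = \mathbbm{1}[t \le t_0]$ with $t_0$ given by \eqref{eqn:thm_t0}. Combined with the pointwise characterization \eqref{eqn:first_ivt}, this yields $I^H_{v,t}(s) = 1$ for $v \ge v_0$ and $s \in [t,T]$, and $I^H_{v,t}(s) = I^H_U(s)$ for $v < v_0$. Both together exactly reproduce the trial mechanism's allocation rule on path; the off-path rule after a low-type seller reports an arrival is a probability-zero event and so is immaterial for outcomes.

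Next, I would pin down the on-path payments. Lemma~\ref{lem:uninformed_payments_equal} already gives $p_U^L(T) = p_U^H(T) = p_0$. Because $w_L \le (1-\mu_0)/\mu_0$ implies $1 - \mu_0(w_L+1) \ge 0$, the coefficient on $\Delta^H_{v,t}(T)$ in the reduced pointwise objective is nonnegative, so (IC-U) at $\underbar{v}$ must bind: $u(\underbar{v},t) = \lambda \underbar{v} I(U,t)$. Combined with the envelope integration of (IC-V), this pins down $u(v,t)$ and hence $\Delta^H_{v,t}(T)$; direct substitution of $I(v,t)$ from \eqref{eqn:first_ivt} and $I(U,t) = (t_0-t)_+$ yields $\Delta^H_{v,t}(T) = \lambda v_0(T - t_0) \mathbbm{1}[v \ge v_0]$ for $t \le t_0$, matching the trial mechanism's payment rule.

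For the case $w_L > -1$, binding (IR-0) (Lemma~\ref{lem:uninformed_payments_equal}) gives
\[ p_0 = \mu_0 \mathbb{E}[N_T v - \Delta^H_{v,\tau}(T)\mathbbm{1}[\tau \le T] \mid \theta = H]. \]
I would then compute each expectation for the trial-mechanism allocation: for $v \ge v_0$, the cumulative expected access is $t_0 + (T-t_0)(1-e^{-\lambda t_0})$; for $v < v_0$, it is $t_0$. Plugging in the value of $\Delta^H$ above, substituting the definition of $\pi_0$ from Lemma~\ref{lemma:pluginPointWise}, and simplifying recovers \eqref{eqn:thm_p0}. For the boundary case $w_L = -1$, the objective collapses to $\mathbb{E}[\Delta^H_{v,\tau}(T)\mathbbm{1}[\tau \le T]]$ (the $p_0$ terms cancel), so $p_0$ is indeterminate in the relaxed problem and can take any value in $[0, \text{IR-0-binding bound}]$, which coincides with the stated interval since $\pi_0 - v_0(1-F(v_0)) = 0$ at $w_L = -1$.

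The most delicate step will be the bookkeeping in the final price derivation: ensuring consistent conventions for $\Delta^H_{v,\tau}$ when $\tau = \infty$, and correctly decomposing $\mathbb{E}[N_T v \mid \theta = H]$ according to whether the buyer's type crosses the threshold $v_0$ and whether the first arrival occurs within the trial window $[0,t_0]$. The uniqueness assertion then requires noting that the proof pins down only on-path allocations and cumulative payments at $T$; off-path responses after a low-type ``arrival'' report and the intra-period timing of payments (flow versus lump-sum) remain free, which is precisely the sense of outcome-equivalence intended in the statement.
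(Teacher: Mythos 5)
Your overall architecture matches the paper's proof: assemble the bang-bang control from Lemma~\ref{lem:ctrl} with the pointwise characterization \eqref{eqn:first_ivt} to pin down the on-path allocation, use the envelope representation plus the binding (IC-U) at $\ubar{v}$ to recover $\Delta^H_{v,t}(T)=\lambda v_0(T-t_0)\mathbbm{1}[v\ge v_0]$ on path, obtain $p_0$ from binding (IR-0) when $w_L>-1$, and observe that the residual freedom (off-path $I^L_U$ and $p^L_{v,t}$, plus intra-period timing of payments) is precisely what outcome-equivalence is meant to absorb.

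The gap is at the step you call ``the most delicate'' and then skip: you assert that substituting $\Delta^H$ and $\pi_0$ ``recovers \eqref{eqn:thm_p0},'' but carrying out the calculation does not produce that expression. From binding (IR-0), $p_0 = \mu_0\,\mathbb{E}[N_T v - \Delta^H_{v,\tau}(T)\mathbbm{1}[\tau\le T]\mid \theta=H]$. With the trial allocation, $\mathbb{E}[N_T v\mid\theta=H] = \lambda t_0 + \lambda(T-t_0)(1-e^{-\lambda t_0})\int_{v_0}^{\bar v}v f(v)\,\dd v$ and $\mathbb{E}[\Delta^H_{v,\tau}(T)\mathbbm{1}[\tau\le T]\mid\theta=H] = \lambda v_0(T-t_0)(1-e^{-\lambda t_0})(1-F(v_0))$, giving
\[
p_0 \;=\; \lambda\mu_0 t_0 \;+\; \lambda\mu_0(T-t_0)\bigl(1-e^{-\lambda t_0}\bigr)\int_{v_0}^{\bar v}\bigl(1-F(v)\bigr)\,\dd v,
\]
which is independent of $w_L$ except through $(v_0,t_0)$. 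The expression \eqref{eqn:thm_p0}, by contrast, equals $\lambda\mu_0 t_0 + \lambda\mu_0(w_L+1)(T-t_0)(1-e^{-\lambda t_0})\int_{v_0}^{\bar v}(1-F(v))\,\dd v$ once you expand $\pi_0 - v_0(1-F(v_0)) = \mu_0(w_L+1)\int_{v_0}^{\bar v}(1-F(v))\,\dd v$; the two agree only at $w_L = 0$. This mismatch also undercuts your $w_L=-1$ remark: the IR-binding bound is $\lambda\mu_0 t_M + \lambda\mu_0(T-t_M)(1-e^{-\lambda t_M})\int_{v_M}^{\bar v}(1-F(v))\,\dd v$ (which is exactly $\pi_L^D$ in the paper's Appendix A), strictly larger than the $\lambda\mu_0 t_M$ you would obtain from \eqref{eqn:thm_p0} after noting $\pi_0-v_0(1-F(v_0))=0$. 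So either the paper's equation \eqref{eqn:thm_p0} contains a $(w_L+1)$ factor it should not, or something in the (IR-0) derivation has been mishandled; either way, asserting that the substitution ``recovers'' \eqref{eqn:thm_p0} without checking is the missing piece, and one should flag and resolve the discrepancy rather than take the displayed formula on faith.
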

The expressions for $p_0$ are determined by obtaining $p^H_U(T) = p^L_U(T)$ from Lemma \ref{lem:uninformed_payments_equal} and plugging in the forms for $I(U,t)$ and $I(v,t)$ solved for previously. Since the analysis pinned down $I(U,t), I(v,t)$ and $\Delta^H_{v,t}(T)$, it is straightforward to check that given these two objects, all interim payoffs are uniquely pinned down. 
To complete the proof of Theorem \ref{thm:trial}, one just needs to verify all constraints dropped in the relaxed problem hold. The details are left to the appendix. 


\section{Equilibrium Analysis}\label{sec:equilibria}

After establishing Theorem \ref{thm:trial} to characterize the set of payoff outcomes of IC-IR mechanisms in the previous section, we move on to discuss equilibria. We first characterize the equilibrium payoffs, and then discuss which equilibria  survive the stronger D1 equilibrium refinement.

\subsection{Equilibrium Payoffs}\label{sec:reasonable_eq}
In this subsection, we characterize the equilibrium payoffs where the high-type seller makes weakly more than the low-type seller, and justify our focus on such equilibrium outcomes.

\begin{proposition}\label{prop:payoff_set}
    A payoff pair $(\pi_L,\pi_H)$ is an equilibrium payoff if and only if it is the outcome of a feasible IC-IR mechanism $(\pi_L,\pi_H)\in \Pi_\ICIR$ and the high-type seller gets at least the free-trial revenue: $\pi_H \geq \pi_F$.
\end{proposition}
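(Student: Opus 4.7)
The plan is to establish the two directions of the equivalence separately. Necessity uses the high-type seller's ability to deviate to a slightly subsidized Myersonian free trial to force $\pi_H \geq \pi_F$, together with a standard argument that equilibrium payoffs can always be rearranged into an IC-IR mechanism at belief $\mu_0$. Sufficiency constructs a pooling equilibrium whose deviations are deterred by pessimistic off-path beliefs, with Corollary \ref{corr:free} bounding the best high-type deviation.

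For necessity, I would first observe that any equilibrium payoff lies in $\Pi_\ICIR$. In a pooling equilibrium this is immediate since the on-path belief is $\mu_0$ and the mechanism must be IC-IR at $\mu_0$ directly; in a separating or semi-separating equilibrium, I would combine the type-contingent mechanisms $m_H, m_L$ into a single pooled direct mechanism whose initial seller-report selects the branch. The pooled mechanism inherits IR at $\mu_0$ as a convex combination of the type-specific IRs, and inherits (IC-S) because each type's original preference for proposing its own mechanism at the respective beliefs translates exactly into the report-IC in the pooled mechanism. To show $\pi_H \geq \pi_F$, I would take the Myersonian free trial with arbitrarily small flow subsidy $\epsilon > 0$ as a candidate high-type deviation. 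The subsidy makes acceptance strictly preferred by the buyer under any off-path belief, and standard IC arguments then ensure truthful reporting during the trial and upgrading iff $v \geq v_M$ after it, so the high-type's deviation payoff is at least $\pi_F - \epsilon t_M$. Equilibrium precludes this from being profitable for every $\epsilon > 0$, hence $\pi_H \geq \pi_F$.

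For sufficiency, fix $(\pi_L,\pi_H) \in \Pi_\ICIR$ with $\pi_H \geq \pi_F$ and let $m^*$ be any IC-IR mechanism attaining it. Construct the pooling equilibrium in which both types propose $m^*$, the buyer accepts with belief $\mu_0$, all parties report truthfully, and for any off-path proposal $m' \neq m^*$ the buyer assigns pessimistic belief $\mu(H \mid m') = 0$. Under this belief, buyer acceptance of $m'$ requires nonnegative utility at $\theta = L$, forcing $p^L_U(T) \leq 0$; hence the low-type's best deviation payoff is at most $0 \leq \pi_L$. For the high-type, any accepted deviation gives low-type payoff $\pi_L^{\text{dev}} \leq 0$, and such a mechanism automatically satisfies $\mu_0$-IR (since buyer utility under $\theta = H$ is nonnegative by IC, making the $\mu_0$ weighted sum nonnegative when $\pi_L^{\text{dev}} \leq 0$), so Corollary \ref{corr:free} yields $\pi_H^{\text{dev}} - \pi_L^{\text{dev}} \leq \pi_F$ and thus $\pi_H^{\text{dev}} \leq \pi_F \leq \pi_H$. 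Condition 4 is trivially satisfied since a reward arrival off-path mechanically updates the buyer's belief to $\mu = 1$.

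The main obstacle is extending the deviation bound to indirect mechanisms, which the seller may propose off-path and which Corollary \ref{corr:free} does not directly cover. I would address this via a revelation-principle-type construction: given any indirect off-path mechanism together with the buyer's optimal continuation strategy under pessimistic belief $\mu = 0$, relabel the buyer's messages so that the buyer directly reports the arrival time and value, producing an outcome-equivalent direct mechanism in $\mathcal{M}$ whose low-type payoff is still nonpositive; Corollary \ref{corr:free} then applies to this direct surrogate. The subtle part is verifying that the relabeling preserves measurability of the access-price paths (so the surrogate lies in $\mathcal{M}$) and that the buyer's post-arrival behavior, dictated by the belief update in condition 4, remains incentive-compatible in the surrogate mechanism.
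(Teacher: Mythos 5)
Your overall route matches the paper's: necessity via the inscrutability principle plus the $\epsilon$-subsidized Myersonian free trial as a guaranteed high-type deviation, and sufficiency via a pooling equilibrium supported by maximally pessimistic off-path beliefs, with the revelation-principle step handling indirect off-path proposals. The paper handles the indirect-mechanism issue exactly as you suggest, by appealing to sequential optimality of the seller's report inside the continuation game.

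There is one genuine gap in the sufficiency step. To bound the high-type's deviation payoff you invoke Corollary \ref{corr:free}, which states $\max_{(\pi_L,\pi_H)\in\Pi_\ICIR}(\pi_H-\pi_L)=\pi_F$, and you justify membership of $(\pi_L^{\text{dev}},\pi_H^{\text{dev}})$ in $\Pi_\ICIR$ by checking only that the buyer's $\mu_0$-weighted IR holds. But $\Pi_\ICIR$ also requires incentive compatibility for the buyer \emph{at belief $\mu_0$}, and the constraint that an uninformed buyer does not want to falsely report a reward (so as to obtain the premium service) \emph{does} depend on the buyer's belief: it is slack at the off-path belief $\mu=0$ (the buyer thinks the premium service is worthless) yet can bind, or fail, at $\mu_0>0$. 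So the deviation's payoff pair need not lie in $\Pi_\ICIR$, and Corollary \ref{corr:free} cannot be applied directly. The paper avoids this by bounding the high-type's deviation payoff by the value of the \emph{relaxed} program (keeping only the envelope form of (IC-V), (IC-U), and the seller's (IC-S) together with $p^L_U(T)\le 0$), which coincides with the $w_L=-1$ relaxation in Theorem \ref{thm:trial} and hence equals $\pi_F$. Your conclusion survives if you replace the appeal to Corollary \ref{corr:free} with this observation --- i.e. that the deviation satisfies the relaxed constraints, whose maximum in the direction $(-1,1)$ is $\pi_F$ by outcome-uniqueness of the relaxed solution --- but as written, the argument asserts a membership in $\Pi_\ICIR$ that you have not verified.
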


\begin{figure}
     \centering
     \begin{subfigure}[b]{0.45\textwidth}
         \centering
         \begin{tikzpicture}[scale=1]
        \draw[->, thick] (-1,0) -- (5,0) node[anchor=west]{$\pi_L$};
        \draw[->, thick] (0,-1,0) -- (0,5) node[anchor=south]{$\pi_H$};
        \draw[dashed] (-1,-1) -- (5,5);
        \draw[dashed] (-1,3.5) -- (5,3.5);

        \filldraw[blue, opacity=0.4] (3.1,3.5) .. controls (2.6,5) and (1.5,5) .. (0.3,3.8) -- (0,3.5) -- (3.1,3.5);
        
        \filldraw (3.2,3.2) circle (2pt) node[anchor=north west]{$B$};
        \filldraw (1.9,4.65) circle (2pt) node[anchor=south]{$H$};
        \filldraw (0,3.5) circle (2pt) node[anchor=north east]{$F$};

    \end{tikzpicture}
         \caption{$\pi_F  \geq  \lambda \mu_0 T$}
         \label{fig:eq_payoffs_no_fb}
     \end{subfigure}
     \hfill
     \begin{subfigure}[b]{0.45\textwidth}
         \centering
         \begin{tikzpicture}
        \draw[->, thick] (-1,0) -- (5,0) node[anchor=west]{$\pi_L$};
        \draw[->, thick] (0,-1,0) -- (0,5) node[anchor=south]{$\pi_H$};
        \draw[dashed] (-1,-1) -- (5,5);
        \draw[dashed] (-1,2.3) -- (5,2.3);

        \filldraw[blue,opacity=0.4] (4,4) .. controls (3,5) and (1.5,3.8) .. (1,3.3) -- (0,2.3) -- (2.3,2.3) -- (4,4);
        \filldraw[blue,opacity=0.15] (2.3,2.3) -- (3, 2.3) .. controls (3.8,3) and (4.3,3.9) .. (4,4) -- (2.3,2.3);
        
        \filldraw (4,4) circle (2pt) node[anchor=north west]{$B$};
        \filldraw (3.15,4.37) circle (2pt) node[anchor=south]{$H$};
        \filldraw (0,2.3) circle (2pt) node[anchor=north east]{$F$};
        
    \end{tikzpicture}
         \caption{$\pi_F  <  \lambda \mu_0 T$}
         \label{fig:eq_payoffs_fb}
     \end{subfigure}
    \caption{The equilibrium payoff sets. Point $F$ denotes the payoffs from the free-trial mechanism, point $H$ denotes the best mechanism for the $\theta=H$ type seller, and point $B$ denotes the payoff outcome from selling ex-ante. }
    \label{fig:eq_payoffs}
\end{figure}

Proposition \ref{prop:payoff_set} implies that the additional restriction imposed by the equilibrium condition is precisely that the high-type seller must get at least as much as their free-trial payoff. Figure \ref{fig:eq_payoffs} plots equilibrium payoff outcomes. As Figure \ref{fig:eq_payoffs_no_fb} illustrates, this sometimes excludes the possibility of selling ex-ante; as Figure \ref{fig:eq_payoffs_fb}, the equilibrium payoff set could include the outcome from selling ex-ante, as well as mechanisms where the high-type seller gets less than the low-type seller. We first argue that a weak refinement rules out these outcomes.

Define a payoff outcome $(\pi_L,\pi_H)$ to be a \textit{reasonable equilibrium payoff} if and only if $\pi_H \geq \pi_L$.
We argue that any equilibrium payoff $(\pi_L,\pi_H)$ with $\pi_H<\pi_L$ does not survive a very weak refinement of off-path beliefs. To see this, consider an equilibrium payoff $(\pi_L,\pi_H)$ such that $\pi_L> \pi_H$. From the ex-ante IR constraint of the buyer, we know the average payoff
\begin{gather*}
    p := \mu_0 \pi_H + (1-\mu_0) \pi_L < \mu_0 \lambda T.
\end{gather*}
Moreover, $\pi_L>p>\pi_H$. Now suppose that the high-type seller deviates and sells the entire service to the buyer at a price of $p$. The high-type seller makes the following informal speech:
\begin{quote}
    If I were a low-type seller, this deviation would not benefit me because the maximum I could receive is $p$, which is strictly lower than my equilibrium payoff. Therefore, when confronted with this deviation, you should not be more concerned about me being a low-type seller than you would be in equilibrium. 
    \footnote{A related speech appears in Section 6 of \citet{myerson83} as justification for focusing on core mechanisms; however, our refinement is weaker than the core mechanism requirement of \cite{myerson83}.}
\end{quote}

As long as the buyer holds a belief $\mu\geq \mu_0$ when facing this deviation, it is strictly optimal to purchase the service. Thus, the high-type seller has a profitable deviation, and any payoff where $\pi_H < \pi_L$ cannot be supported in equilibrium.

We separate this refinement from the stronger equilibrium refinements and embed it in the analysis of the equilibrium payoff set because the assumption that the buyer is convinced by the above informal speech is fundamentally weaker than the  standard signaling game refinement concepts.\footnote{For example, it is weaker than the Intuitive Criterion as the Intuitive Criterion requires the buyer to have an extreme belief $\mu=1$ facing this deviation, whereas here the buyer can have any belief $\mu\geq \mu_0$.} 
Note that Theorem \ref{thm:trial} precisely characterized the boundary of payoff outcomes of mecahnisms where $\pi_H \ge \pi_L$; together, Proposition \ref{prop:payoff_set} and Theorem \ref{thm:trial} characterize all reasonable equilibrium payoffs.\footnote{In contrast to Theorem \ref{thm:trial}, which shows that any boundary payoff can be uniquely implemented by a trial mechanism, the implementation of non-boundary payoffs is not unique. To achieve an interior (non-boundary) payoff, one can pick a trial mechanism $(v_0, t_0, p_0)$ on the boundary and reduce price $p_0$ so that the payoff pair moves downwards along a 45-degree line.}
\begin{corollary}
All reasonable equilibrium payoffs can be implemented by trial mechanisms.
\end{corollary}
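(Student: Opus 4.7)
I would derive this corollary directly from Theorem~\ref{thm:trial} and Proposition~\ref{prop:payoff_set} by showing that every reasonable equilibrium payoff can be reached from a boundary trial mechanism via a uniform reduction in the initial price. Given any reasonable equilibrium payoff $(\pi_L, \pi_H)$, Proposition~\ref{prop:payoff_set} combined with the reasonableness restriction yields $(\pi_L, \pi_H) \in \Pi_\ICIR$ with $\pi_H \geq \max(\pi_L, \pi_F)$ and $\pi_L \geq 0$. Since $\Pi_\ICIR$ is convex, bounded (by the first-best revenue $\lambda \mu_0 T$), and closed, the ray $\delta \mapsto (\pi_L + \delta, \pi_H + \delta)$ exits $\Pi_\ICIR$ at some maximal $\delta^\ast \geq 0$; let $(\pi_L^\ast, \pi_H^\ast) = (\pi_L + \delta^\ast, \pi_H + \delta^\ast) \in \partial \Pi_\ICIR$.

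The main technical step is to argue $(\pi_L^\ast, \pi_H^\ast) \in \partial \Pi_+$. Because $\pi_H^\ast - \pi_L^\ast = \pi_H - \pi_L \geq 0$, the exit point lies weakly above the diagonal. By Corollary~\ref{corr:exante}, the unique maximizer of the prior-weighted objective $(1-\mu_0)\pi_L + \mu_0 \pi_H$ on $\Pi_\ICIR$ is the diagonal point $(\lambda \mu_0 T, \lambda \mu_0 T)$, so the normalized supporting normal there is $((1-\mu_0)/\mu_0, 1)$. Convexity of $\Pi_\ICIR$ then implies that as one traces $\partial \Pi_\ICIR$ upward from the first-best point into the region $\pi_H > \pi_L$, the supporting normal $(w_L, 1)$ rotates with $w_L$ strictly smaller than $(1-\mu_0)/\mu_0$. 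Hence any boundary point with $\pi_H \geq \pi_L$ admits a supporting normal in the admissible range $w_L \leq (1-\mu_0)/\mu_0$, placing $(\pi_L^\ast, \pi_H^\ast)$ in $\partial \Pi_+$.

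Given $(\pi_L^\ast, \pi_H^\ast) \in \partial \Pi_+$, Theorem~\ref{thm:trial} yields a trial mechanism with parameters $(v_0^\ast, t_0^\ast, p_0^\ast = \pi_L^\ast)$ implementing it. I then construct the desired mechanism by keeping $(v_0^\ast, t_0^\ast)$ fixed and replacing $p_0^\ast$ with $p_0 = p_0^\ast - \delta^\ast = \pi_L$. Since only the lump-sum $p_0$ changes and it is paid by both types equally, the incremental payment schedule $\Delta^H_{v,t}(T) = p^H_{v,t}(T) - p^H_U(T)$ is unaltered; therefore the buyer constraints (IC-U) and (IC-V) and the seller constraint (IC-S) continue to hold, buyer (IR) is weakly relaxed, and $p_0 = \pi_L \geq 0$ preserves seller participation. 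The modified trial mechanism therefore delivers payoff pair $(\pi_L^\ast - \delta^\ast, \pi_H^\ast - \delta^\ast) = (\pi_L, \pi_H)$. Equilibrium support is then already guaranteed by Proposition~\ref{prop:payoff_set}: both seller types pool on this trial mechanism, and off-path deviations are punished by the buyer belief $\mu=0$, which makes any deviation at most as profitable as a free trial (bounded by $\pi_F \leq \pi_H$ for the high type and by $0 \leq \pi_L$ for the low type).

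The step I expect to be the main obstacle is the second paragraph: verifying that the exit point lies in $\partial \Pi_+$ rather than on a portion of $\partial \Pi_\ICIR$ whose supporting normal has $w_L > (1-\mu_0)/\mu_0$. This requires combining the uniqueness statement of Corollary~\ref{corr:exante} with a careful convex-analytic argument about how supporting normals rotate along the boundary of $\Pi_\ICIR$ as one moves from the diagonal first-best point upward into the region $\pi_H > \pi_L$.
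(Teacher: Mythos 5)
Your plan follows the same route the paper sketches in its footnote to this corollary: pick a boundary trial mechanism from Theorem~\ref{thm:trial} and slide the payoff pair down a $45^\circ$ line by lowering the lump-sum $p_0$. The construction in the final paragraph (keep $(v_0^\ast, t_0^\ast)$, set $p_0 = p_0^\ast - \delta^\ast = \pi_L$, note that $\Delta^H_{v,t}(T)$ is unchanged so only (IR-0) is affected, and it is relaxed) is correct and is the heart of the argument.

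The second paragraph, however, leans on a premise the paper never establishes: convexity of $\Pi_\ICIR$. The paper restricts attention to deterministic mechanisms, so $\Pi_\ICIR$ is not a priori convex by the usual randomization argument, and consequently neither the existence of a well-defined exit point on a genuine boundary nor the ``rotating supporting normal'' step is licensed as stated. You can sidestep both issues entirely. Set $d := \pi_H - \pi_L$; reasonableness gives $d \geq 0$, and Corollary~\ref{corr:free} gives $d \leq \pi_F$. By Theorem~\ref{thm:trial} and Proposition~\ref{prop:comparative_private} the boundary trial mechanisms are parameterized by $w_L \in [-1,(1-\mu_0)/\mu_0]$ with $\pi_H - \pi_L$ varying continuously and monotonically from $\pi_F$ at $w_L = -1$ down to $0$ at $w_L = (1-\mu_0)/\mu_0$ (where $t_0 = T$). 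The intermediate value theorem then produces a $w_L^\ast$ whose boundary trial mechanism $(v_0^\ast, t_0^\ast, p_0^\ast)$ has exactly difference $d$. To see $p_0^\ast \geq \pi_L$ when $w_L^\ast > -1$, use that this trial mechanism maximizes $w_L^\ast \pi_L' + \pi_H'$ over $\Pi_\ICIR$ and $(\pi_L,\pi_H) \in \Pi_\ICIR$:
\[
w_L^\ast p_0^\ast + (p_0^\ast + d) \;\geq\; w_L^\ast \pi_L + (\pi_L + d) \quad\Longrightarrow\quad (w_L^\ast + 1)\, p_0^\ast \;\geq\; (w_L^\ast + 1)\, \pi_L,
\]
and divide by $w_L^\ast + 1 > 0$. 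For the knife-edge case $d = \pi_F$ (i.e.\ $w_L^\ast = -1$), Lemma~\ref{lem:necessity_high} already shows that any IC-IR mechanism with this difference must be outcome-equivalent to a trial mechanism with parameters $(v_M, t_M)$ and $p_0 \in [0, \pi_L^D]$, so one can directly take $p_0 = \pi_L$. This reaches the same conclusion as your plan without invoking convexity, and is the step I would replace; the rest of your proposal stands.
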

Note that trial mechanisms feature complete pooling, in which the allocation rule of the mechanism does not depend on the seller's report. This is stronger than the inscrutability principle of Myerson, which only states that the buyer receives no information from the mechanism proposal. In trial mechanisms, the buyer receives no information from the \textit{entire} dynamic allocation of the mechanism and only updates their belief according to observed signals. 
Complete pooling implies that both types of sellers offer the same trial regardless of their private information about the buyer; this implies that on-path, we would not expect to see different trials offered to different buyers.

\subsection{Further Refinements}
In the equilibrium payoff set characterized previously, there are many payoff outcomes generated by many trial mechanisms, as is typical of signaling games. In this subsection, we now discuss whether standard equilibrium refinements used in signaling games offer sharper predictions. We follow the signaling literature and focus on the D1 criterion. 
Roughly, the D1 refinement requires that, when observing an unexpected deviating mechanism, the buyer believes that the seller is of the type that ``benefits most'' from the deviation, where ``benefit most'' means the type benefits for a larger set of possible buyer responses. 
The following result shows how the D1 criterion refines the equilibrium payoff set; we will provide intuition for the result while relegating the formalism to the appendix.
\begin{proposition}
\label{prop:D1}
The only trial mechanisms proposed in equilibria that survive the D1 Criterion have trial length $t_M$ and post-trial rate $v_M$. They differ only in the price charged for the trial. All payoff outcomes are Pareto dominated by point $H$ for any prior $\mu_0$, and are Pareto dominated by point $B$ if $\mu_0$ is small.
\end{proposition}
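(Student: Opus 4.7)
The plan is to apply the D1 refinement to the trial equilibria from Proposition \ref{prop:payoff_set}, showing only $(v_M, t_M)$-trials survive, and then to establish the Pareto comparisons.

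First, I would characterize D1 on trial-mechanism deviations. For a candidate equilibrium trial $(v_0, t_0, p_0)$, with payoffs $\pi_L = p_0$ and $\pi_H = p_0 + R(v_0, t_0)$ where $R(v,t) := (1 - e^{-\lambda t})(1 - F(v)) \lambda v (T - t)$, a deviation to trial $(v_0', t_0', p_0')$ strictly benefits the H-type when the buyer's acceptance probability $\alpha > \pi_H/(p_0' + R(v_0', t_0'))$, and strictly benefits the L-type when $\alpha > \pi_L/p_0'$. Comparing these half-open intervals, the H-type's strictly contains the L-type's iff $p_0' < p_0\, R(v_0', t_0')/R(v_0, t_0)$, which forces D1 belief $\mu = 1$; the opposite inequality forces $\mu = 0$. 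Since $(v_M, t_M)$ uniquely maximizes $R$, any candidate equilibrium with $R(v_0, t_0) < \pi_F$ admits the D1 H-plausible deviation $(v_M, t_M, p_0')$ for any $p_0'$ in the open interval $J := (\pi_H - \pi_F,\, p_0 \pi_F/R(v_0, t_0))$; routine algebra confirms $J$ is nonempty (using $\pi_H \ge \pi_F$ from Proposition \ref{prop:payoff_set}) and intersects $[0, V(v_M, t_M)]$, where $V(v, t) := \lambda t + (1-e^{-\lambda t})\lambda(T-t)\int_v^{\bar v}(w - v)f(w)\,\dd w$ denotes the buyer's H-conditional trial value, i.e., the maximum price acceptable under belief $\mu = 1$. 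The buyer then accepts and H obtains $p_0' + \pi_F > \pi_H$, breaking the equilibrium. Hence every D1-surviving trial has $(v_0, t_0) = (v_M, t_M)$, with $p_0 \in [0, \mu_0 V(v_M, t_M)]$ bounded only by seller and buyer IR.

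Second, I would verify that no remaining deviation breaks the $(v_M, t_M, p_0)$ equilibrium under D1. Deviations to other trials with $R(v_0', t_0') < \pi_F$ trigger D1 belief $\mu = 0$, so the buyer rejects any positive price and such deviations are unprofitable for H. For non-trial mechanism deviations, I would invoke Theorem \ref{thm:trial}: the Pareto frontier of $\Pi_\ICIR$ is attained by trial mechanisms, so any payoff vector achievable via a non-trial deviation under a D1-consistent belief is weakly dominated by a trial deviation payoff at the same belief, and the first step has ruled out such trial deviations.

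Third, for Pareto comparisons, the D1 payoff set is $\{(p_0, p_0 + \pi_F) : 0 \le p_0 \le \mu_0 V(v_M, t_M)\}$. Point $H = (\mu_0 V(v_H, t_H),\, \mu_0 V(v_H, t_H) + R(v_H, t_H))$ Pareto-dominates: Theorem \ref{thm:trial} with $w_L = 0$ yields $v_H \le v_M$ and $t_H \ge t_M$, each of which weakly increases $V$, so $\mu_0 V(v_H, t_H) \ge \mu_0 V(v_M, t_M) \ge p_0$, and the optimality of $(v_H, t_H)$ for the objective $\mu_0 V + R$ (versus the sub-optimal choice $(v_M, t_M)$) yields $\mu_0 V(v_H, t_H) + R(v_H, t_H) \ge \mu_0 V(v_M, t_M) + R(v_M, t_M) \ge p_0 + \pi_F$. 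For point $B = (\lambda \mu_0 T, \lambda \mu_0 T)$, Pareto dominance reduces to $\lambda \mu_0 T \ge \mu_0 V(v_M, t_M) + \pi_F$, equivalently $\mu_0(\lambda T - V(v_M, t_M)) \ge \pi_F$, which pins down the $\mu_0$-threshold condition in the proposition. The main obstacle I foresee is carefully verifying D1 beliefs for non-trial deviations, since these involve the buyer's reporting strategy in a more complex way than simple acceptance; the reduction to trial mechanisms via Theorem \ref{thm:trial} should make this manageable but tedious.
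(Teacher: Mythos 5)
Your proposal shares the overall architecture of the paper's proof — first eliminate non-$(v_M,t_M)$ trials via a constructive D1-breaking deviation, then verify $(v_M, t_M)$ trials survive, then compare payoffs with points $H$ and $B$ — and your Pareto comparison in the third step is essentially correct (though you should justify that $V$ is increasing in $t$; it follows because $S(v) := \int_v^{\bar{v}}(w-v)f(w)\,\dd w \le \mathbb{E}[v]-\ubar{v} < 1$, so the information-rent correction to $\partial V/\partial t$ cannot overturn the positive $\lambda$ term). However, there are two genuine gaps.

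First, your parametrization of the buyer's response by a scalar acceptance probability $\alpha$ omits the $(B,B)$ response — purchasing post-trial service even without a reward — which is central to the paper's analysis of $\Sigma_{tr}$. Under $(B,B)$ the two seller types' payoffs are \emph{equal}, whereas under $(B,N)$ only $H$ gains post-trial revenue, so the set comparison $\Sigma_{tr}(m, H, \pi_H) \supsetneq \Sigma_{tr}(m, L, \pi_L)$ hinges on $(B,N)$ being a rational continuation for some posterior. Your half-open interval comparison over $\alpha$ implicitly fixes the post-acceptance strategy at $(B,N)$; this happens to deliver the right threshold $p_0' < p_0\,R(v_0',t_0')/R(v_0,t_0)$, but the D1 criterion as formalized in the paper is about sets of continuation equilibria, and you need to enumerate $\{(B,B),(B,N),N\}$ to verify strict containment, not compare a one-dimensional probability.

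Second, and more substantively, your second step — verifying that $(v_M, t_M, p_0)$ equilibria with $p_0 \le \pi_L^D$ actually survive D1 — is essentially a placeholder. The argument that ``any payoff from a non-trial deviation is dominated by some trial deviation'' does not establish D1 survival, because D1 is not a statement about Pareto dominance of payoff vectors: it is about whether there \emph{exists} a deviating mechanism $m$ for which the set of $H$-beneficial continuation equilibria strictly contains the set of $L$-beneficial ones. The paper's Lemma \ref{lemma:obviouslyD1} proves the much stronger claim that \emph{no} deviation $m$, belief $\mu$, and continuation equilibrium $\sigma$ can simultaneously satisfy $\pi_H(\sigma) > \pi_H$ and $\pi_L(\sigma) \le \pi_L$, by a duality argument: if such a triple existed, then the implied mechanism (with its binding upper bound $\pi_L$ on the low type's payoff and the buyer's posterior $\mu$) would give $H$ a surplus over $\pi_L$ exceeding $\Pi(\pi_L) - \pi_L$, contradicting the fact that $\Pi(\pi_L) - \pi_L$ is precisely the optimum of the relaxed design problem at weight $w_L = -1$ (where the ex-ante IR can be dropped and $p^H_U(T) = p^L_U(T)$). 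This bounding argument is what rules out clever non-trial deviations; reducing to trial deviations via Theorem \ref{thm:trial} does not capture it, because Theorem \ref{thm:trial} characterizes the boundary of $\Pi_\ICIR$ at the prior $\mu_0$ with both participation constraints intact, whereas here you must control what a deviating mechanism can achieve under an \emph{arbitrary} off-path belief $\mu$ subject to the $\pi_L$ cap.
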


\begin{figure}
    \centering
    \begin{tikzpicture}[scale=0.7]
        \draw[->, thick] (-1,0) -- (5,0) node[anchor=west]{$\pi_L$};
        \draw[->, thick] (0,-1,0) -- (0,5) node[anchor=south]{$\pi_H$};
        \draw[dashed] (-1,-1) -- (5,5);

        \draw[dashed] (0,0) -- (4,4) .. controls (3,5) and (1.5,3.8) .. (1,3.3) -- (0,2.3) -- cycle;
        
        \filldraw (0,2.3) circle (2pt) node[anchor=north east]{$F$};
        \filldraw (1,3.3) circle (2pt) node[anchor=south east]{$D$};
        \filldraw (3.15,4.37) circle (2pt) node[anchor=south]{$H$};
        \filldraw (4,4) circle (2pt) node[anchor=south]{$B$};
        
        \draw[very thick] (0,2.3) -- (1,3.3);
    \end{tikzpicture}
    \caption{The payoff set surviving D1 is the segment from $D$ to $F$.}
    \label{fig:D1_payoffs}
\end{figure}

Among all possible trial mechanisms that achieve different boundary payoffs in $\partial \Pi_+$,
 the only equilibrium trial mechanisms that survive equilibrium selection have the shortest trial length and the highest post-trial rate. They feature the maximum degree of learning-based discrimination in both the extensive margin and the intensive margin, and lead to the minimum social surplus. 
 As illustrated in Figure \ref{fig:D1_payoffs}, all points between $D$ and $F$ are Pareto dominated for the seller by point $H$. Furthermore, they can also be Pareto dominated for the seller by the ex-ante profit maximizing mechanism at point $B$ if the market belief $\mu_0$ is sufficiently small. Therefore, sellers of both types may be strictly worse off when the seller can predict the buyer's experience but cannot credibly communicate the information.

Compared to Proposition \ref{prop:free_trial_deviation}, the negative result in Proposition \ref{prop:D1} requires a stronger equilibrium refinement and yields a stronger negative prediction. In  Proposition \ref{prop:free_trial_deviation}, ex-ante revenue maximization is not possible under some parameters, and only the low-type seller would be worse off than the first-best benchmark. In Proposition \ref{prop:D1}, ex-ante revenue maximization is never possible, and even the high-type seller might be worse off than the first-best benchmark. This channel of signaling incentives does not exist in the static environment either: selling everything ex-ante at price $\lambda \mu_0 T$ survives the D1 criterion in the benchmark environment as defined in Proposition \ref{prop:benchmark}.

The basic intuition of Proposition \ref{prop:D1} is as follows. The high-type seller profits from both the initial payment for the trial and from post-trial sales. The optimal design (point H) optimally balances this tradeoff. However, the low-type also benefits from the payment for the trial. Thus, the high-type seller always has an incentive to signal high match quality by deviating to a mechanism that reduces the ex-ante payment but generates more revenue after the trial, which can only benefit the high-type seller. Consequently, only the mechanisms such that the seller charges the monopoly price $v_M$ after the trial can survive equilibrium refinement; these are precisely the mechanisms with maximum learning-based price discrimination. 

\paragraph{Discussion}
To connect our results with the informed principal literature, we relate our trial mechanisms to the mechanisms discussed in \cite{myerson83}. The best ``safe'' mechanism is precisely the Myersonian free trial if the Myerson optimal price for distribution $F$ is larger than the expected buyer value $\mathbb{E}[v]$. However, because of the learning that appears in the dynamic informed principal model, the Myersonian free trial is not safe if the Myerson optimal price is less than the expected buyer value. In our framework, the agent must find it optimal to accept the proposal \textit{and} report the learning process truthfully. Under the free trial mechanism, if the buyer knows $\theta=H$ but does not receive any rewards during the trial, the buyer will find it optimal to misreport and purchase the post-trial service. This implies the Myersonian free trial is not safe; in this case, the best safe mechanism for the high-type seller is a free trial with an even larger price after the trial; in particular, it must set a post-trial price at least $\mathbb{E}[v] = 1$. 
Since the Myersonian free trial is dominated, all safe mechanisms are dominated, which implies that the dynamic informed principal problem we study has no strong solution. 
Further, the D1-surviving equilibria are dominated, which implies that they eliminate all of the neutral optimum and core mechanisms.

The equilibrium selection result in our model appears quite different from standard signaling games such as Spence’s education signaling model, where all pooling equilibria are eliminated by the intuitive criterion. This contrast stems from the nature of the sender’s signaling instrument. In Spence’s classical setup, there is no upper bound on education costs, allowing high types to always deviate to a more costly education level to credibly signal their strength. In our model, by contrast, high types can signal their strength by proposing trial mechanisms that generate higher revenue through post-trial price discrimination. However, since the revenue from price discrimination is bounded above, pooling equilibria in which sellers offer trial mechanisms with maximal post-trial revenue survive the refinement.

\section{Endogenizing Quality}\label{sec:general_screening}
In the baseline model, the seller's choice of access $I$ simultaneously determines the buyer's learning rate $\lambda I$ and the expected payoff flow $\lambda Iv$, conditioned on $\theta=H$. This perfect co-linearity is particularly relevant when the seller's capability to customize the service is constrained, such as when the seller can only offer either full service or no service at all.

Nonetheless, in many applications, sellers often possess more sophisticated screening technologies. For instance, software vendors may offer versions with limited functionality along with a premium version. Such functional limitations may diminish the service's value, yet they can still highlight the potential advantages of the premium service to users. Similarly, streaming platforms may permit free viewing interspersed with numerous advertisements, along with an uninterrupted premium experience. Although frequent ad interruptions are a source of displeasure for viewers, users' ability to use the service facilitates their assessment of the platform's content offerings.

In this section, we extend our results to allow the seller to control quality as well as access. Formally, we now assume that the seller can provide $(I,q) \in {\mathcal{D}}\subseteq [0,1]^2$ at any time. The rewards arrives at Poisson  rate $\lambda I$ if and only if the $\theta=H$, and its arrival gives  utility $v {q}$ to the buyer. That is, the new instrument $q$ allows the seller to potentially degrade the service, reducing the value of rewards. We refer to $q$ as the service quality and $I$ as the learning rate. We take $\mathcal{D}$ to be any finite set that contains $(1,1)$ and some point where $I = 0$; that is, at minimum there exist options to provide the best service or no service.
In this context, the baseline we previously analyzed is a special case where $\mathcal{D} = \{ (I,1) \mid I \in [0,1] \}$; that is, the seller could only control access and quality was fixed to 1.

A mechanism is a triple $(I,q,p)$. The first term $I$ is a collection of measurable functions $I_t: H_t \to [0,1]$ that map history to learning rate. The second term $q$ consists of a collection of measurable functions $q_t: H_t \to [0,1]$ that map history to service quality. Any chosen allocation (learning rate-quality pair) must be feasible, so for any $h_t$, $(I(h_t),q(h_t))\in \mathcal{D}$. Prices $p$ are defined analogously as before. 

The results in the baseline model generalize under this specification. For the purpose of exposition, we present only the formal generalization of the IC-IR dynamic mechanisms analogously to Section \ref{sec:mechanisms}. That is, for weight $w_L \le (1-\mu_0)/\mu_0$, consider the design problem:
\begin{gather}
    \max_{I,q, p} ~ \left \{ w_L p^L_U(T) + \mathbb{E} \left[p^H_{v,\tau}(T)\mathbbm{1}[\tau \le T] + p^H_U(T)\mathbbm{1}[\tau > T] \mid \theta = H \right] \right \}, \label{prblm:quality} \\
    \text{subject to IC and IR.} \notag
\end{gather}
We will discuss the welfare implications of enriching the seller's screening technology after characterizing the IC-IR mechanisms. To present the characterization, we introduce some additional notation. Observe that when the service $(I,{q})\in {\mathcal{D}}$ is provided, the learning rate is $\lambda I$ and the expected flow payoff conditioned on $\theta=H$ is $\lambda v {q}I$. The general closed set $\mathcal{D}$ might seem difficult to handle at first glance, but it turns out that the relevant allocations are the extreme points with high learning rate and low expected flow payoff. To precisely characterize these extreme points, define \[ \tilde{\mathcal{D}} := \{(I,u) \mid \exists (I,q) \in \mathcal{D} \textnormal{ such that } qI = u\}, \] which is the pair of feasible learning rate / flow utility pairs. Denote the lower convex envelope of the set $\tilde{\mathcal{D}}$ as
\begin{gather*}
    \partial^- \conv(\tilde{\mathcal{D}}) := \{  (I,u) \in \conv(\tilde{\mathcal{D}})| u \leq u',~ \forall (I,u')\in \conv(\tilde{\mathcal{D}})     \}.
\end{gather*}
Because $\tilde{\mathcal{D}}$ is not necessarily convex, points in $ \partial^- \conv(\tilde{\mathcal{D}})$ are not necessarily in $\tilde{\mathcal{D}}$; however, extreme points of $ \partial^- \conv(\tilde{\mathcal{D}})$ are in $\tilde{\mathcal{D}}$. Let $\mathcal{D}_*$ be the extreme points of $\partial^- \conv(\tilde{\mathcal{D}})$. Let $\Gamma$ denote the set of functions that induce outcomes in $\mathcal{D}_*$ with decreasing learning rate:
\begin{align*}
    \Gamma &:= \{ \gamma(\cdot ):[0,T] \to \mathcal{D} ~ | ~ (\gamma_I(t), \gamma_q(t)\gamma_I(t)) \in \mathcal{D}_* ~ \forall t, \textnormal{ and }\gamma_I(s) \geq \gamma_I(t) ~\forall s \leq t      \} .
\end{align*}
The set $\Gamma$ contains the feasible allocation rules $\gamma(t)$ which induce some service with extremally high learning rate and low flow payoff, with decreasing learning rate over time. 
\paragraph{Example} To fix ideas, consider an example $\mathcal{D}=\{  (1,1), (0,0),(I_1,q_1),(I_2,q_2) \}$ with $0<q_1<q_2<1$. That is, besides the full service and no service, the seller also has two intermediate options which provide intermediate qualities and learning rates. For this example, Figure \ref{fig:D_example} depicts $\mathcal{D}$ and $\tilde{\mathcal{D}}$. The first panel, Figure \ref{fig:D_picture} plots $\mathcal{D}$, which lies in the space of quality/learning rate pairs. The second panel, Figure \ref{fig:Dtilde} plots $\tilde{\mathcal{D}}$, which lies in flow utility/learning rate space. Note that in the example in the figure, only $(I_1, q_1)$ corresponds to a learning rate/flow utility pair $(I_1, u_1)$ which lies in the lower convex envelope of $\tilde{\mathcal{D}}$; hence, any function in $\Gamma$ can never provide the allocation $(I_2, q_2)$. 
\begin{figure}
     \centering
     \begin{subfigure}[b]{0.44\textwidth}
         \centering
         \begin{tikzpicture}
            \draw[->, thick] (-0.5,0) -- (4.2,0) node[anchor=west]{$I$};
            \draw[->, thick] (0,-0.5) -- (0,4) node[anchor=south]{$q$};
            \filldraw[blue] (0,0) circle (2pt) node[anchor=north west]{(0,0)};
            \filldraw[blue] (4,4) circle (2pt) node[anchor=south west]{(1,1)};
            \filldraw[blue] (2.5,3.2) circle (2pt) node[anchor=south west]{$(I_2,q_2)$};
            \filldraw[blue] (1.5,0.5) circle (2pt) node[anchor=south west]{$(I_1,q_1)$};
         \end{tikzpicture}
         \caption{The set $\mathcal{D}$ }
         \label{fig:D_picture}
     \end{subfigure}
     \hfill
     \begin{subfigure}[b]{0.44\textwidth}
         \centering
         \begin{tikzpicture}
            \draw[->, thick] (-0.5,0) -- (4.2,0) node[anchor=west]{$I$};
            \draw[->, thick] (0,-0.5) -- (0,4) node[anchor=south]{$Iq$};
            \filldraw[blue] (0,0) circle (2pt) node[anchor=north west]{(0,0)};
            \filldraw[blue] (4,4) circle (2pt) node[anchor=south west]{(1,1)};
            \filldraw[blue] (2.5,2) circle (2pt) node[anchor=south east]{$(I_2,u_2)$};
            \filldraw[blue] (1.5,0.1875) circle (2pt) node[anchor=west]{$(I_1,u_1)$};
            \filldraw[blue, opacity = 0.2] (0,0) -- (4,4) -- (1.5,0.1875) -- cycle;
            \draw[red, very thick] (0,0) -- (1.5,0.1875) -- (4,4);
         \end{tikzpicture}
         \caption{The set $\tilde{\mathcal{D}}$.}
         \label{fig:Dtilde}
     \end{subfigure}
     
        \caption{Example with quality options $\mathcal{D} = \{ (0,0), (1,1), (I_1, q_1), (I_2, q_2) \}$. The red line in the second plot depicts the lower convex envelope $\partial^-\conv(\tilde{\mathcal{D}})$. Note that $\mathcal{D}_*$ only consists of the points $(0,0), (1,1)$, and $(I_1, u_1)$.}
        \label{fig:D_example}
\end{figure}

We can now define dynamic tiered pricing mechanisms.

\begin{definition}
    A dynamic mechanism $(I,q,p)$ is a \textit{dynamic tiered pricing mechanism} if and only if there exists $v_0$ and $\gamma_0(t) \in \Gamma$ such that
\begin{align*}
      & (I^L_U,q^L_U)(t) =(I^H_U,q^H_U)(t) = \gamma_0(t), \quad   \quad    (I_{v,t}^L,q_{v,t}^L)(t) =(I_{v,t}^H,q_{v,t}^H)(t) = 
     \begin{cases}
        (1,1) & v \ge v_0, \\ 
        (I_U^H,q_U^H)(t) & v < v_0,
    \end{cases}    
 \\
    & p_U^L   (t) = p_U^H   (t)  = p_0, \qquad \qquad      p_{v,s}^L(t) = p_{v,s}^H(t)  = 
    \begin{cases}
        p_0 + v_0 \lambda \int_s^t [1-I_U^H(x)q_U^H(x)] \dd x & v \ge v_0, \\
        p_0  & v < v_0.
    \end{cases}
\end{align*}
\end{definition}
Intuitively, in dynamic tiered pricing mechanisms, the seller provides the uninformed buyer with service that degrades over time, and offers an  option to upgrade to the premium service at an additional price. 
Notably, immediate communication between buyer and seller is necessary, as the service is upgraded immediately after the buyer receives a high-value utility shock. In reality, many software vendors provide a very short trial of the premium version and then offer a free budget version of the software, which not only has limited functionality but also might not be updated as frequently as the premium version. Many streaming services also provide a limited trial of the premium service (ad-free etc.) and then offer a budget version after it ends.

Trial mechanisms are a special case of dynamic tiered pricing mechanisms when $\mathcal{D} = \{ (I,1)| I\in[0,1] \}$. To see this, note that in the baseline model $\mathcal{D}_*= \{ (0,0), (1,1) \}$ and any $\gamma(t) \in \Gamma$ takes the form $\gamma(t) = (1_{t<t_0},1_{t <t_0})$. To the uninformed buyer, the mechanism provides the premium service $(1,1)$ before time $t_0$ and provides nothing after $t_0$. 

\begin{theorem}
\label{thm:tieredpricing}
    Denote $\ubar{I} = \min \{ I>0 | (I,u) \in \mathcal{D}_*\}$. If $\lambda T>\frac{2 s +1 - \ubar{v}}{ \ubar{I} s}  \frac{\mu_0 -\ubar{v}}{(1-\mu_0)\ubar{v}}$.\footnote{Here $s=\int_{\mu_0}^{\bar{v}}  (v-\mu_0) f(v) \dd v$ is the informed consumer's rent relative to uninformed consumer. Note that when $\mu_0\leq \ubar{v}$ this constraint is trivially satisfied.}, 
    There exists a dynamic tiered pricing mechanism with parameters $v(w_L)$ and $\gamma(w_L,t)\in \Gamma$ that solves the program \eqref{prblm:quality}.
\end{theorem}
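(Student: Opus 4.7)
The plan is to follow the three-step proof structure of Theorem~\ref{thm:trial} (relaxation, post-report, pre-report), now accommodating joint control of learning rate and quality. First, I would formulate a relaxed version of \eqref{prblm:quality} retaining only (IC-U), (IC-V), (IR-0), and (IC-S), exactly parallel to \eqref{prblm:relaxed}. The arguments in Lemma~\ref{lem:uninformed_payments_equal} carry through verbatim (the relevant exchange arguments are indifferent to the new quality dimension), so $p^L_U(T) = p^H_U(T)$ and (IR-0) binds, using now the modified flow-utility expression $\mathbb{E}[N_T v \mid \theta = H] = \lambda \mathbb{E}[v \int_0^T I(h,s) q(h,s) \dd s \mid \theta = H]$ arising from quality-adjusted reward sizes.

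Next, I would solve the post-report subproblem. Extending Lemma~\ref{Lemma_low_bind}, the effective post-report service $\int_t^T I(v,s) q(v,s) \dd s$ must weakly dominate the uninformed continuation $\int_t^T I_U(s) q_U(s) \dd s$; since $(1,1) \in \mathcal{D}$ is always feasible and maximizes the per-unit-time virtual surplus, the optimum assigns $(I(v,t), q(v,t)) = (1,1)$ for $v \ge v_0$ and continuation of the uninformed allocation otherwise, with threshold $v_0$ again given by \eqref{eqn:thm_v0}. Plugging this into the objective and applying the algebra of Lemma~\ref{lemma:pluginPointWise}, the pre-report problem reduces to a control problem over $(I_U(t), q_U(t)) \in \mathcal{D}$, with the natural change of variables $u(t) := I_U(t) q_U(t)$ yielding a functional in two controls $(I_U(t), u(t)) \in \tilde{\mathcal{D}}$.

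The main obstacle is to show that the optimal trajectory lies in $\Gamma$: that is, $(I_U(t), u(t)) \in \mathcal{D}_*$ at each $t$ and $I_U$ is weakly decreasing. For the lower-envelope restriction, the argument combines two observations. First, for a fixed $I_U$-schedule, the reduced objective is linear in $u$, so pointwise optimization picks an extreme value given $I_U(t)$. Second, joint optimization over $(I_U,u) \in \tilde{\mathcal{D}}$ can be convexified by time-mixing between extreme points of $\partial^- \conv(\tilde{\mathcal{D}})$; excess flow utility above the lower envelope raises the informed buyer's mimicking payoff (and thus the required information rent) without providing compensating extraction under the binding (IR-0), so attention may be restricted to $\mathcal{D}_*$ without loss. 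For monotonicity, a rearrangement/Hamiltonian argument analogous to Lemma~\ref{lem:ctrl} shows that higher learning rates are more valuable earlier, since an early report is more profitable due to larger remaining time $(T-\tau)$ for post-report discrimination. Finally, the assumed lower bound on $\lambda T$ ensures that the optimal trial is long enough that the constraints dropped in the relaxation (in particular, deviations in which the uninformed buyer mimics being informed or engages in joint mis-reports) are automatically satisfied by the candidate $(v_0, \gamma_0, p_0)$, closing the argument.
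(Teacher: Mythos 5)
Your proposal follows the same overall architecture as the paper's proof (relaxation, then optimal control on the pre-report allocation), but the central step --- why the budget service traces $\mathcal{D}_*$ with decreasing learning rate --- is misjustified and leaves a real gap.

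Specifically, you argue that ``excess flow utility above the lower envelope raises the informed buyer's mimicking payoff (and thus the required information rent) without providing compensating extraction under the binding (IR-0).'' This is false for early times: raising the uninformed buyer's flow utility \emph{does} generate compensating extraction through the ex-ante price, which is exactly why the optimal mechanism starts with the premium $(1,1)$ (maximum flow utility). The sign of the marginal value of flow utility is not uniformly negative; it switches at an interior time $t_1$. The paper handles this by setting up the control problem with state variables $I(t)=\int_0^t I_U$ and $Q(t)=\int_0^t q_U I_U$, writing the Pontryagin conditions with two costates $(\rho_I,\rho_Q)$, and showing: (i) $\rho_Q(t)$ is monotone and crosses zero at some $t_1$, so $(1,1)$ is optimal for $t\le t_1$; (ii) for $t>t_1$, the Hamiltonian-maximization problem over $\conv(\mathcal{D})$ has indifference-curve slope $|A_I(t)/\rho_Q(t)|$ that is \emph{strictly decreasing} in $t$, which is what forces the optimal control to trace the lower boundary of $\conv(\tilde{\mathcal{D}})$ in decreasing-$I$ order. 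Your appeal to ``higher learning rates are more valuable earlier'' gestures in the right direction but does not deliver this monotone slope argument, which is what produces membership in $\Gamma$.

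A second issue is the non-convexity of $\mathcal{D}$. Your ``time-mixing between extreme points'' remark is a chattering heuristic; the paper instead observes that because the indifference-curve slope is \emph{strictly} decreasing, the Hamiltonian maximizer over $\conv(\tilde{\mathcal{D}})$ is a unique extreme point for a.e.~$t$, so the relaxed (convexified) optimal control is automatically admissible for the original non-convex control set $\mathcal{D}$ --- no mixing is needed. Finally, the verification that the dropped IC constraints hold is not automatic: the paper derives an explicit lower bound on the terminal cumulative learning $I^*(T)$ (which depends on $\ubar{I}$ and $\ubar{q}$) and uses it to bound the uninformed buyer's posterior below $\ubar{v}$ at the trial's end, and then a separate calculation bounds the payoff gain from an uninformed buyer falsely reporting an arrival at intermediate times. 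The $\lambda T$ threshold in the theorem statement is chosen precisely to make these bounds hold, so the last sentence of your sketch hides substantial work.
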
 

Analogously to Theorem \ref{thm:trial}, Theorem \ref{thm:tieredpricing} characterizes the boundary of the payoff set of IC-IR mechanisms, which can be achieved by sellers of both types proposing the same dynamic tiered pricing mechanism. To highlight how the seller's screening instrument affects the informed principal's tradeoffs and to build intuition for the result, we will demonstrate the insights with the example where we augment the baseline model by granting the seller two additional quality levels. We leave the formal proofs to the appendix.

\paragraph{Example Revisited}
Recall the example $\mathcal{D}=\{  (1,1), (0,0),(I_1,q_1),(I_2,q_2) \}$ with $0<q_1<q_2<1$. First, which service(s) would the informed seller provide in a dynamic tiered pricing mechanism? By calculating the corresponding $\mathcal{D}_*$, one can see that the service with the lowest quality $(I_1,q_1)$ will always be provided as a screening device, the intermediate service $(I_2,q_2)$  will be used if and only if $I_2(1-q_2) > I_1(1-q_1)$, which requires the learning rate $I_2$ to exceed $I_1$ so much that the informed seller provides the service as an intermediate screening device when inducing more learning is still valuable. Figure \ref{fig:D_example} shows an example where $I_2(1 - q_2) \le I_1(1-q_1)$; equivalently, the point $(I_2, q_2I_2)$ lies within the convex hull of $(0,0), (1,1)$, and $(I_1, q_1I_1)$.

Assuming $I_2(1-q_2) \le I_1(1-q_1)$ as in Figure \ref{fig:D_example}, the dynamic tiered pricing mechanism offers the uninformed buyer $\gamma_0$ which can be characterized by two switching times $t_0$ and $t_1$:
\begin{align*}
 (I^L_U,q^L_U)(t) =(I^H_U,q^H_U)(t) = \gamma_0(t) =
 \begin{cases}
     (1,1)  \quad &  \forall t \in [0,t_0), \\
     (I_1,q_1)  \quad &  \forall t \in [t_0,t_1), \\
      (0,0) \quad &  \forall t \in [t_1,T ].
 \end{cases}
\end{align*}
We refer to $\gamma_0$, what the mechanism offers an uninformed buyer, as the budget service. The budget service initially provides premium service for a duration of \( t_0 \), followed by the intermediate $(I_1, q_1)$ until \( t_1 \). If the buyer has not reported any reward by \( t_1 \), the budget service terminates. However, if the buyer reports the arrival of a reward of size \( v \geq v_0 \), they are immediately upgraded to premium service. 

To build intuition for why the mechanism chooses allocations corresponding to the lower convex envelope of $\tilde{D}$, consider a high-type seller choosing what $(I,q)$ to offer. Increasing $I$ speeds up the rate that the uninformed buyer learns, which always benefits the seller. The impact of increasing the budget service's quality is more nuanced. On the one hand, increasing the expected flow payoff increases uninformed buyers' consumption value, which leads to a higher ex-ante payment. On the other hand, it hinders the seller's ability to extract surplus from high-value informed buyers. 

The magnitudes of these forces change over time. At $t=0$, the benefit of increasing $I$ is highest, since all buyers are uninformed. Similarly, since all buyers are uninformed, the seller wants to offer higher quality, as it increases the ex-ante value of the service and there are no information rents to be lost to buyers who have experienced a signal. As $t$ increases, the mass of uninformed buyers decreases while the mass of informed buyers increases. With fewer uninformed buyers, the benefit of providing higher learning rate $I$ decreases. With more informed buyers, providing higher $q$ also means conceding more information rent. Eventually, the seller wants to provide as low of a flow payoff as possible to the uninformed buyers.

\begin{figure}[]
         \centering
         \begin{subfigure}[b]{0.3\textwidth}
         \centering
         \begin{tikzpicture}[scale=0.6]
            \draw[->, thick] (-1,0) -- (5,0) node[anchor=west]{\small  $I$};
            \draw[->, thick] (0,-1,0) -- (0,5) node[anchor=south]{\small  $Iq$};
            \filldraw[blue, opacity = 0.4] (0,0) -- (4,4) --   (1.5,0.1875)-- (0,0) ;
            \filldraw[blue] (4,4) circle (2pt);
        \begin{scope}[blue, very thick, decoration={
                markings,  
                mark=at position 0.25 with {\arrow{>}},
                mark=at position 0.75 with {\arrow{>}},
                }] 
            \draw[postaction={decorate}] (4,4) --   (1.5,0.1875) -- (0,0);
        \end{scope}

        \draw[dashed] (-1, 0) -- (0, -1);
        \draw[dashed] (-1, 1) -- (1, -1);
        \draw[dashed] (-1, 2) -- (2, -1);
        \draw[dashed] (-1, 3) -- (3, -1);
        \draw[dashed] (-1, 4) -- (4, -1);
        \draw[dashed] (-1, 5) -- (5, -1);
        \draw[dashed] (0, 5) -- (5, 0);
        \draw[dashed] (1, 5) -- (5, 1);
        \draw[dashed] (2, 5) -- (5, 2);
        \draw[dashed] (3, 5) -- (5, 3);
        \draw[dashed] (4, 5) -- (5, 4);
        \filldraw[red] (4,4) circle (4pt) node[anchor=south west]{$I_t,q_t$};
        
         \end{tikzpicture}
         \caption{Initial time}
         \label{fig:general_intuition_1}
     \end{subfigure}
     \hfill
     \begin{subfigure}[b]{0.3\textwidth}
         \centering
         \begin{tikzpicture}[scale=0.6]
               \draw[->, thick] (-1,0) -- (5,0) node[anchor=west]{\small  $I$};
            \draw[->, thick] (0,-1,0) -- (0,5) node[anchor=south]{\small  $Iq$};
            \filldraw[blue, opacity = 0.4] (0,0) -- (4,4) --   (1.5,0.1875) -- (0,0) ;
            \filldraw[blue] (4,4) circle (2pt);
        \begin{scope}[blue, very thick, decoration={
                markings,  
                mark=at position 0.25 with {\arrow{>}},
                mark=at position 0.75 with {\arrow{>}},
                }] 
            \draw[postaction={decorate}] (4,4) --  (1.5,0.1875) -- (0,0);
        \end{scope}
        \draw[dashed] (1.5, -1) -- (5, 4);
        \draw[dashed] (0.5, -1) -- (4.7, 5);
        \draw[dashed] (-0.5, -1) -- (3.7, 5);
        \draw[dashed] (-1, -0.2857) -- (2.7, 5);
        \draw[dashed] (-1, 1.142857) -- (1.7, 5);
        \draw[dashed] (-1, 2.571428) -- (0.7, 5);
        \draw[dashed] (2.5, -1) -- (5, 2.57143);
        \draw[dashed] (3.5, -1) -- (5, 1.142857);
        \filldraw[red]  (1.5,0.1875) circle (4pt) node[anchor=north ]{$I_t,q_t$};
        
         \end{tikzpicture}
         \caption{Intermediate time}
         \label{fig:general_intuition_2}
     \end{subfigure}
     \hfill
     \begin{subfigure}[b]{0.3\textwidth}
         \centering
         \begin{tikzpicture}[scale=0.6]
                 \draw[->, thick] (-1,0) -- (5,0) node[anchor=west]{\small  $I$};
            \draw[->, thick] (0,-1,0) -- (0,5) node[anchor=south]{\small  $Iq$};
            \filldraw[blue, opacity = 0.4] (0,0) -- (4,4) --   (1.5,0.1875) -- (0,0) ;
            \filldraw[blue] (4,4) circle (2pt);
        \begin{scope}[blue, very thick, decoration={
                markings,  
                mark=at position 0.25 with {\arrow{>}},
                mark=at position 0.75 with {\arrow{>}},
                }] 
            \draw[postaction={decorate}] (4,4) --   (1.5,0.1875) -- (0,0);
        \end{scope}

        \draw[dashed] (-1, 3.5) -- (5, 4);
        \draw[dashed] (-1, 2.5) -- (5, 3);
        \draw[dashed] (-1, 1.5) -- (5, 2);
        \draw[dashed] (-1, 0.5) -- (5, 1);
        \draw[dashed] (-1, -0.5) -- (5, 0);

        \filldraw[red] (0,0) circle (4pt) node[anchor=north west]{$I_t,q_t$};
         \end{tikzpicture}
         \caption{Later time}
         \label{fig:general_intuition_3}
     \end{subfigure}
     \hfill
         
        \caption{Indifference curves over $(I,Iq)$  as time progresses. The red point illustrations the evolution of the low-tier service in time. }
        \label{fig:general_intuition}
\end{figure}

Figure \ref{fig:general_intuition} illustrates the dynamics of this tradeoff in the context of the example. Initially, the incentives of the seller are pictured in Figure \ref{fig:general_intuition_1}. At the start, all buyers are uninformed, so increasing the expected flow payoff of the budget service does not impede surplus extraction while it increases buyer's consumption payoff which can be extracted via ex-ante payment. The seller prefers a higher learning rate $I$ as well as a higher expected flow payoff, so the premium service is provided. 

Figure \ref{fig:general_intuition_2} depicts the incentives after some time has passed and some buyers have become convinced about the match quality. Providing quality in the budget service means that the seller must concede  information rents to users of the premium service. When comparing the services $(1,1)$ or $(I_1, q_1)$, the benefits from higher learning rate no longer outweigh the information rent losses, so the seller degrades the service.

As time progresses further, as in Figure \ref{fig:general_intuition_3}, more and more buyers receive reward and upgrade to the premium service. Providing a budget service only increases the payoff of the dwindling mass of uninformed buyers. Worse, it also requires conceding information rents to users of the premium service. At this point, providing even the intermediate service for the remaining unconvinced buyers is no longer worth the information rent losses to the convinced buyers, and the budget service ends.

\paragraph{Welfare Implications}
We now demonstrate novel welfare implications of expanding the seller's screening technology $\mathcal{D}$. When the seller is uninformed or has perfect commitment power over data usage, expanding the seller's screening ability typically increases social welfare. However, this is not the case if the seller is privately informed.

To illustrate this possibility, consider a seller who only has the baseline technology $\mathcal{D} = \{ (I,1) \mid I \in[0,1] \}$ compared to a seller who has the baseline technology but also the additional point $(I_1, q_1) = (1,0)$.

\begin{figure}[h]
    \centering
    \includegraphics[width=0.55\linewidth]{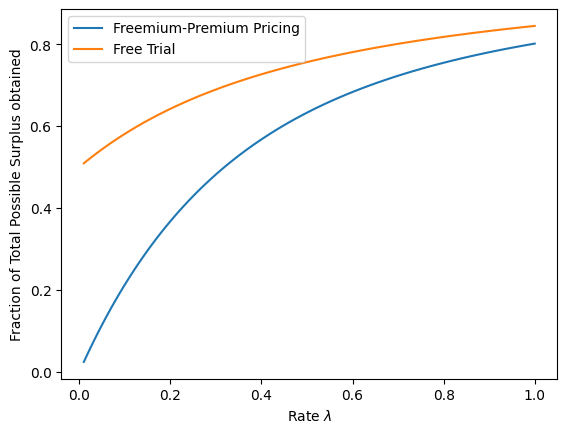}
    \caption{Comparison of welfare induced by the Myersonian free trial and by freemium-premium pricing. Numerical simulations were run with $T = 5$ and values uniformly distributed on $[0.9,1.1]$. The welfare as a fraction of total possible surplus $\lambda \mu_0 T$ is plotted. }
    \label{fig:simulations}
\end{figure}
When faced with the most pessimistic buyer belief ($\theta = L$ for sure), the baseline seller offers a Myersonian free trial, as established before. However, the seller with the additional allocation can now offer $(1,0)$, a low-tier service with maximal learning rate but zero flow value; one can interpret this as ``freemium-premium'' pricing, where the seller offers a low-quality free version and the option to upgrade to $(1,1)$.

Intuitively, the high-type seller generates value for all types above $v_M$ after those types receive their first payoff. Compared to the Myersonian free trial, freemium-premium pricing never excludes any buyers, even if they do not become convinced for a long time. However, the free trial provides more consumption value to buyers who experience a signal during the trial, while freemium-premium pricing reduces the quality received by the buyer during the same trial period. As demonstrated in the numerical simulations in Figure \ref{fig:simulations}, the Myersonian free trial produces higher welfare than freemium-premium pricing; so giving the seller the additional screening instrument $(1,0)$ may decrease total welfare.

\section{Extensions and Discussion} \label{sec:conclusion}
\subsection{Extensions}
We now consider some extensions of our model. To simplify the exposition, we focus on the baseline model where the seller only chooses access $I\in[0,1]$. All results generalize with general $\mathcal{D}$. However, we do impose a restriction that is with loss of generality. In particular, instead of assuming the seller can propose a mechanism that elicits information from both the seller and the buyer, we assume the proposed mechanism only elicits information from the buyer. 

The purpose of our prior analysis in the baseline model without imposing this restriction is two-fold. First, it connects our model to the informed principal approach of \cite{myerson83}. Second, it helps to contrast our setting with the results of \cite{KOESSLER2016456}, who show that the ex-ante revenue-maximizing mechanism can be supported as an equilibrium in the static setting. The inefficiency can arise even if general mechanisms are allowed.
In the baseline model, it turns out that imposing the restriction to mechanisms that only elicit information from the buyer is without loss, because the boundary is achieved by a complete pooling mechanism, where the allocation does not depend on the seller's type.

From an applied point of view, we view the more restrictive model assumption, that mechanisms can only elicit information from the buyer, as more reasonable. Firms never publicize their consumer data or their algorithms, so a mechanism that depends on the seller's private information is hard to monitor. Thus, we proceed with this assumption in this section.

\subsubsection{Imperfectly Informed, Cost and Known Utility Flow}
In this subsection, we introduce three additional features each governed by one parameter. By evaluating the parameter at the corner value, each additional feature can be shut down. The main purpose of this subsection is to show the robustness of the relevant mechanisms up to a slight modification.

First, the seller is only partially informed about the match quality. We use $S=H,L$ to denote the seller's type. The high-type seller receives a more positive signal about $\theta$ and believes $\theta=1$ with a higher probability $\mu_H$. In contrast, the low-type seller believes $\theta=1$ with a lower probability $\mu_L$. The seller receives a high signal with probability $p_H$ and a low signal with probability $p_L$. Bayesian consistency requires $\mu_0 = p_H \mu_H + p_L \mu_L$.
 Second, regardless of the match quality, the service now always gives the buyer a constant utility flow of $u I$ if the seller provides $I$ access to the service. Third, regardless of the match quality, the service leads to a flow cost of $c I$. We focus on the analysis of the pooling equilibria and show that the relevant mechanism takes the form of cancellable trials.\footnote{Different from the main model where all equilibrium payoffs can be achieved by ``pooling equilibrium'', in the extension there may be some payoffs achieved only by separating equilibria. However, these payoffs vanish as three parameters converge to the main model and they are not Pareto optimal.}
\begin{definition}
\label{defi:Trialrefund}
    A dynamic mechanism $(I,p)$ is a \textit{cancellable trial} if and only if there exists $v_c,v_0, t_0$ such that 
\begin{align*}
I_U(t) = & \begin{cases} 1 & t \le t_0, \\ 0 & t > t_0, \end{cases} &
 I_{v,t}(t)  = & \begin{cases}
        1 & v \in [v_0,\bar{v}], \\ 
        I_U(t) & v \in [v_c, v_0), \\
        0      & v \in [\ubar{v},v_c].
    \end{cases} 
\end{align*}
\end{definition}
The word ``cancellable'' comes from the fact that $I_{v,t}(t)$ might equal $0$ even before the trial ends at $t_0$. This option gives the seller a chance to immediately stop the service once the buyer finds out that providing the service is not socially efficient during the trial. In contrast to the choice of $v_0$ and $t_0$ which varies across equilibria, the choice of $v_c$ in any ``good'' equilibrium mechanism is rather mechanical: $u + \lambda v_c^* = c$. When $v_c^* < \ubar{v}$, as in our baseline model, introducing early cancellation is not necessary.

Denote $\pi_L^0$ as the low type's maximum profit when separating from the high type. The low type achieves this payoff by using a degenerate cancellable trial, where $v_c= v_c^*$ and $t_0 =T$. Essentially, the low type's optimal separating mechanism is a two-part tariff where the low-type seller charges an upfront payment and a flow price of $c$ for future usage. Denote $\pi_H^0$ as the high type's maximum profit when the buyer believes the type is low. The optimal mechanism to achieve $\pi_H^0$ is a cancellable trial with a similar spirit to the free trial at point $F$ in the baseline model. However, the trial is not necessarily free when the seller only gets an imperfect signal or the constant flow utility $u$ is larger than the cost, as the buyer understands even the low-type seller can create value.

Similar to Proposition \ref{prop:payoff_set}, any $(\pi_L,\pi_H)$ is a pooling equilibrium payoff pair if it Pareto dominates $(\pi_L^0,\pi^0_H)$ and can be achieved by a (pooling) dynamic mechanism that is IC-IR to the buyer. The remaining question is to characterize the payoff boundary of IC-IR dynamic mechanisms.
\[ \partial\Pi_+ = \left.\Big\{(\pi_L, \pi_H) \right| \exists w_L \in [-1,(1-\mu_0)/\mu_0], \ (\pi_L, \pi_H)  \in \argmax_{(\pi_L, \pi_H)  \in \Pi_\ICIR} (w_L \pi_L + \pi_H) \Big\}. \]

\begin{proposition}
    \label{prop:extensionsTRM}
    In the extended model, there exists a cancellable trial with payoffs $(\pi_L, \pi_H)$ for any $(\pi_L, \pi_H) \in \partial\Pi_+$. Further, in any dynamic mechanism with payoffs in $\partial\Pi_+$, the seller must offer a cancellable trial. 
\end{proposition}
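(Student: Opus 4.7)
The plan is to mimic the three-step proof of Theorem \ref{thm:trial}: formulate a relaxed problem, solve it by decomposing into post-report and pre-report subproblems, then verify that the solution satisfies the dropped constraints. The key conceptual modifications are that both seller types now generate Poisson rewards (at rates $\mu_H \lambda I$ and $\mu_L \lambda I$), the buyer receives a constant flow utility $uI$ on top of any rewards, and the seller pays a flow cost $cI$. These changes alter the virtual-surplus expressions but preserve the time-separability of the incentive constraints that makes the conclusive-learning structure tractable.

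First, I would formulate the analogue of \eqref{prblm:relaxed}, weighting the two seller types' payoffs and retaining only (IC-V), (IC-U), (IR-0), and the cross-type seller constraint. Applying an analogue of Lemma \ref{lem:uninformed_payments_equal} to equate the uninformed-buyer payments across types and binding (IR-0) reduces the problem to a pointwise optimization over $I(v,t)$ and $\Delta_{v,t}$ at each $t$. The envelope argument on (IC-V) then eliminates $\Delta_{v,t}$ from the objective, and the pointwise problem becomes a virtual-surplus maximization, with coefficients adjusted for the partial-information mapping between the seller's posteriors $(\mu_H, \mu_L)$ and the market prior.

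The novelty is that the pointwise problem now produces three regimes rather than two. For large $v$, virtual surplus is positive and the mechanism provides full access. For intermediate $v$, access matches $I_U(t)$, as in the baseline. For sufficiently small $v$, the social value $u + \lambda v$ falls below the cost $c$ and the mechanism strictly prefers to provide zero access; this produces the third regime $[\ubar{v}, v_c]$ in Definition \ref{defi:Trialrefund}, with the cutoff pinned down by the equation $u + \lambda v_c = c$. Substituting the optimal $I(v,t)$ back into the objective yields a control problem of the same shape as \eqref{prblm:ctrl}, and applying the analogue of Lemma \ref{lem:ctrl} gives a bang-bang $I_U(t)$, pinning down a trial length $t_0$. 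Combined with $v_c$ and $v_0$, any solution to the relaxed problem is outcome-equivalent to a cancellable trial.

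The main obstacle will be the verification step, which is more delicate here because both seller types now have stochastic payoffs. In the baseline, the low-type seller's profit was deterministic and the omitted cross-type constraint was immediate, but here the low type also benefits from post-report prices paid by buyers who receive rewards with residual probability $\mu_L$. Confirming (IC-S) therefore requires carefully accounting for both types' expected profits in the candidate cancellable trial, and verifying the omitted buyer IC constraints on joint misreports across time uses the same monotonicity argument as in the baseline. Existence of the cancellable trial follows by construction, and necessity follows from the uniqueness of the pointwise and control-problem solutions.
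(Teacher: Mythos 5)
Your overall plan matches the paper's approach: relax to the buyer's (IC-V), (IC-U), and ex-ante IR, solve the post-report problem pointwise in $t$, substitute the optimal $I(v,t)$ back in, reduce to a control problem in $I_U(\cdot)$, and obtain a bang-bang control. You correctly identify the structural novelty---that the pointwise virtual-surplus maximization now yields \emph{three} regimes rather than two, because for $v < v_c$ (where $u + \lambda v_c = c$) the coefficient on $I(v,t)$ is negative even before accounting for information rents, so the mechanism optimally sets $I(v,t) = 0$ rather than matching $I(U,t)$. This is exactly how the paper obtains the cancellable-trial structure.

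However, there is one conceptual slip that runs through your proposal. You retain a ``cross-type seller constraint'' (IC-S) in the relaxed problem, invoke an analogue of Lemma~\ref{lem:uninformed_payments_equal} to ``equate the uninformed-buyer payments across types,'' and flag as the main difficulty the verification of (IC-S) given that both seller types now earn stochastic payoffs. But the paper explicitly imposes---for the entire Section~\ref{sec:conclusion}---the restriction that proposed mechanisms elicit information \emph{only} from the buyer, not the seller. Under this restriction the mechanism is type-independent by construction: there is a single $p_U$ and a single allocation rule, so Lemma~\ref{lem:uninformed_payments_equal} is vacuous and (IC-S) does not exist. What you are actually observing (correctly) is that with imperfect seller information both types' objectives become stochastic, so the objective function needs to be weighted and rearranged more carefully; the paper handles this with the coefficients $b_1, b_2$ depending on $\mu_H, \mu_L, w_L$, but this is purely an accounting change in the objective, not a new feasibility constraint. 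If you mislabel this as an (IC-S) issue you would be chasing a constraint the model has assumed away. Once that is corrected, the rest of your outline (envelope argument on (IC-V), the $v_c$ cutoff, the bang-bang control problem, and verification of the dropped buyer IC constraints) is the correct route.
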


\subsubsection{Other Learning Processes}
In this paper, we focus on the good news model. To have a sense of the robustness of the results, this subsection explores the implication of other Poisson learning models such as the bad news model and the mixed new model. 
We show that the relevant mechanisms are still in the realm of cancellable trials as introduced in Definition \ref{defi:Trialrefund} in the previous section.


\paragraph{Bad News Model}
The service delivers a type-independent flow utility of $uI$ to the buyer. However, if the match quality is low $\theta=L$, the buyer receives an instantaneous loss $l$ according to a Poisson process with flow rate $\lambda I$.  If $\theta=H$, no instantaneous losses ever arrive. The buyer believes $\theta=H$ with probability $\mu_0$. To illustrate the new force introduced by the bad news learning, we focus on parameters such that $ \lambda l > u > (1-\mu_0) \lambda l$, which means low match quality hurts the buyer yet in expectation the service still has a positive value. 

We use $I_U(t)$ to denote the access for the uninformed buyer who reports no signal up to time $t$, and denote $I(t)$ to denote the access for the informed buyer who reports to receive a bad signal.
Again, we directly look at the IC-IR mechanism that maximizes the linear-weighted-revenue of sellers:
\begin{gather*}
  max_{M\in \mathcal{M}} \quad  w_L \pi_L(M) + \pi_H(M) ,\\
  s.t. \quad   M \text{ is  IC-IR}.
\end{gather*}
\begin{proposition}
\label{prop:badnews}
In the bad news model, for any $w_L\in[-1,(1-\mu_0)/\mu_0]$, the following (degenerate) cancellable trial is optimal: 
\begin{align*}
I_U(t) = 1,  ~ ~ 
 I(t)  = 0, \quad \forall t.
\end{align*}

\end{proposition}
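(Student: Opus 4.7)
The plan is to mirror the proof strategy of Theorem \ref{thm:trial}. Since the loss size $l$ is common knowledge, the buyer has no private value and the analog of (IC-V) is vacuous. I would consider the relaxed problem retaining only (IR-0), the ex-ante buyer participation, and the analog of (IC-U), which I call (IC-Sig): a buyer who has observed a bad signal does not want to hide it forever. Following the analog of Lemma \ref{lem:uninformed_payments_equal}, one may set $p_U^L(T) = p_U^H(T)$ and (IR-0) binding. Writing the buyer's ex-ante net utility as $\mathbb{E}[S] - \mu_0 \pi_H - (1-\mu_0)\pi_L$ (ex-ante surplus minus ex-ante revenue), (IR-0) gives $\mu_0 \pi_H + (1-\mu_0)\pi_L \leq \mathbb{E}[S]$; combined with seller IR $\pi_L, \pi_H \geq 0$, a simple linear-programming argument over the vertices yields the bound
\begin{gather*}
    w_L \pi_L + \pi_H \leq \mathbb{E}[S]/\mu_0 \quad \text{for all } w_L \in [-1, (1-\mu_0)/\mu_0],
\end{gather*}
achieved at $(\pi_L, \pi_H) = (0, \mathbb{E}[S]/\mu_0)$.

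Next, I would show that the candidate allocation $I_U \equiv 1, I \equiv 0$ maximizes $\mathbb{E}[S]$. First, $I(t,s) = 0$ is optimal post-signal, analogously to Lemma \ref{Lemma_low_bind}: since $u - \lambda l < 0$, any positive post-signal access generates negative flow surplus and cannot be converted into revenue (the buyer can exit to avoid any post-signal charge, so only refunds are feasible). With $I \equiv 0$, $\mathbb{E}[S]$ depends on $I_U$ only through the cumulative pre-signal access $N(T) := \int_0^T I_U(s)\,\mathrm{d}s$ (using the identity $\mathbb{E}_L[\int_0^\tau I_U \, \mathrm{d}s] = (1 - e^{-\lambda N(T)})/\lambda$ for the inhomogeneous Poisson first-arrival time), and equals
\begin{gather*}
    f(N) := \mu_0 u N + (1-\mu_0)(u-\lambda l)\frac{1 - e^{-\lambda N}}{\lambda}.
\end{gather*}
Its derivative $f'(N) = \mu_0 u + (1-\mu_0)(u-\lambda l) e^{-\lambda N}$ is bounded below by $\mu_0 u > 0$ when $u \geq \lambda l$, and by its value $u - (1-\mu_0)\lambda l > 0$ at $N = 0$ when $u < \lambda l$ (since the negative exponential term decays monotonically as $N$ grows). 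Both bounds are strictly positive by the parameter assumption $u > (1-\mu_0)\lambda l$, so $f$ is strictly increasing and uniquely maximized by $N(T) = T$, i.e.\ $I_U \equiv 1$.

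Finally, I would verify feasibility: the cancellable trial with $I_U \equiv 1, I \equiv 0$ combined with an upfront payment and a refund schedule upon reporting a bad signal attains the upper bound while satisfying the dropped constraints. The key remaining constraint is that an uninformed buyer does not falsely claim a bad signal; this caps the refund at time $t$ by the uninformed buyer's continuation value $CS_U(t) := \mu_{0,t} u(T-t) + (1 - \mu_{0,t})(u - \lambda l)(1 - e^{-\lambda(T-t)})/\lambda$, which is strictly positive under the parameter assumption by a monotonicity calculation essentially identical to the derivative analysis above (since $\mu_{0,t} \geq \mu_0$ for $t \geq 0$).

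The main obstacle lies in the monotonicity argument for $f'(N)$, which requires careful case analysis on the sign of $u - \lambda l$ and precisely invokes the parameter bound $u > (1-\mu_0)\lambda l$ at the worst-case point $N = 0$. The final verification step is more delicate than its analog in Theorem \ref{thm:trial}, since the no-false-claim constraint is genuinely binding in the bad news model: it implicitly caps how aggressively the seller may extract the $H$-type's surplus via refunds post-signal, and a time-varying refund schedule $r(t) = CS_U(t)$ may be required to saturate it.
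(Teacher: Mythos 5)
Your proposal runs into trouble at the linear-programming step, and as a result the verification step cannot close the loop. You claim that the bound $w_L\pi_L + \pi_H \le \mathbb{E}[S]/\mu_0$ is \emph{achieved} at $(\pi_L,\pi_H) = (0,\mathbb{E}[S]/\mu_0)$ and then try to verify that a cancellable trial with a saturated refund schedule reaches this vertex. But the LP polytope (cut out by IR-0 and the seller IRs) strictly over-approximates the feasible payoff set once you add the no-false-claim constraint, and the vertex $(0,\mathbb{E}[S]/\mu_0)$ is in general \emph{not} feasible. To see this concretely: with IR-0 binding one has $\pi_L = \mathbb{E}[S] - \mu_0\,\mathbb{E}[\mathrm{refund}\mid\theta=L]$, so $\pi_L = 0$ requires $\mathbb{E}[\mathrm{refund}\mid\theta=L] = \mathbb{E}[S]/\mu_0$. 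But the IC cap on the refund (even the correct, larger one --- see below) bounds $\mathbb{E}[\mathrm{refund}\mid\theta=L]$ strictly below $\mathbb{E}[S]/\mu_0$ whenever $\mu_0 > 0$; for instance with $\mu_0 = 1/2$, $\lambda = T = u = 1$, $l = 3/2$ one gets $\mathbb{E}[S]/\mu_0 \approx 0.68$ while the maximal expected refund is about $0.18$, so $\pi_L \approx 0.25 > 0$. The $D_1$-type extremal mechanism in the bad-news model therefore does not drive the low type to zero profit, and your achievability assertion fails.

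A second, subsidiary problem is your formula for the refund cap. You write the cap as the uninformed buyer's continuation \emph{consumption} value
\begin{gather*}
CS_U(t) = \mu_t\, u(T-t) + (1-\mu_t)(u - \lambda l)\,\frac{1 - e^{-\lambda(T-t)}}{\lambda},
\end{gather*}
but this omits the option value of collecting a future refund. The correct cap --- what the paper derives via Gr\"onwall's inequality from the binding fixed-point equation --- is $V(t) = \int_t^T \bigl(u - \lambda l(1 - \mu_s)\bigr) I_U(s)\, \mathrm{d} s$, which solves $-V'(t) = u - (1-\mu_t)\lambda l$, $V(T)=0$, and is strictly larger than $CS_U(t)$ because $V$ includes the recursive refund term. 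Using $CS_U(t)$ as the refund schedule is feasible but strictly suboptimal.

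The paper's actual proof avoids the LP bound altogether. After showing $I\equiv 0$ post-signal is optimal, it plugs the binding IR-0 and the maximal refund $V(t)$ into the weighted-revenue objective and obtains a functional of the form $(1+w_L)\mathbb{E}\bigl[\int_0^\tau \phi_s\,\mathrm{d} s\bigr] + (1-\mu_0(1+w_L))\mathbb{E}\bigl[\int_\tau^T \phi_s\,\mathrm{d} s\bigr]$, where $\phi_s = (u - \lambda l(1-\mu_s))I_U(s)$ is the virtual flow surplus. Both coefficients are nonnegative for $w_L \in [-1,(1-\mu_0)/\mu_0]$, and $\phi_s > 0$ under the standing assumption $u > (1-\mu_0)\lambda l$ because $\mu_s \ge \mu_0$; hence $I_U \equiv 1$ is pointwise optimal. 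Your argument that the total social surplus $\mathbb{E}[S]$ (as a function of $N = \int_0^T I_U$) is increasing is correct as a calculation, but maximizing $\mathbb{E}[S]$ does not by itself imply that the \emph{weighted-revenue} objective is maximized, precisely because the IC cap prevents the designer from converting all of $\mathbb{E}[S]$ into the desired payoff split. You need the paper's explicit substitution of the maximal refund into the objective, not an LP corner.
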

The optimal mechanism is a cancellable trial in a degenerate form because the trial length is $T$. This is intuitive as there is no good news and the uninformed buyer is the one who has the highest willingness to pay. As we show in the proof, when the buyer reports the arrival of the bad signal, they quit the service and get a maximum refund subject to the IC constraint:
\begin{gather*}
     \int_t^T u-\lambda l(1-\mu_s) \dd s. 
\end{gather*}
$\mu_s$ denotes the posterior of $\theta$ if there is yet no signal arrived by time $s$.
The refund is maximized in the sense that any further increase in refund makes it strictly profitable for an uninformed buyer to falsify a bad signal.


\paragraph{Mixed News Model}
The service delivers an instantaneous lump sum reward at rate $\lambda I$ for both types $\theta=H,L$. Thus, in the absence of reward, the belief remains unchanged. If the match quality is high, the lump sum reward is $\bar{v}$ and if the match quality is low, the lump sum reward is $\ubar{v}$.  To illustrate the new force introduced by the bad signal, we focus on cases where $\bar{v}>0>\ubar{v}$ and normalize the expectation to be 1 so $\mu_0 \bar{v} + (1-\mu_0) \ubar{v}=1$.  This means that low match quality hurts the buyer, yet in expectation, the service still has a positive value.

We use $I_U(t)$ to denote the access for the uninformed buyer who reports no signal up to time $t$, and use $I^{\bar{v}}(t),I^{\ubar{v}}(t)$ to denote the access for the informed buyer who reports receiving a good/bad signal. Again, we directly look at the IC-IR mechanism that maximizes the linear-weighted-revenue of sellers:
\begin{gather*}
  max_{M\in \mathcal{M}} \quad  w_L \pi_L(M) + \pi_H(M) ,\\
  s.t. \quad   M \text{ is  IC-IR}.
\end{gather*}
\begin{proposition}
\label{prop:mixnews}
    In the mixed news model, for any $w_L\in[-1,(1-\mu_0)/\mu_0]$, the cancellable trial
\begin{align*}
I_U(t) = 1_{t\leq t_0(w_L)},  ~ ~ 
 I^{\bar{v}}(t)  = 1, ~ ~   I^{\ubar{v}}(t)  = 0  \quad \forall t,
\end{align*}
 is optimal, where $t_0(w_L)$ is the trial length.
\end{proposition}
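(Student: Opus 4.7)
}
The plan is to adapt the relaxed-problem strategy used for Theorem \ref{thm:trial}, exploiting the fact that in the mixed-news model the post-arrival value is binary ($\bar v$ or $\ubar v$) and perfectly reveals $\theta$. First I would set up the analog of problem \eqref{prblm:relaxed} with three post-arrival branches for the buyer's report (uninformed $U$, good news $\bar v$, bad news $\ubar v$) and retain only four constraints: (i) an informed-$\bar v$ buyer does not pretend to be uninformed forever (IC-U$^+$), (ii) an informed-$\ubar v$ buyer does not pretend to be uninformed forever (IC-U$^-$), (iii) the ex-ante buyer IR binds (which, as in Lemma \ref{lem:uninformed_payments_equal}, forces $p^L_U(T)=p^H_U(T)$ and lets me eliminate $p^L_U$), and (iv) the low-type seller's IC-S. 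The dropped constraints (uninformed pretending to be informed, joint deviations in value and time) are routine to verify ex post.

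Next I would pointwise-optimize the post-arrival allocation exactly as in Section 4.3. For the $\bar v$ branch, the calculation is identical to the baseline with a degenerate Myersonian threshold: because $\bar v$ already exceeds any relevant virtual value, it is strictly optimal to set $I^{\bar v}(t)=1$ on $[t,T]$. For the $\ubar v$ branch, the consumer's flow value $\lambda \ubar v$ is negative, so the net social surplus from further access is negative; any positive $I^{\ubar v}$ only tightens (IC-U$^-$) by forcing a larger refund. Hence $I^{\ubar v}(t)=0$ is pointwise optimal, with an accompanying refund pinned down by binding (IC-U$^-$), exactly matching Definition \ref{defi:Trialrefund}'s cancellable-trial structure with $v_c$ set at the lower atom.

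Having pinned down the post-arrival allocations, I would substitute them into the objective, use the expectations for an inhomogeneous Poisson process of rate $\lambda I_U(t)$, and reduce the problem to a control problem in $I_U$ of the form
\begin{equation*}
\max_{U(\cdot)\in[0,1]}\int_0^T \Bigl[A\,U(t) + B\,(T-t)\,e^{-\lambda X(t)}\lambda U(t) - C\,e^{-\lambda X(t)}\lambda U(t)\Bigr]\dd t,\qquad \dot X=U,\ X(0)=0,
\end{equation*}
for constants $A,B,C$ that depend on $w_L,\mu_0,\bar v,\ubar v$ and the refund schedule derived above (here $A$ captures the ex-ante flow value, $B$ captures the post-revelation surplus from the $\bar v$ branch weighted by arrival probability, and $C$ captures the refund cost from the $\ubar v$ branch). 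This is essentially the same structure as in Lemma \ref{lemma:pluginPointWise}, so I can invoke the control lemma (Lemma \ref{lem:ctrl}) to conclude that the unique optimal policy is bang-bang, $I_U(t)=\mathbf 1_{t\le t_0(w_L)}$, with $t_0(w_L)$ pinned down by the appropriate first-order condition (interior if the coefficient comparison parallels \eqref{eqn:thm_t0}, and $t_0=T$ otherwise).

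The main obstacle I expect is the treatment of the $\ubar v$ branch: unlike the good-news baseline, the bad-news arrival is a \emph{negative} informative event, so (IC-U$^-$) binds with a refund rather than being slack, and one must check that the refund required to make the $\ubar v$ buyer truthful does not also make an uninformed buyer want to fabricate a bad signal. I would handle this by showing that at the optimal refund level the uninformed buyer's continuation value from truthful reporting strictly dominates the quit-with-refund payoff (using $\mu_0 \bar v + (1-\mu_0)\ubar v = 1 > 0$ and the posterior belief monotonicity along the no-arrival path), thereby verifying the dropped constraints and completing the characterization.
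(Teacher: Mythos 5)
The high-level plan (relax, solve post-arrival allocations pointwise, reduce to a control problem in $I_U$) matches the paper's approach, but there is a genuine gap in the choice of retained constraint. In the mixed-news model with $\ubar v<0$, your constraint (ii) — the informed-$\ubar v$ buyer does not pretend to be uninformed forever (IC-U$^-$) — is \emph{slack}, not binding. Writing it out with $I(\ubar v,t)=0$, it reads $-\Delta^{\ubar v}_t \geq \lambda\ubar v\, I(U,t)$, which is an \emph{upper} bound on $\Delta^{\ubar v}_t$ (a floor on the refund). But once the ex-ante IR is substituted in, the objective has coefficient $-(1-\mu_0(1+w_L))\le 0$ on $\Delta^{\ubar v}_\tau$ for $w_L\le (1-\mu_0)/\mu_0$, so the designer wants to \emph{minimize} $\Delta^{\ubar v}$, i.e., push the refund as high as possible. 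The constraint that actually caps the refund — and that must be retained — is the opposite one: the \emph{uninformed} buyer must not want to fabricate a bad-news report to quit and collect the refund. Your plan drops exactly this constraint and proposes to verify it ex post; that verification would fail, because the designer, optimizing against IC-U$^-$ alone, would push the refund past the point where an uninformed buyer fakes a $\ubar v$ report early. The paper's relaxed problem keeps the uninformed-fakes-bad constraint (together with a delay constraint for $\bar v$ reports that it immediately substitutes away), and since that constraint is recursive in $\{\Delta^{\ubar v}_s\}_{s>t}$, the paper applies Gr\"onwall's inequality to obtain the pointwise-minimal feasible $\Delta^{\ubar v}_t$ in closed form. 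This step has no analogue in your outline.

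A secondary mismatch: after substituting the Gr\"onwall-derived refund, the reduced objective contains a term proportional to $\bigl(1+\mu_0\lambda\bar v\, I(U,\tau)\bigr)\bigl(e^{\lambda(1-\mu_0)(I(U,0)-I(U,\tau))}-1\bigr)$, which is not of the affine-in-$U$ form handled by Lemma \ref{lem:ctrl}. The paper does not invoke that lemma here; instead it shows by a change of variables that this term depends only on $I(U,0)$ and that the remaining $I(U,\tau)$-dependent term is maximized, for fixed $I(U,0)$, by frontloading access. So both the constraint selection and the mechanism for proving the bang-bang structure differ from what you propose.
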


Dealing with bad news, or a mixed news model with greater generality creates more challenges in tractability because the uninformed buyer's downward deviation is now binding as the designer tries to maximize refund. This means the designer has to track the buyer's posterior as well as their future option value, which is affected by future allocation and by the future refund recursively. We leave it for future research.

\subsubsection{Infinite Horizon with Discounting}
\label{sec:infinite}
We can extend the main insights to a model where there is an infinite horizon, but both the buyer and seller discount payoffs at rate $r$. 

\begin{proposition}
\label{prop:infinite_horizon}
Suppose that instead of a finite horizon with no discounting, both buyer and seller discount future payoffs at a rate $r$ and experimentation takes place over an infinite horizon. Define $\pi_0$, $v_0$ as in Theorem \ref{thm:trial}.  Then any mechanism maximizing weighted revenue with weight $(w_L, 1)$ is once again a trial mechanism, with length: 
\begin{equation}
    \label{eqn:infinite_horizon_trial}
    t_0 = \begin{cases}
        \frac{1}{\lambda}\ln \left( \frac{\frac{\lambda}{r} + 1 }{1 - \mu_0(1 + w_L)/\pi_0}\right) & \mu_0(1 + w_L) \le \pi_0 \\
        \infty & \mu_0(1 + w_L) \ge \pi_0
    \end{cases}
\end{equation}
Further, there exist prices to support this trial mechanism.
\end{proposition}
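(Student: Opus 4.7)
The plan is to mirror the three-step proof of Theorem \ref{thm:trial}, substituting discounted quantities throughout. Define the discounted cumulative access $J(v,t) := \int_t^\infty e^{-rs} I_{v,t}^H(s)\,ds$ and $J(U,t)$ analogously. The four incentive constraints in the relaxed problem translate directly in terms of these discounted quantities and remain time-separable because the conclusive Poisson learning structure is unchanged. Consequently Lemma \ref{lem:uninformed_payments_equal} carries over; the perturbation argument that closes any gap $p^L_U > p^H_U$ is insensitive to the horizon or to discounting, and binding (IR-0) pins down the common uninformed payment.

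Next, the post-report problem is again pointwise in $t$. Because both the buyer's utility and the designer's feasible payment are rescaled by the same $e^{-rt}$ factor, the implicit equation for $v_0$ and the virtual surplus $\pi_0$ are unchanged from Theorem \ref{thm:trial}. At the optimum, after a report $(v,t)$ with $v \ge v_0$, the mechanism provides full access indefinitely, so $J(v,t) = e^{-rt}/r$; otherwise $J(v,t) = J(U,t)$.

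The pre-report control problem then reduces to optimizing over $I_U^H$. The analog of Lemma \ref{lem:ctrl} delivers a bang-bang optimum $I_U^H = \mathbbm{1}_{[0,t_0]}$, since the coefficient of the control in the Hamiltonian is monotone in time. Substituting yields $J(U,0) = (1 - e^{-rt_0})/r$, $J(v,\tau) - J(U,\tau) = e^{-rt_0}/r$ for any post-report, and arrival probability $1 - e^{-\lambda t_0}$ under $\theta = H$, so the objective becomes
\begin{equation*}
\frac{\lambda}{r}\Bigl[(w_L+1)\mu_0\bigl(1 - e^{-rt_0}\bigr) + \pi_0\, e^{-rt_0}\bigl(1 - e^{-\lambda t_0}\bigr)\Bigr].
\end{equation*}
Differentiating in $t_0$ and dividing by $\lambda e^{-rt_0}$ produces the first-order condition
\begin{equation*}
\mu_0(1+w_L) = \pi_0\Bigl(1 - (1 + \lambda/r)e^{-\lambda t_0}\Bigr),
\end{equation*}
which solves to \eqref{eqn:infinite_horizon_trial}. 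The right-hand side approaches $\pi_0$ from below as $t_0 \to \infty$, so when $\mu_0(1+w_L) \ge \pi_0$ the objective is monotonically increasing in $t_0$ and the optimum is $t_0 = \infty$. Supporting prices then follow from binding (IR-0) exactly as in Lemma \ref{lem:necessity_high}.

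The main obstacle is verifying that the constraints dropped in the relaxation remain slack --- specifically, joint misreports of both value and arrival time by the buyer, and the high-type seller's downward misreport. The argument mirrors the finite-horizon verification in the appendix and hinges on the post-report discounted access $e^{-rt}/r$ being strictly decreasing in $t$ (so the buyer does not wish to delay reporting after a reward arrives) and on Myersonian regularity of $F$ ruling out value misreports. One technical subtlety is that the objective is unbounded as $t_0\to\infty$ only if truncated appropriately; the monotonicity argument handles this by showing the discounted value converges to a finite supremum.
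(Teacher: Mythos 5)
Your proposal is correct and mirrors the paper's own proof: discounted cumulative access, the same $v_0$ and $\pi_0$ from the pointwise post-report problem, a bang-bang pre-report control, and the first-order condition $\mu_0(1+w_L) = \pi_0\bigl(1 - (1 + \lambda/r)e^{-\lambda t_0}\bigr)$, which yields \eqref{eqn:infinite_horizon_trial}. Two small slips worth flagging: the division step should be by $re^{-rt_0}$, not $\lambda e^{-rt_0}$ (your resulting equation is nonetheless correct), and the remark about the objective being ``unbounded'' is misplaced, since the $e^{-rt_0}$ factors make the objective converge to $\tfrac{\lambda}{r}(w_L+1)\mu_0$ as $t_0 \to \infty$ without any truncation; the paper handles the infinite-horizon control more carefully via Pontryagin with a transversality condition rather than appealing to Lemma \ref{lem:ctrl}'s finite-horizon statement, but your monotone-Hamiltonian observation ($\dot J < 0$) is exactly the reason bang-bang remains optimal there.
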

That is, there is nothing particularly special about the finite-time horizon with no discounting. Proposition \ref{prop:infinite_horizon} only generalizes the main result, but generalizations of the other results follow straightforwardly.

\subsection{Discussion}
\paragraph{Other Trial Mechanisms}
    Our analysis in Section \ref{sec:mechanisms} is robust to minor adjustments to our baseline model; we can extend Theorem \ref{thm:trial} to show that trial (or trial-like) mechanisms are the optimal IC-IR mechanisms in other variations of the model with infinite horizon, imperfect seller information, seller service cost, buyer flow utility, and other buyer learning processes. An interested reader may find the details in the working paper.

\paragraph{The Value of Data}
To isolate the novel economic channel for how having consumer data can negatively impact seller revenue, this paper focuses on an extreme environment where data has no positive value for implementing social efficiency. There are many simple ways to augment our model to give data positive social value. 
For example, the seller may have multiple services, and data allows the seller to determine which service to offer. In such settings, whether or not collecting data is beneficial depends on the tradeoff between efficiency gains and the possible revenue loss due to signaling forces identified in this paper. Broadly, we have suggested that the question of whether to collect consumer data is nuanced; the question of what noisy consumer data to collect in the presence of imperfect commitment on data usage is an interesting question for future research.

\paragraph{Moral Hazard}
Our model of informed principal assumes the distribution of $\theta$ is exogenously given. However, the welfare judgment on dynamic tiered pricing mechanisms can be ambiguous in a moral hazard problem where the distribution of $\theta$ is endogenous; for example, the seller may need to make a costly private investment, which increases the probability of $\theta = H$. Our dynamic mechanism design analysis could be applied to study how to provide proper incentives by committing to trials/dynamic tiered pricing.

\paragraph{Informed Data Seller}
The informed principal framework emphasizes the limited commitment power of market participants by assuming they can commit to tangible pricing mechanisms but cannot credibly communicate information obtained from unpublished data. Further development of the informed principal framework could deepen understanding of data markets themselves. 
For example, consider a setting where a tech company wants to purchase datasets to train an algorithm or an AI model. The data owner is essentially an informed seller of its data, who faces a similar tradeoff to the seller in our model. Providing cheap access to data can prove its quality, but may reduce the value of the remaining data; thus, another interesting direction for future research could explore how the informed seller and dynamic mechanism forces we consider interact when the service induces richer externalities on future service value.

\bibliographystyle{econometrica} 
\bibliography{bibliography}


\appendix

\newpage

\section{Formal Treatment of Refinements in Section 5}
\label{appendix:Refinment}
First, we formally introduce the notion of the D1 criterion. 
For any off-path mechanism $m=(I,p)$, and any market belief $\mu\in[0,1]$, the continuation game is a game where first the buyer decides whether to accept the mechanism, and then both parties report messages. An equilibrium of the continuation game consists of both parties' strategies and the belief of the buyer, which is obtained by Bayes rule given the belief $\mu$, updating after the proposal of the mechanism. Denote $\Sigma(m,\mu)$ as the set of continuation equilibria after the proposal of $m$ and subsequent buyer posterior belief $\mu$. For any $\sigma \in \Sigma(m,\mu)$ denote $\pi_x(\sigma)$ as the ex-ante payoff of the seller of type $x$. Denote
\begin{align*}
     \Sigma(m,H,\pi_H) & = \{ \sigma  |   \pi_H(\sigma) > \pi_H, ~ \sigma \in \Sigma(m,\mu) \text{ for some } \mu\in[0,1]\},   \\
     \Sigma(m,L,\pi_L) & = \{ \sigma  |   \pi_L(\sigma) > \pi_L, ~ \sigma \in \Sigma(m,\mu) \text{ for some } \mu\in[0,1] \}.
\end{align*}
The set $\Sigma(m,H,\pi_H)$ denotes the continuation equilibria that increase the high-type seller's payoff, relative to payoff $\pi_H$. 
Below, we adapt the \cite{BanksSobel87} version of the D1 criterion to consider continuation equilibria instead of best-responses:
\begin{definition}
    An equilibrium with equilibrium payoff $(\pi_L,\pi_H)$ does not survive D1 Criterion if and only if there exists a deviating mechanism $m$ such that $\Sigma(m,H,\pi_H) \supsetneq \Sigma(m,L,\pi_L)$ and for any $\sigma \in \Sigma(m,1)$, $\pi_H(\sigma) > \pi_H$.
\end{definition}
Recall that in \cite{ck87}, given an off-path message $m$, the receiver should eliminate type $\theta$ by D1 intuitively if another type $\theta'$ benefits ``more often'' (i.e. for more actions) than $\theta$. Here, we instead say that under our D1 notion, when facing the deviating mechanism $m$, the buyer should eliminate type $L$ if the set of \textit{continuation equilibria} where $L$ weakly benefits is strictly contained in the set of continuation equilibria where $H$ benefits.

To prove Proposition \ref{prop:D1}, we proceed in two steps.
First, we consider the equilibrium payoffs that do not survive the D1 Criterion. For any $\pi_L \in [0,\lambda \mu_0T]$, denote $\Pi(\pi_L)$ as the upper bound of the high-type seller's equilibrium payoff when the low-type gets $\pi_L$:
\begin{gather*}
    \Pi(\pi_L) = \max\{ \pi_H |  (\pi_L,\pi_H) \in \Pi_\ICIR \}.
\end{gather*}
Also define
\begin{gather*}
    \pi_L^D = \lambda \mu_0 t_F + \mu_0 (1-e^{-\lambda t_F})\lambda (T - t_F)\int_{v_F}^{\bar{v}} (v-v_F) \dd v.
\end{gather*}
The value $\pi_L^D$ is the maximum price that the buyer is willing to pay ex-ante for a trial mechanism with the same trial length $t_M$ and post-trial price $v_M$ as the Myersonian free trial characterized by point $F$.
\begin{lemma}
\label{lemma:obviouslynotD1}
An equilibrium with payoff $(\pi_L,\pi_H)$  does not survive D1 Criterion if $\pi_L > \pi_L^D$ or $\pi_H < \Pi(\pi_L)$.
\end{lemma}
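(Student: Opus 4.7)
I will treat the two disjunctive hypotheses separately, in each exhibiting a deviation mechanism $m$ that satisfies the two conditions for D1 failure: $\Sigma(m, H, \pi_H) \supsetneq \Sigma(m, L, \pi_L)$, and $\pi_H(\sigma) > \pi_H$ for every $\sigma \in \Sigma(m, 1)$.

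For the first hypothesis, $\pi_H < \Pi(\pi_L)$, I take $m^{\ast}$ to be the trial mechanism of Theorem \ref{thm:trial} that implements the boundary point $(\pi_L, \Pi(\pi_L))$. Because the low-type seller never generates a reward, L's payoff under any trial mechanism equals the upfront trial price whenever the buyer accepts and $0$ otherwise; here the upfront price is exactly $\pi_L$, so L strictly improves in no continuation equilibrium, giving $\Sigma(m^{\ast}, L, \pi_L) = \emptyset$. Since (IR-0) binds at $\mu_0$ in the boundary mechanism and the buyer's ex-ante surplus is strictly increasing in $\mu$, at $\mu = 1$ the buyer strictly prefers to accept, and truthful reporting is the unique best response by the (IC-U) and (IC-V) constraints identified in the proof of Theorem \ref{thm:trial}. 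Hence every $\sigma \in \Sigma(m^{\ast}, 1)$ yields H the payoff $\Pi(\pi_L) > \pi_H$, verifying both D1 conditions.

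For the second hypothesis, $\pi_L > \pi_L^D$, the deviation $m$ is the Myersonian free trial with parameters $(t_F, v_F)$ and an upfront price $p_0$ to be chosen. I will pick $p_0$ in the open interval $(\pi_H - \pi_F,\, \min\{\pi_L,\, \pi_L^D/\mu_0\})$; Corollary \ref{corr:free} guarantees that every payoff in $\Pi_\ICIR$ with $\pi_L > 0$ strictly satisfies $\pi_H - \pi_L < \pi_F$ (since $(0,\pi_F)$ is the unique maximizer of $\pi_H - \pi_L$), and this applies here because $\pi_L > \pi_L^D \ge 0$. With this $p_0$, L receives at most $p_0 < \pi_L$ in any accepted continuation, so $\Sigma(m, L, \pi_L) = \emptyset$, while H receives $p_0 + \pi_F > \pi_H$ upon acceptance, giving strict containment. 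The requirement $p_0 < \pi_L^D/\mu_0$ makes the buyer's acceptance surplus $\pi_L^D/\mu_0 - p_0$ strictly positive at $\mu = 1$, so the buyer accepts in every continuation equilibrium in $\Sigma(m, 1)$ and H strictly benefits.

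The main obstacle I anticipate is verifying in Case 1 that the feasibility window $(\pi_H - \pi_F,\, \pi_L^D/\mu_0)$ is non-empty, i.e., that $\pi_H - \pi_F < \pi_L^D/\mu_0$. I plan to obtain this by combining the ex-ante IR bound $\mu_0 \pi_H + (1-\mu_0)\pi_L \le \lambda \mu_0 T$ with the identity $\pi_L^D/\mu_0 = \lambda t_F + \lambda (1-e^{-\lambda t_F})(T-t_F)\int_{v_F}^{\bar{v}} (v - v_F) f(v)\,dv$, exploiting that the Myersonian parameters jointly maximize $\pi_F$ so that the buyer's full-information valuation at $\mu=1$ provides enough slack above $\pi_H - \pi_F$. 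A secondary concern, ruling out continuation equilibria at $\mu=1$ in which the buyer rejects despite a strictly positive acceptance surplus, is precluded by the strict inequality $p_0 < \pi_L^D/\mu_0$.
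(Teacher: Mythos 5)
Your proposal takes a genuinely different route from the paper's, and it contains a gap that is not easily patched.

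The paper's proof restricts attention to trial mechanisms and, crucially, enumerates the three candidate buyer responses: accept and buy post-trial regardless of signals $(B,B)$, accept but buy post-trial only upon a signal $(B,N)$, and reject $N$. Your argument ignores $(B,B)$ entirely. The assertion that ``L's payoff under any trial mechanism equals the upfront trial price whenever the buyer accepts'' is false as a statement about \emph{continuation equilibria of a deviating mechanism}: in the continuation game, the buyer is free to report a value $v \geq v_0$ at the end of the trial even without having received a reward, in which case L collects the post-trial payment $\lambda v_0(T - t_0)$ as well. This misreport is rational precisely when the buyer's posterior $\mu(t_0)$ exceeds $v_0$, and since D1 ranges over \emph{all} beliefs $\mu \in [0,1]$ including $\mu = 1$, the uninformed buyer at $\mu = 1$ strictly prefers $(B,B)$ whenever $v_M < \mathbb{E}[v] = 1$. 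The paper does not assume $v_M \geq 1$; indeed its Proposition \ref{prop:intuitive} explicitly discusses the $v_M < 1$ regime as the case where D1 and the Intuitive Criterion diverge. So in your Case 1 the claim $\Sigma(m^\ast, L, \pi_L) = \emptyset$ fails when $v_M < 1$, and ``truthful reporting is the unique best response at $\mu=1$'' fails for the same reason: the IC-U constraint was verified only at the on-path belief $\mu_0$, not at $\mu = 1 > \mu_0$. The same omission undermines your Case 2.

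There is a second, independent gap in Case 2 that would remain even after $(B,B)$ is accounted for. To establish the \emph{strict} containment $\Sigma(m, H, \pi_H) \supsetneq \Sigma(m, L, \pi_L)$ when $(B,B)$ lies in both sets, you need a witnessing continuation equilibrium where H strictly gains but L does not; this has to be $(B,N)$. But for the Myersonian free trial at your chosen price $p_0 \in (\pi_H - \pi_F, \pi_L^D/\mu_0)$, it is not clear that $(B,N)$ is rational at \emph{any} belief: acceptance requires $\mu \geq p_0 \mu_0 / \pi_L^D$, while non-purchase of the post-trial service requires the post-trial posterior $\mu(t_M) \leq v_M$, i.e.\ an upper bound on $\mu$. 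You have not shown these two constraints admit a common $\mu$, and when $p_0$ is close to $\pi_L^D/\mu_0$ the acceptance threshold pushes $\mu$ toward $1$, where post-trial purchase becomes optimal if $v_M < 1$. The paper sidesteps this by using a different deviation for the $\pi_L > \pi_L^D$ case — a small perturbation $(t_0 - \varepsilon, \pi_L - \varepsilon^2)$ of the boundary mechanism implementing $(\pi_L, \Pi(\pi_L))$, which (since $t_0 > t_M$ by Proposition \ref{prop:comparative_private} and concavity of $(T-t)(1 - e^{-\lambda t})$) gives H a strict first-order gain under $(B,N)$ while keeping the mechanism acceptable at $\mu_0$, so that $(B,N)$ is guaranteed to be rational at $\mu_0$. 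Your deviation does not inherit this acceptability at $\mu_0$, and without it the argument for the witness $(B,N)$ collapses. (As a minor note, your closing paragraph mislabels the two cases: the feasibility window $(\pi_H - \pi_F, \pi_L^D/\mu_0)$ appears in your second hypothesis, not your first.)
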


\begin{proof}[Proof of Lemma \ref{lemma:obviouslynotD1}]

To prove Lemma \ref{lemma:obviouslynotD1}, it is sufficient to focus on a restricted class of deviating mechanisms: the trial mechanisms. Because the seller's report does not matter in a trial mechanism, the continuation equilibria of the mechanism reduces to the rational responses of the buyer. Furthermore, the buyer's rational response upon receiving a Poisson signal in the trial is essentially unique: procuring the post-trial service if and only if the price is less than the value. Thus, the choices for the buyer revolve around 1) whether to engage in the trial and 2) whether to acquire the post-trial service when no signal is received during the trial.

Consequently, we denote the strategy space as a finite set $\Sigma^b_{tr} = {(B,B), (B,N),N}$, where $(B,B)$ implies purchasing the trial and the post-trial service even in the absence of signals, $(B,N)$ implies purchasing the trial but not the post-trial service in the absence of signals, and $N$ implies purchasing nothing. We define
\begin{align*}
    \Sigma_{tr}(m,\mu) &= \{\sigma \in \Sigma^b_{tr} | ~ \sigma \text{ is optimal with belief }\mu \text{ when the trial mechanism m is proposed}  \}, \\
     \Sigma_{tr}(m,H,\pi_H) & = \{ \sigma  |   \pi_H(\sigma) > \pi_H, ~ \sigma \in \Sigma_{tr}(m,\mu) \text{ for some } \mu\in[0,1]\},   \\
     \Sigma_{tr}(m,L,\pi_L) & = \{ \sigma  |   \pi_L(\sigma) > \pi_L, ~ \sigma \in \Sigma_{tr}(m,\mu) \text{ for some } \mu\in[0,1] \}.
\end{align*}

We prove the lemma in two steps. First, we show that if $\pi_L > \pi_L^D$, the equilibrium does not survive D1. Second, we show that if $\pi_H < \Pi(\pi_L)$, the equilibrium also does not survive D1. These two observations together imply the result.

First, consider any equilibrium with equilibrium payoff $(\pi_L,\pi_H)$ such that $\pi_L > \pi_L^D$. Recall there is a trial mechanism that achieves $(\pi_L,\Pi(\pi_L))$, which sells the trial with length $t_0$ at a price $\pi_L$ and sells the post-trial service at the threshold $v_0$.
We know from the comparative statics that $t_0 > t_M$. Now consider the deviation to a trial mechanism $m'$ that is very close to this mechanism. The deviation sells the trial with length $t_0-\varepsilon$ at a price $\pi_L-\varepsilon^2$ and sells the post-trial service at the threshold $v_0$ for some $\varepsilon$ sufficiently small.

Under the deviating trial mechanism, if the buyer's responce is $N$, then seller of both types get $0 < \pi_L \leq \pi_H$, so both types get worse off and $N \not \in (\Sigma_{tr}(m,H,\pi_H) \cup \Sigma_{tr}(m,L,\pi_L)) $. If the buyer's responce is $(B,B)$, then seller of both types get $\pi_L +v_0(T-t_0+\varepsilon) > \pi_H \geq \pi_L$, so both types get better off and  $(B,B) \in (\Sigma_{tr}(m,H,\pi_H) \cap \Sigma_{tr}(m,L,\pi_L)) $. Lastly, if the buyer's response is $(B,N)$, then low-type seller gets $\pi_L-\varepsilon^2 < \pi_L$, while the high-type seller gets $\pi_L -\varepsilon^2 + (1-e^{-\lambda (t_0-\varepsilon)}) (T-t_0+\varepsilon) \mathbb{P}(v> v_0)v_0 .$
In contrast, $\Pi(\pi_L) =\pi_L  + (1-e^{-\lambda t_0}) (T-t_0)  \mathbb{P}(v> v_0) (v_0).$
Because the function $(T-t)(1-e^{-\lambda t})$ is a strictly concave function maximized at $t_M$ and $t_0>t_M$, the high-type seller gets better off. Thus, $(B,N) \in \Sigma_{tr}(m,H,\pi_H) $ but $(B,N) \not \in \Sigma_{tr}(m,L,\pi_L)$. Because for any $\mu=\mu_0$, $(B,N) \in \Sigma_{tr}(m,\mu_0)$, we have proved $\Sigma_{tr}(m,H,\pi_H) \supsetneq \Sigma_{tr}(m,L,\pi_L).$

In addition, note that $\Sigma_{tr}(m,1) \subseteq \{ (B,N),(B,B) \}$ so $\pi_H(\sigma)> \pi_H$ for any $\sigma \in \Sigma_{tr}(m,1)$. These two points imply that $(\pi_L,\pi_H)$ does not survive D1 Criterion if $\pi_L > \pi_L^D$.

Second, for any equilibrium with equilibrium payoff $(\pi_L,\pi_H)$ such that $\pi_H < \Pi(\pi_L)$, recall that there is a trial mechanism that achieves $(\pi_L,\Pi(\pi_L))$, which sells the trial with length $t_M$ at a price $\pi_L$ and sells the post-trial service at a price $v_M(T-t_M)$.
Now consider the deviation to a trial mechanism $m$ that is very close to this mechanism. The deviation sells the trial with length $t_M$ at a price $\pi_L-\varepsilon$ and sells the post-trial service at a price $v_M(T-t_0)$ for some $\varepsilon$ sufficiently small. By the same argument as above, $\Sigma_{tr}(m,H,\pi_H) \supsetneq \Sigma_{tr}(m,L,\pi_L)$ for $\pi_L > 0$. There is one difference at the corner case where $\pi_L=0$: $N \not \in \Sigma_{tr}(m,\mu)$ for any $\mu$. In this case, it is clear that $\Sigma_{tr}(m,1) \subseteq \{ (B,N),(B,B) \}$ so $\pi_H(\sigma)> \pi_H$ for any $\sigma \in \Sigma_{tr}(m,1)$. Thus, in both cases,  $(\pi_L,\pi_H)$ does not survive D1 Criterion when $\pi_H < \Pi(\pi_L)$.
\end{proof}


Our second step is to prove any equilibrium payoff that does not fail the conditions of Lemma \ref{lemma:obviouslynotD1} survives the D1 criterion. In fact, we can prove a stronger result:
\begin{lemma}
\label{lemma:obviouslyD1}
   Any equilibrium with payoff $(\pi_L,\pi_H)$ such that $\pi_L \le \pi^D_L$ and $\pi_H = \Pi(\pi_L)$ survives the D1 criterion. Further, there does not exist a deviating mechanism $m$, a belief $\mu$ and a continuation equilibrium $\sigma \in \Sigma(m,\mu) $ such that $\pi_H(\sigma) > \pi_H$ and $\pi_L(\sigma) \leq \pi_L.$
\end{lemma}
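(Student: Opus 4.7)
The plan is to argue by contradiction: suppose there exist a deviating mechanism $m$, a buyer belief $\mu$, and a continuation equilibrium $\sigma \in \Sigma(m, \mu)$ with $\pi_H(\sigma) > \pi_H$ and $\pi_L(\sigma) \leq \pi_L$. I will extract a contradiction by relating the continuation-equilibrium outcome to the IC-IR payoff frontier characterized in Theorem \ref{thm:trial}, adapted to arbitrary buyer beliefs.

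First, by a revelation-principle style reduction, I would argue that $(\pi_L(\sigma), \pi_H(\sigma))$ can be implemented by some IC-IR mechanism at the buyer's belief $\mu$; let $\Pi_\ICIR(\mu)$ denote the resulting payoff set. Since the only constraint in IC-IR that depends on the buyer's belief is ex-ante IR, and the buyer's ex-ante utility at any fixed mechanism is linear and weakly increasing in $\mu$, IR becomes strictly weaker as $\mu$ grows. This monotonicity yields the nesting $\Pi_\ICIR(\mu) \subseteq \Pi_\ICIR(\mu_0)$ when $\mu \leq \mu_0$, and $\Pi_\ICIR(\mu) \supseteq \Pi_\ICIR(\mu_0)$ when $\mu \geq \mu_0$.

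Next, I would characterize the frontier $\Pi^\mu(\pi_L) := \max\{\pi_H : (\pi_L, \pi_H) \in \Pi_\ICIR(\mu)\}$ for small $\pi_L$ by invoking the analog of Theorem \ref{thm:trial} at belief $\mu$, whose proof transfers verbatim by substituting $\mu$ for $\mu_0$. At the weight $w_L = -1$, equations \eqref{eqn:thm_v0}--\eqref{eqn:thm_t0} yield the Myersonian free-trial parameters $v_0 = v_M$ and $t_0 = t_M$, which are manifestly independent of the belief. The resulting trial mechanism with ex-ante price $p_0$ generates payoffs $(p_0, p_0 + \pi_F)$ and is IR at $\mu$ precisely when $p_0 \leq \pi_L^D(\mu) = (\mu/\mu_0)\pi_L^D$ (since $\pi_L^D$ is linear in the prior). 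This gives the affine frontier segment $\Pi^\mu(\pi_L) = \pi_L + \pi_F$ on $[0, \pi_L^D(\mu)]$, and in particular $\Pi(\pi_L) = \pi_L + \pi_F$ on $[0, \pi_L^D]$, so the equilibrium satisfies $\pi_H = \pi_L + \pi_F$.

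I would then conclude by casework on $\mu$. If $\mu \geq \mu_0$, then $\pi_L(\sigma) \leq \pi_L \leq \pi_L^D \leq \pi_L^D(\mu)$, so applying the frontier structure at belief $\mu$ gives $\pi_H(\sigma) \leq \Pi^\mu(\pi_L(\sigma)) = \pi_L(\sigma) + \pi_F \leq \pi_L + \pi_F = \pi_H$. If $\mu < \mu_0$, the nesting $\Pi_\ICIR(\mu) \subseteq \Pi_\ICIR(\mu_0)$ places $(\pi_L(\sigma), \pi_H(\sigma))$ inside $\Pi_\ICIR(\mu_0)$, whence $\pi_H(\sigma) \leq \Pi(\pi_L(\sigma)) = \pi_L(\sigma) + \pi_F \leq \pi_H$. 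Either way, $\pi_H(\sigma) \leq \pi_H$ contradicts $\pi_H(\sigma) > \pi_H$. This proves the second clause of the lemma, which in turn implies $\Sigma(m, H, \pi_H) \subseteq \Sigma(m, L, \pi_L)$ for every deviating $m$, so the strict containment required by D1 never holds and the equilibrium survives. The main obstacle I anticipate is the revelation-principle step: one must carefully verify that the buyer's incentive constraints (not hiding arrivals, truthfully reporting values) are genuinely belief-independent, so the direct mechanism induced by $(m, \sigma_B)$ remains IC once we view it as a design problem at belief $\mu$.
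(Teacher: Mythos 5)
Your proof is correct, and it rests on the same key observation as the paper's: the quantity $\mathbb{E}[\Delta^H_{v,\tau}\mid\theta=H]$ is bounded by $\pi_F$ using only the belief-free constraints (IC-V) and (IC-U), which, combined with $p^H_U(T)\le p^L_U(T)\le\pi_L$, gives $\pi_H(\sigma)\le\pi_L(\sigma)+\pi_F\le\pi_L+\pi_F=\Pi(\pi_L)$. However, you wrap this in extra scaffolding that is not needed. The paper reaches the contradiction directly by comparing the auxiliary value $E(\pi_L)$ to the solution of the unconstrained-by-IR, $w_L=-1$ relaxed problem; your detour through the nesting of $\Pi_\ICIR(\mu)$ in $\mu$ and casework on $\mu\gtrless\mu_0$ is superfluous, because the one-line bound $\Pi^\mu(\pi_L')\le\pi_L'+\pi_F$ holds for \emph{every} $\mu$ (it uses only IC plus the seller's report IC $p^H_U\le p^L_U$), so the two cases collapse.

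One point to be careful about: your claim that the analog of Theorem \ref{thm:trial} ``transfers verbatim by substituting $\mu$ for $\mu_0$'' is not literally correct, because the verification that the candidate trial mechanism satisfies the omitted IC constraint (an uninformed buyer at the end of the trial must not want to purchase at the post-trial price) relies on the standing assumption that $\lambda T$ is large relative to a function of $\mu_0$; replacing $\mu_0$ by a larger belief $\mu$ (in particular $\mu\to1$, the case most relevant to D1) can break that check, in which case the Myersonian free trial is not actually achievable at belief $\mu$. This does not wreck your argument, because you only use the \emph{upper-bound} direction $\Pi^\mu(\pi_L')\le\pi_L'+\pi_F$, which follows from the relaxed problem alone without IC verification. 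But stating the equality $\Pi^\mu(\pi_L')=\pi_L'+\pi_F$ is an overclaim, and if you present the proof you should replace it with the inequality you actually need.
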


\begin{proof}[Proof of Lemma \ref{lemma:obviouslyD1}]
    We prove this by contradiction. Suppose there exists a $\pi_L \in [0,\pi^D_L]$ such that $(\pi_L,\Pi(\pi_L))$ does not survive D1. Then there exists a deviating mechanism $m$, a belief $\mu$ and a continuation equilibrium $\sigma \in \Sigma(m,\mu)$ such that $\pi_H(m,\mu) > \Pi(\pi_L)$ and $\pi_L(m,\mu) \leq \pi_L$.
    This implies $E(\pi_L)>\Pi(\pi_L)$ where $E(\pi_L)$ is the solution of the following problem
\begin{gather*}
     E(\pi_L)= \max_{q, p_U^\theta (T), \{ \Delta_{v,t}^\theta \} } ~ p^H_U(T) + \mathbb{E} \left[\Delta_{v,\tau}^H \mid \theta = H \right] \notag \\
    \textnormal{s.t. }  \text{(IC-V) + (IC-U)}  \\
         \mu p_U^H(T) + (1-\mu) p_U^L(T)   \le \mu I(U,0) + \mu \\
         \mathbb{E}\left[  v\left( I\left(v,\tau \right) -I(U,\tau) \right)  \left.  - \Delta_{v,\tau} \right\vert \theta = H\right] , \\
         p_U^H(T) \le p_U^L(T)  \le \pi_L.
\end{gather*}
In this problem, we solve for the optimal mechanism that maximizes the high-quality seller subject to the additional constraint that the buyer believes the seller is of high quality with probability $\mu$, and we require the payoff to the lower-quality seller to be bounded by $\pi_L$.
Denote the solution to the above problem as $m^*$.  Clearly, 
\begin{gather}
\label{eq:proof:D1}
     \mathbb{E} \left[\Delta_{v,\tau}^H \mid \theta = H \right]|_{m^*} = E^*(\pi_L)_{m^*} - p_U^H(T)|_{m^*}  > \Pi(\pi_L) - \pi_L.
\end{gather}

On the other hand, there is a trial mechanism  that achieves $(\pi_L,\Pi(\pi_L))$ and solves
\begin{gather*}
    \max_{q, p} ~  - p^L_U(T) +  p^H_U(T) + \mathbb{E} \left[\Delta_{v,\tau}^H \mid \theta = H \right]  \\
    \text{subject to IC, IR} 
\end{gather*}
Moreover, our analysis in Lemma \ref{lem:uninformed_payments_equal} has shown that when $w_L = -1$, the ex-ante IR constraint can be neglected and $p_U^H(T) = p_U^L(T)$. Thus,
\begin{gather*}
    \Pi(\pi_L) - \pi_L =  \max_{q, p_U^\theta (T), \{ \Delta_{v,t}^\theta \} } ~ \mathbb{E} \left[\Delta_{v,\tau}^H \mid \theta = H \right]   \\
   {s.t. } \quad 
        \text{(IC-V) + (IC-U)} .
\end{gather*}
Note that $m^*$ is a feasible solution to this relaxed problem, so $\mathbb{E} \left[\Delta_{v,\tau}^H \mid \theta = H \right]|_{m^*} \leq  \Pi(\pi_L) - \pi_L$.
This contradicts \ref{eq:proof:D1}. Thus, we proved that any equilibrium with payoff $(\pi_L,\Pi(\pi_L))$ for $\pi_L \in [0, \pi_L^D]$ survives the D1 Criterion. 
\end{proof}
We can now finish the proof of Proposition \ref{prop:D1}.
\begin{proof}
Lemmas \ref{lemma:obviouslynotD1} and \ref{lemma:obviouslyD1} jointly imply that the D1-surviving equilibria must have payoff outcomes $(\pi_L,\Pi(\pi_L))$ for $\pi_L \in [0, \pi_L^D]$. 

By Theorem \ref{thm:trial}, any IC-IR mechanism that maximizes $\pi_H-\pi_L$ must be a trial mechanism with trial length $t_M$ and post-trial price $v_M$. Further, in any IC-IR mechanism that maximizes $\pi_H - \pi_L$, $p_U(T)$ can take any value between 0 (by low-type seller IR) and $\pi_L^D$ (by buyer's IR). Outcome-uniqueness thus implies that any D1-surviving equilibrium outcome must mean the seller proposed a trial mechanism with trial length $t_M$ and post-trial price $v_M$.
\end{proof}

\section{Omitted Proofs}
\subsection{Omitted Proofs for Section 3 and 4}

\begin{proof}[Proof of Proposition \ref{prop:benchmark}]
Consider the candidate equilibrium where the seller proposes the mechanism with $I(h_t) = 1$ and $p(h_t) = \lambda \mu_0 T$ for all report histories $h_t$. The buyer's beliefs in this equilibrium are that $\theta = H$ with probability $\mu_0$ on-path prior to any message history (with beliefs updated according to Bayes' rule during the game), and that off-path $\theta = H$ with probability zero (unless a reward arrives in some history).\footnote{We did not specify the buyer's mechanism participation strategy off-path, but we suppose the buyer takes some arbitrary best-response on any off-the-equilibrium history given the belief system. This turns out to be sufficient for our argument.} 

    We first argue that if the seller can only propose mechanisms where $p(h_t)=p(\theta)$, then this is an equilibrium. Since the buyer gets ex-ante a surplus of $\lambda \mu_0 T$ from the service, the buyer's participation condition is satisfied, and the buyer's strategy is a best-response. Additionally, the buyer's belief system is consistent, so it suffices to check that the seller's proposal of such a mechanism is a best-response to the buyer's strategy and belief system. Note that on-path, the seller (of either type) gets $\lambda \mu_0 T$. Suppose the seller has a profitable deviation. Since $p$ is constant over time, the deviating mechanism proposed by the buyer must charge some $p > \lambda \mu_0 T$, else the seller could not benefit. But the buyer's belief system dictates $\mu(H|\cdot)$ for such a mechanism proposal is zero, and hence the buyer is unwilling to participate and buy the service because the buyer believes that $\theta = L$ almost surely. This implies that the seller could only get zero by proposing this mechanism, a contradiction.

    Now, suppose the seller does not observe $\theta$ or can commit to not observe $\theta$. Once again, we can confirm that the buyer's strategy is a best-response and the belief system is consistent with the candidate seller's strategy. We confirm that the seller's strategy is also a best-response. Since the seller does not observe $\theta$ (or commits to not observing $\theta$), the seller's strategy must be independent of $\theta$. Note that the candidate equilibrium would grant the seller a payoff of $\lambda \mu_0 T$, regardless of type. Suppose the seller could do better. That implies some mechanism, that would be accepted, has some price function such that 
    $\mathbb{E}_\theta[\mathbb{E}_{h_T}[p(h_T)]] > \lambda \mu_0 T$,
    where the expectation is taken with respect to the reward arrival process and the buyer's strategy. But the buyer's expected payoff at the terminal time is 
    \begin{align*}
        \mathbb{E}_{\theta,h_T}[N_t v - p(h_T)]]  &=  \mathbb{E}_{\theta,h_T}[N_t v] - \mathbb{E}_{\theta,h_T}[p(h_T)]] < \mu_0 \mathbb{E}[v] \mathbb{E}[N_t \mid \theta = H] - \lambda \mu_0 T \\
        &= \mu_0 \mathbb{E}_{I}\left[ \int_0^T \lambda I_s \ ds \right] - \lambda \mu_0 T 
    \le \lambda \mu_0 T - \lambda \mu_0 T = 0
    \end{align*}  
    where we used the fact that $v$ is drawn independently and with expectation 1, and the fact that $I$ is always bounded above by $1$. But this implies the buyer has a negative ex-ante payoff from participating, contradicting the buyer's initial participation constraint.
\end{proof}
\begin{proof}[Proof of Proposition \ref{prop:free_trial_deviation}]
We first prove the if direction. Suppose \eqref{eqn:fb_condition} holds. The probability a buyer experiences a reward during the trial is $(1 - e^{-\lambda t_M})$, and the probability that the buyer has a value larger than $v_M$ is $1 - F(v_M)$. The expected ex-ante profit of a high-type seller is then $(1-e^{-\lambda t_M})(1 - F(v_M))\lambda v_M(T-t_M)$. 
By \eqref{eqn:fb_condition}, this is larger than $\lambda \mu_0 T$. From Corollary \ref{corr:exante}  we know any equilibrium that maximizes the ex-ante revenue leads to equilibrium payoff $(\lambda \mu_0 T,\lambda \mu_0 T)$ and so the Myersonian free trial is always a profitable deviation.

We next prove the only if direction. Suppose \eqref{eqn:fb_condition} does not hold. It suffices to show that there is an equilibrium with the first-best revenue. Selling full-service access at the beginning at price $\lambda \mu_0 T$  is an IC-IR mechanism with payoffs $(\lambda \mu_0 T, \lambda \mu_0 T)$. Since \eqref{eqn:fb_condition} does not hold, the payoff pair $(\lambda \mu_0 T, \lambda \mu_0 T)$ satisfies the conditions of Proposition \ref{prop:payoff_set} and hence is an equilibrium. Thus, there is an equilibrium that attains the first-best.
\end{proof}

\begin{proof}[Proof of Lemma \ref{lem:uninformed_payments_equal}]

First, we rearrange the (IR-0) constraint. Intuitively, the (IR-0) constraint equivalently states that the expected payment that occurs with no arrivals cannot be more than the expected value of  the service for the uninformed buyer plus the expected additional value a buyer gains after learning $v$:
\[ \mu_0\left( \lambda I(U,0) + \mathbb{E}\left[ \lambda v (I(v,\tau) - I(U,\tau)) - \Delta^H_{v,\tau}(T) \mid \theta = H \right] \right)  \ge \mu_0 p^H_U(T) + (1-\mu_0) p^L_U(T). \]
Recall the objective function is
\begin{align}
    \max_{M \in \mathcal{M}} ~ & \left \{ w_L p^L_U(T) + p^H_U(T) + \mathbb{E} \left[\Delta^H_{v,\tau}(T)\mathbbm{1}[\tau \le T]\mid \theta = H \right] \right \} \notag 
\end{align}

We consider casework on $w_L$. If $w_L = -1$, then $p^L_U(T)$ must be chosen equal to $p^H_U(T)$, else we could decrease $p^L_U(T)$ and do better on the objective without affecting  the other constraints and maintaining $p^L_U(T) \ge p^H_U(T)$. So $p^L_U(T) = p^H_U(T)$ in this case.
    
If $w_L \in \left(-1, (1-\mu_0)/\mu_0\right]$ then note that (IR-0) is a linear constraint on $p^H_U(T)$ and $p^L_U(T)$, and the objective is linear in both. Since $w_L \le (1-\mu_0)/\mu_0$, the objective assigns a higher relative weight to $p^H_U(T)$, and hence if $p^L_U(T) > p^H_U(T)$, the objective would increase by decreasing $p^L_U(T)$ by $\mu_0 \epsilon$ and increasing $p^H_U$ by $(1-\mu_0)\epsilon$, while maintaining (IR-0); thus, at optimum, $p^L_U(T) = p^H_U(T)$. It is then clear to see that it is optimal to increase $p^H_U(T)$ as much as possible since $w_L > -1$, and so (IR-0) must bind.
\end{proof}

\begin{proof}[Proof of Lemma \ref{Lemma_low_bind}]
First note that if  (IC-V) is slack at any point $v$, then one can uniformly increase $\Delta_{v,t}^H(T)$, which does not affect (IC-U), to increase the objective. Thus, in the optimal solution (IC-V) binds at some point $v_1$. 

Suppose by contradiction that there exists $v$ such that $I(v,t) < I(U,t)$. Note that $I(v,t)$ is increasing in $v$ so $u(v,t)$ is convex in $v$. Recall the convex function $u(v,t)$ is supported by the linear function $\lambda v I(U,t)$ and they are equal at $v_1$. This implies $I(v,t)\leq I(U,t)$ for any $v< v_1$ and $I(\ubar{v},t) < I(U,t)$.

Following the standard envelope argument, we can replace the (IC-V) with the requirement that $I(v,t)$ is monotone in $v$ and the envelope representation:
\begin{gather*}
    u(v,t) = u(v_1,t) + \int_{v_1}^v \lambda I(w,t) \dd w,
\end{gather*} 
and a mechanism satisfies (IC-V) and (IC-U) if and only if it satisfies (IC-U), the monotone condition, and the envelope representation.
We construct a new mechanism $I^*(v,t)$ and $\Delta_{v,t}^*(T)$ such that
\begin{align*}
    I^*(v,t) &= I(v,t), \quad \Delta_{v,t}^*(T) = \Delta_{v,t}^H(T) \qquad &\forall v \geq v_1, \\
    I^*(v,t) &= I(U,t), \quad \Delta_{v,t}^*(T) = 0 \qquad &\forall v < v_1.
\end{align*}
The new construction ensures for $v<v_1$
\begin{gather*}
    u^*(v,t) = \lambda v I(U,t) = \lambda v_1 I(U,t) - \lambda (v_1-v) I(U,t) = u^*(v_1,t) + \int_{v_1}^v \lambda I^*(w,t) \dd w
\end{gather*}
Thus, both envelope representation and (IC-U) hold for $v<v_1$. Because for $v>v_1$ nothing changes and the monotone condition is preserved, the new mechanism satisfies all IC constraints. In addition, note that in the original mechanism, for $v<v_1$
\begin{gather*}
    \Delta_{v,t}^H(T) = \lambda v I(v,t) - u(v,t) \leq \lambda v I(U,t) - \lambda v I(U,t) = 0,
\end{gather*}
and the inequality is strict at $v=\ubar{v}$. Thus, the new mechanism is a strict improvement.
\end{proof}

\begin{proof}[Proof of Lemma \ref{lemma:pluginPointWise}]
Recall the objective is
{\small \begin{align}
    & \begin{Bmatrix} (w_L+1)\mu_0\left(  \mathbb{E}\left[ \lambda v I(U,0) + \lambda( v (I(v,\tau) - I(U,\tau)))\mathbbm{1}[\tau \le T]\mid \theta = H \right]  \right)  \\ + (1 - \mu_0(w_L+1))\mathbb{E} \left[\left(\lambda v I(v,\tau) - \int_{\ubar{v}}^v \lambda I(w,\tau) \dd w - \lambda \ubar{v} I(U,\tau) \right)\mathbbm{1}[\tau \le T]\mid \theta = H \right] \end{Bmatrix} \notag 
\notag
\end{align} }

    When $v < v_0$, $I(v,t) = I(U,t)$, so the expectation terms are both  0.
When $v \ge v_0$, the expectation term is 
\begin{gather*} \mu_0(w_L+1)\lambda v \left( I(v,\tau) - I(U,\tau) \right) + (1 - \mu_0(w_L+1))\left(\lambda v I(v,\tau) - \int_{\ubar{v}}^v \lambda I(w,\tau) \dd w - \lambda \ubar{v} I(U,\tau) \right) \\
=  \mu_0(w_L+1)\lambda v \left( T - \tau - I(U,\tau) \right) + (1 - \mu_0(w_L+1))\left(\lambda v_0 (T - \tau) - \lambda v_0 I(U,\tau) \right) \end{gather*}
Hence, taking the expectation over $v$, the term becomes
\begin{gather*} \mu_0(w_L+1)\lambda \int_{v_0}^{\bar{v}}v  f(v)\ \dd v \left( T - \tau - I(U,\tau) \right) + \lambda v_0(1 - F(v_0))(1 - \mu_0(w_L+1))\left( T - \tau -  I(U,\tau) \right) \\
= \lambda \left(\mu_0(w_L+1) \int_{v_0}^{\bar{v}}v  f(v)\ \dd v  + v_0(1 - F(v_0))(1 - \mu_0(w_L+1))\right)(T - \tau - I(U,\tau) \end{gather*}
For notational convenience, denote the expected virtual surplus $\pi_0$ as 
\begin{align*} 
\pi_0 &:= \mu_0(w_L+1) \int_{v_0}^{\bar{v}}v  f(v)\ \dd v  + v_0(1 - F(v_0))(1 - \mu_0(w_L+1)) \\
&= \int_{v_0}^{\bar{v}}v  f(v)\ \dd v - (1 - \mu_0(w_L + 1)) \left(\int_{v_0}^{\bar{v}}v  f(v)\ \dd v - v_0(1 - F(v_0))\right) \\
&= \int_{v_0}^{\bar{v}}v  f(v)\ \dd v - (1 - \mu_0(w_L + 1)) \left(\int_{v_0}^{\bar{v}} (1 - F(v)) \ \dd v \right) \\
&= \int_{v_0}^{\bar{v}} \left(v - (1 - \mu_0(w_L+1)) \frac{1-F(v)}{f(v)} \right) f(v) \dd v 
\end{align*}
where we utilized the integration-by-parts identity in the third step.
Using this definition of $\pi_0$, the objective becomes 
\begin{align}
    \max_{M \in \mathcal{M}} ~ & \left \{ (w_L+1)\mu_0  \lambda I(U,0) + \lambda \pi_0 \mathbb{E}[(T - \tau - I(U,\tau)) \mathbbm{1}[\tau \le T] | \theta = H ] \right \}\notag 
\notag
\end{align}
\end{proof}
\begin{lemma}
The optimal control function for the control problem:
\begin{align*}
    \max_{U \in \mathcal{U}_{ad}} ~ &\left \{ \int_0^T \left(\lambda  \left(T - t - (X(T) - X(t))\right)\exp \left(- \lambda X(t)  \right) + K \right) \lambda U(t)  \dd t  \right \} \\
    \textnormal{subject to } ~ & \dot{X}(t) = U(t) \in [0,1], ~ X(0) = 0, ~ X(T) \textnormal{ free}
\end{align*}
where $K$ is some constant and $U(t)$ is measurable with respect to $t$, is
\[ U(t) = \begin{cases}
    1 & t \le t_0 \\
    0 & t > t_0
\end{cases} \]
where $t_0$ solves $ \lambda e^{-\lambda t_0}( T - t) - (1 - e^{-\lambda t_0}) + K = 0$ if $K < 1 - e^{-\lambda T}$, and $t_0 = T$ otherwise.
\label{lem:ctrl}
\end{lemma}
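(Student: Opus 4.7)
The plan is to collapse the functional optimization over $U(\cdot)$ into a one-dimensional concave optimization over $X(T)$, using integration by parts and a monotonicity argument for the running cost. The bang-bang structure will then fall out as the unique feasible path achieving a given terminal $X(T)$ and minimizing $\int_0^T e^{-\lambda X(t)} \dd t$.

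\textbf{Step 1 (rewrite the objective).} Because $\dot{X}(t) = U(t)$, we have $\lambda U(t) e^{-\lambda X(t)} = -\tfrac{d}{dt} e^{-\lambda X(t)}$. Splitting the integrand and applying integration by parts to the term $\int_0^T \lambda (T-t-(X(T)-X(t))) \, (-\tfrac{d}{dt} e^{-\lambda X(t)}) \dd t$, together with $\int_0^T K \lambda U(t) \dd t = K \lambda X(T)$, the objective reduces (after simplification using $X(0)=0$) to
\[
J[U] \;=\; \lambda T \;-\; \lambda(1-K)\,X(T) \;+\; 1 - e^{-\lambda X(T)} \;-\; \int_0^T \lambda\, e^{-\lambda X(t)} \dd t.
\]
This form is crucial: only $X(T)$ and the \emph{path} $X(\cdot)$ enter, with no remaining explicit $X(T)$--$X(t)$ coupling inside the integral.

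\textbf{Step 2 (optimal path given $X(T)$).} Fix $X(T)=s \in [0,T]$ (any feasible $X(T)$ must lie in this interval since $\dot X \in [0,1]$). For any admissible path with these endpoints, nondecreasingness and the slope bound give $X(t) \le \min(t,s)$ pointwise. Since $e^{-\lambda x}$ is strictly decreasing in $x$, it follows that
\[
\int_0^T e^{-\lambda X(t)} \dd t \;\ge\; \int_0^T e^{-\lambda \min(t,s)} \dd t,
\]
with equality if and only if $X(t) = \min(t,s)$ a.e. Hence, for each $s$, the unique optimizer of $J$ given $X(T)=s$ is $U(t) = \mathbbm{1}[t \le s]$.

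\textbf{Step 3 (reduce to a one-dimensional problem).} Plugging $X(t)=\min(t,s)$ into Step~1, the integral evaluates to $\tfrac{1-e^{-\lambda s}}{\lambda} + (T-s)e^{-\lambda s}$, and after cancellation
\[
J(s) \;=\; \lambda T \;-\; \lambda(1-K)\,s \;-\; \lambda (T-s)\, e^{-\lambda s}.
\]
Differentiating, $J'(s) = \lambda\bigl[e^{-\lambda s}(1+\lambda(T-s)) - (1-K)\bigr]$ and $J''(s) = -\lambda^2 e^{-\lambda s}\bigl[2+\lambda(T-s)\bigr] < 0$, so $J$ is strictly concave on $[0,T]$ and has a unique maximizer.

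\textbf{Step 4 (solve the FOC and handle the corner).} Setting $J'(s)=0$ yields $e^{-\lambda t_0}(1+\lambda(T-t_0))=1-K$, equivalently $\lambda e^{-\lambda t_0}(T-t_0) - (1-e^{-\lambda t_0}) + K = 0$. At $s=T$ one has $J'(T) = \lambda[e^{-\lambda T} - (1-K)]$, which is nonnegative precisely when $K \ge 1 - e^{-\lambda T}$; by concavity, in that regime the maximizer is $t_0 = T$, and otherwise the interior FOC has a unique solution in $(0,T)$. Combining Steps~2 and~4 gives the claimed bang-bang optimal control. The main subtlety is Step~1: because $X(T)$ appears \emph{inside} the integrand as well as being the terminal state, one has to be careful that the integration-by-parts boundary term at $t=0$ exactly cancels the $X(T)$-coupling in the running cost, leaving only the clean reduced form; once this is done, the remainder is a routine monotonicity-plus-concavity argument.
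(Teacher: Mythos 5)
Your proof is correct, and it takes a genuinely different and more elementary route than the paper's. The paper invokes the Pontryagin maximum principle (citing existence and necessity theorems from Ahmed--Wang), constructs a costate $\rho$, shows the switching function $J(t)$ is strictly decreasing, and backs out $t_0$ from the transversality condition $\rho(T)=0$. You instead use the identity $\lambda U(t) e^{-\lambda X(t)} = -\tfrac{d}{dt}e^{-\lambda X(t)}$ and integration by parts to decouple the $X(T)$-coupling in the running cost, obtaining a functional of the form $c_0 - c_1 X(T) + 1 - e^{-\lambda X(T)} - \lambda\int_0^T e^{-\lambda X(t)}\dd t$; then, since $X(t) \le \min(t, X(T))$ for any admissible path, the path-dependent integral is uniquely minimized by the bang-bang path, and the remaining one-dimensional optimization over $s = X(T)$ is strictly concave. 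I verified the integration by parts, the cancellation giving $J(s) = \lambda T - \lambda(1-K)s - \lambda(T-s)e^{-\lambda s}$, the derivative computations, and the corner analysis; all are correct. Your approach has the advantage of being self-contained (no appeal to general optimal-control existence/necessity theorems) and of yielding global optimality and uniqueness directly rather than through the ``Pontryagin conditions are necessary and have a unique solution'' argument. One small caveat shared by both proofs: the claim that the FOC has an interior root when $K < 1 - e^{-\lambda T}$ implicitly uses $J'(0) = \lambda(\lambda T + K) > 0$, i.e.\ $K > -\lambda T$; this holds in every application in the paper (where $K \ge 0$) but is not stated as a hypothesis in the lemma. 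The paper's Pontryagin machinery is reused in the more general Theorem \ref{thm:tieredpricing}, which may explain the authors' choice, but for this lemma in isolation your argument is the cleaner one.
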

\begin{proof}[Proof of Lemma \ref{lem:ctrl}]
First, rearrange the objective:
\begin{align*}
    &\int_0^T \left(\lambda  \left(T - t - (X(T) - X(t))\right)e^{- \lambda X(t)}+ K \right) \lambda U(t)  \dd t \\
    =~&\int_0^T \left(\lambda  \left(T - t + X(t)\right)e^{- \lambda X(t)} + K \right) \lambda U(t)  \dd t - \lambda X(T) \left( 1 -  e^{- \lambda X(T)} \right)\\
    =~&\int_0^T \left(\lambda  \left(T - t \right)e^{- \lambda X(t)} - \left( 1 -  e^{- \lambda X(t)}\right) + K \right) \lambda U(t)  \dd t.
\end{align*}  
Now, we invoke the standard Pontryagin maximum principle arguments. For a more specific reference, we apply
Theorem 4.2.4 in \cite{AhmedWang21}  which proves there exists an optimal control, and apply Theorem 5.2.3 in \cite{AhmedWang21} to prove for
any optimal measurable control $U,X$, there exists a costate variable $\rho$ such that 
\begin{align}
    U(t) &\begin{cases}
        = 1 & J(t) > 0 \\
        \in [0,1] & J(t) = 0 \\
        = 0 & J(t) < 0
    \end{cases} \label{eqn:bangbang} \\
    J(t) &= \lambda\left( \lambda  \left(T - t \right)e^{- \lambda X(t)} - \left( 1 -  e^{- \lambda X(t)}\right) + K\right) - \rho(t) \label{eqn:ctrl_index}\\
    \dot{\rho}(t) &= -\left( \lambda  \left(T - t \right) + 1 \right) \lambda e^{- \lambda X(t)} \lambda U(t) \label{eqn:costate_evol} \\
    \rho(T) &= 0 \label{eqn:costate_upperbound} 
\end{align}
where \eqref{eqn:costate_evol} is the costate evolution, and \eqref{eqn:costate_upperbound} is the upper boundary constraint, \eqref{eqn:bangbang} and \eqref{eqn:ctrl_index} follow from Hamiltonian maximization.\footnote{The maximization technically should hold for $a.e.$ t, but for any optimal control and the corresponding costate, we can always find essentially the same control such that the optimality condition holds for any $t$ instead of $a.e.$ t. Thus, without loss, we assume the optimality condition holds for any $t$. }  Note that by differentiating \eqref{eqn:ctrl_index} wherever differentiable, we get 
\begin{align*}
    \dot{J}(t) &= - \lambda\left(\lambda e^{- \lambda X(t)}- \lambda U(t) \lambda  \left(T - t \right)e^{- \lambda X(t)} - \lambda U(t) e^{- \lambda X(t)}\right) - \dot{\rho}(t) = -\lambda^2 e^{-\lambda X(t)},
\end{align*}
since \eqref{eqn:costate_evol} cancels the other terms. Note that this implies that $J$ is always decreasing; since $X$ is continuous, $\dot{J}$ is continuous. 

First, if $K \ge 1 - e^{-\lambda T}$, then $J(T) = \lambda (K - (1 - e^{-\lambda T}) \ge 0$, so $J$ is nonnegative. Hence, the unique solution to the Pontryagin conditions is $U(t) = 1$ everywhere; since Pontryagin is necessary, this must be the only solution to the control problem.

Now, consider the case where $K < 1 - e^{-\lambda T}$. First, we show there exists a unique $t_0$ such that $J(t_0) = 0$. Consider $J(0) = \lambda (\lambda T + K) - \rho(0)$. If this is positive, the fact that $J(T) = \lambda (K - (1 - e^{-\lambda T}) < 0$ and $J$ is strictly decreasing, implies that such a $t_0$ exists and unique by the intermediate value theorem. If $J(0) < 0$, then \eqref{eqn:bangbang} implies that $U(t) = 0$ always, and so $\rho(t) = 0$ by \eqref{eqn:costate_evol}. But this implies that $J(0) = \lambda (\lambda T + K) > 0$, a contradiction.

Given there exists a unique $t_0$ such that $J(t_0) = 0$ and $J$ is decreasing, it follows that $U(t)$ is zero for $t > t_0$. This implies that $\dot{\rho} = 0$ for $t > t_0$, and combined with \eqref{eqn:costate_upperbound}, this implies that $\rho(t_0) = 0$. Additionally, $U(t)$ must be 1 for $t < t_0$, and so $X(t) = t$ for $t < t_0$, and $X(t) = t_0$ for $t \ge t_0$. Hence, $t_0$ must satisfy:
\begin{align*} J(t_0) = \lambda\left( \lambda  \left(T - t_0 \right)e^{- \lambda X(t_0)} - \left( 1 -  e^{- \lambda X(t_0)}\right) + K\right) - \rho(t_0) &= 0  \\
\lambda  \left(T - t_0 \right)e^{- \lambda t_0} - \left( 1 -  e^{- \lambda t_0}\right) + K &= 0
\end{align*}
Hence, the Pontryagin conditions once again admit a unique solution, and so this is the optimal control.
\end{proof}

\begin{proof}[Proof of Lemma \ref{lem:necessity_high}]
    Lemma \ref{lem:ctrl} and the main analysis implies that the necessary and sufficient condition for the optimal solution to the control problem \eqref{prblm:ctrl} derived from Lemma \ref{lemma:pluginPointWise} is that $I^H$ has the trial mechanism form with $v_0$ given by \eqref{eqn:thm_v0} and $t_0$ given by \eqref{eqn:thm_p0}. 
    Given $I^H$ and working backwards through the analysis, the envelope representation of (IC-V) and Lemma \ref{Lemma_low_bind} imply that $I^H$ thus determines the interim utility $u$ uniquely, and so we can back out that $\Delta^H_{v,t}(T) = \lambda v_0 (T - t_0).$ 
    From Lemma \ref{lem:uninformed_payments_equal}, (IR-0) binds when $w_L > -1$; thus, when $w_L > -1$, \[ p^H_U(T) = p^L_U(T) = \lambda \mu_0 t_0 + \lambda (\pi_0 - v_0(1 - F(v_0)))(T - t_0)(1 - e^{-\lambda t_0}).\] 
    When $w_L = -1$, $p^H_U(T)$ cancels, so (IR-0) need not bind, so Lemma \ref{lem:uninformed_payments_equal} implies
    \[ p^H_U(T) = p^L_U(T) \in \left[0,  \lambda \mu_0 t_0 + \lambda (\pi_0 - v_0(1 - F(v_0)))(T - t_0)(1 - e^{-\lambda t_0})\right].\] 
    Thus, the price $p_U(T) := p^H_U(T) = p^L_U(T)$ charged at the terminal time $T$ must satisfy \eqref{eqn:thm_p0}. Now, since the relaxed problem only optimized over prices at the \textit{terminal} time $T$, the trial mechanism in the lemma statement must be optimal, since $I$ maximizes the control problem and the price scheme induces the optimal terminal prices.

    Finally, we argue outcome-uniqueness. The only parts of the mechanism not pinned down by our analysis of the relaxed problem are $I_U^L(t)$, $p_{v,t}^L(t)$, and $p_U(t) = p_U^L(t) = p^H_L(t)$  for $t<T$. $I_U^L(t)$ does not affect any player's payoff and also does not affect the arrival of signals; $p_{v,t}^L(t)$ is completely off-path as low-type seller does not generate signal; $p_U(t)$ for $t<T$ does not affect on-path interim payoffs as what ultimately matters is the final payment.
    
    Thus, interim payoffs of all parties are unique at any time $t$. More specifically, in any optimal mechanism, the interim payoffs are uniquely pinned down as follows:
    \begin{itemize}
        \item The low-type seller always gets $p_U(T)$ in the end.
        \item At any time $t$, with probability $1-e^{\lambda \min\{ t_0,t\}}$, the buyer has received the signal. The high-type seller gets $p_U(T) + \lambda v_0 (T - t_0)$ and the buyer enjoys full service and pays $p_U(T) + \lambda v_0 (T - t_0)$.
        \item At any time $t$,  with probability $e^{\lambda \min\{ t_0,t\}}$, the buyer has not received the signal. In this case, with probability $(1-e^{\lambda \max \{t_0 - t,0 \}  })$ the high-type seller gets $p_U(T) + \lambda v_0 (T - t_0)$ and the buyer enjoys full service and pays $p_U(T) + \lambda v_0 (T - t_0)$. With probability $e^{\lambda \max \{t_0 - t,0 \}  }$ the high-type seller gets $p_U(T)$ and the buyer enjoys service of length $t_0$ and pays $p_U(T)$.
    \end{itemize} 
\end{proof}

\begin{proof}[Proof of Theorem \ref{thm:trial}]
Lemma \ref{lem:necessity_high} showed the solutions to the \textit{relaxed} problem are only attained by trial mechanisms as defined in definition \ref{definition_trail}, with $v_0$ given by \eqref{eqn:thm_v0} and $t_0$ given by \eqref{eqn:thm_t0}. Thus, to finish the argument, it remains to show that for any $w_L$, the trial mechanism maximizes the relaxed problem is feasible in the original problem.

First, note that any trial mechanism is invariant to the seller's report and both sellers get a nonnegative revenue, so this is IC-IR for the seller. 
Additionally, the payment scheme (in absence of any rewards arriving) frontloads payment: $p_U$ is constant and has no dependence on $t$. This means the buyer pays a lump-sum price up front at time 0, and then nothing more if no rewards arrive. If a reward is reported, the payment scheme implies the buyer pays another (fixed) lump-sum price at the end of the trial, $t_0$, if and only if the value $v\geq v_0$ and no more payments thereafter. Hence, the buyer effectively only communicates with the mechanism at the beginning and at the end of the trial, so we only have to check IC and IR at the initial time and at the end of the trial. 

Recall that in Lemma \ref{lem:necessity_high}, $p_U$ was constructed to make (IR-0) bind, and hence IR is satisfied initially. To see that the mechanism is IC-IR at the end of the trial. Note that after receiving a reward of size $v$, the buyer knows the seller is high-type, so the post-trial service is worth $v\lambda (T - t_0)$ to the buyer, and the price for post-trial service is $v_0\lambda (T - t_0)$. Hence it is optimal for the buyer to buy post-trial service if and only if $v\geq v_0$. 

The only remaining constraint to be verified is that the buyer who never receives a reward does not want to purchase the post-trial service. Note that if the buyer does not receive any reward by the time $t_0$, the posterior of $\theta=H$ is
\begin{gather*}
    \mu(t_0) = \frac{\mu_0 e^{-\lambda t_0}}{\mu_0 e^{-\lambda t_0} + 1-\mu_0} = \frac{\mu_0 }{\mu_0 + (1-\mu_0) e^{\lambda t_0}}.
\end{gather*}
Thus, buying the post-trial service is a profitable deviation if and only if $v_0 < \mu(t_0)$. 
We will verify that for all trial mechanisms which are solutions to the relaxed problem, $\mu(t_0) \leq \bar{v}$.

To do so, note that $\mu(t_0)$ is decreasing in $t_0$, and by Proposition \ref{prop:comparative_private}, the trial length $t_0$ increases in $w_L$. So it suffices to show $\mu(t_0) \leq \ubar{v}$ when $w_L=-1$.

Recall that $t_0$ is the unique solution of
\begin{gather*}
     0 = \lambda e^{-\lambda t_0}(T - t_0) - (1 - e^{-\lambda t_0}).
\end{gather*}
Denote $x= e^{-\lambda t_0}$ so $t_0 = -\ln x/\lambda$. We have
\begin{gather*}
    0 = (\lambda  T +1 ) x +  x \ln x  - 1 \geq (\lambda  T +1 ) x + (x -1) -1,
\end{gather*}
where the inequality follows from $1-1/x\leq \ln x$. Consequently, $\frac{1}{x} \geq 1 + \frac{\lambda T}{2}$, and
\begin{align*}
 \implies \mu(t_0) &\leq \frac{\mu_0}{\mu_0 + (1-\mu_0) ( 1+ \frac{\lambda T}{2})} \\
 & \leq  \frac{\mu_0}{\mu_0 + (1-\mu_0) ( 1+ (\mu_0-\ubar{v})/(1-\mu_0)\ubar{v})}\\
 & = \frac{\mu_0 (1-\mu_0)\ubar{v}}{\mu_0 (1-\mu_0)\ubar{v} + (1-\mu_0) (\mu_0-\mu_0 \ubar{v}  )  }  = \ubar{v}.
\end{align*}
The second inequality comes from plugging in the first inequality, and the third inequality comes from the assumption that $T$ is sufficiently large.

To conclude, we have verified that for all trial mechanisms which solve the relaxed problem, $\mu(t_0) \leq \ubar{v}$. Because $v_0 \geq \ubar{v}$, a buyer who never receives a reward during the trial will not buy the post-trial service. 
Since the constructed trial mechanism is IC-IR for the original problem and attains the maximal value of the relaxed problem, it must be optimal.
\end{proof}
\begin{proof}[Proof of Proposition \ref{prop:comparative_private}]
If $v_0 = \ubar{v}$, it is trivially weakly decreasing in $\mu_0$ and $w_L$. Else, $v_0$ satisfies 
\[ v_0 - (1 - \mu_0(w_L+1)_+)\frac{1-F(v_0)}{f(v_0)} = 0\]
Note that regularity implies the left-hand side is increasing in $v_0$ (since regularity implies $(1 - \mu_0(w_L+1)_+)(v - (1 - F(v))/f(v))$ increases in $v$ and $\mu_0(w_L+1)_+ v$ is linearly increasing in $v_0$). Since the left-hand side is also increasing in $\mu_0$ and $w_L$, it follows that $v_0$ is decreasing in $\mu_0$ and $w_L$. For $t_0$, note that we can rewrite $\pi$ as follows:
\begin{gather*}
    \pi (\mu_0, w_L) = \max_{v'} \int_{v'}^{\bar{v}} v f(v) -(1-\mu_0(1 + w_L)) (1-F(v)) \dd v.
\end{gather*}
From the envelope theorem we know
\begin{align*}
    \frac{\partial \pi}{\partial \mu_0} = (1+w_L)\int_{v_0}^{\bar{v}} (1-F(v)) \dd v, && \frac{\partial \pi}{\partial w_L} = \mu_0\int_{v_0}^{\bar{v}} (1-F(v)) \dd v.
\end{align*}
Thus,
\begin{align*}
    \frac{\partial }{\partial \mu_0} \left(\frac{\pi}{\mu_0}\right) =  \frac{  \frac{\partial \pi}{\partial \mu_0} \mu_0 - \pi  }{\mu_0^2} &= \frac{1}{\mu_0^2} \left(  \int_{v_0}^{\bar{v}} \mu_0(1 + w_L)(1-F(v)) -v f(v) + (1-\mu_0(1+w_L)) (1-F(v)) \dd v  \right) \\
    &= \frac{1}{\mu_0^2}  \left(  \int_{v_0}^{\bar{v}} 1-F(v) -v f(v)  \dd v  \right) = -\frac{v_0(1-F(v_0))}{\mu_0^2} < 0 
\end{align*} 
So $\pi/\mu_0$ is decreasing in $\mu_0$, and therefore $\mu_0/\pi$ is increasing in $\mu_0$. Similarly, for $w_L$, 
\begin{align*}
    \frac{\partial }{\partial w_L} \left(\frac{\pi}{1+w_L}\right) &=  \frac{  \frac{\partial \pi}{\partial w_L}(1+w_L) - \pi  }{(1+w_L)^2} \\
    &= \frac{1}{(1+w_L)^2} \left(  \int_{v_0}^{\bar{v}} \mu_0(1 + w_L)(1-F(v)) -v f(v) + (1-\mu_0(1+w_L)) (1-F(v)) \dd v  \right) \\
    &= \frac{1}{(1+w_L)^2}  \left(  \int_{v_0}^{\bar{v}} 1-F(v) -v f(v)  \dd v  \right) = -\frac{v_0(1-F(v_0))}{(1+w_L)^2} < 0 
\end{align*} 
So $\pi/(1+w_L)$ is decreasing in $w_L$, and therefore $(1+w_L)/\pi$ is increasing in $w_L$. 
Note that by rearranging the condition for $t_0$, we get
\[ \lambda e^{-\lambda t_0} (T - t_0)  
 + \frac{\mu_0(1 + w_L)}{\pi_0} = 1 - e^{-\lambda t_0}.  \]
 The right-hand side is increasing in $t_0$, and the left-hand side is decreasing in $t_0$. Since $\mu_0(1+w_L)/\pi$ is increasing in $\mu_0$ and $w_L$, $t_0$ is also increasing in both.
\end{proof}

\begin{proof}[Proof of Proposition \ref{prop:comparative_v}]
To simplify notation, denote $A= 1-\mu_0(w_L+1)$. We can calculate
\begin{align*}
    v_0  &= \max \{  1-\delta, \frac{A}{1+A} (1+\delta) \} \\
    \pi_0 &= \int_{1-\delta}^{1+\delta} \left( v- A (1+\delta-v) \right) f(v) \dd v 
        & = \begin{cases}
              1-A + A (1-\delta) \quad &\text{if } v_0 =1-\delta, \\
              \frac{(1+\delta)^2}{4(1+A)\delta}  \quad &\text{if } v_0 =  \frac{A}{1+A} (1+\delta).
        \end{cases}
\end{align*}
One can verify $\pi_0$ is always decreasing in $\delta$. According to Theorem \ref{thm:trial}, either  $t_0=T$ or  
\begin{gather*}
         \lambda e^{-\lambda t_0}(T - t_0) - (1 - e^{-\lambda t_0}) + \frac{\mu_0(1 + w_L)}{\pi_0} = 0
\end{gather*}
It is straightforward to see that the LHS is decreasing in $t_0$ and decreasing in $\pi_0$.          
\end{proof}

\subsection{Omitted Proofs for Section 5}

\begin{proof}[Proof of Proposition \ref{prop:payoff_set}]
For the if direction, suppose $(\pi_L, \pi_H) \in \Pi_\ICIR$ with $\pi_H \ge \pi_F$. We construct an equilibrium that attains these payoffs. Since $(\pi_L, \pi_H )\in \Pi_\ICIR$, there exists some IC-IR mechanism $M$ with these payoffs. Consider the strategy profile where the seller always proposes $M$ and reports truthfully, the buyer always accepts $M$ and reports truthfully, and the buyer belief system is given by $\mu(H\mid M', \{m_s, I_s, p_s\}_{s<t}, N_t = 0) = 0$ for any mechanism $M' \neq M$.\footnote{As before, we did not specify the buyer's mechanism participation strategy off-path, but we suppose the buyer takes some arbitrary best-response on any off-the-equilibrium history given the belief system.} Since $M$ is IC-IR, it is weakly optimal for all players to report truthfully. Thus, to show this is an equilibrium, it suffices to show that no seller cannot gain from proposing any (possibly indirect mechanism) $M' \neq M$.\footnote{We do not assume $M'$ is a direct mechanism here to justify our discussion that restricting to direct mechanisms are without loss of generality.}

At the node where the deviating mechanism $M'$ is proposed, the buyer believes $\theta=L$ so we are considering a specific belief system where $\mu(H\mid M', \{m_s, I_s, p_s\}_{s<t}, N_t = 0) = 0$ and  $\mu(H\mid M', \{m_s, I_s, p_s\}_{s<t}, N_t > 0) = 1$.
Because the seller's strategy is sequentially optimal, after $M'$ is accepted, the seller's reporting strategy $\hat{\theta}$ is optimal given  the buyer's subsequent reporting strategy which is rational under the belief system. Thus, we can use direct mechanisms to discuss each seller type's maximum feasible payoff at the node where the deviating mechanism $M'$ is proposed. Consider the similar set of necessary conditions of IC-IR constraints as in Theorem \ref{thm:trial}
\begin{align*} \quad & 
    \Delta_{v,t}^H (T) =  \lambda v I(v,t) - \int_{\ubar{v}}^v \lambda I(w,t) \dd w - \lambda \ubar{v} I(U,t) \qquad \forall (v, t) \tag{IC} \\
    &  0  \ge  p^L_U(T) \tag{IR-0} \\
    & p^L_U(T) \ge \max(p^H_U(T), 0)  \tag{IC-S}
\notag
\end{align*}
This immediately implies the low-type gets at most $p_U^L(T) \leq 0 \leq \pi_L$, so low-type seller does not have strictly profitable deviation.


Now, consider the high-type seller.  Denote $M'$ grants ex-ante payoff $\pi_H'$ to the high-type seller. Because in any IC-IR direct mechanism $p_U^H(T) \leq p_U^L(T) \leq 0$, we must have:
\begin{align*}
    \pi_H' \leq \max_{M \in \mathcal{M}} ~ &   \mathbb{E} \left[\Delta^H_{v,\tau}(T)\mathbbm{1}[\tau \le T]\mid \theta = H \right]  \notag \\
    \text{subject to} \quad & 
    \Delta_{v,t}^H (T) =  \lambda v I(v,t) - \int_{\ubar{v}}^v \lambda I(w,t) \dd w - \lambda \ubar{v} I(U,t) \qquad \forall (v, t) \\
    & I(v,t) \geq I(U,t)
\end{align*}
One can check this is exactly the same optimization problem in Theorem \ref{thm:trial} for the case where $w_L=-1$. Thus, $\pi'_H \leq \pi_F$. This means that under the specific belief system, in which the buyer believes $\theta=L$ unless receiving a conclusive signal, the highest payoff of H/L seller is $(0,\pi_F)$ so no sellers want to deviate from equilibrium payoff $(\pi_L, \pi_H) \in \Pi_\ICIR$ with $\pi_H \ge \pi_F$.


For the only-if direction, by the inscrutability principle \citep{myerson83}, any equilibrium can be reconstructed as a direct IC-IR mechanism where both types propose the same mechanism but report their type differently. It then follows that an equilibrium payoff pair must also be the payoff pair of some IC-IR mechanism. To see that the high-type seller must get at least the free-trial revenue, consider the free-trial described earlier in Section \ref{sec:free_trial}. This free-trial mechanism can guarantee high type the payoff of $\pi_F$ regardless of the buyer's belief.
Thus, if some equilibrium had $\pi_H < \pi_F$, then the free-trial is a profitable deviation for the high-type seller, a contradiction. 
\end{proof}

\subsection{Omitted Proofs for Section 6}

\paragraph{Relaxed Design Problem}
To prove Theorem \ref{thm:tieredpricing}, we apply the exact same argument as in the proof of Theorem \ref{thm:trial} to consider the following relaxed problem.
\begin{align*}
    \max_{M \in \mathcal{M}} ~ & \left \{ w_L p^L_U(T) + p^H_U(T) + \mathbb{E} \left[\Delta^H_{v,\tau}(T)\mathbbm{1}[\tau \le T]\mid \theta = H \right] \right \} \notag \\
    \text{subject to} \quad & 
    \Delta_{v,t}^H (T) =  \lambda v q(v,t) - \int_{\ubar{v}}^v \lambda q(w,t) \dd w - \lambda \ubar{v} q(U,t) \qquad \forall (v, t)  \\
    &  q(v,t) \geq q(U,t) \\
    &  \mu_0\left(\lambda  q(U,0) + \mathbb{E}\left[ \lambda v (q(v,\tau) - q(U,\tau)) - \Delta^H_{v,\tau}(T) \mid \theta = H \right] \right) \\
     &\qquad \ge \mu_0 p^H_U(T) + (1-\mu_0) p^L_U(T)  \\
    & p^L_U(T) \ge \max(p^H_U(T), 0) 
\notag
\end{align*}
where $q(w,t)$ and $q(U,t)$ are defined as
\begin{align*}
q(w,t) &:= \int_t^T  q_{w,t}^H(s) I^H_{w,t}(s) \ \dd s, \quad q(U,t) := \int_t^T  q^H_U(s)I^H_{U}(s) \ \dd s.
\end{align*}
Note that we abuse the notation a little bit here by using $q(w,t)$ to represent the cumulative integration of $q I$.
Then, following the approach of Theorem \ref{thm:trial} we reduce the problem to:
\begin{align}
    \max_{M \in \mathcal{M}} ~ & \left \{ (w_L+1)\mu_0  \lambda q(U,0) + \lambda \pi_0 \mathbb{E}[(T - \tau - q(U,\tau)) \mathbbm{1}[\tau \le T] | \theta = H ] \right \}\notag 
\notag
\end{align}
Denote $\mu_L = (w_L+1)\mu_0 / \pi_0$. The above problem is equivalent to
\begin{align}
    \max_{M \in \mathcal{M}} ~ & \left \{ \mu_L  q(U,0) +  \mathbb{E}[(T - \tau - q(U,\tau)) \mathbbm{1}[\tau \le T] | \theta = H ] \right \}\notag 
\notag
\end{align}

Now, we use the fact that $\tau$ given $I_U$ is distributed according to density $\lambda I_U^H(t) e^{-\int_0^t \lambda I_U^H(s) \dd s}$.
Plugging this in, the problem is
\begin{gather*}
   \max_{I,q}  \int_0^T    \mu_L q_U^H(t) dt + \int_0^T  \left (\int_t^T 1 - q_U^H(s) \dd s \right)  \lambda  I_U^H(t) e^{-\int_0^t  \lambda  I_U^H(s) \dd s}  \dd t
\end{gather*}

To simplify the notation, we drop the superscript $H$ and simply use $I_U(t)$ and $q_U(t)$. 
We change variables to reformulate the problem as an optimal control. Define 
\begin{gather*}
    I(t) = \int_{0}^t  I_U(s) \dd s, \qquad Q(t) = \int_{0}^t  q_U(s) \dd s.
\end{gather*}
We rewrite the objective function in its equivalent form:
\begin{align*}
 &\int_0^T  \big( T-t + Q(t)-Q(T) \big)e^{-\lambda I(t)}   \lambda I_{U}(t)  \dd t   + \mu_L Q(t)  \\
     =& \int_0^T \big( T-t + Q(t) \big)e^{-\lambda I(t)}  \lambda I_{U}(t) \dd t    + (e^{-\lambda I(T)}-1+\mu_L) Q(T)
\end{align*}
So the control problem is
\begin{gather*}
    \max_{(I_U( \cdot ),q_U(\cdot )) \in \mathcal{U}_{ad}}  \int_0^T \big( T-t + Q(t) \big)e^{-\lambda I(t)}  \lambda I_{U}(t) \dd t    + (e^{-\lambda I(T)}-1+\mu_L) Q(T)\\
    \dot{I}(t) = I_{U}(t)  \qquad \dot{Q}(t) = q_{U}(t) \\
    Q(0) = I(0) = 0, \qquad  (I_U(t),q_U(t)) \in \mathcal{D}.
\end{gather*}
Define the contingent set $X(I,Q,t)$ as follows:
\begin{align*}
    X(I,Q,t) &= \{ (\zeta,\eta) \in \mathcal{R} \times \mathcal{R}^2  | ~  \zeta \geq (T-t+ Q)e^{-\lambda I} \lambda I_U, ~ \eta = (I_U,q_U)  \text{ for some } (I_U,q_U)\in \mathcal{D}  \}
\end{align*}
Note that the function
$(T-t+ Q)e^{-\lambda I} \lambda I_U$
is linear in $(I_U,q_U)$ and has bounded derivatives with respect to $(I,Q)$. As a correspondence, $X(I,Q,t)$ is closed convex valued, and it is upper and lower hemicontinuous in $(I,Q)$, so $X(I,Q,t)$ has the weak Cesari property. Theorem 4.2.4 in \cite{AhmedWang21} implies there exists an optimal control. 

Further, for any optimal control $(I_U^*(t),q_U^*(t))$ and the corresponding state functions $(I^*(t),Q^*(t))$, Theorem 5.2.3 in \cite{AhmedWang21} ensures there exists absolute continuous functions $\rho_I(t)$ and $\rho_Q(t)$ such that 
\begin{align*}
    \dot{\rho}_Q(t) &= -e^{-\lambda I^*(t)} \lambda I^*_{U}(t)\\
    \dot{\rho}_I(t) &=  \big( T-t + Q(t) \big)e^{-\lambda I^*(t)} \lambda^2 I^*_{U}(t) \\
    (I^*_{U}(t),q^*_{U}(t)) &\in \argmax_{(I,q)\in Conv(\mathcal{D})} ~  \Big( \big( T-t + Q^*(t) \big)\lambda e^{-\lambda I^*(t)} + \rho_I(t) \Big)  I  + \rho_Q(t) q, ~a.e.\\
     \rho_I(T)&= -\lambda Q^*(T) e^{-I^*(T)}, \\
     \rho_Q(T) &= e^{-\lambda I^*(T)}-1+\mu_L.
\end{align*}

Define $A_I(t) := \big( T-t + Q^*(t) \big)\lambda e^{-\lambda I^*(t)} + \rho_I(t) .$
For any optimal control $(I_U^*(t),q_U^*(t))$ and the corresponding  $(I^*(t),Q^*(t))$ and $\rho_I(t)$, $\rho_Q(t)$, we can always find an essentially the same control  $(I_U^{*'}(t),q_U^{*'}(t))$  such that the optimality condition holds for any $t$ instead of $a.e.$ t. Thus, without loss, we assume the optimality condition holds for any $t$.

Now we derive some properties of the optimal solution from this set of necessary conditions.
Note that $\frac{\partial A_I}{\partial t} (t) = (q^*_U(t) - 1) \lambda e^{-\lambda I(t)} \leq 0$,  $
       A_I(T) =  0$, and 
    $\rho_Q (t) = \mu_L - 1 + e^{-\lambda I(t)}. $ Therefore, $A_I(t)$ is weakly decreasing in $t$ and is weakly positive. $\rho_Q$ is decreasing and $\rho_Q(0) = \mu_L \geq 0$.  We discuss two cases.

\paragraph{Case 1} ($\mu_L \geq 1 - e^{-\lambda T}$): In this case we know $\rho_Q(t) \geq \rho_Q(T) \geq 0$. In fact, for any $t<T$ $\rho_Q(t) > 0$. Because $(1,1)\in \mathcal{D}$, for any $t<T$ we must have
\begin{gather*}
    (I^*_U(t),q^*_U(t))= 
    \begin{cases}
        (1,1) \quad &\text{if } A_I(t) > 0, \\
        (I,1) \in Conv(\mathcal{D}) \quad &\text{if } A_I(t) = 0.
    \end{cases}
\end{gather*}
Because $q^*_U(t)=1$ for any $t<T$, the objective value is simply $\mu_L T$.
\paragraph{Case 2}($\mu_L < 1 - e^{-\lambda T}$):
We already know $A_I(t)$ is weakly decreasing and weakly positive. Now we claim
$A_I(0)>0$ for any system associated with optimal control. 

Suppose not, then $A_I(t)=0$ for any $t$, so $q^*_U(t) =1$ for almost all $t$. This leads to a maximized objective value of $\mu_L T$. However, by simply using a trial mechanism the objective can be higher than $\mu_L T$ when $\mu_L < 1 - e^{-\lambda T}$. This is a contradiction. Thus, for any optimal control, we know the associated $A_I(t)$ must have $A_I(0) > 0$.

Denote $t_0 = \min \{ t\in[0,T] |  A_I(t) = 0  \} > 0$. We argue $t_0 = T$. Suppose $t_0<T$; then there exists $\delta>0$ such that $q^*_U(t) <1, A_I(t) > 0$ for a.e. $t\in (t_0-\delta,t_0)$; and $q^*_U(t) =1, A_I(t) = 0$ for a.e. $t\in (t_0,T]$. From  $q^*_U(t) =1, A_I(t) = 0$ for a.e. $t\in (t_0,T]$ we know $\rho_Q(t) \geq 0$ for a.e. $t\in (t_0,T]$. From monotonicity of $\rho_Q(t)$ we know $\rho_Q(t) \geq 0$ for any $t\in (t_0-\delta,T]$. On the other hand, from $q^*_U(t) <1, A_I(t) > 0$ for a.e. $t\in (t_0-\delta,t_0)$ we know $\rho_Q(t) \leq 0$ for a.e. $t\in (t_0-\delta,t_0]$ otherwise $q^*_U(t)$ would be 1. These two inequalities imply $\rho_Q(t) = 0$ for any $t\in (t_0-\delta,t_0]$, which further implies $I^*_U(t) =1$ for a.e. $t\in (t_0-\delta,t_0]$. However, this means $\rho_Q(t)$ is strictly decreasing in  $t\in (t_0-\delta,t_0]$, a contradiction.

We have proved $A_I(t)>0$ for any $t<T$. Now denote $t_1 = \max\{ t\in[0,T]| ~q^*_U(s) = 1 ,~a.e.~ s \in [0,t] \} < T$. First, we first prove $\rho_Q(t)<0$ for any $t>t_1$. Suppose not; then combining this with the definition of $t_1$ we know there exists $\delta>0$ such that 
$q^*_U(t) < 1, ~\rho_Q(t) \geq 0 ,~a.e.~ t \in [t_1,t_1+\delta]$. However, this implies $I^*_U(t) = 1~a.e.~ t \in [t_1,t_1+\delta]$ and so $\rho_Q(t)$ is strictly decreasing in $[t_1,t_1+\delta]$, a contradiction. Second, we prove $q^*_U(t)<1$ $ a.e. ~\forall t>t_1$. Suppose not; then there exists $t_2$ and $\delta$ such that $q^*_U(t)=1$ $ a.e. ~ t \in [t_2-\delta,t_2]$. Because $A_I(t)>0$ we know $I^*_U(t)=1$ $ a.e. ~ t \in [t_2-\delta,t_2]$.
Thus we know for a $t_3 \in [t_1,t_1+\delta]$, the following holds:
\begin{gather*}
    (q^*_U(t_3)-1) \rho_Q(t_3) +  (I^*_U(t_3)-1) A_I(t_3)  \geq 0 \\
    (q^*_U(t_2)-1)  \rho_Q(t_2) +  (I^*_U(t_2)-1) A_I(t_2)  \leq 0
\end{gather*}
However, both $\rho_Q(t)$ and $A_I(U)$ are decreasing and $q^*_U$ is strictly decreasing over $[t_2-\delta,t_2]$. This leads to a contradiction.

In summary, we have proved the following facts. 
\begin{itemize}\itemsep0em
    \item When $t\leq t_1$, $I^*_U(t) = q^*_U(t)= 1$ a.e..
    \item When $t > t_1$, $q^*_U(t) <1$ a.e. so $A_I(t)$ is strictly decreasing in $t$.
    \item When $t > t_1$, $A_I(t) >0$ and $\rho_Q(t)<0$.
\end{itemize}

Note that the indifference curve of the linear optimization problem
\begin{gather*}
    \max_{(I,q)\in D} ~  \Big( \big( T-t + Q^*(t) \big)\lambda e^{-\lambda I^*(t)} + \rho_I(t) \Big)  I  + \rho_Q(t) q
\end{gather*}
is $A_I(t)/\rho_Q(t).$  Thus, when $t>t_1$, we know that $A_I(t)/\rho_Q(t)$ is negative and continuous,the absolute value $|A_I(t)/\rho_Q(t)|$ is strictly decreasing, and $A_I(T)/\rho_Q(T)=0$. 
Thus, we know as $t$ ranges from $t_1$ to $T$, the optimal control $(I_U^*(t),q_U^*(t))$ moves along the boundary:
\begin{gather*}
    \partial^- \conv(\tilde{\mathcal{D}}) = \{  (I,q ) \in \conv(\tilde{\mathcal{D}})| I \geq I',~ \forall (I',q)\in \conv(\tilde{\mathcal{D}})     \}.
\end{gather*}
Because the slope of the indifference curve  $|A_I(t)/\rho_Q(t)|$  is strictly decreasing, for almost all $t$, the control is at the extrme point:
\begin{gather*}
    (I^*_U(t),q^*_U(t)) \in \partial^- \conv(\tilde{\mathcal{D}}) \cap \mathcal{D} ~ a.e. ~ t\in [0,T].
\end{gather*}
Thus, any optimal control of the relaxed control problem (with the control set $ \conv(\tilde{\mathcal{D}})$) is also feasible with the control set $\mathcal{D}$.

\paragraph{Incentive Compatibility}
Now, we verify that the optimal mechanisms of the relaxed design problem are indeed incentive-compatible. First, we argue a buyer who receives $v$ at time $t$ finds it optimal to be truthful. Because $v_0$ is the threshold type and is indifferent between being truthful and staying silent forever, we have the following binding constraint for any $t$:
\begin{align*}
    \lambda v_0 (T-t) - \Delta_{v_0,t}^{*H} (T) = \lambda v_0 q^*(T,t)
\end{align*}
This implies that for any $t'>t$,
\begin{align}
    &\lambda v_0 (t'-t-(q^*(U,t)-q^*(U,t'))) - \Delta_{v_0,t}^{*H}(T) + \Delta_{v_0,t'}^{*H}(T) = 0, \notag \\
    \implies &\lambda v (t'-t-(q^*(U,t)-q^*(U,t'))) - \Delta_{v_0,t}^{*H}(T) + \Delta_{v_0,t'}^{*H}(T) \geq 0, \quad \forall v\geq v_0.
    \label{eq_tired_inequa1}
\end{align}

Now consider the potential deviation where the buyer chooses to report $v'$ at some future time $t'\geq t$. If $v'< v_0$, this deviation is equivalent to stay silent forever is not profitable as required in the relaxed problem. Thus, we just consider the deviation where $v'\geq v_0$. The payoff will be
\begin{align*}
&\lambda v (q^*(U,t) - q^*(U,t')) +  \lambda v (T-t') - \Delta^H_{v_0,t'}(T) - p_U^*(T) 
\end{align*}
For a buyer with $v \geq v_0$, 
the payoff difference between this misreport and truth-telling is
\begin{align*}
   &\lambda v (q^*(U,t) - q^*(U,t')) +  \lambda v (T-t') - \Delta^{*H}_{v_0,t'}(T) - p_U^*(T)  -\lambda v(T-t) + \Delta^{*H}_{v_0,t}(T) + p_U^*(T)\\
= & \lambda v (q^*(U,t) - q^*(U,t')) +  \lambda v (t-t') - \Delta^{*H}_{v_0,t'}(T)  + \Delta^{*H}_{v_0,t}(T)  \leq 0,
\end{align*}
where the last inequality comes from (\ref{eq_tired_inequa1}). For a buyer with $v< v_0$, the payoff difference between this misreport and truth-telling is 
\begin{align*}
&\lambda v (q^*(U,t) - q^*(U,t')) + \lambda v q^*(v',t') - \Delta^{*H}_{v',t'}(T) - p^{*H}_U(T) - \lambda v q^*(U,t)  + p^{*}_U(T)  \\
= &  \lambda v q^*(v',t') - \lambda v q^*(U,t') - \Delta^{*H}_{v',t'}(T) \leq 0,
\end{align*}
where the last inequality comes from the the requirement of the relaxed design problem.

To conclude, we have verified that for a buyer who receives the first reward of size $v$ at time $t$, there is no profitable deviation. The uninformed buyer at $t$ cannot profitably deviate by reporting $v\geq v_0$ (to upgrade to premium service), since their perceived expected value of the service is at most $\mu_0\mathbb{E}[v] \le \ubar{v} \le v_0$. Thus, since the solution to the relaxed problem is incentive-compatible with the original constraints, it is optimal. 
Note that deviating to report $v< v_0$ is weakly worse than being truthful because staying silent induces the same service provided by the seller, but does not forfeit the option to upgrade. 

At time $t$, the uninformed buyer has a posterior $\mu_t$ of $\theta=H$, where 
$$\mu_t = \frac{\mu_0 e^{-\lambda I^*(t)}}{\mu_0 e^{-\lambda I^*(t)} + 1-\mu_0}.$$
If $\mu_t \leq \ubar{v}$, then $\mu_t \leq v_0$ and deviating to report $v\geq v_0$ is worse than staying silent forever. This is because type $v_0$ is indifferent between reporting $v\geq v_0$ and staying silent forever.
\begin{align*}
    \lambda v_0 (T-t) - \Delta_{v_0,t}^{*H} (T) = \lambda v_0 q^*(T,t) \implies      \lambda  (T-t) - \Delta_{v_0,t}^{*H} (T) \leq  \lambda \mu_t q^*(T,t)
\end{align*}
Staying silent forever is clearly weakly worse than being truthful because of the future option values. Thus, it is not profitable to deviate to report $v\geq v_0$ at time $t$ where $\mu_t \leq \ubar{v}$. Note that this argument completes the proof when $\mu_0 \leq \ubar{v}$.

To discuss cases where $\mu_t > \ubar{v}$,
we proceed to establish a lower bound on the buyer's payoff when being truthful. To do so,
we first establish a lower bound on the total amount of learning $I^*(T)$ in the optimal solution of the relaxed problem. Denote $(\ubar{q},\ubar{I}) = \min \{ (q,I) | (q,I) \in \mathcal{D}_*, ~ I>0   \}$. $(\ubar{q},\ubar{I})$ has the worst service value and the slowest learning rate among all possible (non-zero) service $(I,q)$ in the extreme point set $\mathcal{D}_*$. Our assumption is that $\ubar{I}>0$, which is true, for example, if $\mathcal{D}_*$ is finite.  We discuss two cases. First, if $\ubar{q}=0$, then for any $t$, $I^*_U(t) \geq \ubar{I}$, so we obtain a simple lower bound $I^*(T) \geq \ubar{I} T$. Second, if $\ubar{q}>0$, then $I^*_U(T)=0$. We denote  $\bar{t} = \min \{  t | I^*(t)=I^*(T) \}$ and we know $\bar{t} \leq I^*(T)/\ubar{I}$. According to the Pontryagin equations, 
\begin{align*}
    \rho_Q^*(\bar{t}) &= \mu_L -1 + e^{-\lambda I^*(T)}\\
    A_I^*(\bar{t}) & = \lambda e^{-\lambda I^*(T)} (T-\bar{t}).
\end{align*}
Because at $\bar{t}$, $I^*_U(t)$ switches from $(\ubar{q},\ubar{I})$ to $(0,0)$, we have
\begin{align*}
    0 &= \ubar{q}( \mu_L -1 + e^{-\lambda I^*(T)}) + \ubar{I}  \lambda e^{-\lambda I^*(T)} (T-\bar{t}) \\
    0 &\geq \ubar{q}( \mu_L -1 + e^{-\lambda I^*(T)}) + \ubar{I}  \lambda e^{-\lambda I^*(T)} (T-I^*(T)/\ubar{I})
\end{align*}
Denote $x= e^{-\lambda I^*(T)}$ so $I^*(T) = -\ln x/\lambda$. Because $1-1/x\leq \ln x$, we have
\begin{gather*}
    0 \geq   \ubar{q}( \mu_L -1 + x) + \ubar{I}  \lambda x T + x \ln x \geq \ubar{q}( \mu_L -1 + x) + \ubar{I}  \lambda x T + x -1 \\
    \implies \frac{1}{x} \geq 1 + \frac{\lambda \ubar{I}}{1+ \ubar{q}}T  \quad \implies  I^*(T) \geq \frac{1}{\lambda} \ln ( 1 + \frac{\lambda \ubar{I}}{1+ \ubar{q}}T )
\end{gather*}
This is a lower bound on $I^*(T)$. Now recall that the uninformed buyer's posterior of $\theta=H$ at time $\bar{t}$ is $\mu_{\bar{t}}$, and 
$$\mu_{\bar{t}} = \frac{\mu_0 e^{-\lambda I^*(\bar{t})}}{\mu_0 e^{-\lambda I^*(\bar{t})} + 1-\mu_0} = \frac{\mu_0 }{\mu_0  + \frac{1-\mu_0}{x}}.$$
Because we have either the above lower bound or $I^*(T) \geq \ubar{I} T$, we can take a large $T_0$ such that for any $T\geq T_0$, $\mu_{\bar{t}}\leq \ubar{v}$. In particular, we can take
\begin{gather*}
    T_0 =   \frac{1+\ubar{q}}{\lambda \ubar{I}}  \frac{\mu_0 -\ubar{v}}{(1-\mu_0) \ubar{v}} \geq  \frac{1}{\lambda \ubar{I}} \ln \frac{\mu_0(1-\ubar{v})}{(1-\mu_0)\ubar{v}}  
\end{gather*}

For any $T\geq T_0$, we can find a time $t^* \leq \bar{t}$ such that $\mu_{t^*}=\ubar{v}$. One lower bound on the uninformed buyer's payoff when being truthful is the buyer's payoff under the strategy, where he keeps silence until time $t^*$ and then reports $v_0$ at $t^*$ (upgrading to premium service at time $t^*$) if and only if he receives a reward before and $v\geq v_0$. We know this strategy is weakly worse than being truthful because all deviations in this strategy either happens when the buyer receives a reward before time $t^*$, or when the buyer hasn't received a reward after $t^*$. Note that $\mu_{t^*}= \ubar{v}$  so we have proven previously that in both cases being truthful is the optimal strategy.

For an uninformed buyer at time $t$,
with  probability $ p_t^{t^*}= 1 -  e^{-\lambda (I^*(t^*) - I^*(t)) }$, the uninformed buyer will  receive a reward before time $t^*$ conditioned on $\theta=H$. The buyer's payoff under this particular deviation is
\begin{align*}
    \lambda \mu_t (q(U,t)-q(U,t^*)) +  \mu_t p_t^{t^*} \lambda  ( (T-t^*   )\int_{v_0}^{\bar{v}} (v-v_0) f(v) \dd v + q(U,t^*)  )+ \lambda ( 1-  \mu_t p_t^{t^*}) \mu_{t^*}   q(U,t^*)  -p_U^*(T)
\end{align*}
In contrast, if the uninformed buyer deviates to $v\geq v_0$, the payoff is
\begin{gather*}
    \lambda (\mu_t-v_0)(T-t) + \lambda \mu_t q(U,t)-p_U^*(T)
\end{gather*}
Using Bayesian consistency: $ \mu_t p_t^{t^*}+( 1-  \mu_t p_t^{t^*}) \mu_{t^*} = \mu_t $,  the difference in payoff is
\begin{align*}
      \lambda  \mu_t p_t^{t^*}  (\int_{v_0}^{\bar{v}}   (v-v_0) \dd v) (T-t^*     ) - \lambda (\mu_t-v_0) (t^*-t)
\end{align*}
Denote $s_{v_0} =  \int_{v_0}^{\bar{v}} (v-v_0) f(v) \dd v $. 
Denote $y^*=I^*(t^*) = \frac{1}{\lambda} \ln \frac{\mu_0(1-\ubar{v})}{\ubar{v}(1-\mu_0)},$ which only depends on primitives of the model, and $y = I^*(t)$. The payoff difference is 
\begin{align*}
    \lambda \mu_t s_{v_0} (T-t^*) (1-e^{-\lambda(y^*-y)}) -  \lambda (\mu_t-v_0) (t^*-t)
\end{align*}
If $\mu_t \leq v_0$, then this difference is positive, which implies being truthful is optimal. Otherwise,  $\mu_0 \geq \mu_t > v_0$. Note that $t^*<\bar{t}$ so $t^*-t\leq (y^*-y)/\ubar{I}$ and $t^*\leq y^*/\ubar{I}$. The payoff difference must be larger than
\begin{align*}
    &\lambda \mu_t s_{v_0} (T-\frac{y^*}{\ubar{I}}) (1-e^{-\lambda(y^*-y)}) - \frac{\lambda}{\ubar{I}} (\mu_t-v_0) (y^*-y)  \\
    =& \frac{\lambda}{\ubar{I}} (\mu_t-v_0) (y^*-y)  \big[    \frac{\mu_t}{\mu_t-v_0}\frac{ s_{v_0} (\ubar{I}T- y^*) (1-e^{-\lambda(y^*-y)})}{ (y^*-y)}   - 1  \big] \\
   \geq & \frac{\lambda}{\ubar{I}} (\mu_t-v_0) (y^*-y)  \big[     \frac{\mu_0}{\mu_0-\ubar{v}} s_{\mu_0} (\ubar{I}T- y^*) \frac{1-e^{-\lambda y^*}}{y^*}  - 1  \big]
\end{align*}
As long as the expression is positive, the payoff difference is positive. Consequently, being truthful is optimal.

Recall that  $y^* = \frac{1}{\lambda} \ln \frac{\mu_0(1-\ubar{v})}{\ubar{v}(1-\mu_0)}$. To make sure the expression is positive, we need
\begin{align*}
 &T \geq \frac{y^*}{\ubar{I}} (1 + \frac{\mu_0 - \ubar{v}}{\mu_0 s_{\mu_0} (1-e^{-\lambda y^*})}) \\
\Leftrightarrow &   T \geq \frac{y^*}{\ubar{I}} (1 + \frac{\mu_0 - \ubar{v}}{\mu_0 s_{\mu_0} } \frac{\mu_0(1-\ubar{v})}{\mu_0 -\ubar{v}} ) = \frac{y^*}{\ubar{I}} (1 + \frac{1-\ubar{v}} 
 {s_{\mu_0}} ) = y^* \frac{1+ s_{\mu_0} - \ubar{v}}{\ubar{I} s_{\mu_0}} \\
 \Leftarrow  & T \geq \frac{1+ 2 s_{\mu_0} - \ubar{v}}{\lambda \ubar{I} s_{\mu_0}}  \frac{\mu_0 -\ubar{v}}{(1-\mu_0)\ubar{v}}
\end{align*}

To conclude, we have shown that being truthful is optimal for uninformed buyer, as long as $ T \geq T_1$, where $T_1$ depends on primitives:
\begin{align*}
   T_1 &=  \frac{1+ 2 s_{\mu_0} - \ubar{v}}{\lambda \ubar{I} s_{\mu_0}}  \frac{\mu_0 -\ubar{v}}{(1-\mu_0)\ubar{v}} \geq \frac{1+\ubar{q}}{\lambda \ubar{I}}  \frac{\mu_0 -\ubar{v}}{(1-\mu_0) \ubar{v}}
\end{align*}
This completes the proof of incentive-compatability.

\subsection{Omitted Proofs for Section 7}

\begin{proof}[Proof of Proposition \ref{prop:extensionsTRM}]
    We consider a similar relaxed problem as follows
\begin{gather*}
     \max_{I, p, \Delta  } ~ (1+w_L) (p_U(T)-cI(U,0)) + w_L \mathbb{E} \left[\Delta_{v,\tau} + cI(U,\tau) - cI(v,\tau) \mid S = L \right] \\ +  \mathbb{E} \left[\Delta_{v,\tau} + cI(U,\tau) - cI(v,\tau) \mid S = H \right]  \\
    \textnormal{s.t. } \quad \text{(IC-V) + (IC-U)} \\
          p_U(T)   \le \mu_L (u+\lambda) I(U,0) + \mu_L \mathbb{E}\left[ (u+\lambda v) \left( I\left(v,\tau \right) -I(U,\tau) \right)  \left.  - \Delta_{v,\tau} \right\vert \theta = H\right].
\end{gather*}
Note that the objective function can be rewritten as
\begin{gather*}
   (1+w_L) (p_U(T)-cI(U,0)) + (w_L \mu_L + \mu_G) \mathbb{E} \left[\Delta_{v,\tau} + cI(U,\tau) - cI(v,\tau) \mid \theta = H \right] 
\end{gather*}
It is without loss of generality to assume the ex-ante IR constraint binds. Plugging it in, the objective becomes:
\begin{align*}
       &(1+w_L) \big(   \mu_0 (u+\lambda) I(U,0) + \mu_0 \mathbb{E}\left[ (u+ \lambda v) \left( I\left(v,\tau \right) -I(U,\tau) \right)  \left.  - \Delta_{v,\tau} \right\vert \theta = H\right] -cI(U,0) \big ) \\
       & ~ + (w_L \mu_L + \mu_G ) \mathbb{E} \left[\Delta_{v,\tau} + cI(U,\tau) - cI(v,\tau) \mid \theta = H \right]   \\
       =& (1+w_L) \big(\mu_0 (u+ \lambda) I(U,0) + \mu_0 \mathbb{E}\left[ (u+\lambda v) \left( I\left(v,\tau \right) -I(U,\tau) \right)  \left.  \right\vert \theta = H\right] -cI(U,0) \big) \\
         & ~ +  \mathbb{E} \left[  ( \mu_G-\mu_0 - w_L(\mu_0 - \mu_L) ) \Delta_{v,\tau} + (w_L \mu_L  + \mu_G) (I(U,\tau) - I(v,\tau))c | \theta = H \right]
\end{align*}
Because we focus on $w_L \leq p_B/p_G$ we have
\begin{gather*}
   \mu_G-\mu_0 - w_L(\mu_0 - \mu_L) \geq 0
\end{gather*}
Thus, higher $\Delta_{v,t}$ is always preferred. Next, from the envelop representation of (IC-V) we know for any $v_1$
\begin{align*}
         \Delta_{v,t}  &=   (u+\lambda v) I \left(v,t \right)  - \lambda \int_{v_1}^v I(w,t) \dd w -   u(v_1,t)  \quad \forall t,v  
\end{align*}
Plug it in we rewrite the objective as
\begin{align*}
       & (1+w_L) \big(\mu_0 (u+ \lambda) I(U,0) + \mu_0 \mathbb{E}\left[ (u+\lambda v) \left( I\left(v,\tau \right) -I(U,\tau) \right)  \mid \theta = H\right] -cI(U,0) \big) \\
             +&  \mathbb{E} \left[   (w_L \mu_L  + \mu_G) (I(U,\tau) - I(v,\tau))c \mid \theta = H \right] +   ( \mu_G-\mu_0 - w_L(\mu_0 - \mu_L) ) \\
         &  \mathbb{E} \left[  (u+\lambda v) I \left(v,\tau \right)  - \lambda \int_{v_1}^v I(w,\tau ) \dd w -  u(v_1,\tau)  \mid \theta = H \right] 
\end{align*}
Rearranging terms we get
\begin{align*}
    & \mathbb{E}  \left[  (      \mu_G + w_L \mu_L  ) (u + \lambda v-  c)  I(v,\tau) 
    \mid \theta =1   \right]  +  (1+w_L) (\mu_0 (u+ \lambda)-c) I(U,0)\\
   - &  \mathbb{E}  \left[ \big(  (w_L\mu_L + \mu_G) c - \mu_0(1+ w_L) (u+\lambda v)  \big) I(U,\tau) 
    \mid \theta =1   \right] \\
   - &    ( \mu_G-\mu_0 - w_L(\mu_0 - \mu_L) ) \mathbb{E} \left[ u(v_1,\tau) +  \lambda \int_{v_1}^v I(w,\tau ) \dd w   \mid \theta = H \right] 
\end{align*}
Denote $v_c$ such that $u+ \lambda v_c =c$. By the same argument as Lemma \ref{Lemma_low_bind},
\begin{align*}
         I(v,t)&\geq I(U,t), \quad \forall v \geq v_c,\\
         I(v,t) & = 0, \qquad \quad ~ \forall v < v_c.
\end{align*}
Note that here the inequality holds only for $v\geq v_c$ because the coefficient $(u+ \lambda v -c)$ of $I(v,\tau)$ can be negative: for lower $v$ even without the concern of rent reduction it is suboptimal to provide service. 
Now we fix the design of $I(U,t)$ and do point-wise maximization with respect to $I(v,t)$. To do so, denote
\begin{gather*}
    b_1 =   \mu_G-\mu_0 - w_L(\mu_0 - \mu_L), \quad b_2  =  \mu_G + w_L \mu_L  
\end{gather*}
Now note that
\begin{align*}
    &\mathbb{E}  \left[  (u + \lambda v-  c)  I(v,\tau) 
    -\frac{\lambda b_1}{b_2}  \int_{\ubar{v}}^v I(w,\tau ) \dd w 
    \mid \theta =1   \right] \\
    =& \mathbb{E}  \left[  (u + \lambda v-  c)  I(v,\tau) 
    -\frac{\lambda b_1}{b_2}   \frac{1-F(v)}{f(v)} I(v,\tau)
    \mid \theta =1   \right]
\end{align*}
Define the virtual value as
\begin{gather*}
    \phi(v) = u + \lambda v-  c - \frac{\lambda b_1}{b_2} \frac{1-F(v)}{f(v)}
\end{gather*}
Denote $v_0$ such that $\phi(v_0)=0$. Point-wise maximization implies that
\begin{gather*}
    I(v,t) = 
    \begin{cases}
        T-t   \quad &\text{if } v \in [v_0,\bar{v}] \\
        I(U,t) \quad &\text{if } v \in [v_c,v_0)>0, \\
        0    \quad &\text{if } v \in [\ubar{v},v_c)<0.
    \end{cases}
\end{gather*}
Denote the virtual surplus $\pi_v$ and $\pi_w$ as
\begin{align*}
    \pi_v &= \int_{v_0}^{\ubar{v}}  \phi(v) f(v) \dd v,\\
     \pi_w &= \int_{\ubar{v}}^{v_c} (u+ \lambda v - c) f(v) \dd v.
\end{align*}
Note that
\begin{align*}   
\pi_v + \int_{\ubar{v}}^{v_0} \phi (v) f(v) \dd v &= \int_{\ubar{v}}^{\bar{v}} (u+\lambda v -c) f(v) -  \frac{\lambda b_1}{b_2}  (1 - F(v)) \dd v \\
&= u+\lambda -c  - \frac{\lambda b_1}{b_2}  \left(\int_{\ubar{v}}^{\bar{v}}(1 - F(v)) \dd v \right) \\
&= u+\lambda -c  - \frac{\lambda b_1}{b_2}  \left(1 - \ubar{v} \right)
\end{align*}
Plugging in the point-wise maximization, the objective function becomes
\begin{align*}
    & \mathbb{E}  \left[  b_2 \pi_v (T-\tau) 
    +  b_2 \Big( -\pi_w  + ( u+\lambda -c  - \frac{\lambda b_1}{b_2}  \left(1 - \ubar{v} \right) -\pi_v ) -  (u -  c) \Big)  I(U,\tau) 
    \mid \theta =1   \right] \\
    & - \mathbb{E}  \left[ \big(  \lambda  b_1 \ubar{v} + \lambda \mu_0 (1+w_L)   \big) I(U,\tau) 
    \mid \theta =1   \right] +  (1+w_L) (\mu_0 (u+ \lambda)-c) I(U,0)\\
    =& \mathbb{E}  \left[  b_2 \pi_v (T-\tau) 
    -  b_2  (\pi_v+\pi_w)   I(U,\tau) 
    \mid \theta =1   \right]  +  (1+w_L) (\mu_0 (u+ \lambda)-c) I(U,0)
\end{align*}
Denote 
\begin{align*}
    b_3 &= \frac{ (1+w_L) (\mu_0 (u+ \lambda)-c) }{b_2 (\pi_v + \pi_w)} = \frac{ (1+w_L) (\mu_0 (u+ \lambda)-c) }{  (\pi_v + \pi_w) (\mu_G + w_L \mu_L)} \\
    b_4 &= \frac{ \pi_v} {\pi_v +\pi_w}.
\end{align*}

The remaining work is essentially adapting the proof of Lemma \ref{lem:ctrl}. We perform the main calculation; the remaining details follow by similar arguments to Lemma \ref{lem:ctrl}. When $b_4<0$, then the optimal solution is clearly to set $I_U(t)=1$ always (a degenerate trial length). Otherwise, denote $U(t) = I_U(t)$ and $X(t) = \int_0^t I_U(s) \dd s$. We have the following optimal control problem:
\begin{align*}
    \max_{U \in \mathcal{U}_{ad}} ~ &\left\{ \int_0^T (b_4 \left(T - t\right) - (X(T) - X(t))) e^{- \lambda X(t) }   \lambda U(t)  \dd t + b_3 X(T)  \right \} \\
    \textnormal{subject to } ~ & \dot{X}(t) = U(t) \in [0,1], ~ X(0) = 0, ~ X(T) \textnormal{ free}
\end{align*}
First, rearrange the objective:
\begin{align*}
    &\int_0^T (b_4 \left(T - t\right) - (X(T) - X(t))) e^{- \lambda X(t) }   \lambda U(t)  \dd t + b_3 X(T)  \\
    =~&\int_0^T (b_4 \left(T - t\right)  + X(t)) e^{- \lambda X(t) }   \lambda U(t)  \dd t -  X(T) \left( 1 -  e^{- \lambda X(T)} \right)+ b_3 X(T)
\end{align*}  
For
any optimal measurable control $U,X$, there exists a costate variable $\rho$ such that 
\begin{align*}
    U(t) &\begin{cases}
        = 1 & J(t) > 0 \\
        \in [0,1] & J(t) = 0 \\
        = 0 & J(t) < 0
    \end{cases} \\
    J(t) &=   \lambda (b_4 \left(T - t\right)  + X(t)) e^{- \lambda X(t) }  - \rho(t) \\
    \dot{\rho}(t) &= -\left( \lambda  \left( b_4(T - t)+X(t) \right) - 1 \right) \lambda e^{- \lambda X(t)}  U(t)\\
    \rho(T) &= 1-b_3 + (\lambda X(T)-1) e^{-\lambda X(T)}
\end{align*}
Taking derivatives we get 
\begin{align*}
    \dot{J}(t)    &= -\lambda b_4 e^{-\lambda X(t)}
\end{align*}
since \eqref{eqn:costate_evol} cancels the other terms. Note that this implies that $J$ is always strictly decreasing; since $X$ is continuous, $\dot{J}$ is continuous. Therefore, the optimal control $U(t)$ features bang-bang control. Namely, there exists a time $t_0$ such that $U(t) =1$ for $t< t_0$ and $U(t) =0$ for $t>t_0$. Given this optimal mechanism for the relaxed problem, it is straightforward to verify the omitted IC constraints are satisfied, so it is indeed the optimal IC-IR mechanism.

\end{proof}

\begin{proof}[Proof of Proposition \ref{prop:badnews}]

For any IC-IR mechanism $(I,p)$ denote $\mu_t$ as the buyer's of $\theta=H$ at time $t$ if no bad news ever arrived. By definition
\begin{align*}
    \mu_t = \frac{\mu_0}{\mu_0 + (1-\mu_0)e^{-\lambda (I(U,T)-I(U,t))}}, \qquad \text{where }
     I(U,t)= \int_t^T  I_U(s) \dd s.
\end{align*}
Denote $I(t) = \int_t^T I_t(s) \dd s$.
The relaxed design problem with linear Pareto weight is
\begin{gather*}
     \max_{I, p_U(T), \{ \Delta_t \} } ~ (1+w_L) p_U(T)  + w_L \mathbb{E} [ \Delta_\tau |\theta=0 ] \\
    \textnormal{s.t. }    \\
            (u-\lambda l) I(t) - \Delta_t \geq 0~ \forall t, \\
       u I(U,t) + (1-\mu_t) \mathbb{E} [ -\lambda l I(U,t) + (u-\lambda l) (I(\tau)-I(U,\tau) ) - \Delta_{\tau}  \mid \theta=0,\tau>t ] \\ 
       \geq    (u - \lambda(1-\mu_t)l ) I(t) - \Delta_t,  ~\forall t, \\
         p_U(T) \le  u I(U,0) + (1-\mu_0) \mathbb{E} [ -\lambda l I(U,0) + (u-\lambda l)(I(\tau)-I(U,\tau) )- \Delta_{\tau}  \mid \theta=0 ].
\end{gather*}
The first constraint requires the agent who receives the signal at time $t$ does not want to quit. The second constraint requires that the uninformed buyer does not want to pretend to receive a signal at time $t$. The third constraint is the ex-ante IR constraint.

Because $w_L \geq -1$, the ex-ante IR constraint should bind so
\begin{gather*}
     \max_{I, \{ \Delta_t \} } ~  u(1+w_L) I(U,0) + (1-\mu_0)(1+w_L) \mathbb{E} [ -\lambda l I(U,0) + (u-\lambda l)(I(\tau)-I(U,\tau) ) \mid \theta=0 ]  \\
     - (1-\mu_0(1+w_L)) \mathbb{E} [ \Delta_{\tau}  \mid \theta=0 ] \\
    \textnormal{s.t. }    \\
     (u-\lambda l) I(t) - \Delta_t \geq 0~ \forall t, \\
       u I(U,t) + (1-\mu_t) \mathbb{E} [ -\lambda l I(U,t) + (u-\lambda l) (I(\tau)-I(U,\tau) ) - \Delta_{\tau}  \mid \theta=0,\tau>t ] \\ 
       \geq    (u - \lambda(1-\mu_t)l ) I(t) - \Delta_t,  ~\forall t.
\end{gather*}
Because $ \lambda l > u > (1-\mu_0) \lambda l$, and  $\mu_t \geq \mu_0$ for any $t$, it is clearly optimal to set $I(t)=0$ for any $t$, as it increases the objective and relaxes the constraint. Now the design problem can be simplified as 
\begin{gather*}
     \max_{I, \{ \Delta_t \} } ~  u(1+w_L) I(U,0) + (1-\mu_0)(1+w_L) \mathbb{E} [ -\lambda l I(U,0) - (u-\lambda l)I(U,\tau)  \mid \theta=0 ]  \\
     - (1-\mu_0(1+w_L)) \mathbb{E} [ \Delta_{\tau}  \mid \theta=0 ] \\
    \textnormal{s.t. }   \quad
      \Delta_t \leq 0 ~ \forall t,\\
       u I(U,t) + (1-\mu_t) \mathbb{E} [ -\lambda l I(U,t) - (u-\lambda l) I(U,\tau)  - \Delta_{\tau}  \mid \theta=0,\tau>t ]
       \geq    - \Delta_t,  ~\forall t.
\end{gather*}

We discuss two cases. First, if $w_L \geq (1-\mu_0)/\mu_0$,  the coefficient of $\Delta_{\tau}$ is negative. Clearly, it is optimal to set $\Delta_t = 0$ and set $I(U,t)=T-t$. The optimal mechanism is just to sell everything ex-ante.

 Second, if $w_L < (1-\mu_0)/\mu_0$,the coefficient of $\Delta_{\tau}$ is positive. Note that from the two constraint, we get $\Delta_T=0$. We keep $\Delta_T=0$ and neglect all constraints such that $\Delta_t\leq 0$. For any fixed measurable control $I_U(t)$, the fixed function $I(U,t)$ is absolute continuous in $t$. Note $\Delta_T=0$ and denote 
\begin{gather*}
    \Gamma(-\Delta_t) =  u I(U,t) + (1-\mu_t) \mathbb{E} [ -\lambda l I(U,t) - (u-\lambda l) I(U,\tau)  - \Delta_{\tau}  \mid \theta=0,\tau>t ].
\end{gather*}
$\Gamma$ is clearly increasing in $-\Delta_t$,
so according to Grönwall's inequality, the point-wise minimum feasible $\Delta_t$ is the one such that equality always holds:
\begin{gather*}
       \mathbb{E} \left[ \int_t^{\tau} (u-\lambda l(1-\theta) ) I_U(t) \dd s - \Delta_{\tau}  \mid \tau>t \right]
       =    - \Delta_t,  ~\forall t.
\end{gather*}
The above integration equation has a unique solution. One can easily verify that the solution is 
\begin{gather*}
    -\Delta_t = \int_t^T (u-\lambda l(1-\mu_s)) I_U(s) \dd s. 
\end{gather*}
To see this, plugging in the candidate solution to the equation we get
\begin{align*}
     -\Delta_t - \mathbb{E} [ - \Delta_{\tau}  \mid \tau>t ]
       &=     \mathbb{E} [ \int_t^{\tau} (u-\lambda l(1-\mu_s) ) I_U(t) \dd s \mid \tau>t ] \\
       &=   \mathbb{E} [ \int_t^{T} 1_{s\leq \tau } (u-\lambda l(1-\mu_s) ) I_U(t) \dd s \mid \tau>t ]\\
       &=   \mathbb{E} [ \int_t^{T} 1_{s\leq \tau } (u-\lambda l(1-\theta) ) I_U(t) \dd s \mid \tau>t ] \\
       & =       \mathbb{E} [ \int_t^{\tau} (u-\lambda l(1-\theta) ) I_U(t) \dd s \mid \tau>t ]
\end{align*}
Thus, we prove the candidate solution is indeed the unique solution. Plugging this into the objective, we have an unconstrained maximization problem
\begin{gather*}
     \max_{I } ~   (1+w_L) \mathbb{E} [\int_0^{\tau} (u-\lambda l(1-\mu_s))I_U(s) \dd s ]  +  (1-\mu_0(1+w_L))  \mathbb{E} [ \int_{\tau}^T (u-\lambda l(1-\mu_s))I_U(s) \dd s  ].
\end{gather*}
Clearly, it is optimal to set $I_U(t) =1$ for any $t$. Once again, it is straightforward to verify this satisfies all the omitted IC constraints so it is indeed the optimal IC-IR mechanism.
\end{proof}

\begin{proof}[Proof of Proposition \ref{prop:mixnews}]
    
We consider a relaxed design problem as follows
\begin{gather*}
     \max_{I, p_U(T), \{ \Delta_t \} } ~ (1+w_L) p_U(T) + \mathbb{E}   [  \Delta^{\bar{v}}_\tau + w_L \Delta^{\ubar{v}}_\tau ]  \\
    \textnormal{s.t. }    \\
    \lambda \bar{v} I(\bar{v},t) - \Delta_t^{\bar{v}} \geq \lambda \bar{v} I(U,t),\quad \forall t,  \\
        \lambda \bar{v} I(\bar{v},t) - \Delta_t^{\bar{v}} - \lambda \bar{v} I(U,t)  \geq \lambda \bar{v} I(\bar{v},s) - \Delta_s^{\bar{v}} - \lambda \bar{v} I(U,s) ,\quad \forall t<s,  \\
 \lambda  I(U,t) +  \mathbb{E} [ \lambda (\mu_0 \bar{v} I(\bar{v},\tau) + (1-\mu_0) \ubar{v} I(\ubar{v},\tau)  - I(U,\tau) )  -\mu_0 \Delta^{\bar{v}}_{\tau} - (1-\mu_0)\Delta^{\ubar{v}}_{\tau}  \mid \tau > t  ] \\
 \geq \lambda I(\ubar{v},t) - \Delta^{\ubar{v}}_t,\quad \forall t, 
 \\
         p_U(T) \le  \lambda  I(U,0) + \mathbb{E} [ \lambda (\mu_0 \bar{v} I(\bar{v},\tau) + (1-\mu_0) \ubar{v} I(\ubar{v},\tau)  - I(U,\tau) )  -\mu_0 \Delta^{\bar{v}}_{\tau} - (1-\mu_0)\Delta^{\ubar{v}}_{\tau}   ].
\end{gather*}
The second constraint is the new constraint, which ensures that the buyer who receives the high signal at time $t$ does not want to delay the report to a later time $s> t$. The third constraint requires that the uninformed buyer does not want to deviate to report the arrival of a low signal. We can plug in the second line into the third line and get a relaxed expression
\begin{gather*}
  \mathbb{E} [ \lambda (\mu_0 \bar{v} (I(\bar{v},t)  - I(U,t) ) + (1-\mu_0) \ubar{v} (I(\ubar{v},\tau)  - I(U,\tau) ))  -\mu_0 \Delta^{\bar{v}}_{t} - (1-\mu_0)\Delta^{\ubar{v}}_{\tau}  \mid \tau > t  ]     \\
+ \lambda  I(U,t)  \geq \lambda I(\ubar{v},t) - \Delta^{\ubar{v}}_t,\quad \forall t, 
\end{gather*}

We use this new relaxed constraint to replace the second and third constraints.
It is also clear the ex-ante IR constraint must bind, so the objective becomes:
\begin{gather*}
     \lambda(1+w_L)\Big(    I(U,0) + \mathbb{E} [ \mu_0 \bar{v} I(\bar{v},\tau)  + (1-\mu_0) \ubar{v} I(\ubar{v},\tau)  -  I(U,\tau)  ]  \Big) \\ + \mathbb{E}   [ (1-\mu_0 (1+w_L)) (\Delta^{\bar{v}}_\tau - \Delta^{\ubar{v}}_\tau) ] 
\end{gather*}
As we focus on the case where $w_L \leq (1-\mu_0)/\mu_0$, the coefficient for $\Delta_t^{\bar{v}}$ is positive while coefficient for  $\Delta_t^{\ubar{v}}$ is negative. Because increasing $I(\bar{v},t)$ strictly increases the objective and relaxes all constraints (recall the second constraint is replaced by the new constraint), it is optimal to set $I(\bar{v},t)=T-t$. Also, because decreasing $I(\ubar{v},t)$ strictly increases the objective and relaxes all constraints, it is optimal to set $I(\ubar{v},t)=0$. 

Thus, the design problem reduces to
\begin{gather*}
     \max_{I, \{ \Delta_t \} } ~  \lambda(1+w_L)\Big(    I(U,0) + \mathbb{E} [ \mu_0 \bar{v} (T-\tau)   -  I(U,\tau)  ]  \Big)  + \mathbb{E}   [ (1-\mu_0 (1+w_L)) (\Delta^{\bar{v}}_\tau - \Delta^{\ubar{v}}_\tau) ]  \\
    \textnormal{s.t. }    \\
    \lambda \bar{v} (T-t) - \Delta_t^{\bar{v}} \geq \lambda \bar{v} I(U,t),\quad \forall t,  \\
  \mu_0 (\lambda\bar{v} (T-t  - I(U,t) ) - \Delta_t^{\bar{v}}) -  \mathbb{E} [   \lambda (1-\mu_0) \ubar{v} I(U,\tau)   + (1-\mu_0)\Delta^{\ubar{v}}_{\tau}  \mid \tau > t  ]     \\
+ \lambda  I(U,t)  \geq  - \Delta^{\ubar{v}}_t,\quad \forall t, 
\end{gather*}
 Note $\Delta_T^{\ubar{v}}=0$ and denote 
\begin{gather*}
    \Gamma(-\Delta_t^{\ubar{v}}) =    \mu_0 (\lambda \bar{v} (T-t  - I(U,t) ) - \Delta_t^{\bar{v}}) - (1-\mu_0)  \mathbb{E} [   \lambda \ubar{v} I(U,\tau)   + \Delta^{\ubar{v}}_{\tau}  \mid \tau > t  ]   
+ \lambda  I(U,t)
\end{gather*}
$\Gamma$ is clearly increasing in $-\Delta_t^{\ubar{v}}$,
so according to Grönwall's inequality, the point-wise minimum feasible $\Delta_t^{\ubar{v}}$ is the one such that equality always holds:
\begin{align*}
-\Delta_t^{\ubar{v}} &=  \lambda  I(U,t) +   \mu_0 (\lambda \bar{v} (T-t  - I(U,t) ) - \Delta_t^{\bar{v}}) - (1-\mu_0) \int_t^T  ( \lambda  \ubar{v} I(U,\tau)   +   \Delta^{\ubar{v}}_{\tau} ) f(\tau|t) \dd \tau
\end{align*}
where $f(s|t) = \lambda I_U(s)e^{-\lambda (I(U,t) - I(U,s) )}$ is the conditional density of the arrival time. Taking derivatives on both sides gives
\begin{align*}
    -\dot{\Delta}_t^{\ubar{v}} &=  -\lambda  I_U(t) +  \mu_0 ( \lambda \bar{v} (I_U(t) -1 ) - \dot{\Delta}_t^{\bar{v}}) + \lambda I_U(t) (1-\mu_0)   (\lambda  \ubar{v} I(U,t)   +   \Delta^{\ubar{v}}_{t} )    \\ & \quad -  \lambda I_U(t)  \big(\Delta_t^{\ubar{v}}  +  \lambda  I(U,t) +   \mu_0 (\lambda \bar{v} (T-t  - I(U,t) ) - \Delta_t^{\bar{v}}) \big)   \\
    \dot{\Delta}_t^{\ubar{v}} & = \lambda I_U(t) \big( 1+  \lambda \mu_0 \bar{v}(T-t)  - \mu_0  \Delta_t^{\bar{v}}    + \mu_0 \Delta_t^{\ubar{v}}   \big) -  \mu_0 (\lambda \bar{v} (I_U(t) -1 )  - \dot{\Delta}_t^{\bar{v}}) \\
    \Delta_t^{\ubar{v}} & = - \int_t^T \big( \lambda I_U(s) \big( 1+  \lambda \mu_0 \bar{v}(T-s)      \big) - \lambda \mu_0 \bar{v} (I_U(s) -1 ) \big) e^{-\mu_0 \lambda (I(U,t)-I(U,s))} \dd s \\
    & ~ + \int_t^T  \mu_0  \big(\lambda I_U(s)    \Delta_s^{\bar{v}}  - \dot{\Delta}_s^{\bar{v}}) \big) e^{-\mu_0 \lambda (I(U,t)-I(U,s))} \dd s \\
    & = \mu_0 \Delta_t^{\bar{v}} - \int_t^T \big( \lambda I_U(s) \big( 1+  \lambda \mu_0 \bar{v}(T-s)      \big) - \lambda \mu_0 \bar{v} (I_U(s) -1 )  \big) e^{-\mu_0 \lambda (I(U,t)-I(U,s))} \dd s
\end{align*}
Plugging this equation back to the objective, the only term with parameter $\Delta^{\bar{v}}_t$ is
\begin{gather*}
  (1-\mu(1+w_L))  (1-\mu_0) \mathbb{E} \left[    \Delta^{\ubar{v}}_\tau \right].
\end{gather*}
Thus, the objective is strictly increasing in $\Delta_t^{\ubar{v}}$ so the only remaining constraint is also binding:
\begin{gather*}
    \lambda \bar{v} (T-t) - \Delta_t^{\bar{v}} = \lambda \bar{v} I(U,t),\quad \forall t.
\end{gather*}
With this equation, we know 
\begin{align*}
-\Delta_t^{\ubar{v}} &=  \lambda  I(U,t)  - (1-\mu_0)  \mathbb{E} [   \lambda \ubar{v} I(U,\tau)   + \Delta^{\ubar{v}}_{\tau}  \mid \tau > t  ]   \\
 -\dot{\Delta}_t^{\ubar{v}}       &=  -\lambda  I_U(t)  + \lambda I_U(t) (1-\mu_0)   (\lambda  \ubar{v} I(U,t)   +   \Delta^{\ubar{v}}_{t} )   -  \lambda I_U(t)  (\Delta_t^{\ubar{v}}  +  \lambda  I(U,t) )   \\
  \dot{\Delta}_t^{\ubar{v}}       &=  \lambda  I_U(t) ( 1 +  \mu_0 \lambda  \bar{v} I(U,t)      +  \mu_0 \Delta_t^{\ubar{v}}   ) \\
  \Delta_t^{\ubar{v}}       &=- \int_t^T \lambda  I_U(s) ( 1 +  \mu_0 \lambda  \bar{v} I(U,s)   ) e^{-\mu_0 \lambda (I(U,t)-I(U,s))} \dd s.
\end{align*}
Substituting this expression into the objective, we get
\begin{gather*}
     \max_{I  } ~  \lambda(1+w_L)\Big(    I(U,0) + \mathbb{E} [ \mu_0 \bar{v} (T-\tau)   -  I(U,\tau)  ]  \Big)  + (1-\mu_0 (1+w_L)) \mathbb{E}   [   \lambda \bar{v}(T-\tau-I(U,\tau)) ] \\
      +  (1-\mu_0 (1+w_L)) \mathbb{E}   [  \int_\tau^T \lambda  I_U(s) ( 1 +  \mu_0 \lambda  \bar{v} I(U,s)   ) e^{-\mu_0 \lambda (I(U,\tau)-I(U,s))} \dd s   ] \\
   \Leftrightarrow \max_{I  } ~  \lambda(1+w_L) I(U,0)  + \lambda  \mathbb{E}   [    \bar{v}(T-\tau- (1+ (1-\mu_0) (1+w_L))I(U,\tau)) ]   \\
    +  (1-\mu_0 (1+w_L)) \mathbb{E}   [  ( 1 +  \mu_0 \lambda  \bar{v} I(U,\tau)   ) \frac{e^{ \lambda (1-\mu_0) (I(U,0)-I(U,\tau))} -1}{1-\mu_0}   ] 
\end{gather*}
The second line can be expressed as
\begin{gather*}
 (1-\mu_0 (1+w_L)) \int_0^T   ( 1 +  \mu_0 \lambda  \bar{v} I(U,\tau)   ) \frac{e^{ \lambda (1-\mu_0) (I(U,0)-I(U,\tau))} -1}{1-\mu_0} \lambda I_U(\tau) e^{-\lambda (I(U,0) -I(U,\tau  ) }\dd \tau \\
 =  (1-\mu_0 (1+w_L)) \int_{0}^{I(U,0)}   ( 1 +  \mu_0 \lambda  \bar{v} I(U,\tau)   ) \frac{e^{ \lambda (1-\mu_0) (I(U,0)-I(U,\tau))} -1}{1-\mu_0} \lambda  e^{-\lambda (I(U,0) -I(U,\tau  ) }\dd I(U,\tau)
\end{gather*}
Thus, the second line is just a function of $I(U,0)$ and does not depend on any other $I(U,t)$ for $t>0$. Similarly, the only term in the first line that does depend on $I(U,t)$ is
\begin{gather*}
    \lambda \bar{v} \int_0^T (T-\tau) \lambda I_U(\tau) e^{-\lambda (I(U,0)- I(U,\tau)  )} \dd \tau.
\end{gather*}
For any fixed $I(U,0)$ it is optimal to minimizes $I(U,t)$ so it is optimal to frontload access $I_U(t)$ as much as possible so that a trial $I_U(t) = 1_{t\leq t_0}$ is optimal. Once again, it is straightforward to verify this satisfies all the omitted IC constraints, so it is indeed the optimal IC-IR mechanism.

\end{proof}

\begin{proof}[Proof of Proposition \ref{prop:infinite_horizon}]
Once again, first consider the relaxed problem. Since there is discounting now, we let $p_U, p_{v,t}$ functions denote the time-zero \textit{discounted} cumulative flow payments. We consider the same problem relaxation, and the same logic as Theorem \ref{thm:trial} implies that $p^L_U(\infty) = p^H_U(\infty)$, so the problem is 
{\small \begin{gather}
     \max_{q, p} ~ w_L p_U(\infty) + \mathbb{E} \left[p_\tau(v,\infty)\mathbbm{1}[\tau \le T] + p_U(\infty)\mathbbm{1}[\tau > T] \mid \theta = H \right] \label{obj:R_inf} \\
    \text{s.t. } \int_t^{\infty} e^{-r(x-t)} \lambda v I_{v,t}(x) \dd x - e^{rt}(p_{v,t}(\infty)-p_U(t)) \geq \int_t^{\infty} e^{-r(x-t)} \lambda v I_U(x) \dd x - e^{rt}(p_{U}(\infty)-p_U(t))  \quad \forall t, \label{cst:ICR_inf} \\
    \int_t^{\infty} e^{-r(x-t)} \lambda v I_{v,t}(x) \dd x - e^{rt}(p_{v,t}(\infty)-p_U(t)) \geq \int_t^{\infty} e^{-r(x-t)} \lambda v I_{v',t}(x) \dd x - e^{rt}(p_{v',t}(\infty)-p_U(t)) \label{cst:ICV_inf}\\
         \mu_0\int_0^{\infty} e^{-rs} \lambda I_U(s) \dd s + \mu_0 \mathbb{E}\left[\left. \int_{\tau}^{\infty}e^{-rs}  \lambda v \left( I_{\tau}(s) - I_U(s)\right)\dd s  - \Delta_{v,\tau} \right\vert \theta = H\right] - p_U(0) \geq 0 . \label{cst:IRR_inf}
\end{gather}}


\normalsize

The only difference is we now have discounting here, and the horizon is taken to be infinite. Once again, define $\Delta_{v,t} = p_{v,t}(\infty) - p_U(\infty)$, and set $\Delta_{\cdot,\infty} = 0$, so that $\Delta_t$ is well-defined for any realization $t$ of $\tau$. We change the problem into these price variables as before. The objective can be rewritten:
\begin{align*}
    &w_L p_U(\infty) + \mathbb{E} \left[p_\tau(\infty)\mathbbm{1}[\tau \le \infty] + p_U(\infty)\mathbbm{1}[\tau > \infty] \mid \theta = H \right] \\
    &= (w_L + 1)p_U(\infty) + \mathbb{E} \left[(p_\tau(\infty)-p_U(\infty))\mathbbm{1}[\tau \le \infty] \mid \theta = H \right] \\
    &= (w_L + 1)p_U(\infty) + \mathbb{E} \left[\Delta_{v,\tau} \mid \theta = H \right]
\end{align*}Define the interim utility as before:
\[ u(v,t) = \int_t^{\infty} e^{-rx} \lambda v I_{v,t}(x) \dd x -  p_{v,t}(\infty), \]
and the discounted cumulative intensities as before:
\[ I(v,t) = \int_{t}^\infty e^{-rx} I_{v,t}(x) \dd x, \]
\[ I(U,t) = \int_t^\infty e^{-rx}I_U(x) \dd x. \]
We can apply the envelope theorem to \eqref{cst:ICV_inf}, and we get 
\[ u(v,t) = u(\ubar{v},t) + \int_{\ubar{v}}^v \lambda I(w,t) \ dw \]
\[ \ge \int_{\ubar{v}}^v \lambda I(w,t) \ \dd w + \lambda \ubar{v} I(U,t) - p_U(\infty)  \]
Again, a similar argument to Lemma \ref{Lemma_low_bind} implies $I(w,t) \ge I(U,t)$, so rewriting the above implies that 
\[ \Delta_{v,t} \le \lambda v I(v,t) -  \int_{\ubar{v}}^v \lambda I(w,t) \dd w - \lambda \ubar{v}I(U,t). \]

Once again, it is straightforward to see that the IR constraint must bind; otherwise, we could increase $p_U(\infty)$ and do better on the objective. So we can plug $p_U(\infty)$ into the objective, and the objective becomes 
\begin{gather*}
    \lambda \mu_0(w_L + 1) I(U,0) + \mu_0(w_L + 1) \mathbb{E}\left[\left. \lambda v \left( I(v,\tau) - I(U,\tau)\right) - \Delta_{v, \tau }\right\vert \theta = H\right] + \mathbb{E} \left[\Delta_\tau \mid \theta = H \right] \\
    = \lambda \mu_0(w_L + 1) I(U,0) + \mu_0(w_L + 1) \mathbb{E}\left[\left. \lambda v \left( I(v,\tau) - I(U,\tau)\right) \right\vert \theta = H\right] \\+ (1 - \mu_0(w_L + 1)) \mathbb{E} \left[\Delta_{v,\tau} \mid \theta = H \right]
\end{gather*}
Note that only the last term depends on $\Delta_{v,t}$, and it is strictly increasing, so it is optimal to raise $\Delta_{v,t}$ to make the combined IC bind. So we can again reduce the optimization problem to
\begin{gather*}
    \max_{I} ~ \begin{Bmatrix} \lambda \mu_0 (w_L + 1) I(U,0) + \mu_0 (w_L + 1)\mathbb{E}\left[\left. \lambda v \left( I(v,\tau) - I(U,\tau)\right) \right\vert \theta = H\right] \\ + (1 - \mu_0(w_L + 1)) \mathbb{E} \left[\lambda v I(v,t) -  \int_{\ubar{v}}^v \lambda \max(I(U,t),I(w,t)) \dd w - \lambda \ubar{v}I(U,t)\mid \theta = H \right] \end{Bmatrix}.
\end{gather*}
Or equivalently, 
{\small \begin{align}
    \max_{M \in \mathcal{M}} ~ &  \left\{ \mu_0(w_L + 1)  \lambda I(U,0)   +  \mathbb{E}\left[ \left. \begin{pmatrix} \lambda v I(v,\tau) - (1 - \mu_0(w_L + 1))\frac{1-F(v)}{f(v)}  \lambda r(v,\tau) \\ - \lambda\mu_0(w_L + 1)v I(U,\tau) -(1 - \mu_0(w_L + 1))\lambda\ubar{v} I(U,\tau) \end{pmatrix}\mathbbm{1}[\tau \le T] \right| \theta = H \right]\right\}. \notag 
\notag
\end{align}}
As before, pointwise maximization of $I(v,\tau)$ dictates that 
\[ I_{v,t}(s) = \begin{cases}
    1 & v \ge v_0,\\
    I_{U}(s) & v < v_0,
\end{cases} \]
where $v_0$ solves \eqref{eqn:thm_v0} as before.
Plugging in, we find that the expectation term becomes 
\[ \mathbb{E}\left[\lambda \left(\mu_0(w_L+1) \int_{v_0}^{\bar{v}}v  f(v)\ \dd v  + v_0(1 - F(v_0))(1 - \mu_0(w_L+1))\right)(T - \tau - I(U,\tau) \right ]. \]
As before, define 
\[ \pi_0 = \mu_0(w_L+1) \int_{v_0}^{\bar{v}}v  f(v)\ \dd v  + v_0(1 - F(v_0))(1 - \mu_0(w_L+1)). \]
We have that $\pi_0$ is still the expected virtual surplus. So the problem becomes 
\begin{align}
    \max_{M \in \mathcal{M}} ~ &  \left\{ \mu_0(w_L + 1)  \lambda I(U,0)   +  \mathbb{E}\left[\lambda \pi_0 (T - \tau - I(U,\tau) \right ]\right\} \notag 
\notag
\end{align}
Equivalently, 
\begin{gather*}
    \max_{I_U} ~ \left \{ \frac{\mu_0(w_L + 1)}{\pi_0}  \int_0^{\infty} e^{-rs}  \lambda I_U(s)   \dd s  +   \mathbb{E}  \left[ \left. \int_{\tau}^{\infty} e^{-rs}    \lambda (1 - I_U(s))  \dd s \right\vert \theta = H \right] \right \}
\end{gather*}
Now, we use the fact that $\tau$ given $I_U$ is distributed according to density $\lambda I_U(t) e^{-\int_0^t \lambda I_U(s) \dd s}$.
Plugging this in, the problem is
\begin{gather*}
   \max_I  \int_0^\infty  e^{-rt} \frac{\mu_0(w_L + 1)}{\pi_0}\lambda I_U(t) \dd t + \int_0^\infty  \left (\int_t^\infty e^{-rs}  \lambda (1 - I_U(s)) \dd s \right) \lambda I_U(t) e^{-\int_0^t \lambda I_U(s) \dd s}  \dd t
\end{gather*}
We change the variable to reformulate the problem as an optimal control. Define the cumulative access
\begin{gather*}
    I(t) = \int_0^{t} \lambda I_U(s) \dd s.
\end{gather*}
Note that $\dot{I}(t) = \lambda I_U(t)$. We rewrite the objective function in its equivalent form:
\begin{align*}
& \int_0^\infty  e^{-rt} \lambda \frac{\mu_0(w_L + 1)}{\pi_0} I_U(t) \dd t + \int_0^\infty  \left (\int_t^\infty e^{-rs}  \lambda (1 - I_U(s)) \dd s \right) \lambda I_U(t) e^{-\int_0^t \lambda I_U(s) \dd s}  \dd t \\ 
&= \int_0^\infty  e^{-rt} \frac{\mu_0(w_L + 1)}{\pi_0} \dot{I}(t)\dd t + \int_0^\infty  \left (\int_t^\infty e^{-rs}  \lambda (1 - I_U(s)) \dd s \right) \dot{I}(t) e^{-I(t)}  \dd t \\
&= \int_0^\infty  e^{-rt} \frac{\mu_0(w_L + 1)}{\pi_0} \dot{I}(t) \dd t + \int_0^\infty  \left (\frac{\lambda}{r}e^{-rt} - \lambda \int_t^\infty e^{-rs} I_U(s) \dd s \right) \dot{I}(t) e^{-I(t)}  \dd t \\
&= \int_0^\infty  e^{-rt} \dot{I}(t) \left(\frac{\mu_0(w_L + 1)}{\pi_0} + \frac{\lambda}{r} e^{-I(t)} \right)\dd t - \lambda \int_0^\infty  \left ( \int_t^\infty e^{-rs} I_U(s) \dd s \right) \dot{I}(t) e^{-I(t)}  \dd t \\
&= \int_0^\infty  e^{-rt} \dot{I}(t) \left(\frac{\mu_0(w_L + 1)}{\pi_0} + \frac{\lambda}{r} e^{-I(t)} \right)\dd t - \lambda \int_0^\infty  \left ( \int_0^s e^{-rs} I_U(s)  \dot{I}(t) e^{-I(t)} \dd t \right)  \dd s 
\end{align*}
where the last step changed the order of integration. 

Simplfying further, the objective becomes
\begin{align*}
&= \int_0^\infty  e^{-rt} \dot{I}(t) \left(\frac{\mu_0(w_L + 1)}{\pi_0} + \frac{\lambda}{r} e^{-I(t)} \right) \dd t+\int_0^\infty \lambda  e^{-rs} I_U(s) \left( e^{-I(s)} - 1 \right)   \dd s \\
&= \int_0^\infty  e^{-rt} \dot{I}(t) \left(\frac{\mu_0(w_L + 1)}{\pi_0} + \frac{\lambda}{r} e^{-I(t)} \right) \dd t+\int_0^\infty e^{-rt} \dot{I}(t) \left( e^{-I(t)} - 1 \right)   \dd t \\
&= \int_0^\infty  e^{-rt} \dot{I}(t) \left(\frac{\mu_0(w_L + 1)}{\pi_0} - 1 + \left( \frac{\lambda}{r} + 1\right) e^{-I(t)} \right) \dd t .
\end{align*}
All together, we have the control problem
\begin{gather}
       \max_{I_U(t)}   \int_0^\infty e^{-rt} \left( \left( \frac{\lambda}{r} + 1  \right)e^{-I(t)} + \frac{\mu_0(w_L + 1)}{\pi_0}-1 \right)\lambda I_U(t) \  \dd t \label{obj:ctrl_inf} \\
       \textnormal{subject to} \quad \dot{I}(t) = \lambda I_U(t) \notag \\
       I_U(t) \in [0,1], \quad I(0) = 0, \quad I(T) \textnormal{ free}. \notag 
\end{gather}
Again for convenience, define $u = \lambda I_U$, as the control variable. According to the Pontryagin maximum principle, for any optimal solution $(I(t),u(t))$, there exists a piece-wise differentiable costate $\rho(t)$ such that the following holds:
\begin{align}
     \lim_{t \to \infty} \dot{\rho}(t) &= 0,  \label{eqn:control_transversality} \\
    \dot{\rho} (t) &= -\left( \frac{\lambda}{r} +1\right)e^{-I(t)} u(t), \label{eqn:control_evolution}\\
     \dot{I}(t) &= u(t)  \begin{cases}  
        =\lambda \quad &\text{ if } J(t) > 0\\ 
        \in [0,\lambda ] \quad &\text{ if } J(t) = 0 \\
        =0 \quad &\text{ if } J(t) < 0    
    \end{cases}  \label{eqn:control_optimality}\\
     J(t) &= \left( \frac{\lambda}{r} + 1  \right)e^{-I(t)} + \frac{\mu_0(w_L + 1)}{\pi_0}-1 - \rho(t) \label{eqn:control_helper}.
\end{align}
The difference is now we have a transversality condition, instead of a free end-time fixed point. By differentiating equation \eqref{eqn:control_helper} (at wherever $J$ is differentiable) and plugging in equation \eqref{eqn:control_evolution}, we get
\begin{align*}
    \dot{J}(t) &=  -\lambda e^{-I(t)} - u(t) \left( \frac{\lambda}{r} + 1  \right)e^{-I(t)}  - \dot{\rho}(t),\\
        \dot{J}(t) &= -\lambda e^{-I(t)}.
\end{align*}
Because $\rho$ is piece-wise differentiable and so is $J$, for any $t$ the right (left) derivative of $J$ always exists and equals the right (left) limit of derivatives. Formally:
\begin{gather*}
    \dot{J}^+ (t) = \lim_{s\searrow t} \dot{J}(s), \quad \dot{J}^- (t) = \lim_{s\nearrow t} \dot{J}(s).
\end{gather*}
From the continuity of $I(t)$, we know $J$ is always differentiable and $ \dot{J}(t) = -\lambda e^{-I(t)} < 0$. 
If $\frac{\mu_0(w_L + 1)}{\pi_0} \ge 1$, it follows that in the limit $t \to \infty$, the limit of $J$ is positive, and hence $J$ is positive everywhere, so the mechanism sells the entire service ex ante. 

Suppose $\frac{\mu_0(w_L + 1)}{\pi_0} < 1$. We now argue there exists $t_0$ such that $J(t_0) =0$. Suppose not; then either $J(t)$ is either always positive or always negative. If $J(t)$ is always positive, then according to equation \eqref{eqn:control_optimality}, $u(t) =\lambda $ and $I(t) = \lambda t$. This implies
\begin{align*}
    \lim_{t \to \infty} J(\infty) = \frac{\mu_0(w_L + 1)}{\pi_0}-1< 0.
\end{align*}
This is a contradiction. Suppose $J(t)$ is always negative, then according to equation \eqref{eqn:control_optimality}, $u(t) =I(t) =0$. This implies
\begin{align*}
    \rho(t) &= \dot{\rho}(t) = 0, \\
    J(0) &= \frac{\lambda}{r} + \frac{\mu_0(w_L + 1)}{\pi_0} > 0 .
\end{align*}
This is a contradiction.

Now because $J(t_0) =0$ and $\dot{J}(t) < 0$, we know 
\begin{align*}
    J(t) \begin{cases}
      > 0, \quad \text{if } t< t_0,\\
      < 0, \quad \text{if } t> t_0.
    \end{cases}
    \implies ~
     u(t)= \begin{cases}
       \lambda, \quad \text{if } t< t_0,\\
       0, \quad \text{if } t> t_0.
    \end{cases}   
\end{align*}
This implies $I(t) = \lambda \min \{ t,t_0\}.$ We also know $\rho(t)=0$ for any $t>t_0$ and so $\rho(t_0)=0$ from the continuity. Then we can check equation \eqref{eqn:control_helper} at $t_0$:
\begin{align*}
    J(t_0) = \left( \frac{\lambda}{r} + 1  \right)e^{-\lambda t_0} + \frac{\mu_0(w_L + 1)}{\pi_0} -1 = 0.
\end{align*}
This equation has a unique solution. In fact, we can analytically solve for $t_0$ in this case:
\begin{gather*}
    \left( \frac{\lambda}{r} + 1  \right)e^{-\lambda t_0} + \frac{\mu_0(w_L + 1)}{\pi_0}-1 = 0 \\
    \implies t_0 = \frac{1}{\lambda}\ln \left( \frac{\frac{\lambda}{r} + 1 }{1 - \frac{\mu_0(w_L + 1)}{\pi_0}}\right)
\end{gather*}

To finish characterizing the optimal solution to our original design problem, we now construct the appropriate payment scheme. Consider
\[ p_U(t) = \mu_0 \frac{\lambda}{r}(1 - e^{-rt_0}) + \frac{\lambda}{r} (\pi_0 - v_0(1 - F(v_0)))e^{-rt_0}(1 - e^{-\lambda t_0}) \]
\[ p_{v,t}(s) = p_U(t) + \frac{\lambda}{r}v_0 e^{-rt_0} \]
Once again, the payment scheme frontloads payment, so the IR needs to be checked only at the initial time and the time of report. By the same arguments as before, IR and IC holds. Since the $I_U(t)$, $I_{v,t}(s)$ constructed maximizes the value of the relaxed problem, and we have just constructed a mechanism that is feasible in the original problem with all of the IC and IR constraints, it follows that this mechanism must be optimal.  \end{proof}

\section{Additional Signaling Refinements}
The reason why we use invoke $D_1$ in the main model is due to the rich response action space for the buyer.
In this section, we discuss what would intuitive criterion, a weaker refinement concept, predict. 
\begin{definition}
    An equilibrium with equilibrium payoff $(\pi_L,\pi_H)$ does not survive the Intuitive Criterion if and only if there exists a deviating mechanism $(I,p)$ such that
    \begin{itemize}
        \item For any $\mu \in [0,1]$ and any $\sigma \in \Sigma(m,\mu)$, $\pi_L(\sigma) < \pi_L$.
        \item  For any $\sigma \in \Sigma(m,1)$, $\pi_H(\sigma) > \pi_H$.
    \end{itemize}
\end{definition}
An equilibrium does not survive the Intuitive Criterion if 
in all continuation games consistent with some initial belief of the buyer, the low-type seller gets strictly worse off, and the high-type is strictly better off in all continuation games in which the buyer believes $\theta=H$. 

\begin{proposition}
\label{prop:intuitive}
The set of equilibrium payoffs that survive the Intuitive Criterion is identical to the set of equilibrium payoffs that survive the $D_1$ Criterion, if $v_H\geq 1$, or equivalently, if $f(1) - (1-\mu_0) (1-F(1)) \leq 0$.
\end{proposition}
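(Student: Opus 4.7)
Since the D1 Criterion strictly strengthens the Intuitive Criterion in this environment, the inclusion D1-surviving $\subseteq$ Intuitive-surviving is automatic. The nontrivial direction is to show that under $v_H \ge 1$, every equilibrium payoff that fails D1 also fails the Intuitive Criterion. From Proposition \ref{prop:D1} together with Lemmas \ref{lemma:obviouslynotD1}--\ref{lemma:obviouslyD1}, the D1-surviving payoffs form exactly the segment $\{(\pi_L, \pi_L + \pi_F) : \pi_L \in [0, \pi_L^D]\}$ connecting $F$ and $D$. Any reasonable equilibrium payoff $(\pi_L,\pi_H)$ failing D1 therefore satisfies $\pi_H - \pi_F < \pi_L$: for $\pi_L \in [0,\pi_L^D]$ this is the failure condition $\pi_H < \pi_L + \pi_F$ directly, while for $\pi_L > \pi_L^D$ it follows from Proposition \ref{prop:comparative_private} since the boundary $\Pi(\cdot)$ has slope strictly below one beyond $D$ (the implicit Pareto weight on $-\pi_L$ is $-w_L < 1$ there).

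For each such equilibrium I propose deviating to the Myersonian free trial with a deflated upfront payment: the trial mechanism with parameters $(t_M, v_M, p_0')$, choosing $p_0' \in (\pi_H - \pi_F,\, \pi_L)$. This interval is nonempty by the previous paragraph, and $p_0' \ge 0$ because $\pi_H \ge \pi_F$ by Proposition \ref{prop:payoff_set}. Two Intuitive conditions must be verified. At belief $\mu = 1$, the buyer's continuation is essentially unique---accept the trial and post-trial purchase iff $v \ge v_M$---so the high type earns $p_0' + \pi_F > \pi_H$. For the low type, who never generates a signal, revenue in any continuation equilibrium equals $p_0'$ if the buyer accepts and $0$ if the buyer rejects, \emph{plus} any post-trial payment extracted when an uninformed buyer misreports having received a signal $v \ge v_M$.

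The key step is to rule out this misreport at every belief. An uninformed buyer at end of trial has no-signal posterior $\mu(t_M) \le \mu \le 1$; misreporting $v \ge v_M$ yields expected net surplus $\lambda(T - t_M)(\mu(t_M) - v_M)$. Since $v_H \ge 1$ implies $v_M \ge v_H \ge 1 \ge \mu(t_M)$ (the first inequality because $v - (1-\mu_0)(1-F(v))/f(v)$ crosses zero at a smaller argument than $v - (1-F(v))/f(v)$), this surplus is non-positive at every belief, so no misreport is profitable. Hence the low type earns at most $p_0' < \pi_L$ under every continuation equilibrium, and the Intuitive Criterion rejects $(\pi_L, \pi_H)$.

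The hard part---and the reason for the hypothesis $v_H \ge 1$---is precisely this last step. Without it one could have $v_M < 1$, in which case a sufficiently optimistic buyer strictly prefers to purchase the post-trial service unconditionally, transferring an additional $\lambda v_M (T - t_M)$ of revenue to the low-type seller in that continuation equilibrium. This extra transfer can push the low type's payoff up to or above $\pi_L$, so the Intuitive Criterion can no longer rule the candidate equilibrium out. This is precisely the gap that D1's type-by-type comparison of best-response sets closes, and that the condition $v_H \ge 1$ is tailored to prevent.
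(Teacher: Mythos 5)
Your reverse inclusion (D1-surviving $\subseteq$ Intuitive-surviving) is indeed automatic in this model once one checks that the paper's Intuitive condition implies the D1 condition, so that part is fine to assert briefly. You also correctly identify the role of $v_H \ge 1$: it forces $v_M \ge v_H \ge 1 \ge \mu(t_M)$, so an uninformed buyer never purchases the post-trial service, which caps the low type's payoff at the upfront price under every continuation equilibrium. This is the same mechanism the paper uses to cap the low type's payoff, and it is why the hypothesis is needed. That part is sound.

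The gap is in the forward direction, and specifically in the step where you assert that at $\mu=1$ ``the buyer's continuation is essentially unique---accept the trial.'' You never verify that the buyer's ex-ante participation constraint at belief $1$ is satisfied by the chosen $p_0'$. Under the Myersonian free trial, the buyer at belief $1$ accepts only if $p_0' \le \bar p := \lambda t_M + (1-e^{-\lambda t_M})\lambda(T-t_M)\int_{v_M}^{\bar v}(v-v_M)f(v)\dd v = \pi_L^D/\mu_0$; otherwise the buyer rejects, the high type earns $0$, and the second Intuitive condition fails. Your interval of admissible prices is therefore really $(\pi_H - \pi_F,\, \min(\pi_L,\bar p))$, and you have only shown it nonempty by bounding $\pi_H - \pi_F < \pi_L$. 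For $\pi_L \le \pi_L^D$ that is enough because $\pi_L \le \pi_L^D < \bar p$, but for reasonable equilibria with $\pi_L > \pi_L^D$ (which is where most of the Intuitive-failing mass lives, e.g., anything near points $H$ or $B$) you would need $\pi_H - \pi_F < \bar p$, and your slope-of-$\Pi$ argument delivers only $\pi_H - \pi_F < \pi_L$, which does not imply $\pi_H - \pi_F < \bar p$ when $\pi_L$ is large relative to $\pi_L^D/\mu_0$. Without that inequality the Myersonian deviation can be rejected by the buyer even at belief $1$.

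The paper avoids this problem by not using a uniform deviation. For $\pi_L > \pi_L^H$ it deviates to a perturbation of the point-$H$ trial with upfront price $\pi_L^H$; for $\pi_L \in (\pi_L^D, \pi_L^H]$ it perturbs the boundary trial at $(\pi_L, \Pi(\pi_L))$, shortening its length by $\epsilon$ and slightly raising the post-trial threshold. In both cases the deviating trial is a small modification of a mechanism that already satisfies IR at belief $\mu_0$; since IR at belief $1$ is slacker by a factor $1/\mu_0$, the buyer's participation is guaranteed automatically. If you want to keep a single deviation structure, you would need to either (i) prove that $\Pi(\pi_L) - \pi_F \le \bar p$ for all $\pi_L$ (which would require a separate argument about $\pi_H^H - \pi_F$ that is not obviously true), or (ii) switch to the paper's region-by-region deviations. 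As written, the proposal is incomplete.
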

\begin{proof}[Proof of Proposition \ref{prop:intuitive}]
Let the payoff at point $H$ be $(\pi^H_L,\pi^H_H)$. First, we prove that for any equilibrium with payoff $(\pi_L,\pi_H)\in \Pi$ such that $\pi_L > \pi^H_L$, the equilibrium does not survive Intuitive Criterion. To see this, recall the trial mechanism corresponding to point $H$ leads to a payoff pair $(\pi_L^H, \pi_H^H)$.  Now consider the deviation to a trial mechanism $m''$ that is very close to point $H$ trial mechanism. It sells the trial with length $t_H$ at a price $\pi_L^H$ and sells the remaining service at a price $(v_H+ \varepsilon)(T-t_H)$. Under the deviation mechanism $m''$, because the post-trial rate $v_H + \varepsilon > 1$ the buyer who never receives the signal would not purchase the post-trial service regardless of the belief of $\theta$. Thus, the low-type seller gets at most $\pi_L^H < \pi_L$. In contrast, if the buyer believes $\theta =1$ then for sufficiently small $\varepsilon$, the buyer would be willing to purchase the trial and purchase the post-trial service if $v>v_H+\varepsilon$. Thus, the high type gets a payoff that is close to $\pi^H_H> \Pi(\pi_L)\geq \pi_H$ (this comes from the uniqueness of the payoff). In conclusion, the payoff $(\pi_L,\pi_H)$ does not survive Intuitive Criterion.

Second, we prove that for any equilibrium with payoff $(\pi_L,\pi_H)\in \Pi$ such that $\pi_L \in (\pi^D_L,\pi^H_L]$, the equilibrium does not survive Intuitive Criterion. To see this, recall there is a trial mechanism that achieves $(\pi_L,\Pi(\pi_L))$, which sells the trial with length $t_0$ at a price $\pi_L$ and sells the post-trial service at a price $v_0(T-t_0)$. We know from the comparative analysis that $t_0 > t_M$ and $v_0 \geq v_H$. Now consider the deviation to a trial mechanism $m'$ that is very close to this mechanism. The deviation sells the trial with length $t_0-\varepsilon$ at a price $\pi_L-\varepsilon^2$ and sells the post-trial service at a price $(v_0+ \varepsilon^2)(T-t_0)$ for some $\varepsilon$ sufficiently small. Under the deviation mechanism $m'$, because the post-trial rate $v_0 + \varepsilon^2 > 1$ the buyer who never receives the signal would not purchase the post-trial service regardless of the belief of $\theta$. Thus, the low-type seller gets at most $\pi_L-\varepsilon^2 < \pi_L$. However, if the buyer believes $\theta=1$, the high type gets
\begin{gather*}
    \pi_L -\varepsilon^2 + (1-e^{-\lambda (t_0-\varepsilon)}) (T-t_0+\varepsilon) \mathbb{P}(v> v_0+ \varepsilon^2) (v_0 + \varepsilon^2).
\end{gather*}
In contrast,
\begin{gather*}
    \Pi(\pi_L) =\pi_L  + (1-e^{-\lambda t_0}) (T-t_0)  \mathbb{P}(v> v_0) (v_0).
\end{gather*}
Because the function $(T-t)(1-\exp(-\lambda t))$ is a strictly concave function maximized at $t_M$, the deviation leads to an increase of the high type's payoff.

Now for $\pi_L \in [0,\pi_L^D]$, one can easily show any equilibrium payoff $(\pi_L,\pi_H)$ such that $\pi_L< \Pi(\pi_L)$ does not survive Intuitive Criterion by using a deviation that modifies the trial mechanism that achieves $(\pi_L,\Pi(\pi_L))$ using the same argument as Lemma \ref{lemma:obviouslynotD1}.

To complete the proof, we finally argue any equilibrium payoff $(\pi_L,\pi_H) \in  \Pi_{D1}$ survives Intuitive Criterion. We use a proof by contradiction. Suppose not; then there exists a deviation mechanism $m=(I,p)$ such that  
    \begin{itemize}
        \item For any $\mu \in [0,1]$ and any $\sigma \in \Sigma(m,\mu)$, $\pi_L(\sigma) < \pi_L$.
        \item  For any $\sigma \in \Sigma(m,1)$, $\pi_H(\sigma) > \pi_H$.
    \end{itemize}
In particular, there exists $\sigma \in \Sigma(m,1)$ such that $\pi_L(\sigma) < \pi_L$ and $\pi_H(\sigma) > \pi_H$. This means $ E^*(\pi_L) > \Pi(\pi_L)$, where $E^*(\pi_L)$ is defined as the solution of the following optimization problem:
\begin{gather*}
     E^*(\pi_L)= \max_{q, p_U^\theta (T), \{ \Delta_{v,t}^\theta \} } ~ p^H_U(T) + \mathbb{E} \left[\Delta_{v,\tau}^H \mid \theta = H \right] \notag \\
    \textnormal{s.t. } \quad  \text{(IC-V) + (IC-U)}  \\
         p_U^H(T)   \le  I(U,0) +  \mathbb{E}\left[  v\left( I\left(v,\tau \right) -I(U,\tau) \right)  \left.  - \Delta_{v,\tau} \right\vert \theta = H\right] , \\
         p_U^H(T) \le p_U^L(T)  \le \pi_L.
\end{gather*}
In this problem, we solve for the relaxed IC-IR mechanism that maximizes the high-quality seller subject to the additional constraint that the buyer believes the seller is of high quality, and we require the payoff to the lower-quality seller to be bounded by $\pi_L$.

Denote the solution to the above problem as $m^*$.  Clearly, 
\begin{gather}
\label{eq:proof:intuitive}
     \mathbb{E} \left[\Delta_{v,\tau}^H \mid \theta = H \right]|_{m^*} = E^*(\pi_L)_{m^*} - p_U^H(T)|_{m^*}  > \Pi(\pi_L) - \pi_L.
\end{gather}

On the other hand, from our previous analysis we know there is a trial mechanism  that achieves $(\pi_L,\Pi(\pi_L))$ and solves
\begin{gather*}
    \max_{q, p} ~  - p^L_U(T) +  p^H_U(T) + \mathbb{E} \left[\Delta_{v,\tau}^H \mid \theta = H \right]  \\
    \text{subject to IC, IR} 
\end{gather*}
Moreover, Lemma \ref{lem:uninformed_payments_equal} showed that when $w_L = -1$, the ex-ante IR constraint can be neglected and $p_U^H(T) = p_U^L(T)$. Thus,
\begin{gather*}
    \Pi(\pi_L) - \pi_L =  \max_{q, p_U^\theta (T), \{ \Delta_{v,t}^\theta \} } ~ \mathbb{E} \left[\Delta_{v,\tau}^H \mid \theta = H \right]   \\
   {s.t. } \quad 
         \text{(IC-V) + (IC-U)} .
\end{gather*}
Note that $m^*$ is a feasible solution to this relaxed problem so 
\begin{gather*}
    \mathbb{E} \left[\Delta_{v,\tau}^H \mid \theta = H \right]|_{m^*} \leq   \Pi(\pi_L) - \pi_L.
\end{gather*}
This contradicts \eqref{eq:proof:intuitive}. Thus, we proved that any equilibrium payoff $(\pi_L,\pi_H) \in  \Pi_{D1}$ survives Intuitive Criterion. 
\end{proof}

Because the intuitive criterion is weaker than D1, it is not surprising that we need additional restrictions on parameters. We view the similarity of the characterizations in Propositions \ref{prop:D1} and \ref{prop:intuitive} as a robustness result.

\section{Multiple Types}
\label{sec:extensions}
To discuss the robustness of our results, we consider a multiple-seller-type extension.
The seller is only partially informed about the match quality. We use $s=1,2,...,S$ to denote the seller's signal. The seller receiving signal $s$ believes $\theta=H$ with probability $\mu_s$ and $0 = \mu_1<\mu_2<...<\mu_S\leq 1$. The seller receives the signal $s$ with probability $p_s$. Denote $\mu_0 = \sum_s p_s \mu_s$ as before.

To facilitate analysis,we focus on the baseline model where the seller only chooses access $I\in[0,1]$, though the results generalize straightforwardly with general $\mathcal{D}$. However, we do impose one restriction that may be with loss of generality. In particular, instead of assuming the seller can propose a mechanism that elicits information from both the seller and the buyer, we assume the proposed mechanism only elicits information from the buyer. 

In the baseline model, it turns out that imposing this restriction to mechanisms that only elicit information from the buyer is without loss, because the boundary outcomes are achieved by complete pooling mechanisms where the allocation does not depend on the seller's type. 
From the application perspective, we view the restriction that mechanisms can only elicit information from the buyer as reasonable. Firms never publicize their consumer data or their algorithms, so a mechanism that depends on the seller's private information is hard to monitor.  the baseline specification.

In a given equilibrium, denote $m_s$ as the on-path mechanism proposed by the seller of type $s$, and denote $\nu_s$ as the buyer's belief about $\theta$ facing the mechanism $m_s$. If types $s$ and $s'$ pool in the same mechanism then $m_s=m_{s'}$ and $\nu_s = \nu_{s'}$. Denote the sellers' equilibrium payoff conditioned on $\theta=H$ as $\pi_H^s$ and the equilibrium payoff conditioned on $\theta=L$ as $\pi_L^s$. 

First, we aim to show that all reasonable equilibrium payoffs can be achieved by equilibria where sellers propose trial mechanisms. The definition of reasonable equilibrium is analogous to the main text: namely, the seller's equilibrium payoff $\mu_s \pi_H^s + (1-\mu_s) \pi_L^s$ is increasing in $s$. 
\begin{lemma}
\label{extend_reasonable}
In any equilibrium, $\pi^s_H$ is increasing in $s$ while $\pi^s_L$ is decreasing in $s$. Consequently, in reasonable equilibria, $\pi_H^s\geq \pi_L^s$ for all $s$.
\end{lemma}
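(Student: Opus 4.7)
The plan is to derive both claims from the pair of on-path IC constraints between any two types $s > s'$ (so $\mu_s > \mu_{s'}$): each type must weakly prefer its own equilibrium mechanism to the other's. These rearrange to
\[
\mu_s(\pi_H^s - \pi_H^{s'}) \geq (1-\mu_s)(\pi_L^{s'} - \pi_L^s), \qquad \mu_{s'}(\pi_H^{s'} - \pi_H^s) \geq (1-\mu_{s'})(\pi_L^s - \pi_L^{s'}).
\]
First I would rule out Pareto-dominance directly: if $(\pi_H^{s'}, \pi_L^{s'})$ weakly dominates $(\pi_H^s, \pi_L^s)$ coordinate-wise, type $s$ strictly gains by mimicking $s'$, contradicting its IC, and symmetrically for the reverse dominance. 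The only remaining ``bad'' configuration is the wrong-way crossing $\pi_H^s < \pi_H^{s'}$ together with $\pi_L^s > \pi_L^{s'}$. Setting $r := (\pi_H^{s'} - \pi_H^s)/(\pi_L^s - \pi_L^{s'}) > 0$, the two IC constraints rearrange to $\mu_s \leq 1/(1+r) \leq \mu_{s'}$, contradicting $\mu_s > \mu_{s'}$. (If $\mu_{s'}=0$, type $s'$'s IC directly reads $\pi_L^{s'} \geq \pi_L^s$, already a contradiction.) What remains is $\pi_H^s \geq \pi_H^{s'}$ together with $\pi_L^s \leq \pi_L^{s'}$, delivering both monotonicities simultaneously.

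For the ``consequently'' claim, I would chain the reasonable condition $\mu_s \pi_H^s + (1-\mu_s)\pi_L^s \geq \mu_{s-1}\pi_H^{s-1} + (1-\mu_{s-1})\pi_L^{s-1}$ with the IC that type $s-1$ does not mimic $s$, namely $\mu_{s-1}\pi_H^{s-1} + (1-\mu_{s-1})\pi_L^{s-1} \geq \mu_{s-1}\pi_H^s + (1-\mu_{s-1})\pi_L^s$. Combining and cancelling yields $(\mu_s - \mu_{s-1})(\pi_H^s - \pi_L^s) \geq 0$, so $\pi_H^s \geq \pi_L^s$ whenever $\mu_s > \mu_{s-1}$. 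Iterating upward from the smallest index with $\mu_s > 0$ handles every such type. The boundary indices $\mu_s = 0$ require a separate argument: since $\pi_H^s$ does not enter the type-$s$ seller's expected utility, it is determined purely by the mechanism $m_s$ and the buyer's belief, and the analogue of Lemma \ref{lem:uninformed_payments_equal} forces $p_U^L(T) = p_U^H(T)$ in the extension, so the seller's weakly nonnegative additional revenue on arrival paths gives $\pi_H^s \geq \pi_L^s$.

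The main obstacle I anticipate is handling types that pool on the same mechanism. This turns out to be painless: pooled types have, by definition, identical $\pi_H^s$ and $\pi_L^s$ because the mechanism and the induced buyer belief coincide, so both monotonicities are trivially preserved within a pool and the casework above genuinely reduces to pairs of separating types, to which the slope-comparison argument applies verbatim.
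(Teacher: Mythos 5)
Your monotonicity argument and the paper's are essentially the same move organized differently: you split the ``bad'' configurations into Pareto dominance (which IC rules out) and wrong-way crossing (which the slope comparison $\mu_s \le 1/(1+r) \le \mu_{s'}$ rules out), while the paper sums the two IC inequalities to get the single-crossing property $\pi_H^s - \pi_L^s$ increasing in $s$, then combines that with the same dominance rule-out. Both are short algebraic rearrangements of the identical pair of IC constraints, so I would not call them genuinely different.

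The ``consequently'' part does take a different route. The paper uses the already-established monotonicity of $\pi_H^s - \pi_L^s$ to reduce everything to $s=1$, and then argues by cases (type 1 separates vs. pools) that $\pi_H^1 \ge \pi_L^1$ would fail otherwise. Your chaining of the reasonableness inequality with the IC ``$s-1$ does not mimic $s$'' constraint, yielding $(\mu_s - \mu_{s-1})(\pi_H^s - \pi_L^s) \ge 0$, is a cleaner and more local argument that handles every $s \ge 2$ in one stroke and does not require first establishing the monotonicity in Part 1. This is a nice simplification.

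However, there is a genuine gap in your handling of the boundary type $s=1$ with $\mu_1 = 0$. Your invocation of ``the analogue of Lemma \ref{lem:uninformed_payments_equal}'' is misplaced on two counts. First, in this multi-type extension the mechanism is restricted to elicit reports only from the buyer, so there is a single $p_U$ and the equality $p_U^L(T) = p_U^H(T)$ is vacuous, not a consequence of any lemma. Second, Lemma \ref{lem:uninformed_payments_equal} is a statement about the solution of a relaxed optimization problem, not a property of arbitrary IC-IR mechanisms that can arise in equilibrium. Your remaining claim that ``the seller's weakly nonnegative additional revenue on arrival paths gives $\pi_H^s \ge \pi_L^s$'' — i.e., that $\Delta_{v,\tau} \ge 0$ — is not automatic and needs justification. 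It does hold when type $1$ separates (so $\nu_1 = 0$), because the uninformed buyer, who then values all future access at zero, would strictly prefer to falsely report $v$ whenever $\Delta_{v,t} < 0$; so that particular IC constraint delivers $\Delta_{v,t} \ge 0$ and hence $\pi_H^1 \ge 0 = \pi_L^1$. But when type $1$ pools with some $s' > 1$, we have $\nu_1 > 0$ and $\Delta_{v,t}$ can be negative. In that case you should instead note that pooling forces $(\pi_H^1, \pi_L^1) = (\pi_H^{s'}, \pi_L^{s'})$, and $\pi_H^{s'} \ge \pi_L^{s'}$ is already delivered by your chaining argument for $s' \ge 2$. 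Making this two-case distinction explicit closes the gap.
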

\begin{proof}[Proof of Lemma \ref{extend_reasonable}]
    The incentive of the seller requires
\begin{align*}
        \pi_H^s \mu_s + \pi_L^s (1-\mu_s) &\geq  \pi_H^{s'} \mu_s + \pi_L^{s'} (1-\mu_s), \quad \forall s, s',\\
         \pi_H^{s'} \mu_{s'} + \pi_L^{s'} (1-\mu_{s'}) &\geq  \pi_H^{s} \mu_{s'} + \pi_L^{s} (1-\mu_{s'}), \quad \forall s, s'.       
\end{align*}
Combing the two constraints we get
\begin{align*}
   \pi_H^s \mu_s + \pi_L^s (1-\mu_s) +  \pi_H^{s'} \mu_{s'} + \pi_L^{s'} (1-\mu_{s'}) \geq   \pi_H^{s'} \mu_s + \pi_L^{s'} (1-\mu_s) + \pi_H^{s} \mu_{s'} + \pi_L^{s} (1-\mu_{s'}) \\
   (\pi_H^{s'}-\pi_H^{s} - \pi_L^{s'} +\pi_L^{s}) ( \mu_{s'}-\mu_{s})
\end{align*}
This implies $\pi_H^s-\pi_L^s$ is increasing in $s$. Note that $\pi_H^s, \pi_L^s$  cannot be strictly larger than $\pi_H^{s'}, \pi_L^{s'}$ simultaneously, so we know   $\pi_H^s$ is increasing in $s$ while $\pi_L^s$ is decreasing in $s$.

If in some equilibrium and for some $s$, $\pi_H^s < \pi_L^s$, by monotonicity, we know $\pi_H^1 < \pi_L^1$. If the lowest type seller (type 1) is not pooling with any other types, then $\pi^1_H=\pi^1_L=0$ which is a contradiction.
If type 1 is pooling with any other types, $s'$, then because $\pi_H^1 < \pi_L^1$, 
\begin{gather*}
    \pi_H^1 \mu_1 + \pi_L^1 (1-\mu_1) > \pi_H^1 \mu_{s'} + \pi_L^1 (1-\mu_{s'})=   \pi_H^{s'} \mu_{s'} + \pi_L^{s'} (1-\mu_{s'}) ,
\end{gather*}
which implies the equilibrium is not a reasonable equilibrium.
\end{proof}

Now we can replicate the first main result in the main model.
\begin{proposition} 
\label{multi_trial}
All reasonable equilibria payoff vectors can be achieved by sellers proposing (potentially different) trial mechanisms.
\end{proposition}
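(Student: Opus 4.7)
The plan is to take any reasonable equilibrium and, group by group, replace each proposed mechanism with a trial mechanism that delivers the same interim payoffs, then verify that the resulting strategy profile remains an equilibrium.

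First, I would partition the seller types according to the on-path mechanism they propose in the original equilibrium, obtaining pooling groups $\{G_k\}$. Within any group $G$, every type proposes the common mechanism $m_G$ and induces the common buyer posterior $\nu_G = \sum_{s \in G} p_s \mu_s / \sum_{s \in G} p_s$. Because the proposed mechanism only elicits reports from the buyer, the seller's ex-ante continuation payoff conditional on $\theta$ depends on $m_G$ and $\theta$ but not on $s$, so all $s \in G$ share a common pair $(\pi_L^G, \pi_H^G)$. Lemma~\ref{extend_reasonable} gives $\pi_H^G \geq \pi_L^G$, and IC-IR feasibility of $m_G$ under belief $\nu_G$ implies $(\pi_L^G, \pi_H^G) \in \Pi_{\ICIR}(\nu_G)$, where $\Pi_{\ICIR}(\nu_G)$ is the IC-IR payoff set computed with $\mu_0$ replaced by $\nu_G$ throughout Section~\ref{sec:mechanisms}.

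Next, I would invoke Theorem~\ref{thm:trial}, applied with $\nu_G$ in place of $\mu_0$, to obtain for each group $G$ a trial mechanism $\tilde m_G$ attaining $(\pi_L^G, \pi_H^G)$: boundary pairs of $\Pi_{\ICIR}(\nu_G) \cap \{\pi_H \geq \pi_L\}$ are uniquely implemented by trial mechanisms, while interior pairs are obtained from the boundary by reducing the up-front price $p_0$ along the 45-degree line, as noted in the discussion after Theorem~\ref{thm:trial}. I would then construct the candidate strategy profile in which every type $s \in G$ proposes $\tilde m_G$, with on-path beliefs $\nu_G$ pinned down by Bayes' rule and off-path beliefs after any unseen mechanism placing full mass on $\theta = L$ until a reward arrives.

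Since on-path expected payoffs are unchanged, no seller wishes to re-route to a different on-path mechanism $\tilde m_{G'}$; those constraints coincide with the original equilibrium's. For off-path deviations, the argument in the proof of Proposition~\ref{prop:payoff_set} shows that under the pessimistic belief $\mu=0$ the seller's continuation payoffs are bounded by $0$ when $\theta=L$ and by $\pi_F$ when $\theta=H$, so any type $s$ earns at most $\mu_s \pi_F$ from an off-path deviation. Because the free-trial deviation already available in the original equilibrium delivers at least $\mu_s \pi_F$ regardless of the buyer's belief, the incentive constraint $\mu_s \pi_H^G + (1-\mu_s)\pi_L^G \geq \mu_s \pi_F$ is inherited, ruling out any profitable off-path deviation.

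The step I expect to be the main obstacle is the re-derivation of Theorem~\ref{thm:trial} for an arbitrary posterior $\nu_G$. The pointwise and envelope arguments of Section~\ref{sec:mechanisms} pass through unchanged, but the verification that the constructed trial mechanism satisfies the global intertemporal constraint at the end of the proof of Theorem~\ref{thm:trial} relies on the primitive bound $\lambda T \geq 2(\mu_0 - \ubar{v})/((1-\mu_0)\ubar{v})$, which must now hold with $\mu_0$ replaced by every posterior $\nu_G$ that can arise. Since $\nu_G \in [0, \mu_S]$ and the right-hand side is increasing in the belief, it suffices to strengthen the primitive assumption to $\lambda T \geq 2(\mu_S - \ubar{v})/((1-\mu_S)\ubar{v})$, after which Theorem~\ref{thm:trial} applies verbatim and the construction is complete.
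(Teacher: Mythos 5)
Your proof follows the same route as the paper: group the seller types by their on-path proposal, let $\nu_G$ be the buyer posterior induced by group $G$, apply the machinery of Theorem~\ref{thm:trial} with the prior replaced by $\nu_G$ to produce a payoff-equivalent trial mechanism $\tilde m_G$, and re-wire the equilibrium so each type in $G$ proposes $\tilde m_G$; the verification of off-path deviations via the $\mu_s\pi_F$ bound is the same argument as in Proposition~\ref{prop:payoff_set}. Where you are more careful than the paper is in noticing that the feasibility check at the end of Theorem~\ref{thm:trial}'s proof --- the verification that $\mu(t_0)\le\ubar{v}$ so the dropped post-trial IC for the uninformed buyer holds --- relies on the primitive bound on $\lambda T$ evaluated at the buyer's prior, which here must be $\nu_G$ rather than $\mu_0$. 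The paper's own proof of Proposition~\ref{multi_trial} simply asserts ``this is exactly what we study in the proof of Theorem~\ref{thm:trial}'' without commenting on this, so your proposed strengthening of the assumption to $\lambda T \geq 2(\mu_S - \ubar{v})/((1-\mu_S)\ubar{v})$ (which suffices since $\nu_G\le\mu_S$ and the right-hand side is increasing in the belief) is a legitimate and useful supplement; the only remaining caveat, which you should flag, is that this strengthened bound degenerates when $\mu_S=1$, a corner case the paper also does not separately address.
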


\begin{proof}[Proof of Proposition \ref{multi_trial}]
    For any seller type $s$, the proposed mechanism $m_s$ leads to a buyer's belief of $\nu_s$, and the seller's conditional payoff pair $(\pi_H^s,\pi_L^s)$ (the seller's payoff conditioned on $\theta=H$ and $\theta=L$ respectively). We proceed to prove that any pair of reasonable conditional payoff $(\pi_H,\pi_L)$, which, can be achieved by any IC-IR dynamic mechanism with the initial belief $\nu_s$, can be achieved by trial mechanisms. To this end, we need to study the boundary of the payoff set and so to study the following dynamic mechanism design with weight for weight $w_L \le (1-\mu_0)/\mu_0$:
\begin{align*}
    \max_{M \in \mathcal{M}} ~ & \left \{ w_L p_U(T) + \mathbb{E} \left[p_{v,\tau}(T)\mathbbm{1}[\tau \le T] + p_U(T)\mathbbm{1}[\tau > T] \mid \theta = H \right] \right \}  \\
    \text{subject to} \quad & 
    \lambda v I(v,t) - p_{v,t}(T) \geq \lambda v I(U,t)  - p_U(T) \qquad \forall (v, t) \tag{IC-U}\\
    & \lambda v I(v,t) - p_{v,t}(T)  \geq \lambda v I(v',t) - p_{v',t}(T) \qquad \forall (v, v', t) \tag{IC-V}\\
    &  \nu_s \mathbb{E}\left[N_T v - p_{v,\tau}(T) \mid \theta = H \right] - (1-\nu_s) p_U(T) \ge 0 \tag{IR-0} 
\notag
\end{align*}
This is a similar dynamic mechanism design problem as we studied in the main model, except that the mechanism does not depend on the seller's report and the prior of the buyer is $\nu_s$ instead of $\mu_0$ (which affects the IR condition). Now by plugging in the binding IR constraint, we get    
\begin{align*}
    \max ~ &  (w_L+1)\mu_0\left( \mathbb{E}\left[ \lambda v I(v,t)  - \Delta_{v,t}(T) \mid \theta = H \right]  \right)   + \mathbb{E} \left[\Delta_{v,t}(T) \mid \theta = H \right]  \\
   \text{subject to} \quad & 
    \lambda v I(v,t) - \Delta_{v,t}(T) \geq \lambda v I(U,t)  \qquad \forall  v \tag{IC-U}\\
    & \lambda v I(v,t) - \Delta_{v,t}(T)  \geq \lambda v I(v',t) - \Delta_{v',t}(T) \qquad \forall  (v, v') \tag{IC-V}
\end{align*}
This is exactly what we study in the proof of Theorem \ref{thm:trial}. Thus, we know all reasonable payoffs achieved by IC-IR mechanisms can be indeed achieved by trial mechanisms. 

To prove Proposition \ref{multi_trial}, we start from any equilibrium payoff vector $(\pi_1,\pi_2,...,\pi_s)$. Suppose this payoff vector is achieved in an equilibrium where type $s$ proposes $m_s$, which leads to buyer belief $\nu_s$ and seller conditional payoff pair $(\pi_H^s,\pi_L^s)$. We know $\pi_s = \mu_s \pi_H^s + (1-\mu_s) \pi_L^s$. To achieve   $(\pi_1,\pi_2,...,\pi_s)$ by trial mechanisms, we identity the trial mechanism $t_s$ that achieves $(\pi_H^s,\pi_L^s)$ with the initial belief $\nu_s$, and construct a similar equilibrium by changing the seller's on-path proposal from $m_s$ to $t_s$.
\end{proof}
Note that this result is slightly different the main model, where all reasonable equilibrium payoffs can be achieved by sellers of both types proposing the same trial mechanism. In the extension, it is possible that sellers of different types propose different trial mechanisms in separating equilibria. However, if we restrict attention to equilibria surviving the D1 refinement, we can recover the result of Proposition \ref{prop:D1}:
\begin{proposition}
\label{extent:D1}
Any equilibrium payoff vector that survives the D1 criterion can be achieved by sellers of all types pooling in the same trial mechanisms with trial length $t_M$ and post-trial rate $v_M$. All D1-surviving payoff vectors are Pareto dominated.
\end{proposition}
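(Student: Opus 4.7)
The plan is to extend the argument of Proposition \ref{prop:D1} from two to $S$ types, using Proposition \ref{multi_trial} as a launching point. I would proceed in three steps: first collapse all D1-surviving equilibria onto a single pooling mechanism; second identify that mechanism as the Myersonian free trial; third exhibit a Pareto-dominating pooling mechanism. Throughout I rely on the natural multi-type generalization of the D1 criterion defined in Appendix \ref{appendix:Refinment}, which demands that the buyer assign zero probability to type $s$ whenever the set of continuation-equilibria where $s$ weakly benefits is strictly contained in that where some other type $s'$ strictly benefits.

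For Step 1, by Proposition \ref{multi_trial} every reasonable equilibrium payoff is generated by trial mechanisms (possibly different across types). Suppose in a candidate D1-surviving equilibrium two types $s<s'$ propose distinct trials $(t_s,v_s,p^0_s)\neq(t_{s'},v_{s'},p^0_{s'})$. Because the post-trial revenue is earned at rate $\mu_s$ (the probability of $\theta=H$ under type $s$), a perturbation of $s$'s trial that trades off a small amount of ex-ante trial revenue for a larger post-trial premium will strictly hurt the low type and strictly benefit the high type in every continuation equilibrium consistent with buyer belief concentrated on $s'$; the argument is exactly that of Lemma \ref{lemma:obviouslynotD1}, exploiting strict concavity of $(1-e^{-\lambda t})(T-t)$ at $t_M$ and of $v(1-F(v))$ at $v_M$. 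By D1 the buyer places zero probability on type $s$, making the deviation strictly profitable for $s'$. Iterating, every D1 equilibrium must be outcome-equivalent to a single pooling trial.

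For Step 2, suppose all types pool in a trial with parameters $(t_0,v_0,p_0)\neq(t_M,v_M)$. Type $s$ earns $p_0+\mu_s K(t_0,v_0)$ where $K(t,v)=(1-e^{-\lambda t})(1-F(v))\lambda v(T-t)$. Consider a small perturbation $(t_0-\epsilon,v_0+\delta,p_0-\epsilon^2)$ chosen so that $K(t_0-\epsilon,v_0+\delta)>K(t_0,v_0)$, possible by strict concavity of $K$'s factors at $(t_M,v_M)$. The low type strictly loses (their $\mu$ is small, so the post-trial gain cannot compensate the price drop) while the high type strictly gains; D1 forces beliefs onto the highest type, yielding a profitable deviation. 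Hence $(t_0,v_0)=(t_M,v_M)$. Conversely, any full-pooling equilibrium on the Myersonian trial (parametrized by an admissible $p_0\in[0,p^D]$) survives D1 by the argument of Lemma \ref{lemma:obviouslyD1}: since $(t_M,v_M)$ globally maximize both trial-revenue factors, no deviation can simultaneously help only higher types in the D1 sense.

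For Step 3, define point $H$ as the full-pooling trial solving Theorem \ref{thm:trial} with the prior weight $w_L=(1-\mu_0)/\mu_0$, giving parameters $(t_H,v_H)$ and maximal buyer-IR price $p^H$; each type $s$ then earns $p^H+\mu_s K(t_H,v_H)$. Pareto dominance over any Myersonian pooling with payoffs $p^F+\mu_s K(t_M,v_M)$ would follow from the pointwise comparison
\[
p^H+\mu_s K(t_H,v_H)\;\geq\;p^F+\mu_s K(t_M,v_M)\qquad\forall s.
\]
The main obstacle is exactly this pointwise comparison. By construction point $H$ maximizes prior-weighted expected revenue, which gives the aggregate inequality $p^H+\mu_0 K(t_H,v_H)\geq p^F+\mu_0 K(t_M,v_M)$; since $K(t_M,v_M)>K(t_H,v_H)$ we also have $p^H>p^F$, which handles the lowest type with $\mu_s\leq\mu_0$. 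For types with $\mu_s>\mu_0$, the argument requires showing that the difference $p^H-p^F$ dominates $\mu_s(K(t_M,v_M)-K(t_H,v_H))$ uniformly in $s$; I expect this to follow from writing $p^H$ explicitly through the buyer's binding IR and invoking the envelope-style first-order condition for $(t_H,v_H)$ from Theorem \ref{thm:trial}, but verifying it across all $\mu_s\leq 1$ is where the multi-type extension is non-trivial and will be the technical core of the proof.
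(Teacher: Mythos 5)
Your Steps 1 and 2 are in line with what the paper does: you use perturbations of the trial parameters exploiting strict concavity of $(T-t)(1-e^{-\lambda t})$ at $t_M$ to show that D1 forces full pooling at the Myersonian trial, whereas the paper organizes the argument around showing $\pi^1_H - \pi^1_L = \pi_F$ (via the free-trial lower bound for the top type when $\pi^1_L = 0$, and via essentially your perturbation in the other case) and then invokes outcome-uniqueness from Theorem~\ref{thm:trial}. There is also a small gap in invoking Proposition~\ref{multi_trial} without first arguing that D1-surviving equilibria are reasonable (the paper sidesteps this by arguing about D1 directly), but the overall plan of Steps 1--2 is sound.

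The real problem is Step 3. You set the prior weight $w_L=(1-\mu_0)/\mu_0$, but that is point $B$ (selling ex-ante), not point $H$. With $w_L=(1-\mu_0)/\mu_0$, equation~\eqref{eqn:thm_v0} gives $v_0=\ubar{v}$, $\pi_0=1$, and equation~\eqref{eqn:thm_t0} gives $t_0=T$; your $(t_H,v_H)$ collapse to $(T,\ubar{v})$, so $K(t_H,v_H)=0$ and $p^H=\lambda\mu_0 T$. Your dominance inequality then reads $\lambda\mu_0 T \geq p_0 + \mu_s\pi_F$, which fails for $\mu_s$ close to $1$ whenever $\pi_F > \lambda\mu_0 T$ (precisely the parameter region of Figure~\ref{fig:eq_payoffs_no_fb}), so the obstacle you flag is not a verification gap but a genuine falsity with this reference point. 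The paper's dominating mechanism is the actual point $H$, the trial solving the design problem with $w_L=0$, for which $(\pi_L^H,\pi_H^H)$ Pareto dominates $(p_0, p_0+\pi_F)$ in the conditional-payoff plane by construction: $\pi_H^H$ is the maximum of $\pi_H$ over $\Pi_\ICIR$, and $\pi_L^H\geq p_L^D$ follows from the boundary comparative statics. Once you have that two-coordinate dominance, the ``uniformly in $s$'' worry you identify evaporates: in a pooling trial every type's ex-ante payoff is $\pi^s = (1-\mu_s)\pi_L + \mu_s\pi_H$, linear in $\mu_s\in[0,1]$, so componentwise dominance of $(\pi_L,\pi_H)$ immediately yields $\tilde\pi^s\geq\pi^s$ for every $s$ without any further estimates. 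Your decomposition into ``$p^H-p^F$ versus $\mu_s\bigl(K(t_M,v_M)-K(t_H,v_H)\bigr)$'' is exactly the convex combination of the two endpoint inequalities, so the intermediate $\mu_s$ cases require no new argument.
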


\begin{proof}[Proof of Proposition \ref{extent:D1}]
    Consider an D1-surviving-equilibrium payoff vector $(\pi_1,\pi_2,...,\pi_S)$, which is achieved in an equilibrium where type $s$ proposes $m_s$, leading to a buyer's belief of $\nu_s$, and the seller's conditional payoff pair $(\pi_H^s,\pi_L^s)$. Note that $\pi^1_L\geq 0$ so we discuss two cases.
    
    \textbf{First}, suppose $\pi^1_L=0$; then the incentive compatibility requires $\pi^s_L \leq \pi^1_L = 0$ for any $s$. We proceed to prove that $\pi^s_L=0$ and $\pi^s_H=\pi_F$ for all $s$, where $\pi_F$ is defined in Section 4.1:
\begin{gather*}
    \pi_F := (1-e^{-\lambda t_M})(1 - F(v_M))\lambda v_M(T-t_M).
\end{gather*}
To prove this, we first focus on the highest type $S$ and prove that $\pi^S = 0$ while $\pi^S_H=\pi_F$. In equilibrium, the conditional payoff $(\pi_H^S,\pi_L^S)$ must belong to the set of payoffs of IC-IR mechanisms with the initial belief of $\nu_S$. Thus, by looking at the direction of $\pi_H-\pi_L$ we know
\begin{align*}
    \pi_H^S - \pi_L^S \leq  \max_{M \in \mathcal{M}} ~ & \left \{ (-1) p_U(T) + \mathbb{E} \left[p_{v,\tau}(T)\mathbbm{1}[\tau \le T] + p_U(T)\mathbbm{1}[\tau > T] \mid \theta = H \right] \right \}  \\
    \text{subject to} \quad & 
    \lambda v I(v,t) - p_{v,t}(T) \geq \lambda v I(U,t)  - p_U(T) \qquad \forall (v, t) \tag{IC-U}\\
    & \lambda v I(v,t) - p_{v,t}(T)  \geq \lambda v I(v',t) - p_{v',t}(T) \qquad \forall (v, v', t) \tag{IC-V}\\
    &  \nu_S \mathbb{E}\left[N_T v - p_{v,\tau}(T) \mid \theta = H \right] - (1-\nu_S) p_U(T) \ge 0 \tag{IR-0} 
\notag
\end{align*}
Consequently, we know
\begin{align*}
    & \pi_H^S - \pi_L^S \leq \max ~   \mathbb{E} \left[\Delta_{v,t}(T) \mid \theta = H \right]  \\
   \text{subject to} \quad & 
    \lambda v I(v,t) - \Delta_{v,t}(T) \geq \lambda v I(U,t)  \qquad \forall  v \tag{IC-U}\\
    & \lambda v I(v,t) - \Delta_{v,t}(T)  \geq \lambda v I(v',t) - \Delta_{v',t}(T) \qquad \forall  (v, v') \tag{IC-V}
\end{align*}
According to the proof of Theorem \ref{thm:trial}, we know $\pi_H^S - \pi_L^S \leq \pi_F$.
Therefore, $\pi_H^S \leq \pi_L^S + \pi_F \leq \pi_F$ and if $\pi_H^S=\pi_F$ we must have $\pi_L^S=0$. 

Suppose by contradiction that $\pi_H^S<\pi_F$ so $\pi_L^S<0$. Consider the deviation of type $S$ such that $S$ proposes the Myersonian free trial mechanism with $\epsilon$ negative payment to the buyer, as in Section 4.1. As we have discussed in Section 4.1, $S$ can guarantee that the buyer accepts the mechanism and purchases the post-trial once they receive the positive shock, so the $S$ can guarantee an expected payoff of $\mu_S \pi_F$ regardless of off-path beliefs. This is a profitable deviation which leads to a contradiction.

Now suppose that $\pi_H^S = \pi_F$ and $\pi_L^S=0$. From incentive compatibility, we know $0=\pi_L^1 \geq \pi_L^s \geq \pi_L^S=0$. Since this implies $\pi_L^s=\pi_L^S$, then from incentive compatibility, $\pi^s_H=\pi^S_H=\pi_F$ for any $s>1$. This payoff vector can be achieved by sellers of all types pooling in the Myersonian free trial mechanism.

\textbf{Second,} suppose $\pi^1_L>0$; then type $1$ must be pooling with some other types. Denote $s'$ as the highest type that pools with type 1 and $\pi^1_L=\pi^{s'}_L$, $\pi^1_H=\pi^{s'}_H$. Incentive compatibility implies that  $\pi^1_L=\pi^{s}_L$ and $\pi^1_H=\pi^{s}_H$ for all types $1\leq s \leq s'$. 
The conditional payoff pair $(\pi^1_H,\pi^1_L)$ can be achieved by a trial mechanism $(v_1,t_1,\pi^1_L)$ with $p_1$ as the ex-ante fee, $t_1$ as the trial length, and $v_1$ as the post-trial threshold. This trial mechanism should respect the ex-ante IR with the prior belief of $\nu_1$. 
Recall that trial mechanisms with $(v_M,t_M)$ are all trial mechanisms that maximize $\pi_H-\pi_L$ (and its maximum value is $\pi_F$).

We prove the statement by contradiction. Suppose $\pi^1_H-\pi^1_L<\pi_F$. Without loss of generality, we can assume the post-trial and trial length $(v_1,t_1)$, joint with an ex-ante fee that makes the IR condition (with respect to the initial belief $\nu_1$) binding, achieve the boundary of the feasible payoff set of IC-IR mechanisms. Now consider a trial mechanism $(v_1,t_1-\varepsilon,\pi^1_L)$, where $\varepsilon$ is a sufficiently small positive number. Under this trial mechanism, if the buyer always buys the trial and then buys the post-trial if and only if they learn $\theta=H$ and $v>v_1$, then the resulting conditional payoff pair $(\pi_H,\pi_L^1)$ satisfies $\pi_H>\pi^1_H$ and $\pi_H-\pi^1_L > \pi^1_H- \pi^1_L$. Now take a positive $\delta'$ such that
\begin{gather*}
    \mu_{s'} \pi_H + (1-\mu_{s'}) (\pi^1_L -\delta') = \mu_{s'} \pi^1_H  + (1-\mu_{s'}) \pi^1_L. 
\end{gather*}
Then for any $1\leq s < s'$, 
\begin{gather*}
      \mu_{s} \pi_H + (1-\mu_{s}) (\pi^1_L -\delta') < \mu_{s} \pi^1_H  + (1-\mu_{s}) \pi^1_L. 
\end{gather*}
Thus, we can find a $\delta$ that is slightly smaller than $\delta'$ such that
\begin{gather*}
     \mu_{s'} \pi_H + (1-\mu_{s'}) (\pi^1_L -\delta) > \mu_{s'} \pi^1_H  + (1-\mu_{s'}) \pi^1_L, \\
      \mu_{s} \pi_H + (1-\mu_{s}) (\pi^1_L -\delta) < \mu_{s} \pi^1_H  + (1-\mu_{s}) \pi^1_L.     
\end{gather*}

As we have discussed in the proof of Lemma \ref{lemma:obviouslynotD1}, the buyer has three rational responses to the deviating mechanisms: ${(B,B), (B,N),N}$, where $(B,B)$ implies purchasing the trial and the post-trial service even in the absence of signals, $(B,N)$ implies purchasing the trial but not the post-trial service in the absence of signals, and $N$ implies purchasing nothing. Under $N$, all types of sellers are worse off (compared to equilibrium payoffs); under $(B,B)$, all types $s=1,2,...,s'$ are better off; under $(B,N)$, all types $s=1,2,...,s'-1$ are worse off while type $s'$ is better off. Consequently, $D_1$ criterion implies that, if the seller proposes the off-path trial mechanism $(v_1,t_1-\varepsilon,\pi^1_L-\delta)$,
the buyer's off-path belief would assign no probability to types $s=1,2,...,s'-1$ so the off-path belief would be higher than $\mu_{s'}>\nu_{s'}=\nu_1$. This implies the buyer is willing to accept the off-path trial mechanism and so it is profitable for the type $s'$ to deviate to $(v_1,t_1-\varepsilon,\pi^1_L-\delta)$. This contradiction means $\pi^1_H-\pi^1_L=\pi_F$.

In the last step, we claim that for any other $s$, $\pi^s_H=\pi^1_H$ and $\pi^s_L=\pi^1_L$. If not, then from the monotonicity implied by the incentive compatibility we know $\pi^s_H>\pi^1_H$ and $\pi^s_L\leq \pi^1_L$. However, this means $\pi^s_H-\pi^s_L >\pi^1_H-\pi^1_L=\pi_F$, which is not possible. This completes the discussion of the second case.

Finally, we argue why all such equilibrium payoffs are Pareto dominated. Because all types pool in the same trial mechanism. The ex-ante fee of the trial mechanism must respect the IR condition with the initial belief $\mu_0$. Then, following the analysis in the main model, all such trial mechanisms are dominated by the trial mechanism associated with point $H$.
\end{proof}

\end{document}